\documentclass[11pt,letter,leqno]{amsart}
\usepackage{amsmath,amsfonts,amssymb,amsthm,url,fdsymbol}
\usepackage{graphicx, color}
\usepackage{enumitem}
\usepackage{tikz}
\usetikzlibrary{shapes.geometric}
\usepackage[left=1in,right=1in,top=1in,bottom=1in,foot=0.5in]{geometry}
\usepackage{hyperref}
\usepackage{xcolor}
\hypersetup{
    colorlinks,
    linkcolor={red!70!black},
    citecolor={blue!50!black},
    urlcolor={blue!80!black}
  }
\usepackage{algorithmicx}
\usepackage{algpseudocodex}
\usepackage[capitalize,nameinlink]{cleveref}
\usepackage[final]{showlabels} 
\usepackage{newfloat}
\DeclareFloatingEnvironment{algorithm}

\newtheorem{theorem}{Theorem}[section]
\newtheorem{lemma}[theorem]{Lemma}
\newtheorem{claim}[theorem]{Claim}
\newtheorem{proposition}[theorem]{Proposition}
\newtheorem{corollary}[theorem]{Corollary}

\theoremstyle{definition}
\newtheorem{example}[theorem]{Example}
\newtheorem{remark}[theorem]{Remark}
\newtheorem{definition}[theorem]{Definition}

\Crefname{claim}{Claim}{Claims}

\newcommand{\op}{\ensuremath{{\rm op}}}

\newcommand{\M}{\mathcal{M}}
\newcommand{\Mmax}{{\mathcal{M}_{\mathrm{max}}}}
\newcommand{\PP}{\mathcal{P}}
\newcommand{\C}{\mathcal{C}}
\newcommand{\Tt}{\mathcal{T}}

\newcommand{\TPQ}{\mathcal{T}_{PQ}}
\newcommand{\TM}{\mathcal{T}_{M}}
\newcommand{\TD}{\mathcal{T}_{D}}
\newcommand{\hd}{\widehat{d}}
\newcommand{\Td}{\mathcal{T}_{\widehat{d}}}

\newcommand{\HH}{\mathcal{H}}

\newcommand{\co}[1]{\overline{#1}}
\newcommand{\Gdelta}{G_{\overline{\delta}}} 
\newcommand{\Gdeltas}{G_{\delta^\ast}}
\newcommand{\deltas}{\rho}
\newcommand{\rhos}{\deltas}
\newcommand{\Grho}{G_{\overline{\rho}}} 

\newcommand{\MST}{T}
\newcommand{\<}{<}
\newcommand{\cleq}{\leq}

\newcommand{\rev}[1]{<_{\overleftarrow{#1}}}
\newcommand{\revS}{\rev{S}}

\newcommand\restr[2]{{
 \left.\kern-\nulldelimiterspace 
 #1 
 \vphantom{\big|} 
 \right|_{#2} 
 }}

\DeclareMathOperator{\diam}{diam}

\newcommand{\call}[1]{\ensuremath{\mathsf{#1}}}
\newcommand{\variable}[1]{\ensuremath{\mathit{#1}}}
\newcommand{\Let}{\State {\bfseries let} }

\newcommand{\Yield}{\State {\bfseries yield} }
\newcommand{\YieldAll}{\State {\bfseries yield from} }
\renewcommand{\Return}{\State {\bfseries return} }

\newcommand{\When}{\textbf{when}}
\makeatletter
\algnewcommand\algorithmicmatch{\textbf{match}}
\algnewcommand\algorithmicwith{\textbf{with}}
\algnewcommand\algorithmiccase{\textbf{case}}
\algnewcommand\algorithmiccolon{\textbf{:}}
\algnewcommand\algorithmicmatchend{\textbf{end}}
\algdef{SE}[MATCH]{Match}{EndMatch}[1]{%
	\algpx@startIndent\algpx@startCodeCommand\algorithmicmatch\ #1\ \algorithmicwith%
}{%
	\algpx@endIndent\algpx@startEndBlockCommand\algorithmicend\ \algorithmicmatch%
}
\algdef{C}[MATCH]{MATCH}{Case}[1]{%
	\algpx@endIndent[1]\algpx@startCodeCommand\algorithmiccase\ #1\algorithmiccolon%
}
\algtext*{EndMatch}
\apptocmd{\EndMatch}{\algpx@endIndent}{}{}
\pretocmd{\Case}{\algpx@endCodeCommand}{}{}
\pretocmd{\Match}{\algpx@endCodeCommand}{}{}
\pretocmd{\EndMatch}{\algpx@endCodeCommand[1]}{}{}
\makeatother


\algrenewcommand\algorithmicrequire{\textbf{Input:}}
\algrenewcommand\algorithmicensure{\textbf{Output:}}

\newcommand{\dtopq}{\call{deltaPqTree}}
\newcommand{\pqtom}{\call{mmodTree}}
\newcommand{\mtopq}{\call{pqTree}}
\newcommand{\findBip}{\call{findBipartition}}
\newcommand{\refine}{\call{refine}}
\newcommand{\drefine}{\call{pivotTree}}
\newcommand{\partRefine}{\call{refinePart}}
\newcommand{\refineDendro}{\call{refineTree}}
\newcommand{\stablePart}{\call{stablePartition}}
\newcommand{\stableDendro}{\call{stableTrees}}

\newcommand{\insertDendo}{\call{insert}}
\newcommand{\dendo}{\call{dendrogram}}
\newcommand{\lca}{\call{lca}}
\newcommand{\propagate}{\call{propagate}}
\newcommand{\mmodTree}{\call{mmoduleTree}}
\newcommand{\insertDico}{\call{insert}}
\newcommand{\containsKey}{\call{containsKey}}
\newcommand{\values}{\call{values}}
\newcommand{\getDico}{\call{get}}
\newcommand{\concatenate}{\call{concatenate}}
\newcommand{\join}{\call{join}}
\newcommand{\cptopqt}{\call{pqTree2}}
\newcommand{\cptopqtC}{\call{copointsToPqTree}}
\newcommand{\sortSiblings}{\call{nextFrontier}}

\newcommand{\Leaf}{\mathrm{Leaf}}
\newcommand{\Node}{\mathrm{Node}}

\newcommand\mdoubleplus{\ensuremath{\mathbin{+\mkern-7mu+}}}
\newcommand*{\append}{\mdoubleplus}
\newcommand*{\cons}{\cdot}
\newcommand*{\map}{\call{map}}

\begin{document}

\centerline{\bf\Large Modules and PQ-trees in Robinson
  Spaces 
    }

\bigskip\bigskip \centerline{\sc \large M. Carmona$^{a,b}$, V.
Chepoi$^{b}$, G. Naves$^{b}$, and P.
Pr\'{e}a$^{a,b}$}

\bigskip\bigskip \centerline{$^{a}$LIS, École Centrale Marseille,
Marseille, France}

\medskip \centerline{$^{b}$LIS, Aix-Marseille Université, CNRS, and
Université de Toulon} \centerline{Marseille, France}

\medskip \centerline{ {\sf
\{mikhael.carmona, victor.chepoi, guyslain.naves,
pascal.prea\}@lis-lab.fr} }

\date{}

\begin{center} {\today} \end{center}

\bigskip\bigskip {\footnotesize {\bf Abstract.} A Robinson space is a
  dissimilarity space $(X,d)$ on $n$ points for which there exists a
  compatible order, {\it i.e.} a total order $\<$ on $X$ such that $x\<y\<z$
  implies that $d(x,y)\le d(x,z)$ and $d(y,z)\leq d(x,z)$. Recognizing
  if a dissimilarity space is Robinson has numerous applications in
  seriation and classification. A PQ-tree is a classical data
  structure introduced by Booth and Lueker to compactly represent a
  set of related permutations on a set $X$. In particular, the set of
  all compatible orders of a Robinson space are encoded by a PQ-tree.
  An mmodule is a subset $M$ of $X$ which is not distinguishable from
  the outside of $M$, {\it i.e.} the distances from any point of
  $X\setminus M$ to all points of $M$ are the same. Mmodules define
  the mmodule-tree of a dissimilarity space $(X,d)$. Given $p\in X$, a
  $p$-copoint is a maximal mmodule not containing $p$. The
  $p$-copoints form a partition of $X\setminus \{p\}$. There exist two
  algorithms recognizing Robinson spaces in optimal $O(n^2)$ time. One
  uses PQ-trees and one uses a copoint partition of $(X, d)$.

  In this paper, we establish correspondences between the PQ-trees and
  the mmodule-trees of Robinson spaces. More precisely, we show how to
  construct the mmodule-tree of a Robinson dissimilarity from its
  PQ-tree and how to construct the PQ-tree from the mmodule-tree. To
  establish this translation, additionally to the previous notions, we
  introduce the notions of $\delta$-graph $\Gdelta$ of a Robinson
  space and of $\delta$-mmodules, the connected components of
  $\Gdelta$. We also use the dendrogram of the subdominant ultrametric
  of $d$. All these results also lead to optimal $O(n^2)$ time
  algorithms for constructing the PQ-tree and the mmodule tree of
  Robinson spaces.

\medskip\noindent {\bf Keywords:} Robinson dissimilarity;
Classification, Seriation; Mmodule; PQ-Tree. }

\section{Introduction}\label{sec:introduction}

The classical seriation problem asks to find a simultaneous ordering
(or permutation) of the rows and the columns of the distance matrix
$D$ of a dissimilarity space $(X,d)$ with the objective that small
values should be concentrated around the main diagonal as closely as
possible, whereas large values should fall as far from it as possible.
This goal is best achieved by considering the so-called Robinson
property~\cite{Robinson}: a distance matrix $D$ is said to have the
Robinson property if the values of $D$ increase monotonically in the
rows and the columns when moving away from the main diagonal in both
directions. In case of $(0,1)$-matrices, the Robinson property is best
known as the Consecutive One Property. A Robinson space is a
dissimilarity space whose distance matrix can be transformed by
permuting its rows and columns to a distance matrix having the
Robinson property. The permutation which leads to a matrix with the
Robinson property is called a compatible order.

A PQ-tree is a classical data structure introduced by Booth and Lueker
\cite{BoothLueker} to efficiently encode a set of related permutations
on a finite set $X$. All compatible orders of a Robinson space
$(X, d)$ can be encoded by a PQ-tree. This fact was used by several
recognition algorithms for the Robinson spaces
\cite{AtBoHe,Prea,SestonThese}; the algorithm of~\cite{Prea} was the
first algorithm which recognizes Robinson spaces on $n$ points in
optimal $O(n^2)$ time. Even if optimal, the algorithm of~\cite{Prea}
is far from being simple. Recently, in~\cite{CaCheNaPre} we designed a
simple and practical divide-and-conquer algorithm for recognition of
Robinson spaces in optimal $O(n^2)$ time. This algorithm is based on
the notions of mmodules and copoint partitions of dissimilarity
spaces. An mmodule of a dissimilarity space $(X, d)$ (generalizing the
notion of a module in graph theory) is a subset $M$ of $X$ which is
not distinguishable from the outside of $M$, i.e., the distances from
any point of $X\setminus M$ to all points of $M$ are the same.
Mmodules define the mmodule-tree of a dissimilarity space $(X,d)$. If
$p$ is any point of $X$, then $p$ and the maximal by inclusion
mmodules of $(X,d)$ not containing $p$ define a partition of $X$,
which is called the copoint partition.

In this paper, we establish correspondences between the PQ-trees and
the mmodule-trees of Robinson spaces $(X, d)$. Namely, we show how to
derive the mmodule-tree from the PQ-tree of $(X,d)$ and, vice-versa,
how to construct the PQ-tree from the mmodule-tree of $(X,d)$. We also
show how to derive the branches of a PQ-tree from the copoint
partitions of $(X,d)$. We also describe optimal $O(n^2)$ algorithms
for constructing the PQ-tree and the mmodule-tree of a Robinson space
$(X,d)$. To establish the cryptomorphism between PQ-trees and
mmodules-trees, additionally to the previous notions, we introduce the
notion of $\delta$-graph $\Gdelta$ of a Robinson space $(X,d)$. We
prove that either $\Gdelta$ is connected for all $\delta>0$ or there
exists a unique value of $\delta$ for which $\Gdelta$ is not
connected. In the later case, the connected components of $\Gdelta$
are called $\delta$-mmodules. The dichotomy between the connectivity
for all $\delta$ and the non-connectivity for some $\delta$ of
$\Gdelta$ and the $\delta$-mmodules in the second case are crucial in
the construction of the PQ-tree and the translations between PQ-trees
and mmodule trees. The dendrogram $\Td$ of the ultrametric subdominant
$\hd$ of $(X,d)$ is yet another important ingredient, used in the
algorithm for the construction of the mmodule tree. Notice also that
the algorithm for the construction of the PQ-tree from the mmodule
tree uses the recognition of flat Robinson spaces
from~\cite{CaCheNaPre} as a subroutine. On the other hand, we present
an optimal $O(n^2)$ algorithm for constructing the PQ-tree, using the
$p$-copoints partition and the concept of $p$-proximity order,
also introduced in~\cite{CaCheNaPre}. Since this paper is a follow-up
of~\cite{CaCheNaPre}, we refer to the paper~\cite{CaCheNaPre} for a
complete bibliography on Robinson spaces and on their recognition
algorithms (in this paper, we cite only the recognition algorithms
using PQ-trees).

The rest of the paper is organized as follows. In
\Cref{preliminaries}, we present classical notions that we use:
Robinson dissimilarities, PQ-trees and mmodules. We end this section
with an illustrative example. In
\Cref{SUB_SECTION_PQ_tree_To_Mmodules} we characterize the sets of
total orders on $X$ that are representable by a PQ-tree and the
subsets of $X$ that correspond to nodes of the PQ-tree of a Robinson
space $(X,d)$. These results are used in the following two sections.
In \Cref{SECTION_Gdelta_PQ} we introduce the notion of a
$\delta$-graph $\Gdelta$ of a Robinson space and investigate the
properties of its $\delta$-mmodules. Using them, we show how to
construct the PQ-tree of a Robinson space. In
\Cref{SECTION_translations} we show how to construct for a Robinson
space its mmodule tree from its PQ-tree and, vice versa, its PQ-tree
from its mmodule tree. In \Cref{SECTION_mmodtree} we show how to build
the mmodule tree using partition refinement and the subdominant
ultrametric. In \Cref{SECTION_copoints_2_PQ_tree} we show how to build
the PQ-tree from any point $p$ and the copoint partition of $p$ and
the $p$-proximity order introduced in~\cite{CaCheNaPre}.

\section{Preliminaries}\label{preliminaries}

\subsection{Robinson dissimilarities}\label{ssec21}

Let $X=\{ p_1,\ldots, p_n\}$ be a set of $n$ elements, called
\emph{points}. A {\it dissimilarity} on $X$ is a symmetric function
$d$ from $X^2$ to the nonnegative real numbers such that $d(x,y) = 0$
if $x=y$. Then $d(x,y)$ is called the {\it distance} between $x,y$ and
$(X,d)$ is called a \emph{dissimilarity space}. A partial order on $X$
is called \emph{total} if any two elements of $X$ are
comparable. Since we will mainly deal with total orders, we
abbreviately call them \emph{orders}.

\begin{definition}[Compatible order]\label{def:compatorder}
  Given a dissimilarity space $(X,d)$, an order $\<$ on $X$ is {\it
    compatible} if $x\<y\<z$ implies that
  $d(x,z)\geq \max\{ d(x,y),d(y,z)\}.$ We denote by $\Pi(X,d)$
  the set of compatible orders of $(X,d)$. If $\<$ is a compatible
  order, then so is the order $\<^{\op}$ opposite to $\<$.
\end{definition}

\begin{definition}[Robinson space]\label{def:robinspace}
  A dissimilarity space $(X,d)$ is a \emph{Robinson space} if it
  admits a compatible order, i.e., if $\Pi(X,d)\ne\varnothing$. Then
  $(X,d)$ is said to be \emph{Robinson}.
\end{definition}
Equivalently, $(X,d)$ is Robinson if its distance matrix $D =
(d(p_i,p_j))$ can be symmetrically permuted so that its elements do
not decrease when moving away from the main diagonal along any row or
column. Such a dissimilarity matrix $D$ is said to have the {\it
  Robinson property}~\cite{CrFi,Di,DuFi,Robinson}.

If $Y \subseteq X$, we denote by $(Y,d)$ the dissimilarity space
obtained by restricting $d$ to $Y$; we call $(Y,d)$ a \emph{subspace}
of $(X,d)$. If $(X,d)$ is a Robinson space, then any subspace $(Y,d)$
of $(X,d)$ is also Robinson and the restriction of any compatible
order $\<$ of $X$ to $Y$ is compatible.

\begin{definition}[Block]\label{def:block}
Let $(X,d)$ be a Robinson space. A set $Y\subset X$ is called a {\em
    block} if $Y$ is an interval in any compatible order of $(X, d)$.
\end{definition}

The \emph{ball} of radius $r\ge 0$ centered at $x \in X$ is the set
$B_r(x)=\{y\in X: d(x,y)\leq r\}$. From the definition of compatible
orders it follows all balls of a Robinson space are blocks.  The
\emph{diameter} of a set $Y\subseteq X$ is
$\diam(Y)=\max\{ d(x,y): x,y\in Y\}$ and a pair $x,y\in Y$ such that
$d(x,y)=\diam(Y)$ is called a \emph{diametral pair} of $Y$.

Basic examples of Robinson dissimilarities are
the ultrametrics and line-distances. A \emph{line-distance} is provided
by the standard distance $d(p_i,p_j)=|p_i-p_j|$  between $n$ points
$p_1<\ldots <p_n$ of ${\mathbb R}$. Notice that any line-distance has
exactly two compatible orders: the order $p_1\<\ldots\<p_n$ defined by
the coordinates of the points and its opposite. This leads to the following
notion:

\begin{definition}[Flat Robinson space]\label{def:flat}
  A Robinson space is \emph{flat} is it has exactly two compatible
  orders, reverse of each other. All line-distances are flat but the
  converse is not true.
\end{definition}

\subsection{\texorpdfstring{$X$-trees}{X-trees}}\label{ssec22}
Given a finite set $X$, an \emph{$X$-tree} is an ordered rooted tree
$\Tt$ in which the exists a bijection between $X$ and the set of
leaves of $\Tt$ and any inner node of $\Tt$ has degree at least 2. We
will use Greek letters as variables over nodes of trees but for
convenience we will give the same name to the elements of $X$ and the
corresponding leaves of $\Tt$. As usually, we say that a node $\alpha$
is an ancestor of a node $\beta$ if $\alpha$ belongs to a unique path
of $\Tt$ between $\beta$ and the root. For two nodes $\alpha$ and
$\beta$, we denote by $\lca(\alpha, \beta)$ the \emph{lowest common
  ancestor} of $\alpha$ and $\beta$. Given a node $\alpha$ of $\Tt$,
we denote $X(\alpha) \subseteq X$ the set of leaves having $\alpha$ as
an ancestor. We say that two $X$-trees $\Tt$ and $\Tt'$ are
\emph{isomorphic} if there exists an isomorphism $f$ between the trees
$\Tt$ and $\Tt'$ such that $f(u)=u$ for any $u\in X$ and
$X(f(\alpha))=X(\alpha)$ for any inner node $\alpha$ of $\Tt$.

\begin{definition}[Pertinent node]\label{def:pertinent}
  Given $S \subset X$ and an X-tree $\Tt$ on $X$, we will say that a
  node $\alpha$ is $S$-{\em pertinent} or the {\em pertinent node} of
  $S$ if $\alpha$ is the lowest node such that $S \subseteq X(\alpha)$.
\end{definition}

In this paper we consider three types of $X$-trees: mmodule trees for
arbitrary dissimilarity spaces $(X,d)$, PQ-trees for Robinson spaces
$(X,d)$, and dendrograms for ultrametric spaces $(X,d)$.

\subsection{Ultrametrics}\label{ssec23}
Recall, that a dissimilarity space $(X,d)$ is an {\it ultrametric
  space} if it satisfies the three-point condition
$d(x,y)\le \mbox{max}\{ d(x,z),d(y,z)\}$ for all $x,y,z\in X$. Equivalently,
 the two largest distances among $d(x,y),d(x,z),$ and
$d(y,z)$ are equal. The ultrametrics are thoroughly used in phylogeny
and hierarchical clustering, because they can be represented by
particular $X$-trees, called dendrograms. A \emph{dendrogram} for an
ultrametric space $(X,d)$ is an $X$-tree $\TD$ with weighted edges
such that each inner node $\alpha$ has the same distance to all leaves
in the subtree rooted at $\alpha$. The sets $X(\alpha)$ for nodes
$\alpha$ of $\TD$ are called \emph{clusters} and the set system $\HH$
consisting of all clusters is called a \emph{hierarchy}. Any two
clusters of $\HH$ are either disjoint or one is included in the
another one. $\HH(X)$ is unique, and $\TD(X)$ is unique up to
reordering children of each node. The dendrogram $\TD$ and the weights
of its edges are constructed by the well-known \emph{single-linkage
  clustering algorithm} \cite{Ev,GoRo}. Then for any two points
$x,y\in X$, $d(x,y)$ equals the length of the unique path between $x$
and $y$ in $\TD$ or, equivalently, to twice the height of $x$ and $y$
in the subtree of $\TD$ rooted at the lowest common ancestor
$\lca(x,y)$ of $x$ and $y$.

We consider the dendrogram $\TD$ of an ultrametric space as
unweighted, but we weight each inner node $\alpha$ of $\TD$ by the
distance $d(x,y)$ between any two leaves $x,y\in X(\alpha)$ such that
$\lca(x,y)=\alpha$; we denote the weight of $\alpha$ by $p(\alpha)$.
Notice then when moving from any leaf to the root of $\TD$, the
weights of the nodes occurring on this path are strictly increasing.
The representation of an ultrametric by a dendrogram in which the
edges are weighted is called an \emph{equidistant representation} and
the representation in which the nodes are weighted is called a
\emph{vertex representation} \cite{SeSt}. The two representations are
equivalent: node weights correspond to potential, and the weight of an
edge is half the difference of potential between its two extremities.
It is straightforward to compute the potentials from the differences
of potential (setting the potential of any leaf to be $0$), and
\emph{vice versa}. Notice that, in the vertex representation, the
distance between two leaves $x$ and $y$ is equal to $p(\lca(x,y))$. In
particular, for any inner node $\alpha$ and $x,y \in X(\alpha)$ such
that $x$ and $y$ belong to different children of $\alpha$, we have
$d(x,y) = p(\alpha) = \diam(X(\alpha))$.

One fundamental property of ultrametrics is the existence for each
dissimilarity space $(X,d)$ of the \emph{subdominant ultrametric}
$\widehat{d}$ on $X$: for any ultrametric $d'$ on $X$ such that
$d' \le d$ (i.e., $d'(x,y)\le d(x,y)$ for all $x,y\in X$) we have
$d' \le \widehat{d}\le d$. The subdominant ultrametric $\widehat{d}$
can be defined as follows: for all $x,y \in X$, $\hd(x,y)$ is the minimum over
all paths $P$ between $x$ and $y$ of the maximum weight of an edge on that
path $P$:
$$ \hd(x,y) = \min \{ \max_{uv \in P} d(u,v) : \textrm{ $P$ is a $(x,y)$-path}\}.$$ 
In combinatorial optimization, $\hd(x,y)$  is called the \emph{bottleneck distance} between $x$ and $y$ and 
a $(x,y)$-path providing the minimum in the previous formula is called the 
\emph{bottleneck shortest path}~\cite{Schrijver}. 
The subdominant ultrametric $\widehat{d}$ can be constructed in the
following elegant way. Let $T$ be a minimum spanning tree of the
complete graph on $X$ weighted by the values of the dissimilarity
function $d$. Then for any $x,y\in X$, $\widehat{d}(x,y)$ is the
weight of the heaviest edge of the unique path of $T$ between $x$ and
$y$ \cite{GoRo}. We denote by $\Td$ the dendrogram of the ultrametric
space $(X,\widehat{d})$; sometimes we will say that $\Td$ is the
dendrogram of the dissimilarity space $(X,d)$. We can construct $\Td$
in the following iterative way (which seems us to be new): when using
Prim's algorithm to compute the minimum spanning tree $T$ of $(X,d)$,
one can build $\Td$ by inserting each vertex in $\Td$ at the moment
when it is visited, leading to \Cref{algo:dendo} presented in the
Appendix. This algorithm has complexity $O(n^2)$.

In this paper, we use ultrametrics as an illustrative example. In
fact, for ultrametrics their dendrograms have the same shape as their
PQ-trees and their mmodule trees. Additionally, we use the dendrogram
$\Td$ of the ultrametric subdominant $(X,\widehat{d})$ of a Robinson
space $(X,d)$ in our construction of the mmodule tree. The idea of
constructing the subdominant ultrametric via the minimum spanning tree
also occurs in our techniques.

\subsection{PQ-trees}\label{SUB_SECTION_PQ_tree}

A PQ-tree is a tree-based data structure introduced by Booth and
Lueker~\cite{BoothLueker} in 1976 to efficiently encode a family of
permutations on a set $X$ in which various subsets of $X$ occur
consecutively.

\begin{definition}[PQ-tree]
  A \emph{PQ-tree} over a set $X$ is an $X$-tree $\mathcal{T}$ whose
  internal nodes are distinguished as either P-nodes or Q-nodes.  Two
  PQ-trees are said to be \emph{equivalent} if one can be transformed
  into the other by applying a sequence of the following two
  equivalence transformations.
  \begin{itemize}
  \item[(1)] Arbitrarily permute the children of a P-node.
  \item[(2)] Reverse the children of a Q-node.
  \end{itemize}
  The nodes of arity 2 can be equally viewed as P-nodes and as
  Q-nodes; in our notations, they will be considered as P-nodes.
\end{definition}

We use the convention that P-nodes are represented by circles or
ellipses and Q-nodes are represented by rectangles. For PQ-trees
$\beta_1, \ldots, \beta_k$, we denote $P(\beta_1, \ldots,\beta_k)$ and
$Q(\beta_1, \ldots,\beta_k)$ the PQ-trees whose root are a P-node,
respectively a Q-node, with children $\beta_1,\ldots,\beta_k$ in that
order.

For convenience, we will abusively identify permutations and (total)
orders.  The \emph{canonical order} of a PQ-tree $\mathcal{T}$ over
$X$ is the permutation on $X$ obtained by a left-to-right traversal of
$\mathcal{T}$.

\begin{definition}[Represented permutations of a PQ-tree]
  The set of represented permutations of a PQ-tree $\mathcal{T}$ is
  the set of canonical orders of all PQ-trees equivalent to
  $\mathcal{T}$, and is denoted $\Pi(\mathcal{T})$. A set of
  orders/permutations $\Pi$ is called \emph{representable by a
    PQ-tree} if there exists a PQ-tree $\mathcal T$ such that
  $\Pi({\mathcal T})=\Pi$.
\end{definition}

Represented permutations may also be defined using composition of
orders, as defined below.

\begin{definition}[Composition of orders]
  Let $X$ be a set, $\mathcal{P}$ a partition of $X$,
  $\<_{\mathcal{P}}$ an order on $\mathcal{P}$, and for each part
  $S \in \mathcal{P}$, $\<_S$ an order on $S$. Then the
  \emph{composition} of $\<_{\mathcal{P}}$ and
  $(\<_S)_{S \in \mathcal{P}}$ is the order $\<$ defined by $x \< y$ when
  \begin{enumerate}[label=(\roman*)]
  \item either there is $S \in \mathcal{P}$ with $x, y \in S$, and $x \<_S y$,
  \item or there are $S, S' \in \mathcal{P}$ distinct, with $x \in S$,
    $y \in S'$ and $S \<_{\mathcal{P}} S'$.
  \end{enumerate}
\end{definition}

Then the permutations represented by a PQ-tree matches with orders
obtained by composing an order on the children of the root with a
choice of orders on each children of the root node, where the order on
the children is
\begin{itemize}[label=-]
\item in the case of a P-node: an arbitrary order,
\item in the case of a Q-node: either the order of the children in the
  PQ-tree or its reverse.
\end{itemize}

\begin{example} The PQ-tree $\mathcal T$ of \Cref{PQ-TREE-1} has one Q-node (the
  root) and one P-node $\alpha$ with $X({\alpha})=\{ 1,2,3\}$. The
  subtree rooted at $\alpha$ represents all the permutations of the
  elements $1,2,3$. Consequently, the PQ-tree $\mathcal T$ represents
  the 12 permutations of the form $(\pi,4,5,6,7)$ and
  $(7,6,5,4,\pi)$, where $\pi$ is any permutation on $\{ 1,2,3\}$.

 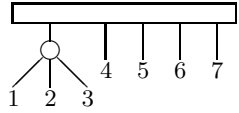
\begin{figure}[h!]
 \begin{center}
  \begin{picture}(110, 33) \setlength{\unitlength}{.7pt}
	\put(10,35){\framebox(120,10){}}
  \put(30,20){\circle{10}}
  \put(30,25){\line(0,1){10}} \put(60,15){\line(0,1){20}}
  \put(80,15){\line(0,1){20}} \put(100,15){\line(0,1){20}}
  \put(120,15){\line(0,1){20}} \put(10,0){\line(1,1){16}}
  \put(34,16){\line(1,-1){16}} \put(30,15){\line(0,-1){15}}
  \put(7,-10){\footnotesize{1}} \put(27,-10){\footnotesize{2}}
  \put(47,-10){\footnotesize{3}} \put(57,5){\footnotesize{4}}
  \put(77,5){\footnotesize{5}} \put(97,5){\footnotesize{6}} \put(117,5){\footnotesize{7}}
  \end{picture}
 \end{center}
 \caption{\label{PQ-TREE-1}\small{A PQ-tree}}
 \end{figure}
\end{example}

Préa and Fortin~\cite{Prea} used PQ-trees to encode the compatible
orderings of a Robinson dissimilarity space $(X,d)$, because this set
of orders is represented by a PQ-tree. We recall this correspondence.
A $(0,1)$-matrix $A$ has the {\it Consecutive Ones Property} ({\it
  C1P}) if its columns can be permuted in such a way that in all rows
the $1$s appear consecutively. Such an order is called {\it
  compatible}. If $A$ is a C1P-matrix, then the sets of all its
compatible permutations can be represented by a PQ-tree; Booth and
Lueker designed an iterative algorithm, using PQ-trees, which
determines if a matrix $M$ has the C1P~\cite{BoothLueker}. Let
${\mathbf B}$ denote the set of all distinct balls of a dissimilarity
space $(X,d)$. Let $M_{\mathbf B}$ be the $\{0,1\}$-matrix whose
columns are indexed by the points of $X$ and rows by the balls of
${\mathbf B}$: for $x\in X$ and $B\in {\mathbf B}$ we define
$M_{\mathbf B}(B,x)=1$ if $x\in B$ and $M_{\mathbf B}(B, x)=0$
otherwise. The following simple result of Mirkin and Rodin~\cite{MiRo}
links Robinson dissimilarities with C1P-matrices:

\begin{proposition}\cite{MiRo}\label{THEOREM_Robinson_Boules}
  A dissimilarity space $(X,d)$ is Robinson if and only if its matrix
  $M_{\mathbf B}$ satisfies the C1P. There exists a bijection between
  the set $\Pi(X,d)$ of orders compatible with $d$ and the set of permutations
  compatible with $M_{\mathbf B}$.
\end{proposition}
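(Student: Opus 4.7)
The plan is to establish both directions by showing that the C1P property of $M_{\mathbf B}$ under a permutation $\<$ of the columns of $X$ is equivalent to $\<$ being a compatible order of $d$. Since the rows of $M_{\mathbf B}$ are the indicator vectors of the balls $B_r(x)$, C1P under $\<$ amounts to saying that every ball $B_r(x)$ is an interval of $\<$. So the proposition reduces to the equivalence: \emph{an order $\<$ on $X$ is compatible with $d$ if and only if every ball of $(X,d)$ is an interval of $\<$.} The bijection will then be trivial (it is literally the identity on orders), and both claims of the proposition follow.

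For the forward direction, I would assume $\<$ is a compatible order and fix a ball $B = B_r(x)$; take $a, c \in B$ and $b$ with $a \< b \< c$, and verify $b \in B$ by case analysis on the position of $x$ relative to $a, b, c$. In each case (e.g.\ $x \< a$, or $a \le x \le c$, or $c \< x$), one of the two triples $(x, b, c)$ or $(a, b, x)$ is monotone in $\<$, and then the compatible-order inequality gives $d(x,b)$ or $d(b,x)$ bounded by $d(x,c)$ or $d(a,x)$, both of which are at most $r$. So $b \in B$, proving $B$ is an interval of $\<$, which is exactly the consecutive-ones condition for that row of $M_{\mathbf B}$.

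For the reverse direction, I would assume every ball is an interval of $\<$ and take any $x \< y \< z$. Applying the interval property to $B_{d(x,z)}(x)$, which contains $x$ and $z$, forces $y \in B_{d(x,z)}(x)$, i.e.\ $d(x,y) \le d(x,z)$; symmetrically, applying it to $B_{d(x,z)}(z)$ yields $d(y,z) \le d(x,z)$. Hence $\<$ is compatible, completing the equivalence.

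The only mildly delicate step is the case analysis in the forward direction, but it is purely a matter of handling the at-most-four relative positions of $x$ with respect to $a, b, c$. Given the equivalence of the two conditions on $\<$, the set of compatible orders of $d$ coincides with the set of C1P-permutations of $M_{\mathbf B}$, and the stated bijection is just the identity on $\Pi(X,d)$. In particular, $(X,d)$ is Robinson (i.e.\ $\Pi(X,d)\ne\varnothing$) iff $M_{\mathbf B}$ admits a C1P-permutation.
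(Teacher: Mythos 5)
Your proof is correct. Note that the paper does not actually prove this statement; it is quoted from Mirkin and Rodin \cite{MiRo} without proof, so there is no in-paper argument to compare against. Your reduction of the C1P condition to ``every ball is an interval of $\<$'' is exactly the right observation, the case analysis on the position of $x$ relative to $a \< b \< c$ in the forward direction is complete (the three cases $x \cleq a$, $a \< x \< c$, $c \cleq x$ each yield a monotone triple $(x,b,c)$ or $(a,b,x)$ and hence $d(x,b) \le r$), and the reverse direction via the balls $B_{d(x,z)}(x)$ and $B_{d(x,z)}(z)$ is the standard argument; the bijection being the identity on orders then gives both assertions at once.
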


Since the sets of all compatible permutations of a C1P-matrix can be
represented by a PQ-tree~\cite{BoothLueker}, from
\Cref{THEOREM_Robinson_Boules} we obtain:

\begin{corollary}
  The set $\Pi(X,d)$ of all compatible orders of a Robinson space $(X,d)$ can be
  represented by a PQ-tree.
\end{corollary}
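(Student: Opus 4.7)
The statement is essentially a direct consequence of the two results just cited, so my plan is to assemble them cleanly rather than prove anything substantial from scratch.

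First, I would invoke \Cref{THEOREM_Robinson_Boules}: since $(X,d)$ is Robinson, the $\{0,1\}$-matrix $M_{\mathbf B}$ whose rows are indexed by the balls of $(X,d)$ and whose columns are indexed by $X$ satisfies the C1P. Second, I would apply the Booth--Lueker theorem~\cite{BoothLueker}, which asserts that for any C1P-matrix $A$ on column set $X$, the set of permutations of $X$ under which the $1$s of every row appear consecutively is exactly the set $\Pi(\mathcal T)$ of represented permutations of some PQ-tree $\mathcal T$ on $X$.

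The only thing to verify is that the bijection in \Cref{THEOREM_Robinson_Boules} between $\Pi(X,d)$ and the set of permutations compatible with $M_{\mathbf B}$ is the identity on permutations of $X$, so that the PQ-tree representing the latter also represents $\Pi(X,d)$. This is immediate from how the bijection is set up: both sides consist of total orders on the same ground set $X$, and an order $<$ lies in $\Pi(X,d)$ iff every ball $B \in \mathbf B$ is an interval of $<$, which is exactly the condition that $<$ be compatible with $M_{\mathbf B}$. Hence $\Pi(X,d) = \Pi(\mathcal T)$, and $\Pi(X,d)$ is representable by a PQ-tree.

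There is no real obstacle here; the only thing one needs to be careful about is the identification of orders on the two sides of the Mirkin--Rodin bijection, which the definitions make transparent. If one wanted to be fully self-contained, one could recall the forward direction of \Cref{THEOREM_Robinson_Boules} (all balls of a Robinson space are blocks, which was noted right after \Cref{def:block}) and the reverse direction (if every ball is an interval of $<$, then for $x<y<z$ the ball $B_{d(x,z)}(x)$ contains $z$ hence $y$, giving $d(x,y)\le d(x,z)$, and symmetrically $d(y,z)\le d(x,z)$), but this is not needed for the corollary itself.
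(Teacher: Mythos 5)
Your proposal is correct and follows exactly the paper's route: the paper derives the corollary immediately by combining \Cref{THEOREM_Robinson_Boules} with the Booth--Lueker result that the compatible permutations of a C1P-matrix are representable by a PQ-tree. Your extra remark that the Mirkin--Rodin bijection is the identity on orders (so that the two permutation sets literally coincide) is a harmless and correct elaboration of a point the paper leaves implicit.
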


For a Robinson space $(X,d)$, we denote by $\TPQ(X,d)$, or $\TPQ$ for
short, its PQ-tree (unique up to equivalence).

\subsection{Mmodules and copoints} \label{mmodulesgeneral}

In this subsection, we recall the basic facts about the mmodules and
copoints in dissimilarity spaces from our paper \cite{CaCheNaPre}. Let
$(X,d)$ be a dissimilarity space.

\begin{definition}[Mmodule]
  A set $M \subseteq X$ is called an \emph{mmodule} (a \emph{metric
    module} or a \emph{matrix module}) if $M$ cannot be distinguished
  from outside, {\it i.e.}, for any $z \in X \setminus M$ and all
  $x,y \in M$ we have $d(z,x) = d(z,y)$.
\end{definition}

In graph theory, the subgraphs indistinguishable from the outside are
called modules, explaining our choice of the term ``mmodule''. Denote
by $\M(X,d)$ (or $\M$ fo short) the set of all mmodules of
$(X,d)$. Trivially, $\varnothing, X,$ and $\{ p\}, p\in X$ are
mmodules; we call them \emph{trivial mmodules}. An mmodule $M$ is
called \emph{maximal} if $M$ is a maximal by inclusion mmodule
different from $X$. Denote by $\Mmax(X,d)$ (or $\Mmax$ for short) the
set of all maximal mmodules of $(X,d)$. We continue with the basic
properties of mmodules.

\begin{proposition}\cite[Proposition 3.1]{CaCheNaPre}\label{mmodules}
Let $(X, d)$ be a dissimilarity space.
  The set $\M = \M(X,d)$ has the following properties:
  \begin{enumerate}[label=(\roman*)]
  \item\label{item:mmod1} $M_1, M_2 \in \M$ implies that $M_1 \cap M_2\in \M$;
  \item\label{item:mmod2} if $M \in \M$ and $M'\subset M$, then $M'\in \M$ if and
    only if $M'$ is an mmodule of $(M,d|_M)$;
  \item\label{item:mmod3} if $M_1, M_2 \in \M, $ $M_1 \cap M_2 \ne \varnothing$,
    then $M_1 \cup M_2 \in \M$, and if additionally
    $M_1 \setminus M_2 \ne \varnothing, M_2 \setminus M_1 \ne \varnothing$,
    then $M_1 \cup M_2, M_1 \setminus M_2, M_2 \setminus M_1,$ and
    $M_1 \Delta M_2$ are mmodules;
  \item\label{item:mmod4} the union $M_1 \cup M_2$ of two intersecting maximal
    mmodules $M_1, M_2\in \M$ is $X$;
  \item\label{item:mmod5} if $M_1$ and $M_2$ are two disjoint maximal mmodules and
    $M$ is a nontrivial mmodule contained in $M_1 \cup M_2$, then either
    $M \subset M_1$ or $M \subset M_2$;
  \item\label{item:mmod6} if $M_1,M_2 \in \M$ and $M_1 \cap M_2 = \varnothing$, then
    $d(u,v) = d(u',v')$ for any (not necessarily distinct) points
    $u, u' \in M_1$ and $v, v' \in M_2$.
  \end{enumerate}
\end{proposition}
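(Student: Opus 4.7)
The plan is to prove the six items essentially in order, since later items build on earlier ones, and to reduce everything to the defining property of an mmodule: any outside point sees all inside points at the same distance.

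For \ref{item:mmod1}, I take $z \notin M_1 \cap M_2$ and $x, y \in M_1 \cap M_2$; then $z$ misses at least one of $M_1, M_2$, and the defining property of that mmodule gives $d(z,x) = d(z,y)$. For \ref{item:mmod2}, the forward direction is immediate since $M \setminus M' \subseteq X \setminus M'$. For the reverse, I split a test point $z \in X \setminus M'$ into the subcases $z \in M \setminus M'$ (use that $M'$ is an mmodule of $M$) and $z \in X \setminus M$ (use that $M$ is an mmodule of $X$, so even $d(z,x) = d(z,y')$ for any $y' \in M$, in particular $y' = y \in M' \subset M$).

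For \ref{item:mmod3}, the union $M_1 \cup M_2$ is handled by fixing some $w \in M_1 \cap M_2$: for $z \notin M_1 \cup M_2$ and $x \in M_1 \cup M_2$, either $x \in M_1$ (so $d(z,x) = d(z,w)$ via $M_1$) or $x \in M_2$ (so $d(z,x) = d(z,w)$ via $M_2$). The difficult piece is showing $M_1 \setminus M_2$ is an mmodule when both proper differences are nonempty; the forward cases $z \notin M_1$ and $z \in M_2 \setminus M_1$ are immediate from $M_1$'s property, but the case $z \in M_1 \cap M_2$ requires a detour: pick any $z' \in M_2 \setminus M_1$, apply $M_1$'s property to $z'$ giving $d(z',x) = d(z',y)$ for $x, y \in M_1 \setminus M_2$, then apply $M_2$'s property from the outside point $x$ (resp.\ $y$) to conclude $d(x,z) = d(x,z')$ and $d(y,z) = d(y,z')$, and chain these equalities. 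By symmetry $M_2 \setminus M_1$ is an mmodule, and $M_1 \Delta M_2$ then follows either by the same method or by noting it is a disjoint union of two mmodules whose union is an mmodule by (i) and (iii) applied to subcases.

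For \ref{item:mmod4}, distinct intersecting maximal mmodules $M_1, M_2$ cannot contain one another (by maximality), so both proper differences are nonempty and \ref{item:mmod3} gives $M_1 \cup M_2 \in \M$; this strictly contains $M_1$ and so must equal $X$ by maximality. For \ref{item:mmod5}, if a nontrivial mmodule $M \subseteq M_1 \cup M_2$ met both disjoint maximal mmodules without being contained in either, then \ref{item:mmod3} would make $M \cup M_1$ and $M \cup M_2$ mmodules strictly larger than $M_1$ and $M_2$, respectively; by maximality each equals $X$, so $M \supseteq (X \setminus M_1) \cup (X \setminus M_2) = X \setminus (M_1 \cap M_2) = X$, contradicting nontriviality. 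Finally, for \ref{item:mmod6}, since $M_1 \cap M_2 = \varnothing$, applying $M_1$'s property with $v \in X \setminus M_1$ gives $d(v,u) = d(v,u')$, and applying $M_2$'s property with $u' \in X \setminus M_2$ gives $d(u',v) = d(u',v')$, and chaining yields the claim. The only real obstacle is the mixed case in \ref{item:mmod3} for $M_1 \setminus M_2$, which is why the hypothesis $M_2 \setminus M_1 \neq \varnothing$ is used (it produces the auxiliary witness $z'$).
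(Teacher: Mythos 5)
The paper does not actually prove this proposition; it is imported from \cite[Proposition 3.1]{CaCheNaPre}, so your argument can only be judged on its own terms. On those terms it is essentially correct and follows the only natural route: every item reduces to the defining property of an mmodule, with (i), (ii), (vi) immediate, the union in (iii) handled via a common witness $w \in M_1 \cap M_2$, the difference $M_1 \setminus M_2$ via the auxiliary witness $z' \in M_2 \setminus M_1$, and (iv), (v) obtained by combining (iii) with maximality. In particular, your use in (v) of the first half of (iii) (which needs only $M \cap M_i \ne \varnothing$, not nonempty differences) is exactly right, as is the observation in (iv) that two distinct intersecting maximal mmodules cannot be nested.

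The one soft spot is $M_1 \Delta M_2$. Your fallback justification --- that it is a disjoint union of two mmodules whose union is an mmodule by (i) and (iii) --- does not work as stated: (iii) requires the two sets to intersect, and $M_1 \setminus M_2$, $M_2 \setminus M_1$ are disjoint (a disjoint union of mmodules is not an mmodule in general). Your primary suggestion (``the same method'') does work, but the case that genuinely needs an argument is $z \in M_1 \cap M_2$ with $x \in M_1 \setminus M_2$ and $y \in M_2 \setminus M_1$: there $y \notin M_1$ and $z,x \in M_1$ give $d(y,z) = d(y,x)$, while $x \notin M_2$ and $z,y \in M_2$ give $d(x,z) = d(x,y)$, whence $d(z,x) = d(z,y)$. (The case $z \notin M_1 \cup M_2$ follows from $M_1 \cup M_2$ being an mmodule, and the case where $x,y$ lie in the same difference follows from that difference being an mmodule.) With that two-line case supplied, the proof is complete.
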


The next results show that the mmodules of a dissimilarity can be
organized within a tree-structure. We say that a family of subsets
$\{M_1,\ldots,M_k\}$ of $X$ is a \emph{copartition} of $X$ if
$\{X \setminus M_1,\ldots, X \setminus M_k\}$ is a partition of $X$.
For a set $M \subseteq X$, we denote its complement by
$\co{M} = X \setminus M$.

\begin{lemma}\cite[Lemma 3.7]{CaCheNaPre}\label{lemma:partition-copartition}
  Let $(X,d)$ be a dissimilarity space.  Then $\Mmax(X,d)$ is either a
  partition or a copartition of $X$.
\end{lemma}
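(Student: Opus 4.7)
The plan is to dichotomize based on whether any two distinct elements of $\Mmax$ intersect, handling each side with Proposition~\ref{mmodules}. First I would dispose of $|X|\le 1$ (where $\Mmax=\varnothing$ and the statement is vacuous), and note that $|\Mmax|\ge 2$ whenever $|X|\ge 2$: given $M\in\Mmax$ and $p\in X\setminus M$, the singleton $\{p\}$ extends along an ascending chain of proper mmodules (terminating by finiteness of $X$) to a second maximal one.

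\textbf{Partition case.} If all distinct pairs in $\Mmax$ are disjoint, then showing $\Mmax$ is a partition reduces to verifying that its elements cover $X$. For any $x\in X$, the singleton $\{x\}$ is an mmodule, and extending it as above yields a maximal proper mmodule containing $x$. Pairwise disjointness and non-emptiness are automatic.

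\textbf{Copartition case.} Suppose some $M_1,M_2\in\Mmax$ satisfy $M_1\cap M_2\ne\varnothing$; by Proposition~\ref{mmodules}(iv) we have $M_1\cup M_2=X$. I would first claim every pair of distinct maximal mmodules intersects: a putative $M_3\in\Mmax$ disjoint from $M_1$ would satisfy $M_3\subseteq X\setminus M_1\subseteq M_2$, and since $M_3\ne M_2$ (otherwise $M_3\cap M_1=M_2\cap M_1\ne\varnothing$), this would force $M_3\subsetneq M_2$, contradicting the maximality of $M_3$. By Proposition~\ref{mmodules}(iv) again, every pair of distinct maximal mmodules has union $X$, so the complements $\{X\setminus M:M\in\Mmax\}$ are pairwise disjoint; each is also a non-empty mmodule, since for any $M\in\Mmax$ and any other $M'\in\Mmax$ one has $X\setminus M = M'\setminus M\in\M$ by Proposition~\ref{mmodules}(iii).

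\textbf{Main obstacle.} The remaining step---and the main difficulty---is to show that the complements cover $X$, equivalently that $I:=\bigcap_{M\in\Mmax}M=\varnothing$. I would argue by contradiction: assuming $x\in I$, the mmodule $N:=M_1\triangle M_2=X\setminus(M_1\cap M_2)$ (from Proposition~\ref{mmodules}(iii)) avoids $x$, and the goal is to extend $N$ to a maximal proper mmodule that still avoids $x$, contradicting $x\in I$. When $|M_1\cap M_2|=1$ this is immediate, since $N=X\setminus\{x\}$ is itself maximal. In the general case, any maximal proper extension $M^*$ of $N$ decomposes as $M^*=N\cup K$ with $K:=M^*\cap M_1\cap M_2$ a non-empty proper subset of $M_1\cap M_2$ (strict because $M^*\ne X$), and I would combine Proposition~\ref{mmodules}(vi), which fixes the constant distances from $x$ to the disjoint mmodules $X\setminus M_1$ and $X\setminus M_2$, with the mmodule structure of $M^*$ to produce a strictly smaller mmodule inside $M^*$ still containing $N$ but avoiding $x$, iterating until a maximal mmodule avoiding $x$ is obtained. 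Closing this descent cleanly---or, alternatively, passing to the quotient dissimilarity space obtained by collapsing $I$ to a single point and inducting on $|X|$---is the delicate part of the argument.
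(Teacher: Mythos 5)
Your reduction is sound up to the last step: the partition case is fine, and in the copartition case you correctly get that all maximal mmodules pairwise intersect, hence pairwise union to $X$, so their complements are pairwise disjoint non-empty mmodules. The genuine gap is exactly where you flag it: showing the complements cover $X$, i.e.\ $\bigcap_{M\in\Mmax}M=\varnothing$. The descent you sketch is not a coherent argument. Under the standing hypothesis that $x$ lies in \emph{every} maximal proper mmodule, there is no ``maximal mmodule avoiding $x$'' to be reached, so you cannot terminate an iteration at one; and producing ever \emph{smaller} mmodules inside $M^\ast$ that contain $N$ and avoid $x$ moves away from maximality rather than toward it. What is needed is not a descent but the exhibition of one specific proper mmodule avoiding $x$ that is provably maximal, which then contradicts $x\in I$ directly.

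The missing idea is the one this paper packages as the graph $\Gdelta$ (cf.\ \Cref{lemma:copartition-delta}, whose proof does not rely on the present lemma, so there is no circularity). Write $A:=X\setminus M_2$, $B:=X\setminus M_1$, $C:=M_1\cap M_2$; these are pairwise disjoint non-empty mmodules, so by \Cref{mmodules}\ref{item:mmod6} the distances $\delta_{AB},\delta_{AC},\delta_{BC}$ between them are constants, and since $M_1=A\cup C$ and $M_2=B\cup C$ are mmodules, a point of $B$ cannot distinguish $A$ from $C$ and a point of $A$ cannot distinguish $B$ from $C$; hence $\delta_{AB}=\delta_{AC}=\delta_{BC}=:\delta$. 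Consequently $\Gdelta$ has no edge between $A$, $B$ and $C$, so every union of connected components of $\Gdelta$ is an mmodule. If $x\in I$, let $C_x\subseteq C$ be the component of $\Gdelta$ containing $x$; then $M:=X\setminus C_x$ is a proper non-empty mmodule avoiding $x$, and it is maximal: for any proper mmodule $M'\supsetneq M$, each $v\in X\setminus M'\subseteq C_x$ has $d(v,\cdot)$ constant on $M'\supseteq A$, hence equal to $\delta$, so $X\setminus M'$ would be a non-empty union of $\Gdelta$-components strictly contained in the single component $C_x$ --- impossible. This contradiction closes the covering step. (Two minor points besides: your pairwise-intersection claim is only argued against $M_1$, though it extends since every $M_3\in\Mmax$ must contain $X\setminus M_1$; and the quotient-by-$I$ induction you mention as an alternative is likewise left unsubstantiated.)
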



\begin{proposition}\cite[Proposition 3.9]{CaCheNaPre}\label{mmodule-tree}
  Let $(X,d)$ be a dissimilarity space. There is a unique $X$-tree (up to
  children reorderings) with inner nodes labelled by
  $\cup$ or $\cap$, such that:
  \begin{enumerate}[label=(\roman*)]
  \item if a node $\alpha$ is a $\cup$-node, then its arity is at
    least three and for any child $\beta$ of $\alpha$, $X(\beta)$ is
    an mmodule,
  \item if a node $\alpha$ is a $\cap$-node, then its arity is at
    least two, and for any proper subset $\{\beta_1,\ldots,\beta_i\}$
    of children of $\alpha$, $X(\beta_1) \cup \ldots X(\beta_i)$ is an
    mmodule.
  \item any proper mmodule appears exactly once as in (i) and (ii).
  \end{enumerate}
\end{proposition}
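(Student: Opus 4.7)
The plan is to proceed by strong induction on $|X|$, with \Cref{lemma:partition-copartition} as the driving tool. For $|X|=1$ the tree is a single leaf; for the inductive step I split on whether $\Mmax(X,d) = \{M_1, \ldots, M_k\}$ is a partition or a copartition of $X$.

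For existence, in the partition case with $k \geq 3$ I root $\Tt$ at a $\cup$-node with children $\beta_1,\ldots,\beta_k$ defined by $X(\beta_i) = M_i$, making (i) immediate. Otherwise I root $\Tt$ at a $\cap$-node with children $\beta_1,\ldots,\beta_k$ defined by $X(\beta_i) = N_i := X \setminus M_i$. The $N_i$ partition $X$ by definition of a copartition, and since distinct maximal mmodules $M_i, M_j$ then satisfy $M_i \cup M_j = X$ and hence intersect, $N_i = M_j \setminus M_i$ is itself an mmodule by \Cref{mmodules}\ref{item:mmod3}. For any proper nonempty $I$, the identity $\bigcup_{i \in I} N_i = \bigcap_{j \notin I} M_j$ combined with \Cref{mmodules}\ref{item:mmod1} yields (ii). I then recurse on each child with the restricted dissimilarity.

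Property (iii) is verified by showing that every proper mmodule $M$ appears exactly once. In the partition case, any such $M$ must be contained in some $M_i$: otherwise $M$ would properly overlap some $M_i$, and \Cref{mmodules}\ref{item:mmod3} together with the maximality of $M_i$ would force $M \cup M_i = X$; iterating this with a second index on which $M$ overlaps nontrivially collapses to $M = X$, a contradiction. Such an $M$ then appears uniquely in the subtree rooted at the unique child containing it, by induction. In the copartition case, a parallel case analysis via \Cref{mmodules}\ref{item:mmod3} and \ref{item:mmod1} shows that each proper mmodule is either contained in a single $N_i$ (handled recursively) or takes the form $\bigcap_{j \in J} M_j$ for a unique nontrivial $J \subsetneq \{1,\ldots,k\}$, so it appears exactly once as the union of the corresponding children under the root via (ii).

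For uniqueness I again use induction: each child $\beta$ of the root of any tree satisfying (i)--(iii) has $X(\beta)$ an mmodule (directly from (i), or from (ii) applied to singleton subsets of children), and (iii) then forces the family of $X(\beta)$'s to coincide with the $M_i$'s or the $N_i$'s as constructed, pinning down the label of the root as well. The main obstacle will be to disentangle the two ways an mmodule can appear in (iii) --- as some $X(\alpha)$ or as a union of a proper subset of $\cap$-node siblings --- and to verify, in the copartition case, that the intersection mmodules $\bigcap_{j \notin I} M_j$ for distinct proper $I$ are pairwise distinct subsets of $X$, so that the ``exactly once'' requirement is coherent and rigidly forces the root structure.
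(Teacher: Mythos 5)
The paper does not actually prove this proposition; it is imported verbatim from \cite[Proposition 3.9]{CaCheNaPre}, so there is no in-paper proof to compare against. That said, your route is the natural one and it matches the structure the paper itself attributes to the tree (``If $\Mmax$ is a partition but not a bipartition, then the root is a $\cup$-node, while if $\Mmax$ is a copartition (possibly a bipartition), it is a $\cap$-node''): drive the recursion with \Cref{lemma:partition-copartition}, realize the root's children as the $M_i$ or the complements $N_i$, and use \Cref{mmodules}\ref{item:mmod1}--\ref{item:mmod3} plus \Cref{mmodules}\ref{item:mmod2} to push mmodules into the subspaces. I checked the key identities: $\bigcup_{i\in I}N_i=\bigcap_{j\notin I}M_j$ does give (ii) at a $\cap$-root, and the overlap argument does confine every proper mmodule either to a single child or to a union of complete parts $N_i$. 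Your sketch is correct, but three points need tightening before it is a proof. First, in the partition case the ``iterating with a second index'' step should be replaced by the direct observation that $M\cup M_1=X$ forces $M\supseteq M_2$ for any other part $M_2$ (disjointness gives $M_2\subseteq X\setminus M_1\subseteq M$), which contradicts maximality of $M_2$ at once. Second, when the copartition has exactly two parts, $M_1\cap M_2=\varnothing$, so \Cref{mmodules}\ref{item:mmod3} does not apply to show $N_i$ is an mmodule; there you simply have $N_i=M_{3-i}$. Third, for uniqueness the decisive (and currently implicit) lemma is that in any tree satisfying (i)--(iii) every \emph{maximal} mmodule must appear at the root level: an appearance at a deeper node would place it strictly inside the proper mmodule $X(\beta)$ for the relevant child $\beta$ of the root, contradicting maximality; from this, maximality forces the root-level appearances to be exactly the $M_i$ (for a $\cup$-root) or the $(k-1)$-fold unions $X\setminus X(\beta_i)$ (for a $\cap$-root), which pins down both the children and the root's label, and a counting argument ($\sum_i|M_i|=(k-1)|X|$ in the copartition case) shows the two cases cannot coexist for $k\geq 3$. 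With these repairs the argument goes through.
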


If $\Mmax$ is a partition but not a bipartition, then the root is a
$\cup$-node, while if $\Mmax$ is a copartition (possibly a
bipartition), it is a $\cap$-node. For a dissimilarity space $(X,d)$,
we call the $X$-tree defined in \Cref{mmodule-tree} the {\em
  mmodule-tree} of $(X,d)$ and we denote it by $\TM(X,d)$ (or $\TM$
for short). It would be tempting to believe that, for Robinson spaces, $\cup$-nodes
correspond to $Q$-nodes of $\TPQ$ and $\cap$-nodes to $P$-nodes. We
will see later that this is not always the case, and we will describe
precisely the relationship between $\TPQ$ and $\TM$.

\begin{definition}[Copoint]
  A \emph{copoint at a point $p$} (or a \emph{$p$-copoint}) is any
  maximal by inclusion mmodule $C$ not containing $p$; the point $p$
  is the \emph{attaching point } of $C$.
\end{definition}

The copoints of $\M$ minimally generate $\M$, in the sense that each
mmodule $M$ is the intersection of all copoints containing $M$.
Maximal by inclusion mmodules are copoints but the converse is not
true. Denote by $\C_p$ the set of all copoints at $p$ plus the trivial
mmodule $\{ p\}$.

\begin{lemma}\cite[Lemma 3.4]{CaCheNaPre}\label{copoint-partition}
  For any point $p$ of a dissimilarity space $(X,d)$, the copoints of
  $\C_p$ are pairwise disjoint and define a partition of the set $X$.
\end{lemma}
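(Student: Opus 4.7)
The plan is to establish the two conditions for a partition: covering and pairwise disjointness. Both follow quite directly from \Cref{mmodules}.

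For covering: the singleton $\{p\}$ is in $\C_p$ by definition, so $p$ is covered. For any other point $q \in X \setminus \{p\}$, the singleton $\{q\}$ is a trivial mmodule not containing $p$. Hence $\{q\}$ is contained in at least one maximal-by-inclusion mmodule not containing $p$, i.e., in some $p$-copoint $C$, so $q \in C$. (Existence of a maximal such mmodule is immediate since $X$ is finite.)

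For pairwise disjointness: first, $\{p\}$ is disjoint from every $p$-copoint by definition (a $p$-copoint does not contain $p$). Second, suppose for contradiction that two distinct $p$-copoints $C_1$ and $C_2$ satisfy $C_1 \cap C_2 \ne \varnothing$. By \Cref{mmodules}\ref{item:mmod3}, the union $C_1 \cup C_2$ is an mmodule. Since $p \notin C_1$ and $p \notin C_2$, we have $p \notin C_1 \cup C_2$, so in particular $C_1 \cup C_2 \ne X$. Thus $C_1 \cup C_2$ is a proper mmodule not containing $p$, strictly containing both $C_1$ and $C_2$ (because they are distinct and intersect, hence neither is equal to their union unless they coincide). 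This contradicts the maximality of $C_1$ (or of $C_2$) among mmodules avoiding $p$.

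There is no real obstacle here: the proof is a direct application of \Cref{mmodules}\ref{item:mmod3}, together with the observation that $p \notin C_1 \cup C_2$ rules out the degenerate possibility $C_1 \cup C_2 = X$ that would otherwise prevent us from contradicting maximality. The only subtlety worth stating explicitly is that \emph{maximal} in the definition of $p$-copoint means maximal among mmodules not containing $p$, not among all proper mmodules; the argument above uses precisely this notion of maximality.
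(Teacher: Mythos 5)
Your proof is correct. The paper does not actually prove this lemma itself --- it imports it from \cite{CaCheNaPre} --- and your argument (covering via finiteness and the trivial mmodules $\{q\}$, disjointness via \Cref{mmodules}\ref{item:mmod3} applied to two intersecting copoints) is the standard one and matches what one would expect there. One small imprecision: your parenthetical claim that the union strictly contains \emph{both} $C_1$ and $C_2$ fails when one copoint is nested in the other (if $C_1 \subsetneq C_2$ then $C_1 \cup C_2 = C_2$); but in every case the union strictly contains at least one of them, which is all your final sentence actually uses, so the contradiction with maximality stands.
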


We call $\C_p=\{ C_0=\{ p\}, C_1,\ldots,C_k\}$ a \emph{copoint
  partition} of $(X,d)$ with attaching point $p$. The copoint
partition $\C_p$ is called \emph{trivial} if $\C_p$ consists only of
the points of $X$, i.e., $\C_p=\{ \{ x\}: x\in X\}$, and
\emph{co-trivial} if $\C_p=\{ \{ p\}, X\setminus \{ p\}\}$, i.e., all
points of $X\setminus \{ p\}$ have the same distance to $p$.

We conclude this subsection with the definition of a quotient space of
a dissimilarity space $(X,d)$. In~\cite{CaCheNaPre} we defined and
used it in case of copoint partitions $\C_p$. Now, we will define it
for arbitrary partitions of $X$ into mmodules.

\begin{definition}[Quotient space]\label{def:quotient}
  Let $\M'=\{ M_1,\ldots,M_k\}$ be a partition of a dissimilarity
  space $(X,d)$ into mmodules. The \emph{quotient space}
  $(\M',\widehat{d})$ of $(X,d)$ has the mmodules of $\M'$ as points
  and for $M_i,M_j, i\ne j$ of $\M'$ we set
  $\widehat{d}(M_i,M_j):=d(u,v)$ for an arbitrary pair
  $u\in M_i,v\in M_j$.
\end{definition}
From the definition of mmodules and since $\M'$ is a partition, the
notion of a quotient space is well-defined because $d(u,v)$ is the
same for any choice of the points $u\in M_i$ and $v\in M_j$.

\subsection{Examples}\label{SUBSECTION_example}

In order to illustrate the notions introduced in this section, we give the following example.
In \Cref{TABLE_gros_example} we present a Robinson
  space $(X,d)$ on 12 points. In \Cref{FIGURE_gros_example} we provide
  its PQ-tree and its mmodule-tree. In \Cref{fig:copoints} we list all
  its non-trivial copoints, together with the point to which they are
  attached.

\begin{figure}[htpb]
  {\scriptsize{
      $$
      \begin{array}{c}
        \vspace{0.1cm} D\\
        1 \\ 2 \\ 3 \\ 4 \\ 5 \\ 6 \\ 7 \\ 8 \\ 9 \\ 10 \\ 11 \\  12
      \end{array}
      \,
      \begin{array}{ccccccccccccc}
        \vspace{0.1cm}1 & 2 & 3 & 4 & 5 & 6 & 7 & 8 & 9 & 10 & 11 & 12 \\
        0 & 2 & 2 & 3 & 5 & 5 & 5 & 8 & 8 & 8 & 8 & 8 \\
          & 0 & 1 & 2 & 5 & 5 & 5 & 8 & 8 & 8 & 8 & 8 \\
          &   & 0 & 2 & 5 & 5 & 5 & 8 & 8 & 8 & 8 & 8 \\
          &   &   & 0 & 5 & 5 & 5 & 8 & 8 & 8 & 8 & 8 \\
          &   &   &   & 0 & 1 & 1 & 6 & 6 & 6 & 6 & 6 \\
          &   &   &   &   & 0 & 1 & 6 & 6 & 6 & 6 & 6 \\
          &   &   &   &   &   & 0 & 6 & 6 & 6 & 6 & 6 \\
          &   &   &   &   &   &   & 0 & 1 & 2 & 2 & 3 \\
          &   &   &   &   &   &   &   & 0 & 2 & 2 & 2 \\
          &   &   &   &   &   &   &   &   & 0 & 2 & 2 \\
          &   &   &   &   &   &   &   &   &   & 0 & 2 \\
          &   &   &   &   &   &   &   &   &   &   & 0
      \end{array}
      $$
    }}
  \medskip
  \caption{\small{The distance matrix $D$ of a Robinson space $(X,d)$
      with $X = \{1,\ldots, 12\}$.}
    \label{TABLE_gros_example}}
\end{figure}

\begin{figure}[h!]
  \begin{tikzpicture}[scale=.45]
    \begin{scope}
      \draw(0, 2.5)-- (0,0) [below] node{\scriptsize{$1$}} ;
      \draw(1.2, .9)-- (1,0) [below] node{\scriptsize{$2$}} ;
      \draw(1.8, .9)-- (2,0) [below] node{\scriptsize{$3$}} ;
      \draw(1.5, 1.4) circle(.6) ; \draw(1.5, 2.5) -- (1.5, 2) ;
      \draw(1.5, 1.4) node {\scriptsize{$\gamma_1$}};

      \draw(3, 2.5)-- (3,0) [below] node{\scriptsize{$4$}} ;
      \draw(-.3, 2.5) -- (3.3, 2.5) -- (3.3, 3.3) -- (-.3, 3.3) -- (-.3, 2.5) ;
      \draw(1.5, 2.9) node {\scriptsize{$\beta_1$}} ;
      \draw(1.5, 4.1) -- (1.5, 3.3) ;
      \draw(4.55, .95) -- (4, 0) [below] node{\scriptsize{$5$}} ;
      \draw(5, .8) -- (5, 0) [below] node{\scriptsize{$6$}} ;
      \draw(5.45, .95) -- (6, 0) [below] node{\scriptsize{$7$}} ;
      \draw(5, 1.4) circle(.6) ; \draw(5, 1.4) node {\scriptsize{$\beta_2$}};
      \draw(5, 4.1)-- (5, 2) ;
      \draw(7, 2.5)-- (7,0) [below] node{\scriptsize{$8$}} ;
      \draw(8, 2.5)-- (8,0) [below] node{\scriptsize{$9$}} ;
      \draw(9.2, .9) -- (9,0) [below] node{\scriptsize{$10$}} ;
      \draw(9.8, .9) -- (10,0) [below] node{\scriptsize{$11$}} ;
      \draw(9.5, 1.4) circle(.6) ; \draw(9.5, 2.5) -- (9.5, 2) ;
      \draw(9.5, 1.4) node {\scriptsize{$\gamma_2$}};
      \draw(11, 2.5) -- (11, 0) [below] node{\scriptsize{$12$}} ;
      \draw(6.7, 2.5) -- (11.3, 2.5) -- (11.3, 3.3) -- (6.7, 3.3) -- (6.7, 2.5) ;
      \draw(9, 2.9) node {\scriptsize{$\beta_3$}} ;
      \draw(9, 4.1) -- (9, 3.3) ;
      \draw(-.3, 4.1) -- (11.3,4.1) -- (11.3, 4.9) -- (-.3, 4.9) -- (-.3, 4.1) ;
      \draw(6, 4.5) node{\scriptsize{$\alpha$}} ;
      \draw(5.5, -1.5) node{(A)} ;
    \end{scope}

    \begin{scope}[xshift= 15cm]
      \draw(-.7, .85) -- (-1, 0) [below] node{\scriptsize{$1$}} ;
      \draw(-.3, .85) -- (0, 0) [below] node{\scriptsize{$4$}} ;
      \draw(-.5, 1.4) circle(.6) ; \draw(-1.5, 1.4) node{\small{$\eta_1$}} ;
      \draw(-.5, 1.35) node{\scriptsize{$\cap$}} ;
      \draw(1.3, .85) -- (1, 0) [below] node {\scriptsize{$2$}} ;
      \draw(1.7, .85) -- (2, 0) [below] node{\scriptsize{$3$}} ;
      \draw(1.5, 1.4) circle(.6) ; \draw(1.5, 1.35) node{\scriptsize{$\cap$}} ;
      \draw(2.6, 1.4) node{\small{$\eta_2$}} ;
      \draw(0.1, 2.6) -- (-.5, 2) ;
      \draw(.9, 2.6) -- (1.5, 2) ;
      \draw(.5, 3) circle(.6) ; \draw(.5, 2.95) node{\scriptsize{$\cap$}} ;
      \draw(-.5, 3) node{\small{$\xi_1$}} ;
      \draw(3.55, .95) -- (3, 0) [below] node{\scriptsize{$5$}} ;
      \draw(4, .8) -- (4, 0) [below] node{\scriptsize{$6$}} ;
      \draw(4.45, .95) -- (5, 0) [below] node{\scriptsize{$7$}} ;
      \draw(4, 1.4) circle(.6) ;
      \draw(4, 1.4) node {\scriptsize{$\cap$}};
      \draw(5.1, 1.4) node{\small{$\xi_2$}} ;
      \draw(6.7, 2.5) -- (6, 0) [below] node{\scriptsize{$10$}} ;
      \draw(7, 2.4) -- (7,0) [below] node{\scriptsize{$11$}} ;
      \draw(8.55, .95) -- (8, 0) [below] node{\scriptsize{$8$}} ;
      \draw(9, .8) -- (9, 0) [below] node{\scriptsize{$9$}} ;
      \draw(9.45, .95) -- (10, 0) [below] node{\scriptsize{$12$}} ;
      \draw(9, 1.4) circle(.6) ;
      \draw(9, 1.4) node {\scriptsize{$\cup$}};
      \draw(10.1, 1.4) node{\small{$\eta_3$}} ;
      \draw(7.3, 2.5) -- (8.5, 1.7) ;
      \draw(7, 3) circle(.6) ; \draw(7, 2.95) node{\scriptsize{$\cap$}} ;
      \draw(8.1, 3) node{\small{$\xi_3$}} ;
      \draw(4,3.9) -- (4, 2) ; \draw(3.6, 4)--(1, 3.3) ; \draw(4.4, 4)--(6.5, 3.3) ;
      \draw(4, 4.5 ) circle(.6) ; \draw(4, 4.45) node {\scriptsize{$\cup$}};
      \draw(5, 4.5) node{\small{$\zeta$}} ;
      \draw(4.5, -1.5) node{(B)} ;
    \end{scope}
  \end{tikzpicture}
  \caption{\small{ (A) The PQ-tree $\TPQ$ and (B) the mmodule-tree
      $\TM$ of the dissimilarity space from
      \Cref{TABLE_gros_example}.}\label{FIGURE_gros_example}}
\end{figure}

\begin{figure}[htpb]
  \begin{center}
        {\small{
    \begin{tabular}{cc}
      non-trivial mmodule & copoint for \\ \hline
      \{2, 3\} & 1, 4 \\
      \{1, 4\} & 2, 3 \\
      \{1, 2, 3, 4\} & 5 to 12 \\
      \{5, 6\} & 7 \\
      \{5, 7\} & 6 \\
      \{6, 7\} & 5
    \end{tabular}
    }}
    \hspace{1cm}
    {\small{
    \begin{tabular}{cc}
      non-trivial mmodule & copoint for \\ \hline

      \{5, 6, 7\} & 1 to 4 and 8 to 12 \\
      \{10, 11\} & 8, 9, 12 \\
      \{8, 9, 10, 12\} & 11 \\
      \{8, 9, 11, 12\} & 10 \\
      \{8, 9, 10, 11, 12\} & 1 to 7 \\
      &
    \end{tabular}
    }}
  \end{center}
  \caption{\small{The non-trivial mmodules of the dissimilarity space from
    \Cref{TABLE_gros_example} that are copoints, together with the points to which
    they are attached.}}
  \label{fig:copoints}
\end{figure}

\section{Represented orders and nodes of a PQ-tree}\label{SUB_SECTION_PQ_tree_To_Mmodules}

In this section, first we characterize the set of orders that are
representable by a PQ-tree and then we characterize the subsets of a
Robinson space $(X,d)$ that are the nodes of the PQ-tree of
$(X,d)$. These auxiliary results will be used in the next two
sections.

\subsection{Represented orders}
In this subsection, we characterize the set $\Pi$ of orders of $X$
that are represented by PQ-trees.

\begin{definition}[$\Pi$-block]
  For a set $\Pi$ of orders on $X$, a set $Y \subsetneq X$ is called a
  \emph{$\Pi$-block} if $Y$ is an interval in each order of $\Pi$.
  Then the blocks from Definition \ref{def:block} are precisely the
  $\Pi(d)$-blocks.
\end{definition}

\begin{lemma}\label{lemma:block}
  Let $\TPQ$ be a PQ-tree over a set $X$ and $\Pi(\TPQ)$ be the set of
  represented permutations of $\TPQ$. Then a set $S \subseteq X$ is a
  $\Pi(\TPQ)$-block if and only if
  \begin{enumerate}[label=(\roman*)]
  \item\label{item:block1} either there is a node $\alpha$ in $\TPQ$
    with $X(\alpha) = S$,
  \item\label{item:block2} or there is a Q-node $\alpha$ in $\TPQ$
    with children $\beta_1,\ldots, \beta_k$, and $j, j'$ with
    $1 \leq j < j' \leq k$ such that $\{j,j'\} \neq \{1,k\}$ and
    $\bigcup_{i \in \{j,\ldots,j'\}} X(\beta_i) = S$.
  \end{enumerate}
\end{lemma}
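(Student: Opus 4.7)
The plan is to prove the two directions separately. The reverse (``if'') direction follows directly from PQ-tree equivalence; the forward (``only if'') direction uses a case analysis on the pertinent node $\alpha$ of $S$ and rests on a single elementary prefix-versus-suffix argument for permutations.

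For the reverse direction, every equivalence transformation (permuting the children of a P-node or reversing the children of a Q-node) preserves the property that the leaves descending from any fixed node form a contiguous block in the canonical order. Hence $S = X(\gamma)$ for some node $\gamma$ of $\TPQ$ is automatically an interval in every order of $\Pi(\TPQ)$. Likewise, if $S = \bigcup_{i=j}^{j'} X(\beta_i)$ for consecutive children of a Q-node $\alpha$, then under both permitted re-orderings of $\alpha$ those children stay consecutive, so their leaves form a contiguous range.

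For the forward direction, assume $S$ is a $\Pi(\TPQ)$-block, let $\alpha$ be its pertinent node, and let $\beta_1, \ldots, \beta_k$ be its children (the leaf case is immediate). Set $S_i := S \cap X(\beta_i)$ and $I := \{i : S_i \neq \varnothing\}$; minimality of $\alpha$ forces $|I| \geq 2$. The key observation I will formalize is this: in any represented order obtained by first choosing a permitted child-ordering $\sigma$ of $\alpha$ and then \emph{independently} choosing $\pi_i \in \Pi(\beta_i)$ for each $i$, the requirement that $S$ be an interval forces (a) the $\sigma$-positions of the $I$-children to be consecutive, (b) $S_i = X(\beta_i)$ for every $I$-child strictly inside that consecutive range, and (c) the first (resp.\ last) $I$-child in $\sigma$ has $S_i$ appearing as a suffix (resp.\ prefix) of $\pi_i$; because the local orderings vary independently, (c) must hold for \emph{every} $\pi_i \in \Pi(\beta_i)$.

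When $\alpha$ is a P-node, all $\sigma$ are permitted: varying $\sigma$ in (a) forces $I = \{1, \ldots, k\}$ (otherwise I can insert a non-$I$ child between two $I$-children), and placing any given $\beta_j$ first and then last forces $S_j$ to be simultaneously a prefix and a suffix of every $\pi_j$. An elementary counting argument shows that no proper nonempty subset $S_j$ of a set of size $\geq 2$ can be both a prefix and a suffix of any single permutation (disjoint position-ranges would contribute $2|S_j|$ distinct elements, while overlapping ones would force $S_j = X(\beta_j)$), so $S_j = X(\beta_j)$ for every $j$, giving $S = X(\alpha)$ and case (i). When $\alpha$ is a Q-node only the forward and reverse orderings are permitted; applying (a)--(c) to each yields $I = \{j, \ldots, j'\}$, $S_i = X(\beta_i)$ for $j < i < j'$, and the same prefix-equals-suffix argument at the two ends (forward gives $S_j$ suffix, reverse gives $S_j$ prefix of every $\pi_j$, and symmetrically for $S_{j'}$) forces $S_j = X(\beta_j)$ and $S_{j'} = X(\beta_{j'})$. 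Thus $S = \bigcup_{i=j}^{j'} X(\beta_i)$, giving case (i) if $\{j,j'\} = \{1,k\}$ and case (ii) otherwise. The main obstacle is the P-node subcase with $I = \{1, \ldots, k\}$ (and especially $k = 2$), where the interior-placement trick is unavailable and the whole argument rests on the independence of the local orderings from the child-permutation of $\alpha$ combined with the elementary prefix-equals-suffix observation.
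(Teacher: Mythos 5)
Your proof is correct, and while it shares the paper's overall skeleton --- the reverse direction from the equivalence transformations, the forward direction via the pertinent node $\alpha$ with a P-node/Q-node case split and interior children forced entirely into $S$ --- the mechanism you use for the critical subcase, namely a child $\beta_i$ for which $S \cap X(\beta_i)$ is a proper nonempty subset of $X(\beta_i)$, is genuinely different. The paper fixes the arrangement of the children and exploits the closure of $\Pi(\beta_i)$ under reversal to exhibit a represented order in which a point of $X(\beta_i) \setminus S$ lands between two points of $S$ (plus, for P-nodes, a three-children reordering to rule out $S$ being a proper union of children). You instead leave the local orders $\pi_i$ abstract and vary the position of $\beta_i$ among its siblings --- first versus last for a P-node, $\sigma$ versus its reverse for a Q-node --- to conclude that $S \cap X(\beta_i)$ must be simultaneously a prefix and a suffix of \emph{every} $\pi_i \in \Pi(\beta_i)$, and then dispatch the proper-subset possibility with the position-counting observation. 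What your route buys: a uniform treatment of P- and Q-nodes, a clean handling of the arity-two P-node (where the ``three distinct children'' step is unavailable and the paper must fall back on its reversal argument), and the isolation of the one combinatorial fact needed (prefix equals suffix forces the whole set) as an elementary standalone lemma. What the paper's buys: brevity. Both arguments lean equally on the standard fact that the represented orders are exactly the compositions of a permitted child-ordering with independently chosen represented orders of the subtrees, so neither is more self-contained than the other.
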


\begin{proof}
  Let $S \subseteq X$ be a $\Pi(\TPQ)$-block, and $\alpha$ be the
  pertinent node of $S$. If $S = X(\alpha)$, case~\ref{item:block1}
  holds. Otherwise, $S \subsetneq X(\alpha)$.

  For the sake of contradiction, suppose that $\alpha$ is a P-node.
  If there is a child $\beta$ of $\alpha$ and $x \in S \cap X(\beta)$,
  $y \in X(\beta) \setminus S$, then let
  $z \in S \cap (X(\alpha) \setminus X(\beta)$ be in another child. As
  $S$ is a block, there is a compatible order $\<$ with $x \< z \< y$
  or $y \< x \< z$ (say the former). By definition of the compatible
  orders of a PQ-tree, reversing the order on $X(\beta)$ provides
  another compatible order, with $z \< x \< y$, in contradiction with
  the fact that $S$ is a block. Otherwise, $S$ is the union of the
  sets induced by some children of $\alpha$. Let
  $x, y, z \in X(\alpha)$ be in three distinct children of $\alpha$,
  with $x,z \in S$, $y \notin S$. By property of P-nodes, there is a
  compatible order with $x \< y \< z$, contradicting the fact that $S$
  is a block.

  Hence $\alpha$ is a $Q$-node. Let $\beta_1,\ldots,\beta_k$ be its
  children. Then
  $\{i \in \{1,\ldots,k\} : \beta_i \cap S \neq \varnothing\}$ is an
  interval $\{j,\ldots,j'\}$ because $S$ is a $\Pi(\TPQ)$-block and by
  property of Q-nodes, with $j < j'$ as $\alpha$ is
  $S$-pertinent. Finally, if there is
  $z \in \bigcup_{k \in \{j,\ldots,j'\}} \beta_k \setminus S$, let
  $x \in \beta_j \cap S$ and $y \in \beta_{j'} \cap S$, there is a
  compatible order with $x \< z \< y$, contradicting of the fact
  that $S$ is a $\Pi(\TPQ)$-block. Hence~\ref{item:block2} holds. The
  reverse direction is immediate.
\end{proof}

Consider a set $\Pi$ of orders on $X$, and say that two points
$x,y \in X$ are \emph{equivalent} if every maximal $\Pi$-block
contains either none or both of them. This is clearly an equivalence
relation denoted $\approx_\Pi$. We can use it to characterize when a
set of orders is represented by a PQ-tree. To this end, let
$\mathcal{B}_\Pi$ denote the set of equivalence classes of $\approx_\Pi$.

\begin{proposition}\label{prop:charac-PQtree}
  A set of orders $\Pi$ on $X$ is represented by a PQ-tree if and
  only if the restriction of $\Pi$ on each $S \in \mathcal{B}_\Pi$
   is represented by a PQ-tree $\mathcal{T}_S$, and
  \begin{enumerate}[label=(\roman*)]
  \item\label{item:pcase} either $\Pi$ is the set of orders obtained
    by composing any order on $\mathcal{B}_\Pi$ with orders from each
    $\mathcal{T}_S, S \in \mathcal{B}_\Pi$,
  \item\label{item:qcase} or there are at least three equivalence
    classes and there is an order $\<_{\mathcal{B}}$ on
    $\mathcal{B}_\Pi$ such that $\Pi$ is the set of orders obtained by
    composing $\<_{\mathcal{B}}$ or its reverse with orders from each
    $\mathcal{T}_S, S \in \mathcal{B}_\Pi$.
  \end{enumerate}
  In that case, the root of the PQ-tree is a P-node in
  case~\ref{item:pcase} and a Q-node in case~\ref{item:qcase}, and its
  children are the PQ-trees $\mathcal{T}_S$ of each equivalence class
  $S$ of $\approx_\Pi$, in arbitrary order in case~\ref{item:pcase}
  and in order $\<_{\mathcal{B}}$ (or its reverse) in case~\ref{item:qcase}.
\end{proposition}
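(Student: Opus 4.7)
I would prove each direction of the biconditional separately, relying crucially on Lemma \ref{lemma:block} to catalog the $\Pi$-blocks of any PQ-tree.

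\textbf{Forward direction.} Suppose $\Pi = \Pi(\mathcal{T})$ for a PQ-tree $\mathcal{T}$ with root $\rho$ having children $\beta_1, \ldots, \beta_k$. Using Lemma \ref{lemma:block}, I first enumerate the $\Pi$-blocks and then extract the maximal ones. A careful case analysis based on the type of $\rho$ then shows that the equivalence classes of $\approx_\Pi$ correspond to the top-level structure of $\mathcal{T}$: each class $S$ is of the form $X(\gamma)$ for a subtree $\gamma$ sitting directly under $\rho$. For each class $S \in \mathcal{B}_\Pi$, the restriction $\Pi|_S$ coincides with the permutations represented by the subtree of $\mathcal{T}$ rooted at the pertinent node of $S$, so $\Pi|_S$ is indeed representable by a PQ-tree $\mathcal{T}_S$. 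Finally I split: if $\rho$ is a P-node, P-node semantics permit any permutation of the children, so case~\ref{item:pcase} holds; if $\rho$ is a Q-node of arity at least three, only two orientations of the children survive, so case~\ref{item:qcase} holds with $\<_{\mathcal{B}}$ being the canonical left-to-right order on the classes inherited from $\mathcal{T}$.

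\textbf{Backward direction.} Assume the characterization holds, together with PQ-trees $\mathcal{T}_S$ for each $S \in \mathcal{B}_\Pi$. I construct $\mathcal{T}$ by introducing a new root $\rho$: in case~\ref{item:pcase}, $\rho$ is a P-node whose children are the roots of the $\mathcal{T}_S$ in arbitrary order; in case~\ref{item:qcase}, $\rho$ is a Q-node whose children are the roots of the $\mathcal{T}_S$ arranged in order $\<_{\mathcal{B}}$. Verifying that $\Pi(\mathcal{T}) = \Pi$ then amounts to unfolding the definition of composition of orders together with the semantics of P- and Q-nodes (an arbitrary permutation of children for a P-node, the canonical order or its reverse for a Q-node), using the hypothesis $\Pi(\mathcal{T}_S) = \Pi|_S$ on each part. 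The arity-two convention handles the harmless ambiguity when $|\mathcal{B}_\Pi| = 2$.

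\textbf{Main obstacle.} The principal difficulty lies in the forward direction: precisely characterizing the equivalence classes $\mathcal{B}_\Pi$ in terms of the maximal $\Pi$-blocks of $\mathcal{T}$ and showing they align with the top-level decomposition of the tree. The Q-node case is especially delicate, since the linear arrangement enforced by a Q-node interacts nontrivially with the notion of maximal block; one has to show that the order $\<_{\mathcal{B}}$ determined by $\mathcal{T}$ is unique up to reversal, and that the compositions recombine the restrictions $\Pi|_S$ back to exactly $\Pi$ rather than a strict superset.
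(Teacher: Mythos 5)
Your proposal follows essentially the same route as the paper: the forward direction is driven by \Cref{lemma:block} to identify the maximal $\Pi$-blocks (the children sets $X(\beta_i)$ for a P-node root, the consecutive unions for a Q-node root) and hence the equivalence classes of $\approx_\Pi$ as the top-level children, while the backward direction is a routine verification that the assembled tree represents $\Pi$. The paper's proof is just a terser version of the same argument, so your plan is sound as written.
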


\begin{proof}
  Suppose first that $\Pi$ is represented by a PQ-tree
  $\mathcal{T}$. Let $\beta_1,\ldots,\beta_k$ be the children of the
  root. From \Cref{lemma:block}, when the root is a P-node, the
  maximal blocks are the sets $X(\beta_i)$ and thus are also the
  equivalence classes of $\approx_\Pi$, whereas when the root is a
  Q-node, the maximal blocks are induced by intervals
  $\bigcup \{X(\beta_i) : i \in \{j,\ldots,j'\}\}$, implying that the
  equivalence classes of $\approx_\Pi$ are also the set $X(\beta_i)$.
  The result then follows, with~\ref{item:pcase} corresponding to a
  P-node root, and~\ref{item:qcase} corresponding to a Q-node root
  ($\<_{\mathcal{B}}$ being the order on the children of the root).

  Conversely, suppose that each equivalence class of $\approx_\Pi$ is
  represented, and either~\ref{item:pcase} or~\ref{item:qcase}
  holds. Then it can be readily checked that the PQ-tree defined in
  the statement of the result does indeed represent $\Pi$.
\end{proof}

%
%
%
%
\subsection{Nodes of a PQ-tree}
Sets that are simultaneously mmodules and blocks of a Robinson space
$(X,d)$ play a special role because in any compatible order their
points can be ordered independently of the rest of the points. We
prove that they correspond exactly with the nodes of the PQ-tree $\TPQ$ of
$(X,d)$.

%
%
%
%

\begin{lemma}\label{lemma:mmod-block-orders}
  Let $(X,d)$ be a Robinson space with a compatible order $\<'$. Let
  $S$ be an mmodule that is an interval in $\<'$ and $\<_S$ be any
  compatible order on $S$.  Let $\<$ be the total order on $X$ defined
  by setting $x \< y$ when either $\{x, y\} \subset S$ and
  $x \<_ S y$, or $\{x, y\} \nsubset S$ and $x \<' y$ (equivalently,
  $\<$ is obtained from $\<'$ by reordering the elements of $S$
  according to $\<_S$). Then $\<$ is a compatible order of $(X,d)$.
\end{lemma}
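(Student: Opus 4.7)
The plan is to verify directly that for any triple $x \< y \< z$, the Robinson inequality $d(x,z) \geq \max\{d(x,y), d(y,z)\}$ holds, proceeding by case analysis on $|\{x,y,z\} \cap S|$. First I would note that because $S$ is an interval in $\<'$ and only the internal order on $S$ is changed, $S$ remains an interval in $\<$; in particular, when a triple is split between $S$ and its complement, the elements of the triple lying in $S$ are consecutive in the triple.

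The two easy cases are when all three points lie in $S$ (so $x \<_S y \<_S z$, and compatibility of $\<_S$ on the subspace $(S,d)$ yields the inequality) and when none of them lie in $S$ (so $x \<' y \<' z$, and compatibility of $\<'$ yields it). The mixed cases are the interesting ones, and the plan is to handle them by combining the mmodule property of $S$ with compatibility of $\<'$. For pairs with exactly one element in $S$, the $\<$-order and the $\<'$-order coincide by definition of $\<$, and this observation handles the subcases where exactly one of $x,y,z$ lies in $S$: in each, one obtains $x \<' y \<' z$ directly and applies compatibility of $\<'$.

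The remaining subcases have exactly two of the three points in $S$, and these consecutive $S$-points may be reordered between $\<'$ and $\<$. Consider, for instance, $x, y \in S$ and $z \notin S$. Because $S$ is an mmodule and $z \notin S$, we have $d(x,z) = d(y,z)$, so one inequality is automatic. For $d(x,z) \geq d(x,y)$, I would look at the $\<'$-order of $x$ and $y$: either $x \<' y \<' z$, in which case compatibility of $\<'$ gives $d(x,z) \geq d(x,y)$ directly, or $y \<' x \<' z$, in which case compatibility of $\<'$ gives $d(y,z) \geq d(x,y)$, and then $d(x,z) = d(y,z) \geq d(x,y)$. The symmetric subcase with $y,z \in S$ and $x \notin S$ is identical after relabeling.

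The main obstacle is precisely this last family of subcases, where the intra-$S$ pair may appear in opposite orders under $\<'$ and $\<_S$; the key trick is that the mmodule property equates the two distances from outside points into $S$, turning what looks like an asymmetry into an easy substitution. Once all subcases are dispatched, compatibility of $\<$ follows.
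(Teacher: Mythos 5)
Your proposal is correct and follows essentially the same route as the paper's proof: direct verification of the Robinson inequality by cases on how the triple meets $S$, using compatibility of $\<'$ for the unmixed cases, the mmodule property to equate the two distances from the outside point when exactly two points lie in $S$, and the interval property to rule out the configuration $x,z \in S$, $y \notin S$. The only cosmetic difference is that you split the two-in-$S$ case into two sub-orderings of the pair under $\<'$, whereas the paper dispatches both at once via $d(y,z) \leq \max\{d(x,y),d(x,z)\}$; the content is identical.
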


\begin{proof}
  Pick any $x \< y \< z$ and we check that
  $d(x,z) \geq \max \{d(x,y), d(y,z)\}$. If $x,y,z \in S$, then
  $x \<_S y \<_S z$ and the result follows. If
  $|\{x,y,z\} \setminus S| \in \{2,3\}$, then $x \<' y \<' z$ and we
  are done again. If $x \notin S$, $y,z \in S$, then $d(x,y) = d(x,z)$
  because $S$ is an mmodule, and
  $d(y,z) \leq \max \{ d(x,y), d(x,z)\}$ because either
  $x \<' y \<' z$ or $x \<' z \<' y'$ holds. Thus
  $d(x,z) \geq \max \{d(x,y), d(y,z)\}$. The case $x,y \in S$,
  $z \notin S$ is symmetric, while the case $x,z \in S$, $y \notin S$
  cannot happen as $S$ is an interval in $\<'$.
\end{proof}

Let $\<$ be a compatible order for $(X,d)$ for which $S$ is an
interval. Let $\revS$ be the order defined by setting $x \revS y$ if
either $x \< y$ and $\{x, y\} \nsubseteq S$, or $y \< x$ and
$\{x,y\} \subseteq S$, that is $\revS$ is the order obtained from $\<$
by reversing the interval $S$.  We can easily generalize $\revS$ to a
subset $S$ which is not an interval and we have the following
elementary result:


\begin{lemma}\label{lemma:mmodule_interval}
  Let $(X,d)$ be a Robinson space and $S \subseteq X$ an interval of a
  compatible order $\<$. Then $\revS$ is a compatible order if and
  only if $S$ is an mmodule.
\end{lemma}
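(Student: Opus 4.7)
The plan is to establish both directions by straightforward case analysis on how the triple appearing in the compatibility definition intersects $S$. The two implications will use complementary halves of the same case analysis.

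For the forward direction, assume $S$ is an mmodule, and pick any $x \revS y \revS z$; we must show $d(x,z)\ge \max\{d(x,y),d(y,z)\}$. I would split on $|\{x,y,z\}\cap S|$. If all three lie in $S$, then $z \< y \< x$ in the original compatible order $\<$, and compatibility of $\<$ gives the desired inequality. If none lies in $S$, then $x \< y \< z$ in $\<$ as well, and we are done. If exactly one lies in $S$, it must be $y$ (because $S$ is an $\<$-interval and $\revS$ only reverses the interior), so the triple keeps the same relative order in $\<$ as in $\revS$ and compatibility of $\<$ again suffices. The two remaining cases are where two of $x,y,z$ lie in $S$ and the third outside; by the interval hypothesis, the outside element lies at one end of the triple. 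In the subcase $x,y\in S$, $z\notin S$ with, say, $S$ preceding $z$ in $\<$, I have $y\<x\<z$ in $\<$, whence $d(y,z)\ge d(x,z)$ and $d(y,z)\ge d(x,y)$; the mmodule property then gives $d(x,z)=d(y,z)$, so $d(x,z)\ge d(x,y)$ and the desired inequality follows. The other sub-arrangements (with $S$ after $z$, or with $y,z\in S$ and $x\notin S$) are entirely symmetric: in each, compatibility of $\<$ together with the equality $d(z,u)=d(z,v)$ for $u,v\in S$ and $z\notin S$ yields the two required inequalities.

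For the backward direction, assume $\revS$ is compatible and fix $z\in X\setminus S$ and $x,y\in S$ with, say, $x\<y$; I need $d(z,x)=d(z,y)$. Since $S$ is an $\<$-interval, $z$ lies entirely before or entirely after $S$ in $\<$. If $z\<x\<y$, then compatibility of $\<$ gives $d(z,y)\ge d(z,x)$; on the other hand $z\revS y\revS x$ because $\revS$ coincides with $\<$ outside $S$ and reverses $S$, so compatibility of $\revS$ gives $d(z,x)\ge d(z,y)$, forcing equality. The case where $z$ follows $S$ is analogous: $\<$ yields $d(x,z)\ge d(y,z)$ and $\revS$ yields the reverse inequality. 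Hence $S$ is an mmodule.

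I do not expect any serious obstacle here; the statement is genuinely a bookkeeping fact about reversing an interval. The one point that needs care is making sure that the cases where the outside element $z$ lies between two $S$-elements do not occur (ruled out precisely because $S$ is an interval of $\<$, which is the same as being an interval of $\revS$), and to use the mmodule property only in the cases that actually require it, namely when two $S$-elements are compared against the same outside element at one end of the triple.
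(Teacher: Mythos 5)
Your proof is correct and follows essentially the same route as the paper: the backward direction is identical to the paper's, and your forward direction simply inlines the case analysis that the paper delegates to \Cref{lemma:mmod-block-orders} (of which reversing $S$ is the special case obtained by taking $\<_S$ to be the reverse of the restriction of $\<$ to $S$). One harmless slip: in the case where exactly one of $x,y,z$ lies in $S$, that element need not be $y$ --- it can be any of the three --- but your conclusion that the triple keeps the same relative order in $\<$ and $\revS$ holds regardless, since $\revS$ agrees with $\<$ on every pair not both contained in $S$.
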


\begin{proof}
  If $S$ is an mmodule, then by \Cref{lemma:mmod-block-orders} $\revS$
  is compatible. Conversely, let $\revS$ be a compatible order. Let
  $y,z \in S$ with $y \< z$, and let $x \notin S$, and say (by
  symmetry) that $x \< y \< z$. Then $d(x,y) \leq d(x,z)$. Since
  $x \revS z \revS y$ and $\revS$ is compatible, we obtain
  $d(x,z) \leq d(x,y)$. Consequently, $d(x,y) = d(x,z)$, yielding that
  $S$ is an mmodule.
\end{proof}

More generally, one can reorder the elements of a block as long as the
order of the block itself remains compatible.

\begin{lemma}\label{THEOREM_R_module_simple}
  For any node $\alpha$ of the PQ-tree $\TPQ$ of a Robinson space
  $(X, d)$, the set $X(\alpha)$ is an mmodule of $(X,d)$.
\end{lemma}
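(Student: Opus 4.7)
The plan is to show that $S := X(\alpha)$ is an interval in every compatible order of $(X,d)$, and moreover that reversing $S$ inside any one such order yields another compatible order; \Cref{lemma:mmodule_interval} then immediately produces the conclusion. The first claim is the easy direction of \Cref{lemma:block}: since $S$ arises as $X(\alpha)$ for a node $\alpha$ of $\TPQ$, it falls into case~\ref{item:block1} of that lemma and is thus a $\Pi(\TPQ)$-block, i.e.\ an interval of every order in $\Pi(\TPQ)=\Pi(X,d)$.

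For the second claim, fix a compatible order $\<$ together with a PQ-tree $\mathcal{T}$ equivalent to $\TPQ$ whose canonical order is $\<$, and let $\alpha'$ be the node of $\mathcal{T}$ with $X(\alpha')=S$. I construct an equivalent PQ-tree $\mathcal{T}'$ by recursively reversing the order of the children of every node in the subtree of $\mathcal{T}$ rooted at $\alpha'$. Each such reversal is a valid equivalence move: for a Q-node it is exactly equivalence transformation~(2) of the PQ-tree definition, and for a P-node it is a particular permutation allowed by~(1). The canonical order $\<'$ of $\mathcal{T}'$ coincides with $\<$ on $X \setminus S$ since nothing outside the subtree of $\alpha'$ has been altered, while on $S$ it equals the reverse of $\<$ restricted to $S$; indeed, the left-to-right leaf traversal of the new subtree of $\alpha'$ visits the leaves of $S$ in reverse order. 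Hence $\<' = \revS$, and $\<'$ is compatible because $\mathcal{T}' $ is equivalent to $ \mathcal{T}$, which is equivalent to $\TPQ$.

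Finally, \Cref{lemma:mmodule_interval} applied to the compatible order $\<$ (in which $S$ is an interval) and to the compatible order $\revS = \<'$ yields that $S = X(\alpha)$ is an mmodule of $(X,d)$. The only real care is the inductive verification that the recursive reversal inside the subtree of $\alpha'$ is realized by a legal sequence of equivalence transformations and indeed produces $\revS$; the degenerate cases where $\alpha$ is a leaf or the root are trivial, since singletons and $X$ itself are always mmodules.
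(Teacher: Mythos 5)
Your proof is correct and follows essentially the same route as the paper: both establish that $S = X(\alpha)$ is an interval of every compatible order and that $\revS$ is again compatible, then conclude via \Cref{lemma:mmodule_interval}. The paper simply asserts these two facts "by definition of PQ-trees," whereas you spell out the details (the easy direction of \Cref{lemma:block} and the recursive child-reversal realizing $\revS$ as an equivalence of PQ-trees), which is a faithful elaboration rather than a different argument.
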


\begin{proof}
  By definition of PQ-trees, for any compatible order $\<$ of $\TPQ$,
  $S := X(\alpha)$ is an interval and $\revS$ is compatible. By
  \Cref{lemma:mmodule_interval}, $X(\alpha)$ is an mmodule.
\end{proof}

Using \Cref{mmodules}\ref{item:mmod6}, this justifies the notation
$d(\alpha,\beta)$ for any two nodes $\alpha$ or $\beta$ of $\TPQ$ or
$\TM$ with $X(\alpha)$ and $X(\beta)$ disjoint, where
$d(\alpha,\beta) = d(x,y)$ for any $x \in X(\alpha)$,
$y \in X(\beta)$. Combining the last results, we get the following
characterization of the subsets of $X$ corresponding to the nodes of
the PQ-tree of $(X,d)$:

\begin{theorem}\label{LEMMA_block_node}
  Let $(X, d)$ be a Robinson space and $M$ be a subset of $X$. Then
  $X(\alpha)=M$ for some node $\alpha$ of $\TPQ$ if and
  only if $M$ is a block and an mmodule of $(X,d)$.
\end{theorem}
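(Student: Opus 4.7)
The forward direction ($\Rightarrow$) I would dispose of immediately: if $M = X(\alpha)$ for a node $\alpha$ of $\TPQ$, then Lemma~\ref{THEOREM_R_module_simple} tells us $M$ is an mmodule, and by the very definition of PQ-tree equivalence, the leaves of the subtree rooted at $\alpha$ form an interval in every canonical order of any PQ-tree equivalent to $\TPQ$; so $M$ is an interval in every order of $\Pi(\TPQ) = \Pi(X,d)$, i.e., $M$ is a block.

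For the backward direction ($\Leftarrow$), my plan is to combine Lemma~\ref{lemma:block} with Lemma~\ref{lemma:mmodule_interval}. Assuming $M$ is both a block and an mmodule, and since $\Pi(\TPQ) = \Pi(X,d)$, $M$ is a $\Pi(\TPQ)$-block. Lemma~\ref{lemma:block} then gives two cases: either (i) $M = X(\alpha)$ for some node $\alpha$ (and we are done), or (ii) there is a Q-node $\alpha$ with children $\beta_1, \ldots, \beta_k$ and indices $1 \le j < j' \le k$ with $\{j,j'\} \ne \{1,k\}$ such that $M = \bigcup_{i=j}^{j'} X(\beta_i)$. The entire substance of the proof lies in ruling out case (ii).

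To rule out (ii), I would fix a canonical order $\<$ of $\TPQ$ in which the children of $\alpha$ appear left-to-right as $\beta_1, \ldots, \beta_k$. Then $M$ is an interval of $\<$ and is an mmodule, so by Lemma~\ref{lemma:mmodule_interval} the order $\rev{M}$ obtained from $\<$ by reversing the interval $M$ is also compatible, i.e., $\rev{M} \in \Pi(X,d) = \Pi(\TPQ)$. I would then derive a contradiction by checking that $\rev{M}$ cannot be the canonical order of any PQ-tree equivalent to $\TPQ$: within $X(\alpha)$, the children of $\alpha$ now appear in the order $\beta_1, \ldots, \beta_{j-1}, \beta_{j'}, \beta_{j'-1}, \ldots, \beta_j, \beta_{j'+1}, \ldots, \beta_k$, whereas any equivalent PQ-tree must present the children of the Q-node $\alpha$ in the order $\beta_1, \ldots, \beta_k$ or its reverse $\beta_k, \ldots, \beta_1$.

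The main (and really only) obstacle is this last combinatorial verification. I would argue by cases: if $j = 1$ then $j' < k$, so the first child of $\alpha$ under $\rev{M}$ lies in $\beta_{j'}$ with $1 < j' < k$, matching neither $\beta_1$ nor $\beta_k$; if $j' = k$ then $j > 1$, and the $(j)$-th child becomes $\beta_k$ while $\beta_{j-1}$ is unchanged, again incompatible with both allowed orderings; and if $1 < j$ and $j' < k$, the first child is $\beta_1$ but the $j$-th child is $\beta_{j'} \ne \beta_j$, so the order is neither $\beta_1,\ldots,\beta_k$ nor its reverse. In all subcases, $\rev{M} \notin \Pi(\TPQ)$, contradicting $\rev{M} \in \Pi(X,d)$. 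Hence case (ii) is impossible, which concludes the proof.
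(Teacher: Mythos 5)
Your proposal is correct and follows essentially the same route as the paper's proof: forward direction via Lemma~\ref{THEOREM_R_module_simple} and Lemma~\ref{lemma:block}, and for the converse, applying Lemma~\ref{lemma:block} and ruling out the Q-node interval case by reversing $M$ via Lemma~\ref{lemma:mmodule_interval} to obtain a compatible order not represented by $\TPQ$. Your explicit three-way case check of why the reversed child order matches neither $\beta_1,\ldots,\beta_k$ nor its reverse is just a more detailed write-up of the step the paper states without elaboration.
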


\begin{proof}
  From \Cref{THEOREM_R_module_simple,lemma:block}, for any node
  $\alpha$ of the PQ-tree, $X(\alpha)$ is an mmodule and a
  block. Conversely, consider a set $M \subseteq X$ that is both an
  mmodule and a block. By way of contradiction, suppose that there is
  no node $\alpha$ such that $M = X(\alpha)$. By \Cref{lemma:block},
  there is a Q-node $\alpha$ in $\TPQ$ with children
  $\beta_1,\ldots,\beta_k$, and $j, j'$ with $1 \leq j < j' \leq k$
  and $\{j,j'\} \neq \{1,k\}$ such that
  $M = X(\beta_j) \cup \ldots \cup X(\beta_{j'})$. Let $\<$ be a
  compatible order with $X(\beta_1) \< \ldots \< X(\beta_k)$. Then by
  \Cref{lemma:mmodule_interval}, $\revS$ is also a compatible order,
  with
  $X(\beta_1) \revS X(\beta_j') \revS X(\beta_j) \revS X(\beta_k)$. But
  that order is not compatible with $\TPQ$, a contradiction.
\end{proof}

This leads to the following result, which involves flat Robinson spaces.


\begin{corollary}\label{thm:flat}
  Let $(X,d)$ be a Robinson space in which all mmodules are trivial
  and $|X| \geq 3$. Then the following holds:
  \begin{enumerate}[label=(\roman*)]
  \item\label{item:flat1} $(X,d)$ is flat,
  \item\label{item:flat2} $\TPQ$ has a single non-leaf node, that is a Q-node.
  \end{enumerate}
\end{corollary}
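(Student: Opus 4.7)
The plan is to use \Cref{LEMMA_block_node} to control the structure of $\TPQ$, then deduce the two conclusions from the resulting structure.

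First, I would identify the nodes of $\TPQ$. By \Cref{LEMMA_block_node}, each node $\alpha$ corresponds to a subset $X(\alpha)\subseteq X$ that is both a block and an mmodule. Since every internal node has arity at least $2$, any internal node $\alpha$ satisfies $|X(\alpha)|\ge 2$. By hypothesis, the only mmodules are $\varnothing$, $X$, and singletons, so the only mmodules of size at least $2$ is $X$ itself. Therefore the only internal node of $\TPQ$ is the root $\alpha_0$ with $X(\alpha_0)=X$, establishing that $\TPQ$ has a single non-leaf node. Its children are exactly the leaves of $\TPQ$, so its arity is $|X|\ge 3$.

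Next I would show that $\alpha_0$ is a Q-node. Suppose for contradiction that $\alpha_0$ is a P-node. Then every order on $X$ is compatible. Pick any three distinct points $x,y,z\in X$. From the compatible orders $x\<y\<z$ and $y\<x\<z$ we obtain $d(x,z)\ge d(y,z)$ and $d(y,z)\ge d(x,z)$ respectively, hence $d(x,z)=d(y,z)$. Applying this argument to every triple shows that $d$ is constant on pairs of distinct points. But then any nonempty proper subset $M\subsetneq X$ with $|M|\ge 2$ (which exists since $|X|\ge 3$) is a non-trivial mmodule, contradicting the hypothesis. Hence $\alpha_0$ is a Q-node, proving \ref{item:flat2}.

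Finally, for \ref{item:flat1}, since $\alpha_0$ is a Q-node and the only equivalence transformation on a PQ-tree consisting of a single Q-node whose children are leaves is to reverse the order of its children, the set $\Pi(\TPQ)$ consists of exactly the canonical order and its reverse. By the correspondence between $\Pi(X,d)$ and $\Pi(\TPQ)$, the space $(X,d)$ admits exactly two compatible orders, reverse of each other, i.e.\ it is flat by \Cref{def:flat}. The main ``obstacle'' is just being careful with the P-node case: the key observation is that $|X|\ge 3$ ensures that forcing constant distances actually produces a non-trivial mmodule, and that the convention identifying arity-$2$ nodes with P-nodes is irrelevant here since the root has arity $|X|\ge 3$.
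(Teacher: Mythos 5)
Your proof is correct and follows essentially the same route as the paper's: identify that the triviality of mmodules forces a single internal node (the paper cites \Cref{THEOREM_R_module_simple}, you cite the stronger \Cref{LEMMA_block_node}), rule out a P-node root by deriving equal pairwise distances from permuted compatible orders and exhibiting a non-trivial mmodule, and conclude flatness from the Q-node structure. No gaps.
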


\begin{proof}
   Since
  each mmodule is trivial, by \Cref{THEOREM_R_module_simple}, $\TPQ$
  has a single non-leaf node $\alpha$.
  Suppose for sake of contradiction that $\alpha$ is a P-node. Let
  $x,y,z \in X$ be distinct points, that are children of
  $\alpha$. Then there exist compatible orders $\<_1$, $\<_2$, and
  $\<_3$ with $x \<_1 y \<_1 z$, $y \<_2 z \<_2 x$, and
  $z \<_3 x \<_3 y$. Hence
  $d(x,z) \geq d(x,y) \geq d(y,z) \geq d(x,z)$, meaning that these
  three distances are equals. Consequently, all pairwise distances
  between the points of $X$ are equal, thus any subset of points is an
  mmodule, a contraction. Thus $\alpha$ is a Q-node and~\ref{item:flat2} 
  is verified.
  
  Clearly, \ref{item:flat2}  implies~\ref{item:flat1}.
\end{proof}

\begin{remark}
It is worth observing that the converse of \Cref{thm:flat} does not
hold, and dealing with this limitation is arguably one of the main
technical difficulties that this paper addresses. This is illustrated
by the simple example given in \Cref{fig:flat-non-trivial}. Later we will
show that this happens when some maximal mmodule has larger
diameter than its distance to other points; here $\{a,c\}$ has
diameter 2, which is higher than the distance between the
mmodules $\{a,c\}$ and $\{b\}$.

\begin{figure}[htbp]
  \begin{center}
    \begin{tabular}{cp{1cm}cp{1cm}c}
      \begin{minipage}{0.2\textwidth}
        \begin{flushright}
        {\small{
          \begin{tabular}{cccc}
            D   & $a$ & $b$ & $c$ \\
            $a$ &  0  &  1  &  2  \\
            $b$ &     &  0  &  1  \\
            $c$ &     &     &  0  \\
          \end{tabular}
          }}
        \end{flushright}
      \end{minipage}
      &&
         \begin{minipage}{0.2\textwidth}
           \begin{tikzpicture}[x=.8cm,y=.9cm]
             \node[draw,rectangle,minimum width=2.2cm,minimum height=0.4cm] (q) at (0,-1) {};
             \node (a) at (-1,-2) {$a$};
             \node (b) at (0,-2) {$b$};
             \node (c) at (1,-2) {$c$};
             \draw (a) |- (a |- q.south);
             \draw (b) |- (b |- q.south);
             \draw (c) |- (c |- q.south);
           \end{tikzpicture}
         \end{minipage}
      &&
         \begin{minipage}{0.2\textwidth}
           \begin{tikzpicture}[x=.8cm,y=.8cm]
             \node[draw,circle] (r) at (0,0) {$\cap$};
             \node[draw,circle] (t) at (-1,-1) {$\cap$};
             \node (a) at (-2,-2) {$a$};
             \node (c) at (0,-2) {$c$};
             \node (b) at (1,-1) {$b$};
             \foreach \i/\j in {r/t,r/b,t/a,t/c} {
               \draw (\i) -- (\j);
             }
           \end{tikzpicture}
        \end{minipage}
    \end{tabular}
  \end{center}
  \caption{A flat Robinson space, together with its PQ-tree $\TPQ$ and
    its mmodule tree $\TM$.}
  \label{fig:flat-non-trivial}
\end{figure}
\end{remark}

\subsection{Distances and the PQ-tree}

In this subsection, we present some results about the values of the
dissimilarity relative to the nodes of the PQ-tree (and the mmodule
tree). The next two lemmas motivate the notions and the results of the
next section, and also relates to the properties of clusters and of
the weights of nodes in dendrograms of ultrametrics.

\begin{lemma}\label{lemma:coP}
  Let $(X,d)$ be a Robinson space with PQ-tree $\TPQ$ and $\alpha$ a
  P-node in $\TPQ$. Let $\beta_1,\dots,\beta_k$ be some of the
  children of $\alpha$, then $S := X(\beta_1) \cup \ldots \cup X(\beta_k)$
  is an mmodule.
\end{lemma}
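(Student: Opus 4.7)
My plan is to verify the mmodule condition directly: for every $z \in X \setminus S$ and every $x, y \in S$, show $d(z,x) = d(z,y)$. I will split into three cases according to where $z$ lies.

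First, if $z \notin X(\alpha)$, then since $x, y \in S \subseteq X(\alpha)$ and $X(\alpha)$ is an mmodule by \Cref{THEOREM_R_module_simple}, we immediately get $d(z,x) = d(z,y)$. Second, suppose $z \in X(\beta_j)$ for some child $\beta_j$ of $\alpha$ with $j \notin \{1,\ldots,k\}$, and suppose both $x$ and $y$ lie in the same $X(\beta_i)$. Then $z \notin X(\beta_i)$, and since $X(\beta_i)$ is an mmodule (again by \Cref{THEOREM_R_module_simple}), $d(z,x) = d(z,y)$.

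The remaining, main case is $z \in X(\beta_j)$, $x \in X(\beta_i)$, $y \in X(\beta_{i'})$ with $i, i', j$ pairwise distinct. Here I will exploit that $\alpha$ is a P-node: its children may be freely permuted to obtain compatible orders. Concretely, I claim the three distances $d(z,x)$, $d(z,y)$, $d(x,y)$ are all equal. Using a compatible order whose restriction to these three points is $z \< x \< y$ (obtained by ordering the children with $\beta_j$ first, then $\beta_i$, then $\beta_{i'}$, then the remaining children arbitrarily), compatibility gives $d(z,x) \le d(z,y)$ and $d(x,y) \le d(z,y)$. Using instead the order with $x \< z \< y$, compatibility gives $d(z,y) \le d(x,y)$ and $d(x,z) \le d(x,y)$. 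And using the order $x \< y \< z$ yields $d(x,y) \le d(x,z)$. Chaining these inequalities forces $d(x,y) = d(z,y) = d(x,z)$; in particular $d(z,x) = d(z,y)$, which is what we need.

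The only nontrivial step is the third case, and its key ingredient is the P-node's permutation freedom guaranteeing that all six orderings of any three children are achievable by compatible orders. Combined with \Cref{mmodules}\ref{item:mmod6} (which retroactively justifies writing $d(\beta, \beta')$ between disjoint mmodules), the argument actually yields the stronger statement that all pairwise distances between distinct children of a P-node are equal, a fact that will be useful in the sequel.
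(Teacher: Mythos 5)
Your proposal is correct and follows essentially the same route as the paper: it exploits the permutation freedom of the P-node's children to produce several compatible orders on $\{x,y,z\}$ and chains the Robinson inequalities to force $d(z,x)=d(z,y)$ (the paper uses just the two orders $x\<y\<z$ and $y\<'x\<'z$ with $z$ last, while you use three and additionally conclude $d(x,y)=d(x,z)=d(y,z)$). Your explicit handling of the cases $z\notin X(\alpha)$ and $x,y$ in a common child, which the paper leaves implicit, and your closing observation about uniform distances between children (the paper's \Cref{lemma:uniform-pnode}) are both fine.
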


\begin{proof}
  If $k=1$ or $S = X(\alpha)$, this follows from
  \Cref{LEMMA_block_node}. Otherwise, let $x,y \in S$ be in two
  distinct children of $\alpha$, and $z \in X(\alpha) \setminus S$. By
  property of the P-nodes, there are compatible orders $\<$ and $\<'$
  with $x \< y \< z$ and $y \<' x \<' z$. Thus
  $d(y,z) \leq d(x,z) \leq d(y,z)$, implying that $d(x,z) = d(y,z)$; so
  $S$ is an mmodule.
\end{proof}

\begin{lemma}\label{lemma:uniform-pnode}
  Let $(X,d)$ be a Robinson space with PQ-tree $\TPQ$ and mmodule tree
  $\TM$. Let $\alpha$ be a P-node of $\TPQ$ or a $\cap$-node of
  $\TM$. Then there exists $\delta > 0$ such that for any $x,y$
  appearing in two distinct children of $\alpha$, we have $d(x,y)=\delta$.
\end{lemma}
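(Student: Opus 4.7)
The plan is to unify both cases via the same mmodule-theoretic argument, which rests on two structural facts about the children $\beta_1,\ldots,\beta_k$ of $\alpha$: each $X(\beta_i)$ is an mmodule, and, when $k \geq 3$, the union of any two of them is also an mmodule. I would verify these facts case by case. When $\alpha$ is a P-node of $\TPQ$, every $X(\beta_i)$ is an mmodule by Lemma~\ref{THEOREM_R_module_simple}, and Lemma~\ref{lemma:coP} immediately ensures that $X(\beta_j) \cup X(\beta_\ell)$ is an mmodule for any two children. When $\alpha$ is a $\cap$-node of $\TM$, both facts are built into Proposition~\ref{mmodule-tree}: condition (iii) (and the construction of $\TM$) gives that each $X(\beta_i)$ is an mmodule, while condition (ii) states that the union of any proper subfamily of children is an mmodule; for arity at least three, $\{\beta_j,\beta_\ell\}$ is indeed a proper subfamily.

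Next, for any two distinct children $\beta_i,\beta_j$, the sets $X(\beta_i)$ and $X(\beta_j)$ are disjoint mmodules, so Proposition~\ref{mmodules}\ref{item:mmod6} yields a well-defined value $\delta_{ij} := d(x,y)$ independent of the choice of $x \in X(\beta_i)$ and $y \in X(\beta_j)$. If $\alpha$ has arity $2$, then $\delta := \delta_{12}$ is the required constant and we are done (with $\delta>0$ since the points in different children are distinct).

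For arity at least three, I would show that $\delta_{ij}$ does not depend on the pair $(i,j)$. Pick any three distinct children $\beta_i,\beta_j,\beta_\ell$ and points $x \in X(\beta_i)$, $y \in X(\beta_j)$, $z \in X(\beta_\ell)$. Since by the previous step $X(\beta_j) \cup X(\beta_\ell)$ is an mmodule and $x$ lies outside it, we get $d(x,y) = d(x,z)$, i.e., $\delta_{ij} = \delta_{i\ell}$. Symmetric applications (exchanging the roles of $i,j,\ell$) force all three values $\delta_{ij},\delta_{i\ell},\delta_{j\ell}$ to coincide, and traversing triples shows that a single constant $\delta$ works for every pair of distinct children.

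No genuine obstacle arises; the real content is simply observing that the defining properties of P-nodes and of $\cap$-nodes were cut out precisely so that the union of any two children is an mmodule, which is exactly what is needed to transfer distances between pairs of children. The only minor bookkeeping issue is treating the arity-$2$ case separately, but it is immediate.
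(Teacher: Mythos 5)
Your proof is correct and follows essentially the same route as the paper: both arguments reduce the claim to the fact that certain unions of children of $\alpha$ are mmodules (via Lemma~\ref{lemma:coP} for P-nodes and Proposition~\ref{mmodule-tree}(ii) for $\cap$-nodes) and then invoke Proposition~\ref{mmodules}(vi) to equalize distances. The only cosmetic difference is that the paper takes $X(\alpha)\setminus X(\beta)$ as the auxiliary mmodule while you take the union of two children and chain over triples; note also that for a $\cap$-node the fact that each $X(\beta_i)$ is an mmodule comes from Proposition~\ref{mmodule-tree}(ii) applied to a singleton subfamily rather than from condition (iii).
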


\begin{proof}
  Let $\beta$ be a child of $\alpha$. Then
  $X(\alpha) \setminus X(\beta)$ is an mmodule, by \Cref{mmodule-tree}
  when $\alpha$ is a $\cap$-node and by \Cref{lemma:coP} when $\alpha$
  is a P-node.  Hence by \Cref{mmodules}\ref{item:mmod6} the distances
  are uniform between $X(\beta)$ and $X(\alpha) \setminus X(\beta)$,
  and thus uniform between children of $\alpha$.
\end{proof}

As a consequence, for any $\alpha$ P-node of $\TPQ$ or $\cap$-node of
$\TM$, we denote $\rhos(\alpha)$ the value $\delta$ from
\Cref{lemma:uniform-pnode}, that is the distance between children
of $\alpha$. Another observation is that, similarly to dendrograms,
the diameters of nodes are almost strictly monotone in the PQ-tree.

\begin{lemma}\label{lemma:monotone-diameter}
  Let $(X, d)$ be a Robinson space with PQ-tree $\TPQ$. Let $\alpha$
  be an internal node of $\TPQ$ and $\beta$ a non-leaf child of $\alpha$. Then
  \begin{enumerate}[label=(\roman*)]
  \item\label{item:mono1} if $\alpha$ is a P-node,
    $\diam(X(\beta)) \leq \rhos(\alpha) = \diam(X(\alpha))$;
  \item\label{item:mono2} if $\alpha$ is a Q-node,
    $\diam(X(\beta)) \leq \min \{d(\beta,\beta') : \beta' \textrm{
      child of } \alpha, \beta \neq \beta'\} \leq \diam(X(\alpha))$;
  \item\label{item:mono3} if $\diam(X(\alpha)) = \diam(X(\beta))$,
    then $\alpha$ is a P-node, $\beta$ is a Q-node, and for any
    non-leaf child $\gamma$ of $\beta$,
    $\diam(X(\gamma)) < \diam(X(\alpha))$.
  \end{enumerate}
\end{lemma}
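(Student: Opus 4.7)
For parts (i) and (ii), I would use \Cref{lemma:uniform-pnode} together with the Robinson property applied to carefully chosen compatible orders. For (i), since all pairwise distances between children of the P-node $\alpha$ equal $\rhos(\alpha)$, picking any $x,y \in X(\beta)$ and any $z$ in a sibling $\beta'$ of $\beta$ in $\alpha$, I can arrange a compatible order in which $X(\beta)$ immediately precedes $X(\beta')$ (possible because $\alpha$ is a P-node). This yields $x,y < z$, and Robinson gives $d(x,y) \le d(x,z) = \rhos(\alpha)$, so $\diam(X(\beta)) \le \rhos(\alpha)$. The equality $\rhos(\alpha) = \diam(X(\alpha))$ then follows: distances between distinct children of $\alpha$ equal $\rhos(\alpha)$, and within-child distances are bounded by $\rhos(\alpha)$ either by the inequality just proved (for non-leaf children) or trivially (for leaves). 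For (ii), I fix a compatible order whose restriction to $X(\alpha)$ follows the Q-canonical order $\beta_1 < \ldots < \beta_k$; since $X(\beta_i)$ is a block (\Cref{LEMMA_block_node}), any $z \in X(\beta_j)$ with $j \ne i$ lies entirely before or entirely after the interval $X(\beta_i)$, so Robinson forces $d(x,y) \le d(\beta_i,\beta_j)$ for all $x,y \in X(\beta_i)$. Taking the minimum over $j$ gives the first inequality, and the second follows from $d(\beta_i,\beta_j) \le \diam(X(\alpha))$.

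For part (iii), write $D := \diam(X(\alpha)) = \diam(X(\beta))$. I would establish three sub-claims in sequence. \emph{First, $\alpha$ is a P-node.} Suppose instead $\alpha$ is a Q-node with children $\beta_1,\ldots,\beta_k$ ($k \ge 3$) and $\beta = \beta_i$. Part (ii) combined with the diameter bound forces $d(\beta_i,\beta_j) = D$ for every $j \ne i$. I then build a new order on $X(\alpha)$ by displacing $\beta_i$ out of its canonical Q-position---moving it to the end of $X(\alpha)$ if $i \ne k$, or to the start if $i = k$---while keeping the internal orders within each $X(\beta_j)$ as in the canonical compatible order. A Robinson triple-check, split according to how many of the three points lie in $X(\beta_i)$, uses the equalities $d(\beta_i,\beta_j) = D$ and the inequalities $d(\beta_j,\beta_{j'}) \le D$ and $\diam(X(\beta_j)) \le D$ to verify that the displaced order is Robinson-compatible on $X(\alpha)$. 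By \Cref{lemma:mmod-block-orders} applied to the mmodule-block $X(\alpha)$, this lifts to a compatible order of $(X,d)$. A short check shows the resulting children-of-$\alpha$ order differs from both the canonical and the reversed Q-orderings (the boundary cases $i=1$ and $i=k$ are handled precisely by the two different displacements), contradicting the Q-structure of the PQ-tree.

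\emph{Second, $\beta$ is a Q-node.} Suppose instead $\beta$ is a P-node. Part (i) applied to $\beta$ yields $\rhos(\beta) = \diam(X(\beta)) = D = \rhos(\alpha)$. Consequently, all pairwise distances among the children $\gamma_1,\ldots,\gamma_\ell$ of $\beta$ and the siblings $\beta_1',\ldots,\beta_m'$ of $\beta$ in $\alpha$ equal $D$, while every intra-part diameter is at most $D$. I then form the order on $X(\alpha)$ that places $X(\beta_1')$ strictly between $X(\gamma_1)$ and $X(\gamma_2)$, with the remaining parts in arbitrary positions and internal orders chosen compatibly. Every Robinson triple is satisfied because every inter-part distance equals $D$ while every intra-part distance is at most $D$; lifting to $(X,d)$ via \Cref{lemma:mmod-block-orders} produces a compatible order in which $X(\beta)$ is not an interval, contradicting the block property of $X(\beta)$ (\Cref{LEMMA_block_node}). \emph{Third, the strict inequality.} Since $\beta$ is a Q-node by the previous step, if some non-leaf child $\gamma$ of $\beta$ satisfied $\diam(X(\gamma)) = D = \diam(X(\beta))$, then applying the first sub-claim to the pair $(\beta,\gamma)$ would force $\beta$ to be a P-node, a contradiction; hence $\diam(X(\gamma)) < D$.

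The main obstacle will be carrying out the Robinson triple-check for the displaced order in the first sub-claim of (iii) and verifying that this order indeed falls outside the Q-node repertoire---this requires a somewhat intricate case analysis and the boundary split between $i=1$ and $i=k$.
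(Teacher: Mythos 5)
Your proof is correct, and parts~\ref{item:mono1} and~\ref{item:mono2} follow essentially the paper's own argument: the block property of $X(\beta)$ reduces every triple to $x<y<z$ with $z$ outside $X(\beta)$, and the Robinson inequality plus \Cref{lemma:uniform-pnode} does the rest. For part~\ref{item:mono3} you take a genuinely different route. The paper never builds a forbidden order explicitly: to show $\alpha$ is a P-node it observes that $S'=X(\alpha)\setminus X(\beta)$ is an mmodule which, being a union of at least two children, is not a node and hence (by \Cref{LEMMA_block_node}) not a block; it extracts a compatible order in which $X(\beta)$ separates $S'$, checks that the initial segment together with $X(\beta)$ is an mmodule-interval, and reverses it via \Cref{lemma:mmodule_interval} to obtain a third ordering of three children. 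To rule out $\beta$ being a P-node it likewise reverses the mmodule $X(\alpha)\setminus X(\gamma)$ to violate the block property of $X(\beta)$. You instead construct the offending compatible orders by hand (displacing the child $\beta_i$ to an end of the Q-node, or inserting a sibling of $\beta$ between two children of $\beta$) and verify the Robinson condition by a direct triple check before lifting with \Cref{lemma:mmod-block-orders}. Your verification does go through: every inter-part distance involving the displaced part equals $D$, every other inter-part distance and every intra-part diameter is at most $D$ (the latter by~\ref{item:mono2}, or trivially for leaves), and the resulting children order is neither the canonical Q-order nor its reverse once $k\geq 3$ (which holds since arity-$2$ nodes are P-nodes by convention). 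The trade-off is that the paper's version leans on the mmodule/block machinery already in place and is shorter, while yours is more self-contained at the price of the explicit case analysis you yourself flag. One minor point: when you apply~\ref{item:mono1} to the P-node $\beta$ to get $\rhos(\beta)=\diam(X(\beta))$, the statement formally presupposes a non-leaf child of $\beta$; the equality nevertheless holds when all children of $\beta$ are leaves, since inter-child distances equal $\rhos(\beta)$ and intra-child diameters vanish, so your use is harmless.
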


\begin{proof}
  Let $x,y \in X(\beta)$ and $z \in X(\alpha) \setminus X(\beta)$.  By
  \Cref{LEMMA_block_node}, $X(\beta)$ is a block and mmodule, hence
  for any compatible order $<$ with $x < y$, either $z < x$ or
  $y < z$. By symmetry we may assume $x < y < z$. Then
  $d(x,y) \leq d(x,z) = d(y,z)$. If $\alpha$ is a P-node, then by
  \Cref{lemma:uniform-pnode} $d(x,z) = \rhos(\alpha)$ and
  thus~\ref{item:mono1} holds. If $\alpha$ is a Q-node, by choosing
  $z \in \beta'$, we get~\ref{item:mono2}.

  Suppose that $\delta := \diam(X(\alpha)) = \diam(X(\beta))$. We
  claim that $\alpha$ is a P-node. Assume that $\alpha$ has arity at
  least three (as otherwise it is a P-node by definition). Then for
  any $\beta'$ child of $\alpha$ distinct from $\beta$,
  $d(\beta,\beta') = \delta$. Hence
  $S' := X(\alpha) \setminus X(\beta)$ is an mmodule.  As $\alpha$ has
  at least three children, $S'$ is not a set induced by a node of
  $\TPQ$, thus by \Cref{LEMMA_block_node} is not a block. Because
  $X(\alpha)$ is a block, it means that there is a compatible order
  $<$, $x,z \in S'$ and $y \in X(\beta)$ with $x < y < z$. Let
  $\beta_x$ and $\beta_z$ be the children of $\alpha$ containing $x$
  and $z$ respectively. Then, for each $x' \in S'$ with $x' < y$ and
  each $z' \in S'$ with $y < z'$,
  $\delta = d(x',y) \leq d(x',z') \leq \diam(X(\alpha)) = \delta$,
  proving that $S := X(\beta) \cup \{x \in S : x < y\}$ is an
  mmodule. Then, by \Cref{lemma:mmodule_interval}, $\rev{S}$ is a
  compatible order. Hence $\beta_x, \beta, \beta_z$ may be ordered in
  more than two ways, $\alpha$ is a P-node.

  For the sake of contradiction, suppose furthermore that $\beta$ is a
  P-node. Then $\rhos(\alpha) = \rhos(\beta) = \delta$. Let $\gamma$
  be a child of $\beta$. Then by \Cref{mmodule-tree},
  $X(\beta) \setminus X(\gamma)$ is an mmodule and
  $d(X(\gamma),X(\beta \setminus X(\gamma)) = \delta$. Thus
  $S := X(\alpha) \setminus X(\gamma)$ is also an mmodule. Let $<$ be a
  compatible order with $x < y < z$ for any $x \in X(\gamma)$,
  $y \in X(\beta) \setminus X(\gamma)$ and
  $z \in X(\alpha) \setminus X(\beta)$ (such an order exists by the
  structure of $\TPQ$). Then by \Cref{lemma:mmodule_interval}, $\revS$
  is also a compatible order, contradiction because $X(\beta)$ is a
  block by \Cref{LEMMA_block_node}. Thus $\beta$ is a
  Q-node. Consequently, for any non-leaf child $\gamma$ of $\beta$,
  $\diam(X(\gamma)) < \diam(X(\beta)) = \diam(X(\alpha))$. Thus~\ref{item:mono3}
  is proved.
\end{proof}

\subsection{Dendrograms and PQ-trees for ultrametrics}
As an application of the results of this section, we prove that for ultrametrics,
the X-trees $\TPQ$ and $\TD$ are isomorphic. Furthermore, we consider the
following characterizations of ultrametric spaces by their PQ-trees
(these results probably can be considered as half-folkloric, however
we give their full proofs):

\begin{proposition}\label{prop:ultrametric_pq_tree}
  A Robinson space $(X, d)$ is ultrametric if and only all internal
  nodes of $\TPQ$ are P-nodes.
\end{proposition}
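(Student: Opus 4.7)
The plan is to prove both directions by combining the structural results of this section: \Cref{lemma:uniform-pnode}, \Cref{lemma:monotone-diameter}, \Cref{lemma:mmod-block-orders}, and \Cref{LEMMA_block_node}.

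For $(\Leftarrow)$, I argue pointwise that the three-point condition holds. Given distinct $x,y,z \in X$, let $\alpha$ be the lowest node of $\TPQ$ with $\{x,y,z\}\subseteq X(\alpha)$; by hypothesis $\alpha$ is a P-node. Either all three of $x,y,z$ lie in three distinct children of $\alpha$, and then \Cref{lemma:uniform-pnode} gives $d(x,y)=d(x,z)=d(y,z)=\rhos(\alpha)$; or two of them (say $x,y$) lie in a child $\beta$ of $\alpha$ and the third $z$ in another, and then \Cref{lemma:uniform-pnode} gives $d(x,z)=d(y,z)=\rhos(\alpha)$ while \Cref{lemma:monotone-diameter}\ref{item:mono1} gives $d(x,y)\leq \diam(X(\beta))\leq \rhos(\alpha)$. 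In both cases the two largest pairwise distances among $d(x,y),d(x,z),d(y,z)$ coincide, so the three-point condition is verified.

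For $(\Rightarrow)$, I argue by contradiction: assume $(X,d)$ is ultrametric and $\TPQ$ has a Q-node $\alpha$ with children $\beta_1,\ldots,\beta_k$ for some $k\geq 3$. By \Cref{THEOREM_R_module_simple} the sets $X(\beta_i)$ are mmodules, hence the quotient $(\{\beta_1,\ldots,\beta_k\},\widehat d)$ of \Cref{def:quotient} is well-defined, and applying the three-point condition of $d$ to representatives shows it is itself ultrametric. My aim is to show this quotient is \emph{flat}. For this, I take an arbitrary compatible order $\sigma$ of the quotient and compose it with any compatible orders on each $X(\beta_i)$ to obtain an order on $X(\alpha)$, then I check the Robinson condition on each triple $x<y<z$ in $X(\alpha)$ case by case. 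The only nontrivial case is $x,y\in X(\beta_i),\ z\in X(\beta_j)$ with $i\neq j$ (the configuration $y\in X(\beta_j),\ x,z\in X(\beta_i)$ cannot occur because $X(\beta_i)$ is a block by \Cref{LEMMA_block_node}); and there $d(x,z)=d(y,z)=d(\beta_i,\beta_j)$ by the mmodule property, while $d(x,y)\leq \diam(X(\beta_i))\leq d(\beta_i,\beta_j)$ by \Cref{lemma:monotone-diameter}\ref{item:mono2}. Then \Cref{lemma:mmod-block-orders} extends this compatible order on $X(\alpha)$ (a block and mmodule by \Cref{LEMMA_block_node}) to a compatible order of $X$, so $\sigma$ must be one of the two Q-orderings by definition of a Q-node; thus the quotient is flat.

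To close the contradiction, I show that no ultrametric $(Y,d')$ with $|Y|\geq 3$ is flat: if $C_1,\ldots,C_m$ are the children of the root of the dendrogram of $(Y,d')$, all inter-child distances equal the diameter, so any permutation of the $C_i$'s composed with any compatible orders on each $C_i$ is a compatible order of $Y$. If $m\geq 3$, we obtain $\geq 3!=6$ compatible orders; if $m=2$, then since $|Y|\geq 3$ some $C_i$ has $\geq 2$ points and hence admits $\geq 2$ internal compatible orders, yielding $\geq 2\cdot 2=4$ compatible orders in total. Either way $(Y,d')$ is not flat, contradicting the flatness of the quotient at $\alpha$ and completing the argument. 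The main delicate point is the lifting step in the paragraph above: it is essential that each $X(\beta_i)$ is a \emph{block}, not only an mmodule, so that the awkward configuration $y\in X(\beta_j),\ x,z\in X(\beta_i)$ is excluded from the Robinson triple check.
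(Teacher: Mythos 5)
Your proof is correct, and while your backward direction ($\Leftarrow$) is essentially the paper's argument (lowest common ancestor of $x,y,z$, then a two-case analysis on whether two of the points share a child of that P-node), your forward direction ($\Rightarrow$) takes a genuinely different route. The paper argues directly on a Q-node $\alpha=Q(\beta_1,\ldots,\beta_k)$ with $k\geq 3$: it picks the minimal $i$ with $d(\beta_1,\beta_i)=\diam(X(\alpha))$ and uses the ultrametric three-point condition to show that $X(\beta_1)\cup\ldots\cup X(\beta_{i-1})$ and $X(\beta_i)\cup\ldots\cup X(\beta_k)$ are each both a block and an mmodule, so by \Cref{LEMMA_block_node} each must be a node of $\TPQ$, forcing $k=2$ --- a contradiction that needs nothing beyond \Cref{lemma:block} and \Cref{LEMMA_block_node}. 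You instead pass to the quotient of $X(\alpha)$ by the children $\beta_1,\ldots,\beta_k$, prove via the lifting argument (\Cref{lemma:mmod-block-orders}, \Cref{lemma:monotone-diameter}\ref{item:mono2}) that this quotient is flat, and then show separately that an ultrametric on at least three points can never be flat by counting compatible orders through the dendrogram. Your route costs more machinery (quotients, the lifting lemma, the dendrogram) but isolates a clean reusable fact --- the quotient by the children of a Q-node of arity $\geq 3$ is flat, and flat ultrametrics on $\geq 3$ points do not exist --- whereas the paper's is shorter and self-contained. One small slip worth fixing: in your triple check on the composed order, the configuration $x,z\in X(\beta_i)$, $y\in X(\beta_j)$ is excluded simply because each $X(\beta_i)$ is an interval \emph{by construction of the composition}, not ``because $X(\beta_i)$ is a block'' --- invoking blockhood there is mildly circular since at that point you have not yet established that the composed order is compatible.
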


\begin{proof}
  Suppose first that $(X,d)$ is ultrametric and, by a way of
  contradiction, that its PQ-tree $\TPQ$ has a Q-node
  $\alpha = Q(\beta_1, \beta_2, \ldots, \beta_k)$ with $k \geq 3$. We
  set $\delta = \diam(X(\alpha)) = d(\beta_1, \beta_k)$. Let
  $i \in \{1,\ldots, k\}$ be minimum such that
  $d(\beta_1, \beta_i) = \delta$. If $i = 2$, then, for all
  $j \in \{2,\ldots, k\}$, $d(\beta_1, \beta_j) = \delta$, implying
  that $X(\beta_2) \cup \ldots \cup X(\beta_k)$ is an mmodule. By
  \Cref{lemma:block}, it is also a block, and thus by
  \Cref{LEMMA_block_node} $k = 2$, contradiction.  If $i > 2$, since
  $d(\beta_1,\beta_{i-1}) < \delta$ and
  $d(\beta_{i-1},\beta_i) \leq d(\beta_1,\beta_i) = \delta$, by
  definition of ultrametric we have
  $d(\beta_{i-1}, \beta_i) = \delta$. Consequently, for all
  $1 \leq j < i \leq j' \leq k$, we have
  $d(\beta_j, \beta_{j'}) = \delta$. Thus
  $X(\beta_1) \cup \ldots \cup X(\beta_{i-1})$ and
  $X(\beta_i) \cup \ldots \cup X(\beta_k)$ are blocks and mmodules, by
  \Cref{LEMMA_block_node} $i = k = 2$, contradiction.  Conversely,
  suppose that all internal nodes of $\TPQ$ of arity more than two are
  P-nodes and let $x,y,z \in X$.  Let $\alpha$ be the lowest common
  ancestor of $x, y, z$. If two of them, say $x$ and $y$, have a child
  of $\alpha$ as common ancestor, then we have
  $d(x,y) \leq d(x,z) = d(y, z)$. Otherwise,
  $\lca(x,y) = \lca(x,z) = \lca(y,z) = \alpha$. Since $\alpha$ is a
  P-node of arity  at least three, we have
  $d(x,y) = d(y, z) = d(x, z)$.
\end{proof}

\begin{proposition}\label{prop:dendrogram_pq_tree}
  Let $(X, d)$ be an ultrametric space. Then the X-trees $\TPQ$ and
  $\TD$ are isomorphic.
\end{proposition}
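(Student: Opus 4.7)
The plan is to exhibit an isomorphism by showing that the set systems $\{X(\alpha) : \alpha \text{ node of }\TPQ\}$ and $\{X(\alpha) : \alpha \text{ node of }\TD\}$ coincide. Both trees have leaves indexed by $X$, and for $X$-trees whose inner nodes all have arity at least two, the laminar family of subsets $X(\alpha)$ determines the tree up to reordering of children, because each node is recovered uniquely as the root of the subtree whose leaves form that subset. So it suffices to prove equality of these two set systems, which splits into two inclusions.

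For the inclusion $\TD \subseteq \TPQ$, recall that the nodes of $\TD$ correspond to the clusters of the hierarchy $\HH$, which in an ultrametric are exactly the balls $B_r(x)$ together with the singletons and $X$. Balls in any Robinson space are blocks (noted in the preliminaries), and a short application of the three-point condition shows that every ball in an ultrametric is an mmodule: if $y \in B_r(x)$ and $z \notin B_r(x)$, then $d(x,y)\leq r<d(x,z)$ forces $d(y,z)=d(x,z)$. Then \Cref{LEMMA_block_node} yields a node of $\TPQ$ inducing the same subset.

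For the reverse inclusion, consider an inner node $\alpha$ of $\TPQ$; by \Cref{prop:ultrametric_pq_tree} it is a P-node, so $\rhos(\alpha)$ is defined via \Cref{lemma:uniform-pnode}. Fix any $x \in X(\alpha)$; the claim to establish is $X(\alpha) = B_{\rhos(\alpha)}(x)$, which exhibits $X(\alpha)$ as a cluster. For $y \in X(\alpha) \setminus \{x\}$ the node $\lca(x,y)$ lies in the subtree rooted at $\alpha$, and since all inner nodes are P-nodes we obtain $d(x,y) = \rhos(\lca(x,y)) \leq \rhos(\alpha)$ by \Cref{lemma:monotone-diameter}\ref{item:mono1}. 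For $y \notin X(\alpha)$ the node $\lca(x,y)$ is a strict ancestor of $\alpha$, and again $d(x,y) = \rhos(\lca(x,y))$; it remains to show that $\rhos(\lca(x,y)) > \rhos(\alpha)$, i.e.\ that $\rhos$ is \emph{strictly} monotone along root-to-leaf paths.

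The strict monotonicity is the main technical point and the reason one cannot just invoke \Cref{lemma:monotone-diameter}\ref{item:mono1} directly. Part \ref{item:mono3} of that lemma identifies equality $\diam(X(\alpha)) = \diam(X(\beta))$ with a parent/child pair of a P-node and a Q-node, which is ruled out whenever $\beta$ is an inner node by \Cref{prop:ultrametric_pq_tree}; when $\beta$ is a leaf, $\diam(X(\beta)) = 0 < \rhos(\alpha)$ because $\alpha$ has at least two leaves in distinct children at positive distance. This closes the claim $X(\alpha) = B_{\rhos(\alpha)}(x)$, identifies $X(\alpha)$ as a cluster, and the resulting bijection between nodes of $\TPQ$ and $\TD$ sharing the same induced subset yields the desired isomorphism of $X$-trees.
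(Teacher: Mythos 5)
Your proof is correct and follows essentially the same route as the paper's: one inclusion via clusters being balls, hence blocks and mmodules, hence nodes of $\TPQ$ by \Cref{LEMMA_block_node}; the other via \Cref{prop:ultrametric_pq_tree} and the monotonicity of diameters from \Cref{lemma:monotone-diameter}. Your explicit justification of the \emph{strict} inequality $\rhos(\lca(x,y)) > \rhos(\alpha)$ using part~\ref{item:mono3} is a welcome bit of extra care that the paper's proof leaves implicit.
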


\begin{proof}
  Let $\beta$ be an internal node of $\TD$. Then for any
  $x \notin X(\beta)$, let $\alpha$ be the least common ancestor of
  $x$ and $\beta$ in $\TD$, we have $d(x,\beta) = \rhos(\alpha)$. Thus
  $X(\beta)$ is an mmodule. Also, by definition of $\TD$, $X(\beta)$
  is a ball of radius $\rhos(\beta)$ centered at any point in
  $X(\beta)$. From the definition of Robinson space, any ball is a
  block. By \Cref{LEMMA_block_node}, as $X(\beta)$ is a block and an
  mmodule, there is a node $\beta'$ in $\TPQ$ with $\beta = \beta'$.

  Conversely, for any node $\beta'$ in $\TPQ$ and any
  $x \notin X(\beta')$, let $\alpha'$ be the least common ancestor of
  $\beta'$ and $x$. By \Cref{prop:ultrametric_pq_tree}, $\alpha'$ and
  $\beta'$ are P-nodes, and thus $d(x,\beta') = \rhos(\alpha')$. By
  \Cref{lemma:monotone-diameter}, $\rhos(\alpha') > \rhos(\beta')$,
  hence $d(x,X(\beta')) > \diam(X(\beta'))$. Thus there is a node
  $\beta$ in $\TD$ with $X(\beta) = X(\beta')$.
\end{proof}

\section{The graph \texorpdfstring{$\Gdelta$}{Gd} and the construction of the PQ-tree}\label{SECTION_Gdelta_PQ}

In this section we introduce the graph $\Gdelta$ and the
$\delta$-mmodules, which are the most important notions defined in
this paper. We use them together with the maximal mmodules to
construct the PQ-tree of a Robinson space.

\subsection{The graph \texorpdfstring{$\Gdelta$}{Gd} and \texorpdfstring{$\delta$}{d}-mmodules}

In this subsection, we define and give properties of  the graph $\Gdelta$ and $\delta$-modules. Some of the properties are valid for any dissimilarity space (first part of \Cref{lemma:delta-mmodules}, \Cref{lemma:copartition-delta,lemma:find-rho}), the others are only valid for Robinson spaces.

\begin{definition}[Graphs $\Gdelta$, $G_{<\delta}$,$G_{\le \delta}$,  and $\delta$-mmodules]
  Let $(X,d)$ be a dissimilarity space and let $\delta > 0$. Then
  $\Gdelta$ is the graph with $X$ as the set of vertices and edges
  $\{xy : x,y \in X, d(x,y) \neq \delta\}$. Let also
  $G_{<\delta}:=(X,\{xy : x \neq y, d(x,y) < \delta\})$ and
  $G_{\le \delta}:=(X,\{xy : x \neq y, d(x,y)\le \delta\})$. For
  $S\subseteq X$, we denote by $\Gdelta(S)$ the subgraph of $\Gdelta$
  induced by $S$. The connected components of the graph $\Gdelta$ are
  called the \emph{$\delta$-mmodules} of $(X,d)$, and their number is
  denoted $c_\delta$.
\end{definition}




\begin{lemma}\label{lemma:delta-mmodules}
  Let $(X,d)$ be a dissimilarity space and $\delta > 0$ such that the graph
  $\Gdelta$ is not connected. Then each $\delta$-mmodule is an
  mmodule.
  Moreover, if $(X, d)$ is Robinson, then at most one $\delta$-mmodule is not a block, and each
  $\delta$-mmodule that is a block has diameter at most $\delta$.
\end{lemma}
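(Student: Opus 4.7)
The plan is to handle the mmodule assertion directly from the definition of a connected component, and then reduce both Robinson conclusions to a single claim about a fixed compatible order. The first assertion is immediate: if $M$ is a connected component of $\Gdelta$ and $z \in X \setminus M$, then for every $x \in M$ the pair $xz$ is a non-edge of $\Gdelta$ (as $x$ and $z$ lie in different components), which by definition of $\Gdelta$ forces $d(x,z)=\delta$. Hence every outside point is at distance $\delta$ from every point of $M$, so $M$ is an mmodule.

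The core of the Robinson argument will be the following claim: fix a compatible order $\<$ of $(X,d)$ and let $M$ be a $\delta$-mmodule that is not an interval in $\<$; writing $a$ and $b$ for the leftmost and rightmost elements of $M$ in $\<$, I aim to show that $a$ and $b$ coincide with the extremes of $X$ in $\<$ and that $\diam(M) > \delta$. The argument is a ``runs'' argument: cut $M$ at each filler $z \in (a,b) \setminus M$ into its maximal consecutive $M$-subsequences (runs); since $M$ is not an interval there is at least one filler, hence at least two runs. Whenever $y_1 \< y_2$ lie in distinct runs, a filler $z$ sits with $y_1 \< z \< y_2$; from $z \notin M$ and $y_1 \in M$ we get $d(y_1,z)=\delta$, and compatibility of $\<$ gives $d(y_1,y_2) \ge \delta$. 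Suppose now that either $\diam(M) \le \delta$, or that there exists $u \in X \setminus M$ strictly outside $[a,b]$. Then every pair $y_1, y_2 \in M$ satisfies $d(y_1,y_2) \le \delta$: trivially in the first case, and in the second (say $u \< a$) via the triple $u \< y_1 \< y_2$ together with $d(u,y_2)=\delta$. Combined with the lower bound, $d(y_1,y_2)=\delta$ for any $y_1,y_2$ in distinct runs; hence $\Gdelta(M)$ has no edges across runs and splits into at least two components, contradicting that $M$ is a single component of $\Gdelta$. Neither alternative can hold, giving the claim.

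From this both Robinson conclusions follow. If $M$ is a block, then in any compatible order $\<$ it is an interval; picking $u \in X \setminus M$ (which exists because $\Gdelta$ is disconnected, so $M \ne X$) and applying the triple bound exactly as above yields $\diam(M) \le \delta$. For the uniqueness, the same triple bound shows more: any $\delta$-mmodule $M'$ with $\diam(M') > \delta$ fails to be an interval in every compatible order, since otherwise it would force $\diam(M') \le \delta$. Now if $M_1 \ne M_2$ were both non-blocks, each is non-interval in some compatible order, so by the key claim each satisfies $\diam(M_i) > \delta$; by the preceding observation each is then non-interval in every compatible order. Fixing any single $\<$, the key claim forces both $M_1$ and $M_2$ to contain the leftmost and rightmost points of $X$ in $\<$, contradicting their disjointness as distinct components of $\Gdelta$. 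The main technical hurdle is the runs argument itself, where one must apply the upper and lower distance bounds to the same cross-run pair in order to get the equality $d=\delta$ that kills cross-run edges in $\Gdelta(M)$.
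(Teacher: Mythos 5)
Your proof is correct and follows essentially the same route as the paper's: both hinge on the facts that points in different components of $\Gdelta$ are at distance exactly $\delta$, that compatibility sandwiches the distance between two points of $M$ against a point lying outside or between them, and that forcing all the relevant cross pairs to distance exactly $\delta$ would disconnect $M$ inside $\Gdelta$. The only difference is packaging: you prove one contrapositive master claim (non-interval $\Rightarrow$ $M$ contains both extremes and $\diam(M)>\delta$) via a runs argument, whereas the paper proves the two forward implications (missing an extreme $\Rightarrow$ $\diam \le \delta$; $\diam \le \delta$ $\Rightarrow$ interval) via a straddling edge on a connecting path.
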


\begin{proof}
  Let $M$ be a $\delta$-mmodule. Then for any $x,y \in M$,
  $z \notin M$, by definition of $\Gdelta$ we have $d(x,z) = \delta = d(z,y)$,
  hence $M$ is an mmodule.

  Suppose now that $(X,d)$ is Robinson and let $\<$ be any compatible order, with minimum $x_\star$ and maximum
  $x^\star$. Suppose that $M$ does not contain $x_\star$ (or
  symmetrically $x^\star$). Then for any $y,z \in M$ with $y<z$, we have
  $d(y,z) \leq d(x_\star,z) = \delta$, hence
  $\diam(M) \leq \delta$. Since at most one connected component of
  $\Gdelta$ may contain both points $x_\star,x^\star$, we conclude
  that any other component has diameter at most $\delta$.

  Let $M$ be a connected component of $\Gdelta$ with
  $\diam(M) \leq \delta$. We assert that $M$ is an interval of
  $\<$. Let $x \< y \< z$ with $x, z \in M$, $y \notin M$. Consider a
  path $P$ from $x$ to $z$ in $\Gdelta$. Necessarily $P$ contains an
  edge $x'z'$ with $x',z' \in M$ and $x'\< y \< z'$. Then by
  definition of edges of $\Gdelta$ and since $\diam(M) \leq \delta$,
  we have $d(x',z') < \delta$. Consequently,
  $d(x',y) \leq d(x',z') < \delta$, proving that $y \in M$, contrary
  to our choice of $y$. Thus $M$ is an interval of $\<$.

  If there is a connected component $M_0$ of $\Gdelta$ with
  $\diam(M_0) > \delta$, then all other connected components have
  diameter at most $\delta$, and $x_\star,x^\star$ is a diametral pair
  of $M_0$. Hence for every compatible order, each other connected
  component $M$ of $\Gdelta$ is an interval, hence $M$ is a block.

  Otherwise, if a connected component of $\Gdelta$ with diameter
  larger than $\delta$ does not exist, then we prove that there is no
  component $M$ with $x_\star, x^\star \in M$ for any compatible order
  $\<$. By way of contradiction assume there is one, and let
  $y \notin M$. For any $x, z \in M$ with $x \< y \< z$, we have
  $d(x,z) \geq \max \{d(x,y), d(y,z) \} = \delta$. Since
  $\diam(M)\le \delta$, this yields $d(x,z) = \delta$. But then the
  points $x$ such that $x\< y$ and $z$ such that $y\< z$ cannot belong
  to the same connected component of $\Gdelta$. This contradicts the
  fact that $x_\star$ and $x^\star$ belong to $M$. Consequently, in
  this case each connected component is an interval in any compatible
  order, hence is a block.
\end{proof}

The following result shows that either the graph $\Gdelta$ is
connected for all values of $\delta>0$ or there exists a unique
positive value of $\delta$ such that $\Gdelta$ is not
connected. Furthermore, we characterize the dissimilarity 
spaces for which
the second option occurs.

\begin{lemma}\label{lemma:copartition-delta}
  Let $(X,d)$ be a dissimilarity space such that $\Mmax$ is a copartition of
  $X$. Then there exists a unique $\delta > 0$ such that for any maximal
  mmodule $M$, for any $x \in M$ and $y \in X \setminus M$, we have
  $d(x,y) = \delta$. Consequently, the graph $\Gdelta$ is not connected and
  each connected component of $\Gdelta$ is the complement of a maximal
  mmodule. Conversely, if there exists $\delta > 0$ such that $\Gdelta$ is
  not connected, then $\Mmax$ is the copartition of $X$ consisting of
  the complements of the $\delta$-mmodules.
\end{lemma}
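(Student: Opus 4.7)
My plan is to handle the forward and converse implications separately, leaning on \Cref{mmodules} and \Cref{lemma:partition-copartition} as the main structural tools. For the forward direction, I will first show that for every $M \in \Mmax$ the set $\co{M}$ is itself an mmodule. If $\Mmax$ is a bipartition this is immediate since $\co M$ is the other maximal mmodule. Otherwise, pick any $M' \in \Mmax$ distinct from $M$; the copartition hypothesis gives $M \cup M' = X$, and maximality rules out containment between them, so both $M \setminus M'$ and $M' \setminus M$ are nonempty. In this non-bipartite case every pair of maximal mmodules also intersects (otherwise $\{M,M'\}$ would partition $X$ and force $|\Mmax|=2$), so \Cref{mmodules}\ref{item:mmod3} applied to $M$ and $M'$ identifies $\co{M} = M' \setminus M$ as an mmodule. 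Once $\co{M}$ is an mmodule, the two mmodule relations together show that $d(x, y)$ is independent of $x \in M$ and of $y \in \co M$; call its common value $\delta_M$.

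To argue $\delta_M$ does not depend on $M$, I will take two distinct maximal mmodules $M_i, M_j$ and pick $y \in \co{M_i}$ and $y' \in \co{M_j}$. Since the complements partition $X$, we have $y \in M_j$ and $y' \in M_i$, and the mmodule property of $M_i$ applied to $y \notin M_i$ and the reference point $y' \in M_i$ gives $d(y, y') = \delta_{M_i}$, while the symmetric argument gives $\delta_{M_j}$. Hence $\delta_{M_i} = \delta_{M_j} =: \delta$, and uniqueness is automatic since $\delta$ is a specific distance value.

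For the description of the components of $\Gdelta$, the constancy just established rules out any edge between $\co{M_i}$ and $\co{M_j}$ for $i \ne j$. To prove that each $\co{M_i}$ is connected in $\Gdelta$, I will suppose for contradiction a decomposition $\co{M_i} = A \sqcup B$ into nonempty parts with no $\Gdelta$-edge across, so $d(a, b) = \delta$ for every $a \in A$, $b \in B$. A short case analysis then shows that $M_i \cup A$ is an mmodule: for $z \in B$ and two points in $M_i \cup A$, the only nontrivial sub-case mixes one in $M_i$ and one in $A$, but both distances to $z$ equal $\delta$. This gives $M_i \subsetneq M_i \cup A \subsetneq X$, contradicting maximality of $M_i$. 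This is the main obstacle of the proof; every other step is essentially bookkeeping.

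For the converse, assume $\Gdelta$ is disconnected with components $C_1, \ldots, C_k$. Each $C_l$ is an mmodule by \Cref{lemma:delta-mmodules}, and $X \setminus C_l$ is likewise an mmodule because every cross-component distance equals $\delta$. Maximality of $X \setminus C_l$ mirrors the trick above: any strictly intermediate mmodule $M$ would meet $C_l$ in a proper nonempty subset, and since $C_l$ is $\Gdelta$-connected one finds an edge $uv$ with $u \in M \cap C_l$ and $v \in C_l \setminus M$, yet $v \notin M$ forces $d(v,u) = d(v, w) = \delta$ for any $w \in X \setminus C_l \subseteq M$, contradicting $d(u,v) \ne \delta$. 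Since $\{X \setminus C_l\}_{1 \le l \le k}$ is already a copartition of $X$ by maximal mmodules, \Cref{lemma:partition-copartition} forces $\Mmax$ to equal this family: any additional element would produce an extra class in the partition of complements, which is impossible.
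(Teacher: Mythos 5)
Your proof is correct, and while it follows the same two-direction outline as the paper's, several sub-arguments take a genuinely different route and are in places more complete. For the forward direction the paper obtains the constancy of all crossing distances in a single chain: for maximal mmodules $M,M'$ and crossing pairs $(x,x')$, $(y,y')$, the copartition property places $x'\in M'$ and $y\in M$, and two applications of the mmodule definition give $d(x,x')=d(y,x')=d(y,y')$; it never needs the fact that $\co{M}$ is an mmodule. You instead derive $\co{M}=M'\setminus M\in\M$ from \Cref{mmodules}\ref{item:mmod3} (after checking $M\cap M'\neq\varnothing$ in the non-bipartite case) and then equalize the constants $\delta_M$ by essentially the same two-point swap — a slightly longer but equally valid mechanism. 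More significantly, you give an explicit proof that each $\co{M_i}$ is connected in $\Gdelta$, by showing that a $\delta$-cut $A\sqcup B$ of $\co{M_i}$ would make $M_i\cup A$ a proper mmodule strictly between $M_i$ and $X$; the paper merely asserts that the components of $\Gdelta$ are the complements of the maximal mmodules, so this closes a step the paper leaves implicit. In the converse, your maximality argument is the contrapositive of the paper's (the paper shows $X\setminus M'$ must be a union of components; you extract a crossing $\Gdelta$-edge and contradict the mmodule property of $M'$ — the same idea), and you add the exhaustiveness step showing no further maximal mmodules exist, which the paper's proof also omits. The one point worth tightening is that final step: before invoking the "no extra class among the complements" argument you should say why $\Mmax$ is a copartition rather than a partition properly extending $\{X\setminus C_1,\ldots,X\setminus C_k\}$ — for $k\ge 3$ the sets $X\setminus C_l$ pairwise intersect, so \Cref{lemma:partition-copartition} rules out the partition case, and for $k=2$ the two families coincide.
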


\begin{proof}
  First, let $\Mmax$ be a copartition. Let $M, M'$ be two maximal
  mmodules, and let $x \in M$, $x' \in X \setminus M$, $y' \in M'$ and
  $y \in X \setminus M'$. We assert that $d(x,x') = d(y,y')$, allowing
  us to set $\delta=d(x,x')$. As $\Mmax$ is a copartition,
  $X \setminus M' \subset M$ and $X \setminus M \subset M'$, hence
  $x' \in M'$ and $y \in M$.  Since $x, y \in M, x' \in X \setminus M$
  and $M$ is an mmodule, we deduce that
  $d(x,x') = d(y,x')$. Analogously, since $x',y' \in M'$,
  $y \in X \setminus M'$ and $M'$ is an mmodule, we get
  $d(x',y) = d(y',y)$, and thus $d(x,x') = d(y,y')$. Then $\Gdelta$ is
  not connected as each maximal mmodule and its complement define a
  cut of $\Gdelta$. The connected components of $\Gdelta$ are the
  subsets of the partition defined by the complements of maximal
  mmodules.

  Conversely, let $\delta>0$ be such that the graph $\Gdelta$ is not
  connected. Then any arbitrary union of connected components of
  $\Gdelta$ is an mmodule. Thus, for each component $C$ of $\Gdelta$,
  its complement $M = X \setminus C$ is an mmodule. We assert that $M$
  belongs to $\Mmax$.  Let $M'$ be an mmodule containing $M$. Then
  $C' := X \setminus M'$ is a subset of $C$. Since $M'$ is an mmodule,
  for any $x \in C', y \in M'$, $d(x,y) = \delta$ holds. This implies
  that $M'$ and $C'$ define a cut of $\Gdelta$, thus $C'$ is a union
  of connected components of $\Gdelta$. Since $C' \subseteq C$ and $C$
  is a connected component of $\Gdelta$, we conclude that $C' = C$ and
  $M' = M$, establishing the maximality of $M$.
\end{proof}

We can actually give a simple characterization of the unique value of
$\delta$ such that the graph $\Gdelta$ is not connected.

\begin{lemma}\label{lemma:find-rho}
  Let $(X,d)$ be a dissimilarity space such that $\Mmax$ is a copartition
  of $X$, and $\delta > 0$ be the unique value such that $\Gdelta$ is
  not connected. Let $\MST$ be a minimum spanning tree on the complete
  graph on vertex set $X$, with weights given by $d$.
  Then
  $$\delta = \max \{d(x,y) : xy \in T \}
  = \min \{\delta' > 0 : \textrm{the graph } (X, \{ xy : x \neq y, d(x,y) \leq \delta'\})\textrm{ is connected}\}.$$
\end{lemma}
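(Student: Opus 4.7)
My plan is to prove the chain $\delta = \delta_\MST = \delta^*$, where I write $\delta_\MST := \max\{d(x,y) : xy \in \MST\}$ and $\delta^* := \min\{\delta' > 0 : G_{\le \delta'} \text{ is connected}\}$. I would split the argument into the MST-bottleneck identity $\delta_\MST = \delta^*$ (a purely MST-theoretic fact) and the identification $\delta = \delta^*$, which is the place where \Cref{lemma:copartition-delta} enters.

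For the MST-bottleneck identity, I would use the standard swap argument. Since $\MST \subseteq G_{\le \delta_\MST}$ is a spanning tree, $G_{\le \delta_\MST}$ is connected, giving $\delta^* \le \delta_\MST$. For the reverse inequality, given any $\delta' < \delta_\MST$, I would pick a heaviest edge $e$ of $\MST$; removing $e$ splits $\MST$ into subtrees $A$ and $B$, and any hypothetical edge $uv$ with $u \in A$, $v \in B$, $d(u,v) \le \delta' < \delta_\MST$ could be swapped for $e$ to produce a strictly lighter spanning tree, contradicting the minimality of $\MST$. Hence $G_{\le \delta'}$ has no edge across the cut $(A,B)$ and is disconnected.

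For $\delta = \delta^*$, I would apply \Cref{lemma:copartition-delta}: the components of $\Gdelta$ are the complements $C_1, \ldots, C_k$ (with $k \ge 2$) of the maximal mmodules $M_1, \ldots, M_k$, and for any $u \in C_i$, $v \in M_i$ one has $d(u,v) = \delta$. Pairwise disjointness of the $C_i$'s yields $C_j \subseteq M_i$ whenever $j \ne i$, so every cross-component pair has distance exactly $\delta$ and is an edge of $G_{\le \delta}$. From this, any two points $x, y \in X$ can be joined in $G_{\le \delta}$: directly if they lie in distinct components of $\Gdelta$, and otherwise via any vertex in another component; thus $\delta^* \le \delta$. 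Conversely, for $\delta' < \delta$, every edge $xy$ of $G_{\le \delta'}$ has $d(x,y) < \delta$, hence is an edge of $\Gdelta$ and lies inside a single component $C_i$, so $G_{\le \delta'}$ respects the nontrivial partition $\{C_1, \ldots, C_k\}$ and is disconnected, giving $\delta^* \ge \delta$.

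The one point that requires care, which I would flag explicitly, is that $\Gdelta$ excludes only weight-$\delta$ edges and not heavier ones: intra-component edges of $\Gdelta$ may have weight strictly larger than $\delta$, so connectivity of $G_{\le \delta}$ inside a single $C_i$ cannot be deduced from connectivity of $\Gdelta(C_i)$. The detour through a vertex of another component using edges of weight exactly $\delta$, as used above, is the key observation that bypasses this obstacle.
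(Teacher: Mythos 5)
Your proof is correct, and it decomposes the three-way equality differently from the paper. You prove $\delta_{\MST} = \delta^\ast$ as a self-contained MST-bottleneck fact (generic swap argument on the heaviest tree edge), and separately prove $\delta = \delta^\ast$ from the structure given by \Cref{lemma:copartition-delta}. The paper instead pairs the equalities the other way: it shows $\max\{d(x,y) : xy \in \MST\} = \delta$ directly, with a swap argument that already exploits the fact that any edge leaving a component of $\Gdelta$ has weight exactly $\delta$; and it gets connectivity of $G_{\leq \delta}$ in one line by observing that $G_{\leq\delta}$ contains the complement of $\Gdelta$, and the complement of a disconnected graph is connected. Your route buys a cleaner separation of concerns — the MST half is a reusable graph-theoretic lemma with no reference to mmodules — at the cost of an explicit path construction (detouring through a second component via two weight-$\delta$ edges) where the paper's complement trick is shorter. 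The subtlety you flag, that intra-component edges of $\Gdelta$ may be heavier than $\delta$ so that connectivity of $G_{\leq\delta}$ inside a component is not automatic, is exactly the point the complement argument also silently handles, and your detour resolves it correctly since $\Gdelta$ has at least two components.
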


\begin{proof}
  Let $\MST \subseteq \binom{X}{2}$ be a minimum-weight spanning tree on
  $X$. Since $\Gdelta$ is not connected, $\MST$ must contain an edge with
  weight $\delta$, hence
  $\max \{ d(x,y) : xy \in \MST\} \geq \delta$.

  Let $e=xy$ be any edge of $\MST$. If $e$ joins two components of
  $\Gdelta$, then its weight is $\delta$.  Else, let $M$ be the
  connected component of $\Gdelta$ containing $x$ and $y$. Pick any
  $z \notin M$ (such $z$ exists since $\Gdelta$ is not connected).  We
  may assume that $y$ lies between $x$ and $z$ in the tree $T$.  Then
  $\MST \setminus \{e\} \cup \{xz\}$ is a spanning tree. By minimality of
  $\MST$, $d(x,y)\le d(x,z)$. Since $x \in M, z \notin M$, we also have
  $d(x,z) = \delta$. Consequently, $d(x,y) \le \delta$ for any edge
  of $\MST$. This proves the first inequality.

  As $\Gdelta$ is not connected, the graph
  $G_{< \delta} := (X, \{ xy : d(x,y) < \delta\})$ is not
  connected. Moreover
  $G_{\leq \delta} := (X, \{xy : d(x,y) \leq \delta\})$ is a
  supergraph of the complement of $\Gdelta$. As the complement of a
  not-connected graph is connected, $G_{\leq \delta}$ is connected, proving
  the last inequality.
\end{proof}

\begin{definition}[Connected and non-connected dissimilarity 
spaces, large $\deltas$-mmodules]
  Let $\rhos$ be the minimum value such that the  graph
  $G_{\leq\rho}=(X,\{xy : x \neq y, d(x,y) \leq \rhos\})$ is connected. If the graph $\Grho$ is
  not connected, then we say that the dissimilarity space $(X,d)$ is
  \emph{non-connected}. Otherwise, all graphs $\Gdelta$ are connected
  for $\delta > 0$, and we say that the dissimilarity space $(X,d)$ is
  \emph{connected}. 
  If $(X,d)$ is a non-connected Robinson space and $\Grho$
  contains a connected component with diameter larger that $\rhos$
  (see \Cref{lemma:delta-mmodules}), then we call this component a
  \emph{large $\rhos$-mmodule}.
\end{definition}

Using the relationship between the minimum spanning tree and the
dendrogram  $\Td$ of the subdominant ultrametric $\hd$ of $d$, we get that $\rho$
is the weight of the root in the vertex representation of $\Td$. 
Observe also that $\rhos$ coincides with $\rhos(\alpha)$ when $\alpha$ is a
P-node root of $\TPQ$ or a $\cap$-node root of $\TM$. The next result
establishes some properties of large $\rhos$-mmodules.

\begin{lemma}\label{lemma:large_block}
  Let $(X,d)$ be a non-connected Robinson space containing a large
  $\rhos$-mmodule $S_0$. Then $S_0$ has the following properties:
  \begin{enumerate}[label=(\roman*)]
  \item\label{item:s01} there is a unique bipartition
    $S_0 = S_\star \cup S^\star$ of $S_0$ into two blocks,
  \item\label{item:s02} $\diam(S_\star) \leq \rhos$ and
    $\diam(S^\star) \leq \rhos$,
  \item\label{item:s03} $d(x,y) \geq \rhos$ for any
    $x \in S_\star, y \in S^\star$,
  \item\label{item:s04} if there exists $\delta > 0$ such that
    ${\Gdelta}(S_0)$ is not connected, then $\delta > \rho$ and
    the connected components of ${\Gdelta}(S_0)$ are $S_\star$ and
    $S^\star$.
  \end{enumerate}
\end{lemma}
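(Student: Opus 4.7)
Write $Y := X \setminus S_0$. Since $\Grho$ is disconnected, $Y$ is nonempty, and the absence of $\Grho$-edges between $S_0$ and $Y$ forces $d(y,u) = \rhos$ for every $y \in Y$ and $u \in S_0$. I fix an arbitrary compatible order $\<$ with minimum $x_\star$ and maximum $x^\star$; by the proof of \Cref{lemma:delta-mmodules}, both $x_\star, x^\star \in S_0$ and $d(x_\star,x^\star) = \diam(S_0) > \rhos$. The whole plan will revolve around the two disjoint order-intervals
\[
  I_0 := \{u \in S_0 : u \< y \text{ for every } y \in Y\}, \qquad
  I_k := \{u \in S_0 : y \< u \text{ for every } y \in Y\},
\]
which contain $x_\star$ and $x^\star$ respectively.

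I will first prove $S_0 = I_0 \cup I_k$. If some $u$ satisfied $y_1 \< u \< y_2$ for $y_1, y_2 \in Y$, then for any $v \in I_0$ the two Robinson-compatible triples $v \< y_1 \< u$ and $v \< u \< y_2$ would together pin $d(v,u) = \rhos$, and a symmetric argument gives $d(u,v) = \rhos$ for every $v \in I_k$; then $u$ has no $\Grho$-edge into $I_0 \cup I_k$, disconnecting $\Grho(S_0)$ and contradicting that $S_0$ is a $\Grho$-component. Setting $S_\star := I_0$ and $S^\star := I_k$, I then read (ii) and (iii) directly off compatibility: for $u \< v$ in the same part, any $y \in Y$ lies on the far side of the pair in $\<$, so Robinson gives $d(u,v) \le \rhos$; for $u \in S_\star, v \in S^\star$, any $y \in Y$ lies strictly between $u$ and $v$, whence $d(u,v) \ge d(u,y) = \rhos$.

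For (i), my plan is to establish uniqueness of the bipartition among all bipartitions of $S_0$ into two nonempty parts satisfying (ii) and (iii); running the construction of $I_0, I_k$ starting from any other compatible order $\<'$ then produces a partition of the same type, which by uniqueness must coincide with $\{S_\star, S^\star\}$, showing that $S_\star$ and $S^\star$ are intervals in every compatible order and hence blocks. To prove uniqueness, let $\{A,B\}$ be any such bipartition and refine $S_0$ into the four pieces $\alpha_1 := S_\star \cap A$, $\alpha_2 := S_\star \cap B$, $\beta_1 := S^\star \cap A$, $\beta_2 := S^\star \cap B$. For each of the four cross-pairings of a piece in $\alpha_1 \cup \beta_2$ with a piece in $\alpha_2 \cup \beta_1$, combining a ``diameter $\le \rhos$'' upper bound from one bipartition with a ``cross-distance $\ge \rhos$'' lower bound from the other pins the corresponding pairwise distance at exactly $\rhos$; so $\Grho(S_0)$ carries no edge between the two sets $\alpha_1 \cup \beta_2$ and $\alpha_2 \cup \beta_1$. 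Connectivity of $\Grho(S_0)$ forces one of these two sets to be empty, and either vanishing yields $\{A,B\} = \{S_\star, S^\star\}$. This $2 \times 2$ diagonal argument is the step I expect to be the most delicate.

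For (iv), if $\delta < \rhos$ then every $(u,v) \in S_\star \times S^\star$ satisfies $d(u,v) \ge \rhos > \delta$, so the complete bipartite graph between $S_\star$ and $S^\star$ sits inside $\Gdelta(S_0)$ and connects it, contradicting the disconnection hypothesis; if $\delta = \rhos$ then $\Gdelta(S_0) = \Grho(S_0)$ is connected by definition of $S_0$. Hence $\delta > \rhos$. But then every intra-part distance is at most $\rhos < \delta$, so each of $S_\star, S^\star$ is a clique in $\Gdelta(S_0)$, and the only way $\Gdelta(S_0)$ can fail to be connected is that every cross-distance equals $\delta$ exactly, yielding precisely the two components $S_\star$ and $S^\star$.
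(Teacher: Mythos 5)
Your proof is correct, and it takes a genuinely different route from the paper's. The paper never fixes a compatible order to define the bipartition: it observes that for an edge $xz$ of $\Grho(S_0)$ and any $y \in X \setminus S_0$, the point $y$ lies between $x$ and $z$ in a compatible order exactly when $d(x,z) > \rhos$, and then splits $S_0$ according to the parity of the number of such heavy edges along paths of $\Grho(S_0)$; order-independence, and hence the block property and uniqueness, are built into that definition. You instead fix one compatible order, take $S_\star$ and $S^\star$ to be the portions of $S_0$ preceding and following $X \setminus S_0$, kill the middle set via an empty cut in $\Grho(S_0)$, and recover order-independence a posteriori through your $2 \times 2$ refinement argument; that argument (the four pinned pair-types force the cut between $\alpha_1 \cup \beta_2$ and $\alpha_2 \cup \beta_1$ to carry no $\Grho$-edge) is clean and in fact yields something slightly stronger than the paper proves, namely uniqueness among all bipartitions satisfying (ii)--(iii), not just among block bipartitions. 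Two small tightenings. First, in the step $S_0 = I_0 \cup I_k$, the disconnection comes from the fact that \emph{every} point of $S_0 \setminus (I_0 \cup I_k)$ has all its distances to $I_0 \cup I_k$ pinned at $\rhos$ --- a single such $u$ could a priori still reach $I_0 \cup I_k$ through other middle points --- so state the edge-free cut as $\bigl(S_0 \setminus (I_0 \cup I_k),\, I_0 \cup I_k\bigr)$. Second, part (i) asserts uniqueness among bipartitions of $S_0$ into two \emph{blocks}, while your uniqueness is among bipartitions satisfying (ii)--(iii); add the one-line remark that in a fixed compatible order two blocks partitioning $S_0$ are disjoint intervals avoiding $X \setminus S_0$, hence form a prefix and a suffix of the order and must equal $I_0$ and $I_k$ (equivalently, they satisfy (ii)--(iii), so your uniqueness applies).
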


\begin{proof}
  Let $y \in X \setminus S_0$, and $\<$ be any compatible order for
  $(X,d)$. Since $\diam(S_0)> \rhos$, $S_0$ contains at least two
  points.  Let $xz$ be any edge of $\Grho(S_0)$ and suppose that
  $x \< z$.  If $d(x,z) < \rhos$, then, as  $d(x,y) = d(y,z) = \rhos$,
   either $x \< z \< y$ or $y \< x \< z$ holds. Otherwise, by
  definition of the edges of $\Grho$ we have $d(x,z) > \rhos$, and
  thus $x \< y \< z$ holds. Setting
  $E_{> \rhos} := \{x'z' \in E(\Grho): d(x',z') > \rhos\}$, it
  implies that for any $xz \in E(\Grho)$ and
  $y \in X \setminus S_0$, $y$ is between $x$ and $z$ in any
  compatible order if and only if $xz \in E_{> \rhos}$.

  We can extend the previous argument to any pair of vertices $x,z$ of
  $S_0$. Since $S_0$ is connected in $\Grho$, there exists at least
  one $(x,z)$-path and let
  $P=(x=x_0x_1,x_1x_2,\ldots,x_{\ell-1}x_\ell=z)$ be an arbitrary
  $(x,z)$-path.  Then the parity of the number of edges $x_ix_{i+1}$
  of $P$ such that $y$ is between $x_i$ and $x_{i+1}$ is odd if and
  only if $y$ is between $x=x_0$ and $z=x_\ell$. In particular, this
  parity is the same for all paths with extremities $x$ and
  $z$. Fixing $x \in S_0$, we can thus define the following
  bipartition of $S_0$:
  $$S^\star := \{z \in S_0: \textrm{for each $(x,z)$-path $P$ of $\Gdeltas$, $|P \cap E_{> \rhos}|$ is even}\},$$
  $$S_\star := \{z \in S_0: \textrm{for each $(x,z)$-path $P$ of $\Gdeltas$, $|P \cap E_{> \rhos}|$ is odd}\}.$$

  Then for any $y \in X \setminus S_0$, $y$ is between any pair
  $x \in S_\star$, $z \in S^\star$, hence $X \setminus S_0$, $S_\star$
  and $S^\star$ are blocks, and this bipartition is unique,
  establishing~\ref{item:s01}.

  By construction, the bipartition $(S_\star,S^\star)$ has the
  properties~\ref{item:s02} and~\ref{item:s03}. For~\ref{item:s04},
  suppose there exists $\delta > 0$ such that ${\Gdelta}(S_0)$ is
  not connected. Then $\delta \neq \rhos$, as $S_0$ is a connected
  component of $\Grho$. The inequality $\delta < \rhos$ is also
  impossible, since the complete bipartite graph on $(S_\star,S^\star)$
  would be a subgraph of ${\Gdelta}(S_0)$.  Hence
  $\delta > \rhos$. By~\ref{item:s02}, each of the sets $S_\star$
  and $S^\star$ induces a connected subgraph in ${\Gdelta}(S_0)$,
  hence $S_\star$ and $S^\star$ are the connected components of
  ${\Gdelta}(S_0)$.
\end{proof}

The existence of large $\rhos$-mmodules will induce
some irregularities in both tree representations of Robinson
dissimilarities. The next definitions capture those irregularities,
and will be useful when comparing PQ-trees and mmodule trees.

\begin{definition}[$\delta$-special, large, $\delta$-conical, apex,
  split and standard nodes]
  Let $(X,d)$ be a Robinson space with mmodule tree $\TM$ and PQ-tree
  $\TPQ$. Let $\delta > 0$.

  A $\cap$-node $\alpha = \cap(\beta_1,\ldots,\beta_k)$ of $\TM$ is
  called \emph{$\delta$-special} if for all distinct
  $j, j' \in \{1,\ldots,k\}$, we have
  $d(X(\beta_j),X(\beta_{j'})) = \delta$ and there is
  $i \in \{1,\ldots,k\}$ such that $\diam(X(\beta_i)) > \delta$
  holds. Then $\beta_i$ is unique and called the \emph{large} child of
  $\alpha$. A $\cap$-node is \emph{special} if it is $\delta$-special
  for some $\delta > 0$.

  A Q-node $\alpha = Q(\beta_1,\ldots,\beta_k)$ of $\TPQ$ is called
  \emph{$\delta$-conical} if there is a child $\beta_i$ such that for
  all $j \in \{1,\ldots,k\} \setminus \{i\}$, we have
  $d(X(\beta_i),X(\beta_j)) = \delta$. Then $\beta_i$ is unique and
  called the \emph{apex} child of $\alpha$. A Q-node is \emph{conical}
  if it is $\delta$-conical for some $\delta > 0$.

  If $\alpha$ is the apex child of a $\delta$-conical Q-node and
  $\Gdelta(X(\alpha))$ is not connected, then $\alpha$ is called a
  {\em split node}. If a node $\alpha$ of $\TPQ$ is not split, then it
  is \emph{standard}.
\end{definition}

The uniqueness of $\beta_i$ in both definitions can be readily
checked:
\begin{itemize}
\item The uniqueness of the large child of a special $\cap$-node
  derives from \Cref{lemma:delta-mmodules}.
\item Suppose that a conical node
  $\alpha = Q(\beta_1, \ldots, \beta_k)$ has two apex children
  $\beta_i$ and $\beta_j$. First notice that the distance between
  $\beta_i$ and the other children is the same that the distance
  between $\beta_j$ and the other children (its value is
  $d(X(\beta_i),X(\beta_j))$). It would be possible to exchange the
  nodes $\beta_i$ and $\beta_j$ in the list of the children of
  $\alpha$, a contradiction.

\end{itemize}

\subsection{Construction of the PQ-tree}
The three next propositions describe how to build the nodes
of the PQ-tree of a Robinson dissimilarity through the analysis of its
$\rhos$-mmodules.

\begin{proposition}\label{lemma:no-d-cut}
  Let $(X,d)$ be a connected Robinson space.
  Then the following assertions hold:
  \begin{enumerate}[label=(\roman*)]
  \item\label{item:ndc1} $\Mmax$ is a partition of $X$ with $|\Mmax| \geq 3$,
  \item\label{item:ndc2} each maximal mmodule is a block,
  \item\label{item:ndc3} the quotient space $(X/\Mmax,\widehat{d})$ is flat,
  \item\label{item:ndc4} the compatible orders on $X$ are exactly the
    composition of each of the two compatible orders of
    $(X/\Mmax,\widehat{d})$ with the compatible orders of each maximal
    mmodule,
  \item\label{item:ndc5} the root of $\TPQ$ is a Q-node, whose
    children are the PQ-trees of the maximal mmodules, sorted by the
    compatible orders of $(X/\Mmax,\widehat{d})$.
  \end{enumerate}
\end{proposition}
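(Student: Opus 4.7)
The plan is to prove (i) directly, then show the root of $\TPQ$ is a Q-node whose children yield a flat quotient, and finally identify these children with $\Mmax$ by induction on $|X|$; items (ii)--(v) then follow immediately.

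For (i), \Cref{lemma:partition-copartition} forces $\Mmax$ to be a partition or a copartition of $X$, and \Cref{lemma:copartition-delta} excludes the copartition case (it would give a $\delta>0$ with $\Gdelta$ disconnected, contradicting the hypothesis); so $\Mmax$ is a partition. The size must be at least three: size one would require the unique maximal mmodule to contain every singleton and thus equal $X$, and size two would be a bipartition, hence also a copartition, excluded as above. Next, the root of $\TPQ$ must be a Q-node, for if it were a P-node $\alpha$ then \Cref{lemma:coP} and \Cref{lemma:uniform-pnode} would force all pairwise dist
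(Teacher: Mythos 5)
Your proposal is truncated and, as it stands, proves only item~\ref{item:ndc1} together with the first step of an argument that the root of $\TPQ$ is a Q-node. What is present is correct: excluding the copartition case via \Cref{lemma:partition-copartition} and \Cref{lemma:copartition-delta} is exactly the paper's argument for~\ref{item:ndc1}, and your explicit justification that $|\Mmax|\neq 1,2$ (a single maximal mmodule would absorb every singleton and equal $X$; a bipartition is a copartition) is a welcome elaboration of a point the paper leaves implicit. Likewise, ruling out a P-node root because \Cref{lemma:uniform-pnode} would make all cross-child distances equal to some $\delta$ and hence disconnect $\Gdelta$ is sound, and is a legitimately different (more direct) way to get the Q-node conclusion than the paper's, which only obtains it at the very end via \Cref{prop:charac-PQtree}.

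The gap is everything after that. Items~\ref{item:ndc2}--\ref{item:ndc5} are the substance of the proposition and none of them is established. Your announced plan --- identify the children of the Q-node root with $\Mmax$ by induction on $|X|$ --- inverts the paper's order of deduction and leaves the hardest step unaddressed: while each child $X(\beta_i)$ of the root is an mmodule and a block by \Cref{THEOREM_R_module_simple} and \Cref{lemma:block}, nothing you have written shows that these are the \emph{maximal} mmodules, nor that the quotient is flat, nor that every compatible order decomposes as claimed. It is not apparent what the induction hypothesis would be or how shrinking $|X|$ helps, since the claim concerns the top level of the tree. The paper's route avoids this entirely: it forms the quotient $(X/\Mmax,\widehat{d})$ first, observes its mmodules are all trivial by construction, applies \Cref{thm:flat} to get flatness (item~\ref{item:ndc3}), deduces that each maximal mmodule is an interval in every compatible order (item~\ref{item:ndc2}), and only then assembles the PQ-tree via \Cref{LEMMA_block_node}, \Cref{lemma:mmod-block-orders} and \Cref{prop:charac-PQtree} to get items~\ref{item:ndc4} and~\ref{item:ndc5}. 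To complete your proof you would either need to supply a direct argument that a Q-node root's children are maximal as mmodules (which essentially forces you back to the quotient argument), or adopt the paper's quotient-first ordering.
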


\begin{proof}
  If $\Mmax = \{M_1, \ldots, M_k\}$ is a copartition of $X$, then by
  \Cref{lemma:copartition-delta}, $\Grho$ is not connected,
  contradicting the hypothesis. Hence~\ref{item:ndc1} holds.

  Consider the quotient space $(X / \Mmax,\widehat{d})$. By
  construction, its mmodules are all trivial. Hence by
  \Cref{thm:flat}, $(X / \Mmax,\widehat{d})$ is flat and has a unique
  compatible order $\<_\Mmax$ up to reversal, that is,~\ref{item:ndc3}
  holds. We may assume that $M_1\<_\Mmax \ldots \<_\Mmax M_k$. Then
  for each compatible order $\<$ on $X$, for each choice of
  $x_i \in M_i$ for $i \in \{1,\ldots,k\}$ the restriction of $\<$ to
  $\{x_1,\ldots,x_k\}$ coincides with $\<_\Mmax$ or its reversal, that
  is either $x_1 \< \ldots \< x_k$ or $x_k \< \ldots \< x_1$. This
  implies that each $M_i$ is an interval for $\<$,
  hence~\ref{item:ndc2} holds.

  By \Cref{LEMMA_block_node}, each $M_i$ is a node $\beta_i$ in
  $\TPQ$. By \Cref{lemma:mmod-block-orders}, the relative order of
  elements in $M_i$ can be chosen independently from the order of the
  blocks. Thus~\ref{item:ndc4} holds. Finally by
  \Cref{prop:charac-PQtree}, it implies~\ref{item:ndc5}.
\end{proof}

\begin{proposition}\label{lemma:some-d-cut}
  Let $(X,d)$ be a non-connected Robinson space and suppose that the
  diameter of each $\rhos$-mmodule is at most $\rhos$.  Then the root
  of $\TPQ$ is a P-node whose children are the PQ-trees of each
  $\rhos$-mmodule.
\end{proposition}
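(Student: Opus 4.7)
The plan is to apply \Cref{prop:charac-PQtree} after showing that the compatible orders of $(X,d)$ decompose as arbitrary orderings of the $\rhos$-mmodules combined with independent compatible orders inside each of them, and that the $\rhos$-mmodules are precisely the equivalence classes of $\approx_{\Pi(X,d)}$.

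First, I would extract the structural information about the $\rhos$-mmodules $M_1,\ldots,M_k$: since $(X,d)$ is non-connected, \Cref{lemma:copartition-delta} applied with $\delta=\rhos$ gives that $\Mmax$ is the copartition $\{X\setminus M_1,\ldots,X\setminus M_k\}$ and that $d(x,y)=\rhos$ whenever $x\in M_i$, $y\in M_j$ with $i\ne j$. The hypothesis $\diam(M_i)\le \rhos$ together with \Cref{lemma:delta-mmodules} gives that each $M_i$ is a block of $(X,d)$, and since each $M_i$ is an mmodule, \Cref{LEMMA_block_node} yields a node $\alpha_i$ of $\TPQ$ with $X(\alpha_i)=M_i$.

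Second, I would show that \emph{every} ordering of $M_1,\ldots,M_k$ combined with arbitrary internal compatible orders gives a compatible order of $(X,d)$. Fix a permutation $\sigma$ on $\{1,\ldots,k\}$ and a compatible order $\<_i$ on each $M_i$, and let $\<$ be the composition. To check $d(x,z)\ge \max\{d(x,y),d(y,z)\}$ for $x\<y\<z$, one has three cases: if all three lie in the same $M_i$ the inequality follows from $\<_i$; if two lie in some $M_i$ and the third in a different $M_j$ then the two inter-block distances equal $\rhos$ while the intra-block distance is at most $\rhos$; and the ``middle element outside'' case cannot occur because each $M_i$ is an interval of $\<$ by construction; if all three lie in distinct $M$'s, all three distances equal $\rhos$. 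Conversely, each $M_i$ is a block, so every compatible order is obtained this way. Hence $\Pi(X,d)$ is exactly the set of compositions of an \emph{arbitrary} order on $\{M_1,\ldots,M_k\}$ with compatible orders on each $M_i$.

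Third, I would identify $\mathcal{B}_{\Pi(X,d)}$. Since every ordering of the $M_i$'s is realized, any proper subset of $X$ containing points from two distinct $M_i$'s fails to be an interval in some compatible order (insert a third $M_j$ between them); thus every proper $\Pi(X,d)$-block is contained in some $M_i$. Combined with the fact that each $M_i$ is itself a $\Pi(X,d)$-block, the maximal $\Pi(X,d)$-blocks are exactly $M_1,\ldots,M_k$, and the equivalence classes of $\approx_{\Pi(X,d)}$ coincide with this partition. Writing $\Tt_{M_i}$ for the PQ-tree of the Robinson subspace $(M_i,d)$ (which represents the compatible orders of $M_i$ by induction/earlier results), we are precisely in case~\ref{item:pcase} of \Cref{prop:charac-PQtree}: any order on $\mathcal{B}_{\Pi(X,d)}$ is admissible. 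This proposition then concludes that the root of $\TPQ$ is a P-node whose children are the PQ-trees $\Tt_{M_1},\ldots,\Tt_{M_k}$.

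The main obstacle is the second step: verifying that \emph{arbitrary} permutations of the $\rhos$-mmodules yield compatible orders. The crucial ingredient is the uniformity of inter-block distances (all equal to $\rhos$) coming from \Cref{lemma:copartition-delta}, combined with the diameter bound $\diam(M_i)\le \rhos$, which together make every triple $x\<y\<z$ satisfy the Robinson inequality no matter how the blocks are arranged.
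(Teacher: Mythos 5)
Your proposal is correct and follows essentially the same route as the paper: establish that each $\rhos$-mmodule is both an mmodule and a block (hence a node of $\TPQ$ by \Cref{LEMMA_block_node}), use the uniform inter-mmodule distance $\rhos$ to show that every permutation of the $\rhos$-mmodules composed with internal compatible orders is compatible, and conclude via \Cref{prop:charac-PQtree}. Your version merely spells out the triple-checking and the identification of the equivalence classes of $\approx_{\Pi(X,d)}$ in more detail than the paper does.
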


\begin{proof}
  Let $M$ be a $\rhos$-mmodule. By \Cref{lemma:delta-mmodules},
  $M$ is an mmodule and a block. By \Cref{LEMMA_block_node},
  there exists a node $\beta_M$ in $\TPQ$ such that
  $X(\beta_M) = M$. Moreover, by \Cref{lemma:mmod-block-orders}, in
  any compatible order $<$, reordering the elements of $M$ into an
  order compatible with $\beta_M$ gives a compatible order on $X$.
  Furthermore, since for any $x,y \in X$ that are not in the same
  mmodule, $d(x,y) = \rhos$ holds, any order between the blocks
  corresponding to each $\rhos$-mmodule is compatible. Hence each
  $\rhos$-mmodule is a maximal block, and the result follows by
  \Cref{prop:charac-PQtree}.
\end{proof}

\begin{proposition}\label{lemma:cut-separable}
  Let $(X,d)$ be a non-connected Robinson containing a large
  $\rhos$-mmodule $S_0$. Let $S_1,\ldots,S_k$ be the other
  $\rhos$-mmodules. Then the root of $\TPQ(S_0)$ is a Q-node
  $Q(\beta_1,\ldots,\beta_\ell)$ or a P-node $P(\beta_1,\beta_2)$ (and
  $\ell = 2$), and the root of the PQ-tree $\TPQ(X)$ is a special
  Q-node
  $Q(\beta_1,\ldots,\beta_{i-1},\beta,\beta_{i},\ldots,\beta_\ell)$,
  obtained by adding an apex child $\beta$ to the root of $\TPQ(S_0)$,
  where
  \begin{equation*}
    \beta = \left\{\begin{array}{ll}
                      \TPQ(S_1) & \quad\textrm{ if } k = 1,\\
                      P(\TPQ(S_1),\ldots,\TPQ(S_k)) & \quad\textrm{ if } k \geq 2.
                    \end{array}\right.
  \end{equation*}
\end{proposition}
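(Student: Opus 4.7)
The plan is to describe the compatible orders of $(X,d)$ explicitly via \Cref{lemma:large_block}, then to read off $\TPQ(X)$ in terms of $\TPQ(S_0)$ and to identify the apex child $\beta$.

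First, I will establish the block structure and characterize compatible orders of $X$. By \Cref{lemma:large_block}, the sets $S_\star$, $S^\star$, and $X\setminus S_0$ are blocks of $(X,d)$, with $S_\star$ and $S^\star$ at the two extremes of $S_0$ and $X\setminus S_0=S_1\cup\cdots\cup S_k$ strictly between them in every compatible order. By \Cref{lemma:delta-mmodules}, each $S_j$ with $j\ge 1$ is an mmodule and a block with $\diam(S_j)\le\rhos$, and by Proposition~\ref{mmodules}\ref{item:mmod6} the pairwise distances between distinct $S_i$ and $S_j$ all equal $\rhos$. Using \Cref{lemma:mmod-block-orders} together with a direct Robinson check, the compatible orders of $(X,d)$ are exactly the concatenations $(o_\star,\sigma,o^\star)$ and their reverses, where $o_\star$ and $o^\star$ are compatible orders of $S_\star$ and $S^\star$, and $\sigma$ consists of the sets $S_1,\dots,S_k$ in any order, each internally ordered by a compatible order of $S_j$.

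Next, I will show that compatible orders of $(S_0,d|_{S_0})$ coincide with restrictions to $S_0$ of compatible orders of $(X,d)$. The forward direction is automatic. The converse reduces to showing that $S_\star$ and $S^\star$ are intervals in every compatible order of $(S_0,d|_{S_0})$: an interleaving triple $x\prec y\prec z$ with $x,z\in S_\star$ and $y\in S^\star$ forces $d(x,z)=d(x,y)=d(y,z)=\rhos$ by the Robinson inequality combined with \Cref{lemma:large_block}\ref{item:s02}--\ref{item:s03}; considering the diametral pair of $S_0$ (which lies across $S_\star\times S^\star$ since $\diam(S_0)>\rhos$) together with Robinson applied to additional triples then yields a contradiction. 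Once $S_\star$ and $S^\star$ are intervals in every compatible order of $S_0$, such an order extends to $X$ by inserting $X\setminus S_0$ between them as in the first paragraph. Consequently every compatible order of $(S_0,d|_{S_0})$ has $S_\star$ at one extreme of $S_0$ and $S^\star$ at the other, which rules out a P-node root of arity at least three for $\TPQ(S_0)$: its root is either a Q-node $Q(\beta_1,\dots,\beta_\ell)$ or a P-node $P(\beta_1,\beta_2)$ with $\ell=2$, and there is a unique index $i$ with $S_\star=X(\beta_1)\cup\cdots\cup X(\beta_{i-1})$ and $S^\star=X(\beta_i)\cup\cdots\cup X(\beta_\ell)$.

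Finally, matching the explicit description of compatible orders of $X$ from the first paragraph against the orders represented by the candidate $Q(\beta_1,\dots,\beta_{i-1},\beta,\beta_i,\dots,\beta_\ell)$ forces $\TPQ(X)$ to equal this tree, where $\beta$ encodes $X\setminus S_0$. When $k=1$, $\beta=\TPQ(S_1)$ directly. When $k\ge 2$, the sub-space $(X\setminus S_0,d|_{X\setminus S_0})$ is non-connected with $\rhos$-mmodules $S_1,\dots,S_k$ of diameter at most $\rhos$ and no large mmodule, so \Cref{lemma:some-d-cut} applied to it yields $\beta=P(\TPQ(S_1),\dots,\TPQ(S_k))$. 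Since $S_0$ is an mmodule and $X(\beta)=X\setminus S_0$, the distance $d(X(\beta),X(\beta_j))$ equals $\rhos$ for every $j$, so $\beta$ is the apex of the resulting $\rhos$-conical Q-node. The main obstacle will be the interval argument showing that $S_\star$ and $S^\star$ are intervals in every compatible order of $(S_0,d|_{S_0})$, which is precisely where the strict inequality $\diam(S_0)>\rhos$ is used essentially.
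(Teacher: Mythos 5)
Your overall plan (describe $\Pi(X,d)$ explicitly, then read the tree off that description) is reasonable, but the explicit description in your first paragraph is false, and the error propagates to the final matching step. You claim the compatible orders of $(X,d)$ are exactly the concatenations $(o_\star,\sigma,o^\star)$ where $o_\star$ and $o^\star$ are \emph{arbitrary} compatible orders of $S_\star$ and $S^\star$. This decouples the two sides, but $S_\star$ and $S^\star$ are in general not mmodules of $(X,d)$ (only $S_0$ and the $S_j$ are), so \Cref{lemma:mmod-block-orders} does not let you reorder them independently. Concretely, take $S_0=\{a,b,c,e\}$ with $d(a,b)=d(c,e)=1$, $d(a,c)=d(b,c)=2$, $d(b,e)=2.5$, $d(a,e)=3$, and one outside point $f$ at distance $2$ from all of $S_0$; then $\rhos=2$, $S_\star=\{a,b\}$, $S^\star=\{c,e\}$, and $(a,b,f,c,e)$ is compatible while $(b,a,f,c,e)$ is not, since $d(b,e)<d(a,e)$. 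The correct statement --- which is essentially the content of the proposition --- is that the orders on $S_\star$ and on $S^\star$ are correlated through $\TPQ(S_0)$ (here $Q(a,b,c,e)$): one must take a compatible order of $(S_0,d)$ and insert $X\setminus S_0$ into the unique gap between $S_\star$ and $S^\star$. Whenever the root of $\TPQ(S_0)$ is a Q-node of arity at least three, your stated set of orders is strictly larger than $\Pi(X,d)$ and is not represented by the claimed tree, so the matching in your last paragraph cannot go through as written.

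The second problem is the step you yourself flag as the main obstacle. To show that $S_\star$ and $S^\star$ are intervals in every compatible order of the subspace $(S_0,d)$, the diametral pair is not the right tool. If such an order alternates sides in $m\geq 3$ maximal runs $B_1,\ldots,B_m$, your triple argument extends to show that every point $w$ of a middle run $B_c$ satisfies $d(w,p)=\rhos$ for \emph{all} $p\in S_0\setminus B_c$ (opposite-side points before and after $w$ force this directly, and same-side points in other runs are handled via an interposed opposite-side point). For $m=3$ this makes every cross-distance equal to $\rhos$ and contradicts $\diam(S_0)>\rhos$; but already for $m=4$ the distance between the extreme runs $B_1$ and $B_4$ is only forced to be at least $\rhos$ and can carry the diametral pair, so no contradiction comes from diameters. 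The contradiction instead comes from connectivity: $B_c$ would be a nonempty proper subset of $S_0$ with no edge of $\Grho$ to the rest of $S_0$, contradicting the fact that $S_0$ is a connected component of $\Grho$. This connectivity dichotomy is exactly what the paper's proof exploits --- it pins down the root of $\TPQ(S_0)$ via \Cref{lemma:no-d-cut,lemma:some-d-cut} and \Cref{lemma:large_block}\ref{item:s04} rather than arguing on orders --- and it is the ingredient your sketch is missing.
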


\begin{proof}
  By \Cref{lemma:large_block}, there is a bipartition
  $S_0 = S_\star \cup S^\star$ of $S_0$ into two blocks, with
  $\diam(S_\star) \leq \deltas$, $\diam(S^\star) \leq \rhos$, and
  $d(x,y) \geq \rhos$ for each $x \in S_\star$, $y \in S^\star$.

  We claim that the root of $\TPQ(S_0)$ is a Q-node or has arity
  two. If ${\Gdelta}[S_0]$ is connected for every $\delta > 0$, then
  by \Cref{lemma:no-d-cut}, the root of $\TPQ(S_0)$ is a
  Q-node. Otherwise, let $\delta > 0$ be such that ${\Gdelta}(S_0)$
  is not connected. By \Cref{lemma:large_block}\ref{item:s04},
  $\delta > \rhos$ and ${\Gdelta}(S_0)$ has two connected components
  $S_\star$ and $S^\star$, each of diameter less than $\delta$. By
  \Cref{lemma:some-d-cut}, the root of $\TPQ(S_0)$ is a P-node
  $P(\TPQ(S_\star),\TPQ(S^\star))$.

  Let $\beta_1, \ldots, \beta_l$ be the children of the root $\alpha$
  of $\TPQ(S_0)$, with $k \geq 2$. By \Cref{lemma:block}, $S_\star$
  and $S^\star$ are induced by consecutive children of $\alpha$, that
  is, up to symmetry, there is $i \in \{2,\ldots,\ell\}$ such that
  $S_\star = \bigcup_{j=1}^{i-1} X(\beta_i)$ and
  $S^\star = \bigcup_{j=i}^\ell X(\beta_i)$. Let $\beta$ be defined as in
  the statement of the result.

  We claim that $\beta = \TPQ(X \setminus S_0)$. If $k=1$ this is
  immediate from the definition of $\beta$. If $k \geq 2$, then
  $\Grho(X \setminus S_0)$ is not connected and its components are
  precisely $S_1,\ldots,S_k$. Thus the claim follow by
  \Cref{lemma:some-d-cut}.

  Then we show that
  $\TPQ(X) = Q(\beta_1,\ldots,\beta_{i-1},\beta,\beta_i,\ldots,\beta_\ell)$.
  Let $\<$ be a compatible order of $(X,d)$.  By
  \Cref{lemma:large_block}, $S_\star$ and $S^\star$ are blocks. Notice
  that the restriction of $\<$ to $S_0$ is represented by
  $\TPQ(S_0)$. By \Cref{lemma:delta-mmodules}, $S_1,\ldots,S_k$ are
  blocks, implying that the restriction of $\<$ to $X \setminus S_0$
  is represented by $\beta$. Moreover as
  $d(S_\star,S_j) = d(S_j,S^\star) = \deltas < \max \{d(x,y) : x \in S_\star, y \in S^\star\} = \diam(S_0)$,
  we have that $S_j$ is between $S_\star$ and $S^\star$ in any
  compatible order. Thus $\<$ is represented by
  $Q(\beta_1,\ldots,\beta_{i-1},\beta,\beta_i,\ldots,\beta_\ell)$.

  Conversely, let $\<$ be an order represented by
  $Q(\beta_1,\ldots,\beta_{i-1},\beta,\beta_i,\ldots,\beta_\ell)$  with
  $S_\star \< S^\star$. We show that $\<$ is a compatible order of
  $(X,d)$. Let $x \< y \< z$ be a triplet of points in $X$. As the
  restriction of $\<$ is represented by $\TPQ(S_0)$, if
  $x,y,z \in S_0$, then $\max \{d(x,y), d(y,z)\} \leq d(x,z)$. If
  $x,y \in S_0$, $z \in X \setminus S_0$, then as $x,y \in S_\star$,
  $d(x,y) \leq \rhos = d(y,z) = d(x,z)$. The case $y,z \in S_0$,
  $x \in X \setminus S_0$ is similar. If $x,z \in S_0$ and
  $y \in X \setminus S_0$, then $x \in S_\star$, $z \in S^\star$,
  hence $d(x,z) \geq \rhos = d(x,y) = d(y,z)$. If
  $x,y \in X \setminus S_0$ and $z \in S_0$, then
  $d(x,y) \leq \rhos = d(y,z) = d(x,z)$. The case $x \in S_0$,
  $y, z \in X \setminus S_0$ is similar. Finally, if
  $x,y,z \in X \setminus S_0$, as the restriction of $\<$ to
  $X \setminus S_0$ is represented by $\beta$, again
  $\max \{d(x,y), d(y,z) \} \leq d(x,z)$. Hence $\<$ is compatible.
\end{proof}

\begin{algorithm}[htbp]
  \caption{Computes the PQ-tree of $(X,d)$ using the $\rhos$-mmodules.}
  \label{algo:deltaToPqt}
  \flushleft{$\underline{\dtopq(S)}$}
  \begin{algorithmic}[1]
  \Require{a Robinson space $(X,d)$ (implicit), a set $S \subseteq X$.}
  \Ensure{the PQ-tree $\TPQ(S)$.}
  \If {$(S,d)$ is connected} \Comment{\mbox{\Cref{lemma:no-d-cut}}}
     \Let $M_1,\ldots,M_l$ be the maximum mmodules of $(S,d)$
     \Let $x_1 \in M_1, \ldots, x_\ell \in M_\ell$\label{line:maxmmod}
     \Let $x_{\sigma(1)} \< \ldots \< x_{\sigma(\ell)}$ be a compatible order of the flat Robinson space $\{x_1,\ldots,x_\ell\}$ \label{line:flat-order}
     \Return $Q(\dtopq(M_{\sigma(1)}), \ldots \dtopq(M_{\sigma(\ell)}))$
  \EndIf
  \State Compute $\rhos$ for $(S,d)$ \Comment{\mbox{\Cref{lemma:find-rho}}}
  \Let $S_0, S_1, \ldots, S_k$ be the connected components of the graph $\Grho$ on vertex set $S$
  \Let $\mathcal{T}_0, \ldots, \mathcal{T}_k = \dtopq(S_0),\ldots,\dtopq(S_k)$
  \If{there is $j \in \{0,\ldots,k\}$ with $\diam(S_j) > \rhos$} \Comment{\mbox{\Cref{lemma:cut-separable}}}
     \State $S_j \leftrightarrow S_0$, $\mathcal{T}_0 \leftrightarrow{\mathcal{T}_j}$ \Comment{\mbox{ensures $S_0$ is the large $\rhos$-mmodule}}
     \Let $Q(\beta_1,\ldots,\beta_\ell) := \mathcal{T}_0$, or $P(\beta_1,\beta_\ell) := \mathcal{T}_0$ and $\ell = 2$
     \Let $i \in \{2,\ldots,\ell\}$ be minimal such that $d(\beta_i,\beta_\ell) \leq \rhos$ and $d(\beta_{i-1}, \beta_{i}) \geq \rhos$,\label{line:def-i}
     \Let $\beta :=  \mathcal{T}_1$ if $k=1$, $\beta := P(\mathcal{T}_1, \ldots, \mathcal{T}_k)$ if $k \geq 2$
     \Return $Q(\beta_1,\ldots,\beta_{i-1},\beta,\beta_{i}, \ldots, \beta_{\ell})$ \Comment{\mbox{conical node with apex child $\beta$}}
  \EndIf
  \Return $P(\mathcal{T}_0,\ldots,\mathcal{T}_k)$ \Comment{\mbox{\Cref{lemma:some-d-cut}}}
  \end{algorithmic}
\end{algorithm}

\Cref{lemma:no-d-cut,lemma:some-d-cut,lemma:cut-separable} lead to
\Cref{algo:deltaToPqt} that builds the PQ-tree of $(X,d)$ using
$\Grho$. It requires an algorithm able to compute a compatible order
of a flat Robinson space on Line~\ref{line:flat-order}, otherwise it
can only compute the structure of the PQ-tree, without the ordering of
children of Q-nodes. We gave such an algorithm in our
paper~\cite[Proposition 6.16]{CaCheNaPre}, that runs in time
$O(|X|^2)$. The value of $\rho$ can be efficiently computed by
\Cref{lemma:find-rho}, as fast as the computation of a minimum
spanning tree of a complete graph. Determining the maximal mmodules on
Line~\ref{line:maxmmod} will be possible using an algorithm presented in
\Cref{SECTION_mmodtree}.

On line~\ref{line:def-i}, the value of $i$ in
\Cref{lemma:cut-separable} is computed as the minimal value $i^\star$
such that ($\ast$) $d(\beta_{i^\star},\beta_\ell) \leq \rhos$ and
$d(\beta_{i^\star-1},\beta_{i^\star}) \geq \rhos$; we now prove that
this computation is correct. We use the notations of
\Cref{lemma:cut-separable}; the root of $\TPQ(X)$ is
$Q(\beta_1,\ldots,\beta_{i-1},\beta,\beta_i,\ldots,\beta_\ell)$ and
$d(\beta,\beta_j) = \deltas$ for each $j \in \{1,\ldots,l\}$.  First
observe that $d(\beta_i,\beta_\ell) \leq d(\beta,\beta_\ell) = \rhos$ and
$d(\beta_{i-1},\beta_i) \geq d(\beta,\beta_i) = \rhos$, hence ($\ast$)
holds for $i$. Now suppose for the sake of contradiction that there
exists $i' \in \{1,\ldots,i-1\}$ for which ($\ast$) holds. Then
for all $j \in \{i',\ldots,i-1\}$ and $j' \in \{i,\ldots,\ell\}$,
$$\rhos = d(\beta,\beta_i) \leq d(\beta_{i-1},\beta_i) \leq d(\beta_j,\beta_{j'}) \leq d(\beta_{i'},\beta_\ell) \leq \rhos,$$
and for all $j \in \{i',\ldots,i-1\}$ and $j' \in \{1,\ldots,i'-1\}$
$$\rhos \leq d(\beta_{i'-1},\beta_{i'}) \leq d(\beta_{j'},\beta_j) \leq d(\beta_1,\beta_{i-1}) \leq d(\beta_1,\beta) = \rhos,$$
proving that all those quantities equal $\rhos$. In particular, for
all $x \in X(\beta_{i'}) \cup \ldots \cup X(\beta_{i-1})$ and
$y \in X(\beta_1) \cup \ldots \cup X(\beta_{i'-1}) \cup X(\beta_i) \cup \ldots \cup X(\beta_\ell)$,
$d(x,y) = \rhos$, contradicting the fact that $S_0$ is connected in
$\Grho$.

\section{Translation between PQ-trees and mmodule trees}\label{SECTION_translations}

In this section, we show how to build the mmodule tree of a Robinson
space $(X,d)$ from its PQ-tree, and \emph{vice-versa}, how to build
the PQ-tree from the mmodule tree.  This cryptomorphism is illustrated
by \Cref{fig:conversion}.

\begin{figure}[htbp]
  \tikzset{leaf/.style={}}
  \tikzset{cupnode/.style={circle,draw=black}}
  \tikzset{capnode/.style={circle,draw=black}}
  \tikzset{qnode/.style={rectangle,minimum height=0.5cm,draw=black}}
  \tikzset{pnode/.style={circle,minimum height=0.5cm,draw=black}}
  \tikzset{pnodellipse/.style={ellipse,minimum height=0.5cm,draw=black}}
  \begin{center}
    \footnotesize
    \begin{tabular}{rcl}
      \begin{tikzpicture}[x=1cm,y=1cm]
        \draw (0,0) node {Leaf $x$};
      \end{tikzpicture}
      & \raisebox{0.2cm}{\tikz\draw[<->] (0,0) -- (1cm,0) node[midway,above] {(a)};}
      & \begin{tikzpicture}[x=1cm,y=1cm]
          \draw (0,0) node {Leaf $x$};
        \end{tikzpicture}
      \\[0.4cm]
      \begin{tikzpicture}[x=1cm,y=1cm]
        \node[leaf] (b1) at (-1,-1) {$\beta_1$};
        \node[leaf] (bk) at (1,-1) {$\beta_k$};
        \node[leaf] (bi) at (0,-1) {\ldots\phantom{$\beta$}};
        \node[cupnode] (root) {$\cup$};
        \foreach \v in {b1.north,bi.north west,bi.north,bi.north east,bk.north} {
          \draw (\v) -- (root);
        }
      \end{tikzpicture}
      & \raisebox{0.5cm}{\tikz\draw[<->] (0,0) -- (1cm,0) node[midway,above] {(b)};}
      & \begin{tikzpicture}[x=1cm,y=1cm]
        \node[leaf] (b1) at (-2,-1) {$\phi(\beta_{\sigma(1)})$};
        \node[leaf] (bk) at (2,-1) {$\phi(\beta_{\sigma(k)})$};
        \draw (0,-1) node {\ldots};
        \node[qnode,minimum width=4.2cm] (root) at (0,0) {};
        \foreach \v in {b1,bk} {
          \draw (\v.north) -- (\v.north |- root.south);
        }
        \foreach \i in {2,3,...,5} {
          \node[leaf] (b\i) at ($(-2.75,-1) + \i*(0.75,0)$) {\phantom{$\beta$}};
          \draw (b\i.north) -- (b\i.north |- root.south);
        }
      \end{tikzpicture}
      \\[0.4cm]
      \begin{tikzpicture}[x=1cm,y=1cm]
        \node[leaf] (b1) at (-1,-1) {$\beta_1$};
        \node[leaf] (bk) at (1,-1) {$\beta_k$};
        \node[leaf] (dots) at (0,-1) {\ldots};
        \node[capnode] (root) {$\cap$};
        \foreach \v in {b1,bk} {
          \draw (\v) -- (root);
        }
        \foreach \i in {2,3,4} {
          \node[leaf] (b\i) at ($(-1.5,-1) + 1/2*(\i,0)$) {\phantom{$\beta$}};
          \draw (b\i) -- (root);
        }
      \end{tikzpicture}
      & \raisebox{0.5cm}{\tikz\draw[<->] (0,0) -- (1cm,0) node[midway,above] {(c)};}
      & \begin{tikzpicture}[x=1cm,y=1cm]
        \node[leaf] (b1) at (-1,-1) {$\phi(\beta_1)$};
        \node[leaf] (bk) at (1,-1) {$\phi(\beta_k)$};
        \node[leaf] (dots) at (0,-1) {\ldots};
        \node[pnode] (root) at (0,0) {};
        \foreach \v in {b1,bk} {
          \draw (\v) -- (root);
        }
        \foreach \i in {2,3,4} {
          \node[leaf] (b\i) at ($(-1.5,-1) + 1/2*(\i,0)$) {\phantom{$\beta$}};
          \draw (b\i) -- (root);
        }
     \end{tikzpicture}
      \\[0.4cm]
      \begin{tikzpicture}[x=1cm,y=1cm]
        \node[capnode] (root) at (0,0) {$\cap$};
        \draw (root.east) node[right] {$\delta$-special};
        \node[leaf] (b1) at (-1,-1) {$\beta_1$};
        \node[leaf] (bminus) at (0.5,-1) {$\beta_{k-1}$};
        \node[leaf] (bk) at (1.5,-1) {$\beta_k$};
        \draw (bk.south) node[below=-5pt] {large};
        \node (dots) at (-0.25,-1) {\ldots};
        \foreach \v in {b1,bminus,bk} {
          \draw (root) -- (\v);
        }
        \foreach \i in {2,3} {
          \node[leaf] (b\i) at ($(-1.5,-1) + 1/2*(\i,0)$) {\phantom{$\beta$}};
          \draw (root) -- (b\i);
        }

        \draw (2.5,-0.5) node {and $\phi(\beta_k) =$};
        \begin{scope}[xshift=4.8cm]
        \node[qnode,minimum width=2.6cm] (groot) at (0,-0.2) {};
        \node[leaf] (g1) at (-1.2,-1) {$\gamma_1$};
        \node[leaf] (gk) at (1.2,-1) {$\gamma_l$};
        \draw (0,-1) node {\ldots};
        \foreach \v in {g1,gk} {
          \draw (\v.north) -- (\v.north |- groot.south);
        }
        \foreach \i in {2,3,...,6} {
          \node[leaf] (g\i) at ($(-1.6,-1) + 2/5*(\i,0)$) {\phantom{$\gamma$}};
          \draw (g\i.north) -- (g\i.north |- groot.south);
        }
        \end{scope}
      \end{tikzpicture}
      & \raisebox{1cm}{\tikz\draw[->] (0,0) -- (1cm,0) node[midway,above] {(d)};}
      & \begin{tikzpicture}[x=1cm,y=1cm]
        \node[qnode,minimum width=5.4cm] (root) at (0,-0.2) {$\delta$-conical};
        \node[leaf] (g1) at (-2.5,-1) {$\gamma_1$};
        \draw (-1.75,-1) node {\ldots};
        \node[leaf] (gj) at (-1,-1) {$\gamma_j$};
        \node[leaf] (gplus) at (1,-1) {$\gamma_{j+1}$};
        \draw (1.75,-1) node {\ldots};
        \node[leaf] (gl) at (2.5,-1) {$\gamma_l$};
        \node[pnodellipse] (apex) at (0,-1) {apex};
        \foreach \v in {g1,gj,gplus,gl,apex} {
          \draw (\v.north) -- (\v.north |- root.south);
        }
        \foreach \i in {2,3,9,10} {
          \node[leaf] (g\i) at ($(-3,-1) + 1/2*(\i,0)$) {\phantom{$\gamma$}};
          \draw (g\i.north) -- (g\i.north |- root.south);
        }
        \node (b1) at (-1,-1.8) {$\phi(\beta_1)$};
        \draw (0,-1.8) node {\ldots};
        \node (bk) at (1,-1.8) {$\phi(\beta_{k-1})$};
        \foreach \v in {b1,bk} {
          \draw (\v) -- (apex);
        }
        \foreach \i in {2,3,4} {
          \node (b\i) at ($(-1.5,-1.8) + 1/2*(\i,0)$) {\phantom{$\phi(\beta)$}};
          \draw (b\i) -- (apex);
        }
      \end{tikzpicture}
      \\[0.4cm]
      \begin{tikzpicture}[x=1cm,y=1cm]
        \begin{scope}[xshift=-2.2cm]
          \node[cupnode] (root) at (0,0) {$\cap$};
          \draw (root.east) node[right] {$\delta$-special};
          \node[leaf] (b1) at (-0.5,-1) {$\gamma_j$};
          \node[cupnode] (large) at (0.5,-1) {$\cup$};
          \draw (large.east) node[right] {large};
          \node[leaf] (b2) at (0,-2) {$\gamma_1$};
          \node[leaf] (bk) at (1,-2) {$\gamma_l$};
          \draw (0.5,-2) node {\ldots};
          \foreach \i in {3,4,5} {
            \node[leaf] (b\i) at ($(-0.5,-2) + \i*(0.25,0)$) {\phantom{$\beta$}};
            \draw (b\i) -- (large);
          }
          \foreach \u/\v in {b1/root,large/root,b2/large,bk/large} {
            \draw (\u) -- (\v);
          }
          \draw (0,-2.5) node {$\gamma_j$ is standard};
        \end{scope}
        \draw (0,-1.5) node {or};
        \begin{scope}[xshift=1.7cm]
          \node[cupnode] (root) at (0,0) {$\cap$};
          \draw (root.east) node[right] {$\delta$-special};
          \node[leaf] (g1) at (-1,-1) {$\beta_1$};
          \node[leaf] (gl) at (0.5,-1) {$\beta_k$};
          \foreach \i in {2,3,...,6} {
            \node[leaf] (g\i) at ($(-1.25,-1) + 1/4*(\i,0)$) {\phantom{$\gamma$}};
            \draw (g\i) -- (root);
          }
          \draw (-0.25,-1) node {\ldots};
          \node[cupnode] (large) at (1,-1) {$\cup$};
          \draw (large.east) node[right] {large};
          \node[leaf] (b2) at (0.5,-2) {$\gamma_1$};
          \node[leaf] (bk) at (1.5,-2) {$\gamma_l$};
          \draw (1,-2) node {\ldots};
          \foreach \i in {3,4,5} {
            \node[leaf] (b\i) at ($(0,-2) + \i*(0.25,0)$) {\phantom{$\beta$}};
            \draw (b\i) -- (large);
          }
          \foreach \u/\v in {g1/root,gl/root,large/root,b2/large,bk/large} {
            \draw (\u) -- (\v);
          }
          \draw (0.25,-2.5) node {$\gamma_j$ is split};
        \end{scope}
      \end{tikzpicture}
      & \raisebox{1cm}{\tikz\draw[<-] (0,0) -- (1cm,0) node[midway,above] {(e)};}
      & \begin{tikzpicture}[x=1cm,y=1cm]
        \node[qnode,minimum width=5.4cm] (root) at (0,-0.2) {$\delta$-conical};
        \node[leaf] (g1) at (-2.5,-1) {$\phi(\gamma_1)$};
        \draw (-1.75,-1) node {\ldots};
        \node[leaf] (gj) at (-1,-1) {\phantom{$\phi(\gamma)$}};
        \node[leaf] (gplus) at (1,-1) {\phantom{$\phi(\gamma)$}};
        \draw (1.75,-1) node {\ldots};
        \node[leaf] (gl) at (2.5,-1) {$\phi(\gamma_l)$};
        \node[pnodellipse] (apex) at (0,-1.2) {apex};
        \foreach \v in {g1,gj,gplus,gl} {
          \draw (\v.north) -- (\v.north |- root.south);
        }
        \draw (apex.north) -- (apex.north |- root.south) node[midway,right] {$j$};
        \foreach \i in {2,3,9,10} {
          \node[leaf] (g\i) at ($(-3,-1) + 1/2*(\i,0)$) {\phantom{$\phi(\gamma)$}};
          \draw (g\i.north) -- (g\i.north |- root.south);
        }
        \node (b1) at (-1,-2) {$\phi(\beta_1)$};
        \draw (0,-2) node {\ldots};
        \node (bk) at (1,-2) {$\phi(\beta_{k})$};
        \foreach \v in {b1,bk} {
          \draw (\v) -- (apex);
        }
        \foreach \i in {2,3,4} {
          \node (b\i) at ($(-1.5,-2) + 1/2*(\i,0)$) {\phantom{$\phi(\beta)$}};
          \draw (b\i) -- (apex);
        }
        \draw (0,-2.5) node {\phantom{0}};
      \end{tikzpicture}
    \end{tabular}
  \end{center}
  \caption{The translations between mmodule trees and PQ-trees, as
    provided by \Cref{algo:pqtom,algo:mtopq}; here $\phi$ denotes the
    bijection from mmodule trees to PQ-trees. In cases (b), (c), none
    of the nodes is special or conical. In case (d), $\phi(b_k)$ can
    also be $P(\gamma_1,\gamma_2)$ (then $\ell = 2$), and if $k=2$ the
    apex node is simply $\phi(\beta_1)$. In case (f): the large node
    is a $\cap$-node when $l = 2$, and it may happen that the apex
    node is reduced to a leaf (in which case $\gamma_j$ is standard).}
  \label{fig:conversion}
\end{figure}
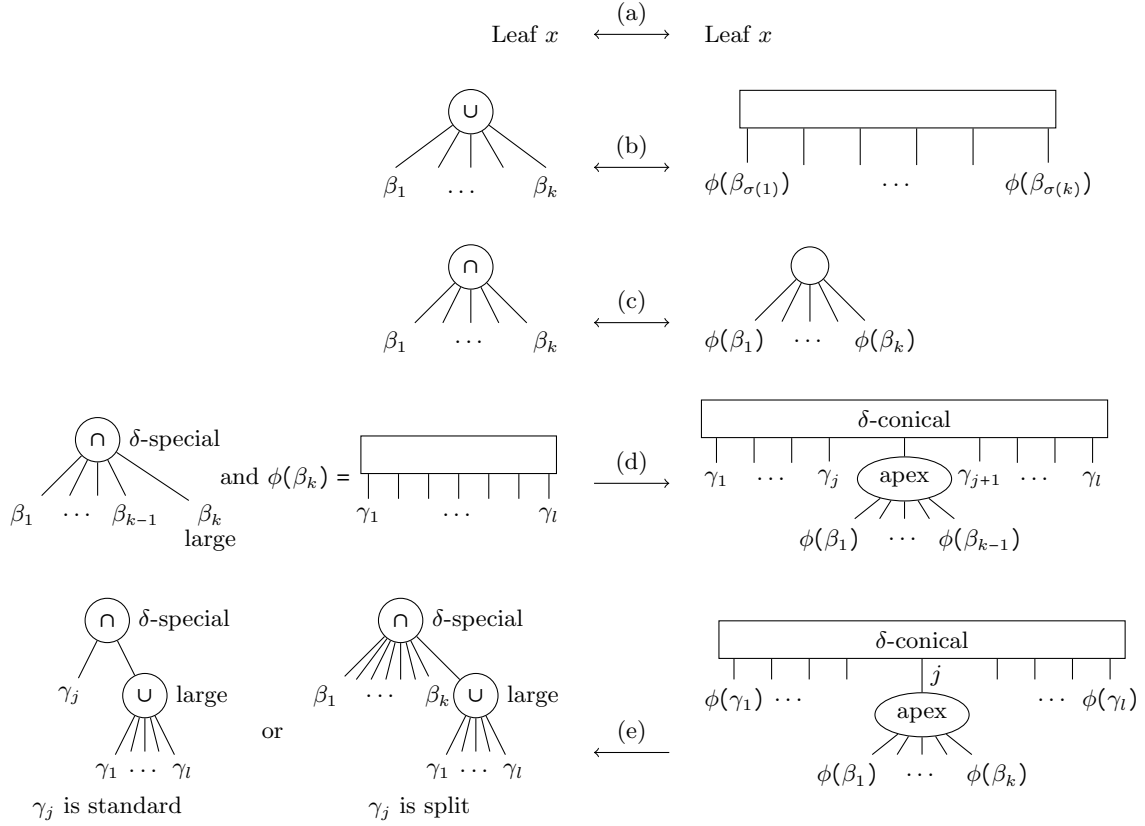

\subsection{The translation between \texorpdfstring{$\TPQ$ and $\TM$}{TPQ and TM}}
The cryptomorphism between the two trees relies on the following
remark: since
\Cref{lemma:no-d-cut,lemma:some-d-cut,lemma:cut-separable} cover all
possible cases, we can ``invert'' their statements depending on the
root of $\TPQ$ and of whether $(X,d)$ is connected or not. More
precisely we have the following result:

\begin{proposition}\label{prop:pqroot-analysis}
  Let $(X,d)$ be a Robinson space with PQ-tree $\TPQ$, then the set of
  maximum mmodules $\Mmax$ is described as follows:
  \begin{enumerate}[label=(\arabic*)]
  \item\label{item:pqr1} If $\TPQ = P(\beta_1,\ldots,\beta_k)$, then
    $(X,d)$ is non-connected and $X(\beta_1),\ldots,X(\beta_k)$ are
    the $\rhos$-mmodules, all of diameter at most $\rhos$, and
    $\Mmax = \{X \setminus X(\beta_1),\ldots, X \setminus X(\beta_k)\}$.
  \item\label{item:pqr2} If $\TPQ = Q(\beta_1,\ldots,\beta_k)$ and
    $(X,d)$ is connected, then
    $\Mmax = \{X(\beta_1),\ldots,X(\beta_k)\}$.
  \item\label{item:pqr4} If $\TPQ = Q(\beta_1,\ldots,\beta_k)$ and
    $(X,d)$ is non-connected, then there is a large $\rhos$-mmodule
    $S_0$, and there exists $i \in \{2,\ldots,k-1\}$ such that
    $S_0 = X \setminus X(\beta_i)$. The root of $\TPQ$ is conical with
    apex child $\beta_i$. Moreover,
    \begin{enumerate}[label=(\arabic{enumi}.\arabic{enumii})]
    \item\label{item:pqr41} if $c_\rhos = 2$, then
      $\Mmax = \{X(\beta_i), X \setminus X(\beta_i)\}$,
    \item\label{item:pqr43} if $c_\rhos \geq 3$, then
      $\beta_i = P(\gamma_1,\ldots,\gamma_{c_\rhos - 1})$ is a split node and
      $\Mmax = \{X \setminus X(\gamma_i) : i \in \{1,\ldots,c_\rhos-1\} \} \cup \{X(\beta_i)\}$.
    \end{enumerate}
  \end{enumerate}
\end{proposition}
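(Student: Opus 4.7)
The plan is to prove this proposition as essentially an inversion of the three structural results \Cref{lemma:no-d-cut,lemma:some-d-cut,lemma:cut-separable}. The key observation is that every Robinson space falls into exactly one of three mutually exclusive situations: (a) $(X,d)$ is connected (covered by \Cref{lemma:no-d-cut}); (b) $(X,d)$ is non-connected with every $\rhos$-mmodule of diameter at most $\rhos$ (covered by \Cref{lemma:some-d-cut}); (c) $(X,d)$ is non-connected and contains a large $\rhos$-mmodule (covered by \Cref{lemma:cut-separable}). Crucially, these situations are distinguishable from the root of $\TPQ$ alone: case (a) produces a Q-root by \Cref{lemma:no-d-cut}\ref{item:ndc5}, case (b) produces a P-root by \Cref{lemma:some-d-cut}, and case (c) produces a (conical) Q-root by \Cref{lemma:cut-separable}. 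Hence the trichotomy in the proposition matches exactly with (b), (a), and (c) in that order.

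For case~\ref{item:pqr1}, a P-root excludes (a) and (c), so the hypotheses of \Cref{lemma:some-d-cut} hold; the children $X(\beta_1), \ldots, X(\beta_k)$ are the $\rhos$-mmodules, all of diameter at most $\rhos$, and $\Mmax = \{X \setminus X(\beta_i) : 1 \le i \le k\}$ by \Cref{lemma:copartition-delta}. For case~\ref{item:pqr2}, \Cref{lemma:no-d-cut}\ref{item:ndc5} immediately yields that the children of the Q-root are the PQ-trees of the maximal mmodules, giving the stated form of $\Mmax$.

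For case~\ref{item:pqr4}, a Q-root with $(X,d)$ non-connected forces us into the setting of \Cref{lemma:cut-separable}. There is a large $\rhos$-mmodule $S_0$, and the root of $\TPQ$ has the form $Q(\beta_1,\ldots,\beta_{i-1},\beta,\beta_i,\ldots,\beta_\ell)$ where $\beta_1,\ldots,\beta_\ell$ are the children of the root of $\TPQ(S_0)$ and the apex $\beta$ replaces the position of the complement $X \setminus S_0$; since $i \in \{2,\ldots,\ell\}$ there, in the renumbered children of $\TPQ(X)$ the apex sits in position between $2$ and $k-1$. Writing this apex as $\beta_i$ (with the new indexing used in the proposition), we get $X \setminus X(\beta_i) = S_0$, and conicality with parameter $\rhos$ follows from the uniform distance $d(\beta,\beta_j)=\rhos$ obtained in the proof of \Cref{lemma:cut-separable}. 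If $c_\rhos = 2$, then in the notation of \Cref{lemma:cut-separable} we have $k=1$, so $\beta_i = \TPQ(S_1)$ with $S_1 = X \setminus S_0$; by \Cref{lemma:copartition-delta}, $\Mmax = \{X \setminus S_0, X \setminus S_1\} = \{X(\beta_i), X \setminus X(\beta_i)\}$. If $c_\rhos \geq 3$, then $\beta_i = P(\TPQ(S_1),\ldots,\TPQ(S_{c_\rhos - 1}))$, so $\beta_i$ is a P-node with $X(\gamma_j) = S_j$; since $\Grho(X(\beta_i)) = \Grho(X \setminus S_0)$ is disconnected, $\beta_i$ is a split node, and \Cref{lemma:copartition-delta} gives $\Mmax = \{X \setminus X(\gamma_j) : 1 \le j \le c_\rhos - 1\} \cup \{X(\beta_i)\}$.

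The main step requiring care is not any deep argument but the bookkeeping in case~\ref{item:pqr4}: translating the internal indexing of \Cref{lemma:cut-separable} (where the apex is inserted into the child list of the root of $\TPQ(S_0)$) into the external indexing of the proposition, and verifying that the apex position $i$ lies strictly between $1$ and $k$. This follows from the fact that the apex is placed between the two nonempty blocks $S_\star$ and $S^\star$ of $S_0$, each contributing at least one child, so the apex has neighbours on both sides; this is where the bound $i \in \{2,\ldots,k-1\}$ in the statement comes from.
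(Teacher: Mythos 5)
Your proposal is correct and follows essentially the same route as the paper: it inverts \Cref{lemma:no-d-cut,lemma:some-d-cut,lemma:cut-separable} by observing that the three situations (connected; non-connected with all $\rhos$-mmodules of diameter at most $\rhos$; non-connected with a large $\rhos$-mmodule) are mutually exclusive, exhaustive, and distinguishable from the root type of $\TPQ$ together with connectivity, and then reads off $\Mmax$ via \Cref{lemma:copartition-delta}. Your handling of case~\ref{item:pqr4} (the re-indexing of the apex position, the bound $i\in\{2,\ldots,k-1\}$, and the identification of the split child) is more explicit than the paper's proof, which simply asserts that this part easily follows from \Cref{lemma:copartition-delta}.
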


\begin{proof}
  For~\ref{item:pqr1}, as the root of $\TPQ$ is a P-node,
  \Cref{lemma:some-d-cut} is the sole proposition allowing this
  conclusion, hence there is $\rhos > 0$ with $c_\rhos \geq 3$. Then
  by \Cref{lemma:copartition-delta},
  $\Mmax = \{X \setminus X(\beta_1),\ldots X \setminus X(\beta_k)\}$.

  Otherwise the root of $\TPQ$ is a Q-node. For~\ref{item:pqr2}, the
  only proposition where $\Gdelta$ is always connected is
  \Cref{lemma:no-d-cut}, from which we get that case.

  For~\ref{item:pqr4}, \Cref{lemma:cut-separable} is the only
  applicable proposition, hence there is $\rhos > 0$ such that $\Grho$
  is not connected, and has a large $\rhos$-mmodule $S_0$. It remains
  to prove what is $\Mmax$ in that case, but it actually easily
  follows from \Cref{lemma:copartition-delta}.
\end{proof}

\subsection{From \texorpdfstring{$\TPQ$ to $\TM$}{TPQ to TM}}
This leads to \Cref{algo:pqtom} to compute the mmodule tree from the
PQ-tree of a Robinson space. Given the PQ-tree $\TPQ$ of a Robinson
space $(X,d)$, we denote by $\rhos(\TPQ)$ the unique value $\delta$
for which $\Gdelta$ is not connected, if it exists, otherwise
$\rhos(\TPQ)$ is undefined. We can compute $\rhos=\rhos(\TPQ)$
efficiently in the following way:
\begin{itemize}[label=-]
\item if the root of $\TPQ$ is a P-node, then
  $\rhos:= d(\alpha, \beta)$, where $\alpha$ and $\beta$ are
  two distinct children of the root. This follows from
  \Cref{prop:pqroot-analysis}~\ref{item:pqr1},
\item otherwise, the root of $\TPQ$ is a Q-node with children
  $\beta_1,\ldots,\beta_k$, with $k \geq 3$. Then there is at most one
  $i \in \{2,\ldots,k-1\}$ such that for all
  $j \in \{1,\ldots,k\} \setminus \{i\}$, and
  $\rhos = d(\beta_i,\beta_j)$. This follows from
  \Cref{prop:pqroot-analysis}~\ref{item:pqr4}. Checking all
  possibilities for $i$ allows to find $\rhos$ (and $i$) in time
  $O(k)$, as it is sufficient to check that $d(\beta_1,\beta_i)$,
  $d(\beta_{i-1},\beta_i)$, $d(\beta_{i},\beta_{i+1})$ and
  $d(\beta_i,\beta_k)$ are all equal (then that value is
  $\rhos$). Then the root of $\TPQ$ is a conical node with apex child
  $\beta_i$. If no such $i$ exists, then the situation is that of
  \Cref{prop:pqroot-analysis}~\ref{item:pqr2}, the root of $\TPQ$ is
  not conical and $\rhos$ is undefined.
\end{itemize}

\begin{algorithm}[htbp]
  \caption{Computes the mmodule tree of a Robinson space $(X,d)$ from its PQ-tree.}
  \label{algo:pqtree2mtree}
  \flushleft{$\underline{\pqtom(\TPQ(S))}$}
  \begin{algorithmic}[1]
  \Require{a Robinson space $(X,d)$ (implicit), a PQ-tree $\TPQ(S)$ for a subspace $S$ of $(X,d)$.}
  \Ensure{the mmodule tree $\TM(S)$ of $S$.}
  \Match{$\TPQ(S)$}
  \Case{$\Leaf\ x$}
    \Return $\Leaf\ x$
  \Case{$P(\beta_1,\ldots,\beta_k)$}
    \Return $\cap(\pqtom(\beta_1),\ldots,\pqtom(\beta_k))$ \Comment{\mbox{\Cref{prop:pqroot-analysis}\ref{item:pqr1}}}
  \Case{$Q(\beta_1,\ldots,\beta_k)$}
    \If {$\TPQ(S)$ is not conical}\label{line:not-conical}
      \Return $\cup(\pqtom(\beta_1),\ldots,\pqtom(\beta_k))$ \Comment{\mbox{\Cref{prop:pqroot-analysis}\ref{item:pqr2}}}
    \EndIf
    \Let $\rhos := \rhos(\TPQ(S))$ and $\beta_i$ the apex child of $\TPQ(S)$\label{line:k3}
    \Let $\mathcal{T}_0 := \cup(\pqtom(\beta_1),\ldots,\pqtom(\beta_{i-1}),\pqtom(\beta_{i+1}),\ldots,\pqtom(\beta_k))$
    \If {$\rhos(\beta_i)$ is undefined or $\rhos(\beta_i) < \rhos$}
      \Return $\cap(\mathcal{T}_0, \pqtom(\beta_i))$ special \Comment{\mbox{\Cref{prop:pqroot-analysis}\ref{item:pqr41}}}\label{line:pqr41}
    \EndIf
    \Let $\gamma_1,\ldots,\gamma_\ell$ be the children of $\beta_i$
    \Return $\cap(\mathcal{T}_0,\pqtom(\gamma_1),\ldots,\pqtom(\gamma_\ell))$ special\Comment{\mbox{\Cref{prop:pqroot-analysis}\ref{item:pqr43}}}\label{line:pqr423}
  \EndMatch  
  \end{algorithmic}
  \label{algo:pqtom}
\end{algorithm}

\begin{theorem}\label{th:pqtom}
  Let $(X,d)$ be a Robinson space and $\TPQ$ be its PQ-tree. Then
  $\pqtom(\TPQ)$ correctly computes the mmodule tree of $(X,d)$ in
  time $O(|X|)$.
\end{theorem}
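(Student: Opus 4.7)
The plan is to prove correctness by structural induction on $\TPQ(S)$, with the inductive step driven by the case analysis of Proposition~\ref{prop:pqroot-analysis}, and then to bound the running time by counting local work per node. The base case of a leaf is immediate: $\pqtom(\Leaf\, x) = \Leaf\, x = \TM(\{x\})$.

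For the inductive step I assume correctness on all proper subtrees of $\TPQ(S)$ and analyze the root. If the root is a P-node $P(\beta_1,\ldots,\beta_k)$, Proposition~\ref{prop:pqroot-analysis}\ref{item:pqr1} gives $\Mmax(S)=\{S\setminus X(\beta_j)\}_j$, a copartition, so Proposition~\ref{mmodule-tree} forces the root of $\TM(S)$ to be a $\cap$-node whose children have $X$-sets $X(\beta_j)$; the inductive hypothesis identifies the subtree below each child with $\pqtom(\beta_j)$. If the root is a non-conical Q-node (equivalently, by the discussion preceding the algorithm, $\rhos(\TPQ(S))$ is undefined), Proposition~\ref{prop:pqroot-analysis}\ref{item:pqr2} yields the partition $\Mmax(S)=\{X(\beta_j)\}_j$ and a $\cup$-root in $\TM(S)$. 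If the root is a conical Q-node with apex $\beta_i$, I verify that the test in Algorithm~\ref{algo:pqtom} at line~\ref{line:pqr41} correctly distinguishes cases~\ref{item:pqr41} and~\ref{item:pqr43} of Proposition~\ref{prop:pqroot-analysis}: on the one hand, $X(\beta_i)$ is itself a connected component of $\Grho(S)$ exactly when $c_\rhos=2$, and combined with $\diam(X(\beta_i))\le\rhos$ from Lemma~\ref{lemma:monotone-diameter}\ref{item:mono2}, this is equivalent to $\rhos(\beta_i)$ being undefined or strictly less than $\rhos$; on the other hand, in case~\ref{item:pqr43} the apex $\beta_i$ is the split P-node $P(\gamma_1,\ldots,\gamma_{c_\rhos-1})$ whose children lie at common distance $\rhos$, so $\rhos(\beta_i)=\rhos$. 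In case~\ref{item:pqr41} the set $\Mmax(S)=\{X(\beta_i),\,S\setminus X(\beta_i)\}$ yields the $\cap$-special root with the two children $\pqtom(\beta_i)$ and $\mathcal{T}_0$ returned by the algorithm; in case~\ref{item:pqr43} the set $\Mmax(S)=\{S\setminus X(\gamma_j)\}_j\cup\{X(\beta_i)\}$ yields the $\cap$-special root with $\ell+1$ children $\mathcal{T}_0,\pqtom(\gamma_1),\ldots,\pqtom(\gamma_\ell)$.

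The key technical point, which I expect to be the most delicate step, is the verification that $\mathcal{T}_0=\TM(S_0)$ for $S_0:=S\setminus X(\beta_i)$ in the two conical subcases. Removing the apex from $\TPQ(S)$ yields the PQ-tree of $S_0$ as a Q-node on the $k-1$ remaining children. When $k-1\ge 3$, an inspection of inter-child distances (together with Lemma~\ref{lemma:large_block} and the uniqueness of an apex in a conical Q-node) shows that this reduced Q-node is non-conical, so Proposition~\ref{prop:pqroot-analysis}\ref{item:pqr2} applied to $S_0$ produces exactly $\mathcal{T}_0=\cup(\pqtom(\beta_j))_{j\ne i}$. When $k-1=2$, the two remaining sets form a bipartition of $S_0$ (simultaneously a partition and a copartition), so the $\cup$-node and the $\cap$-node of arity two agree, as made explicit in the caption of Figure~\ref{fig:conversion}, and $\mathcal{T}_0$ is still the correct mmodule tree of $S_0$.

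The running time follows because $\pqtom$ visits each node of $\TPQ$ exactly once: the local work is $O(1)$ at leaves and P-nodes, and $O(k)$ at a Q-node of arity $k$ for the conicality test and apex search via the four-distance check described immediately after Algorithm~\ref{algo:pqtom}. Since $\TPQ$ has at most $|X|$ leaves and total arity $O(|X|)$, the overall cost is $O(|X|)$.
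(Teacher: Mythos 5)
Your proof is correct and follows essentially the same route as the paper: structural induction driven by \Cref{prop:pqroot-analysis}, verification that the test $\rhos(\beta_i)$ undefined or $\rhos(\beta_i)<\rhos$ distinguishes $c_\rhos=2$ from $c_\rhos\geq 3$, and the arity-sum argument for the $O(|X|)$ bound. You are in fact somewhat more careful than the paper on the point you flag as delicate, namely checking that $\mathcal{T}_0$ really is $\TM(S_0)$ for the large $\rhos$-mmodule $S_0$ (including the arity-two corner case where the $\cup$-node must be read as a $\cap$-node), a verification the paper's proof leaves implicit.
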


\begin{proof}
  The correctness mostly follows by induction from
  \Cref{prop:pqroot-analysis}, we only explain the differences between the
  conditions in \Cref{prop:pqroot-analysis} and \Cref{algo:pqtom}.


  On line~\ref{line:not-conical}, testing whether $\TPQ(S)$ is conical
  is equivalent to checking whether there is apex child, in which case
  $(S,d)$ is not connected and
  \Cref{prop:pqroot-analysis}\ref{item:pqr4} applies. Otherwise
  $(S,d)$ is connected and \Cref{prop:pqroot-analysis}\ref{item:pqr2}
  applies.
  
  Consider the case when the execution runs through
  line~\ref{line:k3}. The conical child $\beta_i$ is
  $X \setminus S_0$. We must determine $c_\rhos$ to decide between
  \Cref{prop:pqroot-analysis}\ref{item:pqr41} and~\ref{item:pqr43}. As
  $X(\beta_i)$ is the union of the $\rhos$-mmodules other than $S_0$
  we may use $\rhos(\beta_i)$. Indeed, $c_\rhos > 2$ if and only
  if $\Grho(X(\beta_i))$ is not connected. Hence if $c_\rhos = 2$,
  then $S_0,X(\beta_i)$ are the $\rhos$-mmodules of $S$ and the return at
  line~\ref{line:pqr41} is correct. Otherwise, by
  \Cref{lemma:some-d-cut}, the $\rhos$-mmodules other than $S_0$ are
  the sets induced by the children of $\beta_i$, proving that the
  return at line~\ref{line:pqr423} is also correct.

  The complexity follows from the  previous remark that
  $\rhos(\alpha)$ can be computed in time $O(k)$ where $k$ is the
  number of children of the node $\alpha$. Then we use the fact he the
  sum of arities of the nodes in $\TPQ$ is no more than $2|X|$ because
  each inner node has arity at least $2$ and the number of leaves is
  $|X|$, thus the total cost for computing all the $\rhos(\alpha)$ is
  $O(|X|)$. The rest of the algorithm can be computed in time
  proportional to the size of the PQ-tree, that is in $\Theta(|X|)$.
\end{proof}

\subsection{From \texorpdfstring{$\TM$ to $\TPQ$}{TM to TPQ}} 
\Cref{lemma:no-d-cut,lemma:some-d-cut,lemma:cut-separable} also allow
to derive the PQ-tree of a Robinson space from its mmodule tree,
except for the ordering of children of Q-nodes. This is done in
\Cref{algo:mtopq}. We analyse this algorithm.

\begin{algorithm}[htbp]
  \caption{Computes the PQ-tree of a Robinson space $(X,d)$ from its mmodule tree.}
  \label{algo:mtopq}
  \flushleft{$\underline{\mtopq(\TM(S))}$}
  \begin{algorithmic}[1]
  \Require{a Robinson space $(X,d)$ (implicit), an mmodule tree $\TM(S)$ for a subspace $S$ of $(X,d)$.}
  \Ensure{the PQ-tree $\TPQ(S)$ of $S$.}
  \Match{$\TM(S)$}
  \Case{$\Leaf\ x$}
    \Return $\Leaf\ x$
  \Case{$\cup(\beta_1,\ldots,\beta_k)$}
    \Let $\beta_{\sigma(1)} \< \ldots \< \beta_{\sigma(k)}$ be a compatible order of the flat Robinson space $X / \Mmax$ \label{line:pq4}
    \Return $Q(\mtopq(\beta_{\sigma(1)}),\ldots,\mtopq(\beta_{\sigma(k)}))$ \Comment{\mbox{\Cref{lemma:no-d-cut}}} \label{line:pq5}
  \Case{$\cap(\beta_1,\ldots,\beta_k)$}
    \Let $\rhos = d(\beta_1,\beta_k)$ \label{line:pq7}
    \If {$\TM(S)$ is special with large child $\beta_i$}
      \State $\beta_i \leftrightarrow \beta_k$ \Comment{\mbox{ensures $i = k$}}
      \Let $Q(\gamma_1,\ldots,\gamma_\ell) := \mtopq(\beta_k)$ \Comment{\mbox{\Cref{lemma:cut-separable}}}
      \Let $j := \findBip(\rhos,\gamma_1,\ldots,\gamma_\ell)$
      \Let $\beta := \mtopq(\beta_1)$ if $k = 2$, $\beta := P(\mtopq(\beta_1),\ldots,\mtopq(\beta_{k-1}))$ if $k \geq 3$
      \Return $Q(\gamma_1,\ldots,\gamma_j,\beta,\gamma_{j+1}, \ldots, \gamma_\ell)$ conical with apex $\beta$ \Comment{\mbox{\Cref{lemma:cut-separable}}} \label{line:pq13}
    \EndIf
    \Return $P(\mtopq(\beta_1),\ldots,\mtopq(\beta_k))$ \Comment{\mbox{\Cref{lemma:some-d-cut}}} \label{line:pq14}
  \EndMatch
  \end{algorithmic}\medskip

  \flushleft{$\underline{\findBip(\rhos,\gamma_1,\ldots,\gamma_k)}$}
  \begin{algorithmic}[1]
    \Require{a non-connected Robinson space $(X,d)$ (implicit)
      with a large $\rhos$-mmodule $S_0$ of $\Gdeltas$
      and bipartition $(S_\star,S^\star)$, and
      the children $\gamma_1, \ldots \gamma_\ell$ of the Q-node root of
      $\TPQ(S_0)$ (by \Cref{lemma:cut-separable}).}
    \Ensure{The unique $j \in \{1,\ldots,l-1\}$ such that for
      $S_\star := X(\gamma_1) \cup \ldots \cup X(\gamma_j)$ and
      $S^\star := X(\gamma_{j+1}) \cup \ldots \cup X(\gamma_\ell)$.}
    \Let $i_0 := \max \{i \in \{1,\ldots,\ell-1\} : d(\gamma_i,\gamma_\ell) > \rhos \}$ \Comment{exists as $\diam(S_0) > \rhos$}
    \Return $\min \{i \in \{i_0,\ldots,\ell-1\} : d(\gamma_i,\gamma_{i+1}) \geq \rhos \}$
  \end{algorithmic}
\end{algorithm}

\begin{lemma}\label{lemma:findbip}
  Under the hypothesis of \Cref{lemma:cut-separable},
  $\findBip(\rhos,\gamma_1,\ldots,\gamma_k)$ from \Cref{algo:mtopq}
  correctly computes the bipartition $(S_\star,S^\star)$ of $S_0$ in
  time $O(l)$.
\end{lemma}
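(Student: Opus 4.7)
The plan is to verify that $\findBip$ correctly identifies the unique index $j$ singled out by the convention in the Ensure clause. Existence and uniqueness of $j$ follow quickly: by \Cref{lemma:large_block}\ref{item:s01} the sets $S_\star, S^\star$ are blocks, and by \Cref{lemma:block} applied to the Q-node (or P-node with two children) root of $\TPQ(S_0)$ given by \Cref{lemma:cut-separable}, they are unions of consecutive children; since they bipartition $S_0$, one starts at $\gamma_1$ and the other ends at $\gamma_\ell$, and the convention fixes $S_\star$ as the left interval. The time bound $O(\ell)$ is immediate from the two sequential scans.

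Correctness then reduces to three sub-claims. First, $i_0$ is well-defined with $1 \leq i_0 \leq j$: for the lower bound, $(\gamma_1, \gamma_\ell)$ is a diametral pair of $S_0$ because the extremes of any compatible order on the interval $S_0$ lie in $X(\gamma_1)$ and $X(\gamma_\ell)$, so by \Cref{mmodules}\ref{item:mmod6} we have $d(\gamma_1, \gamma_\ell) = \diam(S_0) > \rhos$; for the upper bound, for any $i > j$ both $\gamma_i$ and $\gamma_\ell$ lie in $S^\star$, so \Cref{lemma:large_block}\ref{item:s02} yields $d(\gamma_i, \gamma_\ell) \leq \diam(S^\star) \leq \rhos$. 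Second, $d(\gamma_j, \gamma_{j+1}) \geq \rhos$ by \Cref{lemma:large_block}\ref{item:s03}, so the returned minimum does not exceed $j$.

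The crux is the third sub-claim: $d(\gamma_i, \gamma_{i+1}) < \rhos$ for every $i$ with $i_0 \leq i < j$. Since $\gamma_i, \gamma_{i+1} \in S_\star$ and $\diam(S_\star) \leq \rhos$, only the equality $d = \rhos$ needs to be excluded. I would argue by contradiction: assuming $d(\gamma_i, \gamma_{i+1}) = \rhos$, set $C := X(\gamma_{i+1}) \cup \ldots \cup X(\gamma_j)$, a proper nonempty subset of $S_0$, and show that no edge of $\Grho$ joins $C$ to $S_0 \setminus C$, contradicting the fact that $S_0$ is a connected component of $\Grho$. Using \Cref{mmodules}\ref{item:mmod6}, it suffices to show $d(\gamma_k, \gamma_{k'}) = \rhos$ for every $k' \in \{i+1, \ldots, j\}$ and every $k \in \{1, \ldots, i\} \cup \{j+1, \ldots, \ell\}$. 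When $k \leq i$, iterating the Robinson inequality along the chain $\gamma_k \leq \gamma_i < \gamma_{i+1} \leq \gamma_{k'}$ yields $d(\gamma_k, \gamma_{k'}) \geq d(\gamma_i, \gamma_{i+1}) = \rhos$, while $\gamma_k, \gamma_{k'} \in S_\star$ yields $d \leq \rhos$. When $k > j$, maximality of $i_0$ together with $k' > i_0$ gives $d(\gamma_{k'}, \gamma_\ell) \leq \rhos$; Robinson along $\gamma_{k'} < \gamma_k \leq \gamma_\ell$ then yields $d(\gamma_{k'}, \gamma_k) \leq \rhos$, while the bipartition inequality \Cref{lemma:large_block}\ref{item:s03} gives $d \geq \rhos$. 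Both cases force $d = \rhos$, which completes the desired contradiction and thus the correctness proof.
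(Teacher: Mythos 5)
Your proof is correct and follows essentially the same strategy as the paper's: establish that $i_0 \leq j$ and that $d(\gamma_j,\gamma_{j+1}) \geq \rhos$, so the returned index is well defined and at most $j$, and then rule out a strictly smaller return value by exhibiting a set all of whose distances to its complement equal $\rhos$, contradicting the connectivity of $S_0$ in $\Grho$. The only cosmetic difference is that your cut set $C$ lies inside $S_0$ whereas the paper adjoins $X \setminus S_0$ to it (and accordingly bounds distances via $d(\cdot, X\setminus S_0)=\rhos$ rather than via $\diam(S_\star)\leq\rhos$); the underlying sandwiching computations are the same.
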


\begin{proof}
  We know that $j$ exists by \Cref{lemma:cut-separable}. Let $j'$ be
  the value returned by $\findBip$ from \Cref{algo:mtopq}, and let
  $S := X \setminus S_0$. Denote $S_i = X(\gamma_i)$ for any
  $i \in \{1,\ldots,\ell\}$. Let $<$ be a compatible order with
  $S_1 < \ldots < S_j < S < S_{j+1} < \ldots < S_\ell$.
  As $d(S_{i_0},S_\ell) > \rhos = d(S,S_\ell) \geq d(S_{j+1},S_\ell)$, we have
  that $i_0 < j+1$. This implies that $j'$ is well defined as
  $d(S_j,S_{j+1}) \geq d(S_j,S) = \rhos$, and $j' \leq j$.

  By way of contradiction suppose that $j' < j$. We prove that
  $T := S_{j'+1} \cup \ldots \cup S_j \cup S$ is a union of
  $\rhos$-mmodules, a contradiction to the fact that $S_0$ is a
  maximal mmodule. Notice that
  $S_{j'+1} \< \ldots \< S_{j} \< S$, hence it is sufficient to show
  that $d(S_1,S_{j'+1}) = d(S_{j'+1},S_l) = \rhos$ and
  $d(S_1,S) = d(S,S_{j+1}) = \rhos$. The latter inequalities follow
  from the definition of $S$, we prove the former. By definition of
  $j'$, $d(S_{j'},S_{j'+1}) \geq \rhos$, but
  $S_1 \< S_{j'} \< S_{j'+ 1} \cleq S_{j} \< S$ implies that
  $d(S_{j'},S_{j'+1}) \leq d(S_1,S) = \rhos$, hence
  $d(S_{j'},S_{j'+1}) = \rhos$. By the maximality of $i_0$ and
  since $j' \geq i_0$, $d(S_{j'+1},S_\ell) \leq \rhos$. Then because
  $S_{j'+1} \cleq S_j \< S < S_l$, we get
  $d(S_{j'+1},S_\ell) \geq d(S,S_\ell) = \rhos$, hence the last equality
  follows.
\end{proof}

\begin{theorem}\label{th:mtopq}
  Let $(X,d)$ be a Robinson space and $\TM$ be its mmodule tree. Then
  $\mtopq(\TM)$ correctly computes the PQ-tree $\TPQ$ of $(X,d)$ in time
  $O(|X|)$ without counting the cost of ordering the children of each
  $Q$-node.
\end{theorem}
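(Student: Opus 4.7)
The proof naturally splits into a correctness argument by induction on the structure of $\TM$, followed by a complexity analysis that tracks the work done at each node.

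For correctness, I would argue by induction on the height of $\TM(S)$. The base case where $\TM(S)$ is a leaf is immediate since both trees reduce to a single leaf. For the inductive step, I would consider three cases according to the root of $\TM(S)$, and match each with the applicable proposition from the previous section via \Cref{prop:pqroot-analysis}. If the root is a $\cup$-node with children $\beta_1,\ldots,\beta_k$, then by \Cref{prop:pqroot-analysis}\ref{item:pqr2} the space $(S,d)$ is connected, the $X(\beta_i)$ are precisely the maximal mmodules, and \Cref{lemma:no-d-cut}\ref{item:ndc5} tells us that the root of $\TPQ(S)$ is a Q-node whose children are the PQ-trees of the maximal mmodules in the (up to reversal) unique compatible order of the flat quotient $S/\Mmax$; this is exactly what lines~\ref{line:pq4}--\ref{line:pq5} compute, using the inductive hypothesis on each child. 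If the root is a non-special $\cap$-node, then $\Mmax$ is a copartition with no large $\rhos$-mmodule, so by \Cref{lemma:some-d-cut} the root of $\TPQ(S)$ is a P-node whose children are the PQ-trees of the $\rhos$-mmodules, which are precisely the $X(\beta_i)$ (since the $\beta_i$ are complements of the maximal mmodules in the copartition); this justifies line~\ref{line:pq14}.

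The delicate case is the special $\cap$-node. Here \Cref{prop:pqroot-analysis}\ref{item:pqr4} and \Cref{lemma:cut-separable} apply: the large child $\beta_i$ corresponds to the unique large $\rhos$-mmodule $S_0$, and the remaining children correspond to the other $\rhos$-mmodules whose union is $S \setminus S_0$. By the inductive hypothesis, $\mtopq(\beta_k)$ (after the swap so that $i=k$) is the PQ-tree of $S_0$, whose root is a Q-node $Q(\gamma_1,\ldots,\gamma_\ell)$ (or a P-node of arity two, handled identically). The apex subtree $\beta$ must represent the PQ-tree of $S \setminus S_0$: when $k=2$ this is just $\mtopq(\beta_1)$, and when $k \geq 3$ it is $P(\mtopq(\beta_1),\ldots,\mtopq(\beta_{k-1}))$ by \Cref{lemma:some-d-cut} applied to $(S \setminus S_0, d)$ (whose $\rhos$-mmodules are exactly the $X(\beta_j)$ for $j < k$, each of diameter at most $\rhos$). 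Finally, \Cref{lemma:cut-separable} states that the root of $\TPQ(S)$ is obtained by inserting $\beta$ between positions $\gamma_j$ and $\gamma_{j+1}$, where $(S_\star,S^\star)$ is the bipartition of $S_0$ given by \Cref{lemma:large_block}; \Cref{lemma:findbip} guarantees that $\findBip$ returns exactly this index $j$, so line~\ref{line:pq13} produces the correct tree.

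For the complexity, I would observe that $\mtopq$ visits each node of $\TM$ exactly once, and that the work performed at a node of arity $k$ is $O(k)$: checking whether a $\cap$-node is special and locating its large child is linear in the number of children (each $\rhos(\beta_j)$ is read in constant time from the subtree root), reading the Q-node decomposition of $\mtopq(\beta_k)$ costs $O(\ell)$ where $\ell$ is its arity, and \Cref{lemma:findbip} runs in time $O(\ell)$. The concatenation of lists of children to form the new Q-node is also $O(k+\ell)$. Since every inner node of $\TM$ has arity at least two and there are $|X|$ leaves, the total arity summed over all nodes is $O(|X|)$, yielding the $O(|X|)$ bound. The single piece of work we do not charge is the computation on line~\ref{line:pq4} of a compatible order of the flat Robinson quotient, consistent with the statement of the theorem.

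The main obstacle I anticipate is the bookkeeping in the special $\cap$-node case: one must carefully verify that the recursive call on $\beta_k$ returns a Q-node (or arity-two P-node) whose children are in the correct left-to-right order so that the index $j$ returned by $\findBip$ actually locates the correct insertion site for the apex. This is exactly the content of \Cref{lemma:cut-separable} and \Cref{lemma:findbip}, which together make this case go through cleanly.
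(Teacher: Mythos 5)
Your proposal follows essentially the same route as the paper: the same case analysis on the root of $\TM$ (via \Cref{prop:pqroot-analysis}, \Cref{lemma:no-d-cut}, \Cref{lemma:some-d-cut}, \Cref{lemma:cut-separable}, \Cref{lemma:large_block} and \Cref{lemma:findbip}), and the same arity-summing complexity argument. The only points you gloss over, which the paper spells out, are how the diameters $\diam(X(\beta_j))$ needed to detect special $\cap$-nodes are obtained (from the already-computed PQ-trees of the children, in constant time per node) and why the $O(\ell)$ costs --- which are arities of Q-nodes of $\TPQ$, not of $\TM$ --- also sum to $O(|X|)$ (each such cost is incurred at most once per Q-node of the final PQ-tree).
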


\begin{proof}
  If $\TM = \cup(\beta_1,\ldots,\beta_k)$, then by
  \Cref{mmodule-tree}, $\Mmax$ is a partition with maximal mmodules
  $X(\beta_1),\ldots,X(\beta_k)$. By \Cref{lemma:copartition-delta},
  $\Gdelta$ is connected for any $\delta > 0$. By
  \Cref{lemma:no-d-cut}, $\TPQ = Q(T_1,\ldots,T_k)$ where
  $T_1,\ldots,T_k$ are the PQ-tree of $X(\beta_1),\ldots,X(\beta_k)$
  given in a compatible order. This proves that the return at
  line~\ref{line:pq5} is correct.

  If $\TM = \cap(\beta_1,\ldots,\beta_k)$, then by
  \Cref{mmodule-tree}, $\Mmax$ is a copartition, the
  $\rhos$-mmodules are $X(\beta_1),\ldots,X(\beta_k)$ and their
  complements are the maximal mmodules of $(X,d)$. By
  \Cref{lemma:copartition-delta},
  $d(X(\beta_i),X(\beta_j))=\rhos$. If $(X,d)$ does not contain
  large $\rhos$-mmodules, then by \Cref{lemma:some-d-cut}, the
  returns at line~\ref{line:pq14} is correct.

  If some $\rhos$-mmodule has diameter greater than $\rhos$, say
  $\diam(X(\beta_k)) > \rhos$, then by \Cref{lemma:cut-separable},
  the root of the PQ-tree of $S_0 := X(\beta_k)$ is a $Q$-node. By
  \Cref{lemma:large_block}, $S_0$ has a bipartition into two blocks
  $S_\star,S^\star$, which must match with consecutive children of the
  root of $\TPQ(S_0)$. By \Cref{lemma:findbip}, $\findBip$
  correctly computes that bipartition, and then by
  \Cref{lemma:cut-separable}, the return at line~\ref{line:pq13} is
  correct.

  The diameter of the sets induced by nodes of the PQ-tree can be
  computed simultaneaously: for a P-node the diameter is the
  distance between any two children, while for a Q-node, the
  diameter is the distance between the two extremal children. To
  compute the diameter of a set induced by an mmodule tree, we first
  compute its PQ-tree (as the algorithm will compute it eventually,
  this does not add any cost), then compute its diameter in constant
  time.

  Then the complexity of the algorithm, without counting the ordering
  at line~\ref{line:pq4} and the recursive calls, is proportional to
  the arity $k$ of the root node of $\TM$, and $\ell$ in case it returns
  at line~\ref{line:pq13}. Notice that the last case happens when
  transforming a Q-node whose subset induces a connected Robinson
  subspace into a Q-node whose subset induces a non-connected Robinson
  subspace.  Hence it happens at most once for any Q-node in the final
  PQ-tree $\TPQ(X)$. Thus, counting the recursive calls (but still not
  line~\ref{line:pq4}), the complexity of the algorithm is
  proportional to the sum of arities in $\TM$ and $\TPQ$, which is
  $O(|X|)$ as each inner node has arity at least 2.
\end{proof}

Another consequence of these translations between PQ-tree and mmodule
tree is that a node-to-node correspondence for most of the nodes of
those trees:

\begin{proposition}\label{prop:node-to-node}
  Let $(X,d)$ be a Robinson space, with PQ-tree $\TPQ$ and mmodule
  tree $\TM$. For any node $\alpha$ in $\TM$, either there exists a
  node $\alpha'$ in $\TPQ$ with $X(\alpha) = X(\alpha')$ or $\alpha$
  is the large child of a special node. For any node $\beta'$ in
  $\TPQ$, either there exists a node $\beta$ in $\TM$ with
  $X(\beta) = X(\beta')$, or $\beta'$ is the split child of a conical
  node $\alpha'$. Moreover, if $\alpha$ is a node of $\TM$ and
  $\alpha'$ a node of $\TPQ$ with $X(\alpha) = X(\alpha')$, then:
  \begin{enumerate}[label=(\roman*)]
  \item $\alpha$ is a non-special $\cap$-node if and only if $\alpha'$
    is a P-node,
  \item $\alpha$ is a $\cup$-node if and only if $\alpha'$ is a
    non-conical Q-node,
  \item $\alpha$ is a special $\cap$-node if and only if $\alpha'$ is
    a conical Q-node.
  \end{enumerate}
\end{proposition}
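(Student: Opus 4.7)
The plan is to prove this proposition by induction on $|X|$, exploiting the fact that \Cref{algo:pqtom} and \Cref{algo:mtopq} are mutually inverse constructions and that each is defined recursively. Concretely, I would define the partial bijection $\phi$ directly by recursion following the case split of \Cref{algo:mtopq}: on a leaf $\alpha$, $\phi(\alpha)$ is the matching leaf; on a non-special $\cap$-node $\alpha=\cap(\beta_1,\ldots,\beta_k)$, $\phi(\alpha)$ is the P-node $P(\phi(\beta_1),\ldots,\phi(\beta_k))$; on a $\cup$-node $\alpha=\cup(\beta_1,\ldots,\beta_k)$, $\phi(\alpha)$ is the Q-node obtained by sorting the $\phi(\beta_i)$ according to the flat compatible order of $X/\Mmax$; and on a $\delta$-special $\cap$-node $\alpha=\cap(\beta_1,\ldots,\beta_k)$ with large child $\beta_k$, $\phi(\alpha)$ is the $\delta$-conical Q-node whose non-apex children are the children of $\phi(\beta_k)$ and whose apex is either $\phi(\beta_1)$ (when $k=2$) or $P(\phi(\beta_1),\ldots,\phi(\beta_{k-1}))$ (when $k\geq 3$). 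The base case $|X|=1$ is immediate.

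For the inductive step, I would verify two things in each case of the root of $\TM$: (a) the claimed root $\phi(\alpha)$ of $\TPQ$ is correct, and (b) the subtrees are related by the inductive hypothesis. Claim (a) is exactly what \Cref{algo:mtopq} computes, with correctness guaranteed by \Cref{lemma:no-d-cut,lemma:some-d-cut,lemma:cut-separable} via \Cref{th:mtopq}. For the leaf, non-special $\cap$ and $\cup$ cases, $\phi$ is a direct recursive correspondence on children, so the inductive hypothesis transfers. Here one must also check that for a $\cup$-node $\alpha$ the resulting Q-node is \emph{non-conical}: since $\alpha$ arises only when $(X,d)$ is connected (case~\ref{item:pqr2} of \Cref{prop:pqroot-analysis}), the absence of a $\delta$-mmodule rules out any apex child, so the Q-node cannot be conical. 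Together this yields the forward direction of all three equivalences, and the reverse directions follow because each type on one side uniquely determines the type on the other.

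The more delicate case is the $\delta$-special $\cap$-node $\alpha$ with large child $\beta_k$, which accounts for both kinds of exceptions. Here $\beta_k$ does \emph{not} correspond to any node of $\TPQ$: its role is absorbed by splicing the children of $\phi(\beta_k)$ into the root Q-node. Conversely, when $k\geq 3$ the apex child of the resulting conical Q-node is the P-node $P(\phi(\beta_1),\ldots,\phi(\beta_{k-1}))$, whose induced set equals $X(\beta_1)\cup\cdots\cup X(\beta_{k-1})$. This set is a union of at least two of the $\rhos$-mmodules $X(\beta_1),\ldots,X(\beta_{k-1})$, so $\Grho$ restricted to it is disconnected, making the apex a split node by definition. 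When $k=2$ the apex is $\phi(\beta_1)$, which does correspond to a node of $\TM$ and is not split. Taken together, this shows precisely that the nodes of $\TM$ without a $\TPQ$-counterpart are the large children of special nodes, and the nodes of $\TPQ$ without a $\TM$-counterpart are the split children of conical nodes.

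The main obstacle will be this last verification: in the case $k\geq 3$ I need to ensure that the apex P-node is genuinely split (i.e.\ $\Grho$ restricted to its induced set is disconnected) and that in the case $k=2$ the apex $\phi(\beta_1)$ is not split (so that the ``non-exception'' accounting is tight). The former uses that $X(\beta_1),\ldots,X(\beta_{k-1})$ are distinct connected components of $\Grho$ by \Cref{lemma:copartition-delta}; the latter uses that $\phi(\beta_1)$ itself is a node of $\TM$ of strictly smaller size, to which the inductive hypothesis applies. A subsidiary check is that distinct special $\cap$-nodes of $\TM$ and distinct conical Q-nodes of $\TPQ$ never share an image under $\phi$, which again follows from the recursion because $\phi$ strictly decreases the underlying set $X(\alpha)$ at every step.
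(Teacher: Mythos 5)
Your proposal is correct and follows exactly the route the paper intends: the proposition is stated there without a separate proof, as a consequence of \Cref{lemma:no-d-cut,lemma:some-d-cut,lemma:cut-separable} and the two translation algorithms, which is precisely the structural induction you carry out, including the correct identification of the two exception types (large children of special $\cap$-nodes and split apex children of conical Q-nodes). The one detail worth adding to make the induction airtight is that in the special-$\cap$ case the root of $\TPQ(S_0)$ that gets discarded when its children are spliced into the new conical root cannot itself be conical (by \Cref{lemma:large_block} and \Cref{lemma:cut-separable} it is a non-conical Q-node or an arity-two P-node), so no node of $\TPQ(S_0)$ was classified as a split child of that discarded root and every re-parented subtree retains its classification.
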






\subsection{Mmodules trees of ultrametrics} 
As an application of the results of this section, we show that for
ultrametrics $\TM$ is isomorphic to $\TD$ and $\TPQ$. We characterize
the ultrametric spaces via their mmodule trees in the following way:

  \begin{proposition}\label{prop:ultrametric_mmodule_tree} If $(X,d)$ is an ultrametric space, then the X-trees $\TD$, $\TPQ$, and $\TM$
    are isomorphic. Furthermore, a dissimilarity space $(X,d)$ is ultrametric if and only if its
    mmodule-tree $\TM$ satisfies the following conditions:
    \begin{enumerate}[label=(\roman*)]
    \item\label{item:umt1} the internal nodes of $\TM$ are all
      $\cap$-nodes,
    \item\label{item:umt2} for every node $\alpha$ of $\TM$ and child
      $\beta$ of $\alpha$, we have
      $\diam(X(\beta)) < \diam(X(\alpha))$.
    \end{enumerate}
  \end{proposition}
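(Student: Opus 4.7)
The plan is to prove first the isomorphism $\TD \cong \TPQ \cong \TM$, from which condition~(i) of the characterization falls out immediately, then derive the strict diameter decrease~(ii) from \Cref{lemma:monotone-diameter}, and finally establish the converse by a lowest-common-ancestor argument inside $\TM$.

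Assuming $(X,d)$ is ultrametric, \Cref{prop:dendrogram_pq_tree} gives $\TD \cong \TPQ$, and \Cref{prop:ultrametric_pq_tree} shows that every internal node of $\TPQ$ is a P-node, so $\TPQ$ has no conical Q-node. Applying \Cref{prop:node-to-node}, every node of $\TPQ$ then has a set-preserving counterpart in $\TM$ (the ``split child'' exception never triggers); and since a special $\cap$-node of $\TM$ would correspond to a conical Q-node of $\TPQ$, no special $\cap$-node exists in $\TM$, so no node of $\TM$ can be the large child of a special node either. The correspondence is therefore a bijection between nodes preserving the induced sets, which yields $\TM \cong \TPQ$, and the node-to-node dictionary further guarantees that every internal node of $\TM$ is a non-special $\cap$-node, which is condition~(i).

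For~(ii), fix an internal $\cap$-node $\alpha$ of $\TM$ with child $\beta$, and let $\alpha'$, $\beta'$ be the corresponding nodes of $\TPQ$ under the isomorphism (with $\beta' = \beta$ when $\beta$ is a leaf). If $\beta$ is a leaf the inequality $\diam(X(\beta)) = 0 < \diam(X(\alpha))$ is immediate. Otherwise both $\alpha'$ and $\beta'$ are P-nodes; \Cref{lemma:monotone-diameter}\ref{item:mono1} gives $\diam(X(\beta')) \leq \diam(X(\alpha'))$, and equality would force $\beta'$ to be a Q-node by \Cref{lemma:monotone-diameter}\ref{item:mono3}, contradicting that $\beta'$ is a P-node. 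Hence $\diam(X(\beta)) < \diam(X(\alpha))$, as required.

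For the converse, suppose $\TM$ satisfies (i) and (ii) and pick distinct $x,y,z \in X$. Let $\alpha = \lca(x,y,z)$ in $\TM$; by~(i) $\alpha$ is a $\cap$-node, and by \Cref{lemma:uniform-pnode} there is a value $\rhos(\alpha)$ equal to $d(u,v)$ whenever $u,v$ lie in distinct children of $\alpha$. The main (and really only) obstacle is to show $\diam(X(\alpha)) = \rhos(\alpha)$: splitting pairs of points of $X(\alpha)$ into same-child and distinct-child pairs gives $\diam(X(\alpha)) = \max\bigl(\rhos(\alpha),\, \max_\beta \diam(X(\beta))\bigr)$, and~(ii) forces the inside-child maximum to be strictly less than $\diam(X(\alpha))$, so indeed $\diam(X(\alpha)) = \rhos(\alpha)$. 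A short case analysis then concludes the proof: if $x,y,z$ lie in three distinct children of $\alpha$, all three pairwise distances equal $\rhos(\alpha)$; if $x,y$ share a child $\beta_1$ and $z$ lies in another child, then $d(x,z) = d(y,z) = \rhos(\alpha)$ while $d(x,y) \leq \diam(X(\beta_1)) < \rhos(\alpha)$ by~(ii) applied to $\beta_1$. In both cases the two largest of $d(x,y), d(x,z), d(y,z)$ coincide, so $(X,d)$ is ultrametric.
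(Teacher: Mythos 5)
Your proof is correct and follows essentially the same route as the paper: the forward direction chains \Cref{prop:ultrametric_pq_tree}, \Cref{prop:dendrogram_pq_tree} and the node-to-node correspondence between $\TPQ$ and $\TM$, and the converse is the same lowest-common-ancestor case analysis. You are in fact slightly more careful than the paper in two places — deriving the strict inequality in~(ii) from \Cref{lemma:monotone-diameter}\ref{item:mono3} and justifying $\diam(X(\alpha)) = \rhos(\alpha)$ before applying~(ii) in the converse — both of which the paper leaves implicit.
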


\begin{proof}
  Suppose first that $(X,d)$ is an ultrametric space. Then, by
  \Cref{prop:ultrametric_pq_tree}, $\TPQ$ contains only P-nodes. By \Cref{prop:pqroot-analysis}\ref{item:pqr1}, all internal nodes of $\TM$ are non-special
  $\cap$-nodes with no large component and are in one-to-one
  correspondance to those of $\TPQ$, proving~\ref{item:umt1}
  and~\ref{item:umt2}, and that $\TPQ$ and $\TM$ are isomorphic. By
  \Cref{prop:dendrogram_pq_tree}, they are also isomorphic to $\TD$.

  Conversely, suppose that $\TM$ satisfies~\ref{item:umt1}
  and~\ref{item:umt2}. Let $x,y,z \in X$ and $\alpha$ be the common
  least ancestor of $x, y ,z$ in $\TM$. If $x,y,z$ are in distinct
  children, then $d(x,y) = d(y,z) = d(x,z) = \rhos(\alpha)$
  by~\ref{item:umt1}. Otherwise, two of them are in a common child,
  say $x$ and $y$. Then $d(x,y) < \rhos(\alpha)$ by~\ref{item:umt2},
  and $d(x,z) = d(y,z) = \rhos(\alpha)$ by \ref{item:umt1}. Thus
  $(X,d)$ is an ultrametric.
\end{proof}


\section{Construction of the mmodule tree using partition refinement}\label{SECTION_mmodtree}
In \Cref{SECTION_translations} we established a correspondence between
the PQ-tree $\TPQ$ and the mmodule tree $\TM$ of a Robinson space
$(X,d)$, allowing to derive one such tree from another. In
\Cref{SECTION_Gdelta_PQ} we showed how to construct $\TPQ$ from the
maximal mmodules $\Mmax$ of $(X,d)$. In this section, we show how to
build the mmodule tree $\TM$ recursively from top-to-bottom, using a
partition refinement algorithm and the dendrogram $\Td$ of the
ultrametric subdominant $(X,\widehat{d})$. This also allows to
construct $\Mmax$ and thus to get full algorithmic translations
between the trees $\TPQ$ and $\TM$.

\subsection{Stable partitions and partition refinement}

A \emph{partition} of a set $X$ is a family of sets
$\PP=\{ B_1,\ldots,B_m\}$ such that $B_i\cap B_j=\varnothing$ for any
$i\ne j$ and $\bigcup_{i=1}^k B_i=X$. The sets $B_1,\ldots, B_m$ are
called the \emph{classes} of $\PP$.

\begin{definition}[Stable partition]\label{def:stable-partition}
  A partition $\PP=\{ B_1,\ldots,B_m\}$ of a dissimilarity space
  $(X,d)$ is a \emph{stable partition} if for any
  $i \in \{1,\ldots,m\}$, $B_i$ is an mmodule.
\end{definition}

A non-stable partition $\PP$ can be transformed into a stable
partition by applying the classical operation of \emph{partition
  refinement}, which proceeds as follows. \Cref{algo:stable} (see the
Appendix) maintains the current partition $\PP$ and for each class $B$
of $\PP$ maintains the set $Z(B)$ of all points outside $B$ which
still have to be processed to refine $B$. While $\PP$ contains a class
$B$ with nonempty $Z(B)$, the algorithm pick any point $z$ of $Z(B)$
and partition $B$ into maximal classes that are not distinguishable
from $z$, {\it i.e.} for any such new class $B'\subseteq B$ and any
$x,x' \in B'$ we have $d(x,z) = d(x',z)$. Finally, the algorithm
removes $B$ from $\PP$ and insert each new class $B'$ in $\PP$ and
sets $Z(B'):= (B \setminus B')\cup (Z(B) \setminus \{z\})$. Notice
that each class $B$ is partitioned into subclasses by comparing the
distances of points of $B$ to the point $z\notin B$ and such distance
items never occur in later comparisons. Also, if the final stable
partition has classes $B'_1,\ldots, B'_t$, then the distances between
points in the same class $B'_i$ are never compared to other distances.
This algorithm is formalized in \Cref{algo:stable}, where one would
call $\stablePart(\PP)$ to get a stable partition from $\PP$. The
complexity of $\partRefine$ in \Cref{algo:stable} is
$O(\sum_{i = 1}^k |B_i| \times |B \setminus B_i|)$, the complexity of
refining $\PP$ into $\PP'$ with \Cref{algo:stable} is
$O(\sum_{P \in \PP'} |P| \times |X \setminus P|)$. The copoint
partition $\C_p$ of $(X,d)$ can be constructed by applying
\Cref{algo:stable} to the partition $\{ \{ p\}, X\setminus\{ p\})$.

The effect of \Cref{algo:stable} applied to a partition of a subset
$S \subseteq X$ of a dissimilarity space $(X,d)$ is described in
\Cref{lemma:stable-correct}, which asserts that $\stablePart(\PP)$ is
the less refined refinement of $\PP$ into mmodules of $(S,d)$.

We now present \Cref{algo:dendro-refine}, a variant of the stable
partition algorithm, that uses $S$-trees to represent sets. This will
allow us to use the dendrogram $\Td$ of the ultrametric subdominant to
represent $X$, providing us with extra information that will be needed to
obtain an efficient algorithm to build the mmodule tree.

\begin{algorithm}[htbp]
  \caption{A refinement algorithm where sets are represented by $X$-trees}
  \label{algo:dendro-refine}

  \flushleft{$\underline{\stableDendro(\Tt(S_1),\ldots,\Tt(S_k))}$}
  \begin{algorithmic}[1]
    \Require{a dissimilarity space $(X,d)$ (implicit), $\Tt(S_i)$ an $S_i$-tree for $i \in \{1,\ldots,k\}$, where
      $\{S_1,\ldots,S_k\}$ is a partition of a subset $S \subseteq X$.}
    \Ensure{$\Tt(M_i)$ an $M_i$-tree for $i \in \{1,\ldots,\ell\}$, where $\{M_1,\ldots,M_\ell\}$ is
      a partition of $S$ into mmodules of $(S,d)$}.
    \For{$i \in \{1,\ldots,k\}$}
      \YieldAll $\refineDendro(\Tt(S_i), S_1 \cup \ldots \cup S_{i-1} \cup S_{i+1} \cup \ldots \cup S_k)$\label{line:sdcalls}
    \EndFor
  \end{algorithmic}\medskip

  \flushleft{$\underline{\refineDendro(\Tt(S),Z(S))}$}
  \begin{algorithmic}[1]\setcounter{ALG@line}{1}
    \If{$Z(S) = \varnothing$}
      \Return $\Tt(S)$\label{line:rdreturn}
    \EndIf
    \Let $q \in Z(S)$
    \Let $\Tt(S_1),\ldots,\Tt(S_k) := \drefine(q, \Tt(S))$\label{line:calldref}
    \For{$i \in \{1,\ldots,k\}$}
      \YieldAll $\refineDendro(\Tt(S_i), S_1 \cup \ldots \cup S_{i-1} \cup S_{i+1} \cup \ldots \cup S_k \cup Z(S) \setminus \{q\})$\label{line:rdrecall}
    \EndFor
  \end{algorithmic}\medskip

  \flushleft{$\underline{\drefine(q, \Tt(S))}$}
  \begin{algorithmic}[1]\setcounter{ALG@line}{8}
    \Require{a dissimilarity space $(X,d)$ (implicit), an $S$-tree $\Tt(S)$ for some
      subset $S \subseteq X$, and $q \in X \setminus S$.}
    \Ensure{$\Tt(S_i)$ an $S_i$-tree for $i \in \{1,\ldots,n\}$, where
      $S_1,\ldots,S_n$ is a partition of $S$.}
    \If{$d(q,x)$ is constant for $x \in \Tt(S)$}\label{line:rd-uniform}
      \Return $\Tt(S)$\label{line:rd-earlyexit}
    \EndIf
    \Let $\Node(\beta_1,\ldots,\beta_k) := \Tt(S)$
    \Let $I := \{i \in \{1,\ldots,k\} : \textrm{$d(q,x)$ is constant for $x \in X(\beta_i)$}\}$ and $J := \{1,\ldots,k\} \setminus I$\label{line:rd-defI}
    \For {$j \in J$}
        \YieldAll $\drefine(q,\beta_j)$ \label{line:rd-recurse}
    \EndFor
    \Let $\{d_1,\ldots,d_\ell\} := \{ d(q,\beta_i) : i \in I\}$\label{line:rd-sort}
    \For {$j \in \{1,\ldots,\ell\}$}
      \Yield $\join(\{ \beta_i : i \in I, d(q,\beta_i) = d_j \})$\label{line:rd-join}
    \EndFor
  \end{algorithmic}\medskip

  \flushleft{$\underline{\join(\{\alpha_1,\ldots,\alpha_k\})}$}
  \begin{algorithmic}[1]\setcounter{ALG@line}{19}
    \Return $\left\{\begin{array}{ll}
                      \alpha_1 &\quad\textrm{ if $k = 1$,}\\
                      \Node(\alpha_1,\ldots,\alpha_k) &\quad\textrm{ otherwise.}
                    \end{array}\right.$
  \end{algorithmic}
\end{algorithm}

\begin{lemma}\label{lemma:struct-preserve}
  Let $(X,d)$ be a dissimilarity space and let $\Tt(S)$ be an $S$-tree
  for some subset $S \subseteq X$. Let $q \in X \setminus S$ and
  $\Tt(S_1),\ldots,\Tt(S_n) := \drefine(q,\Tt(S))$. Then for all
  $i \in \{1,\ldots, n\}$, $\Tt(S_i)$ is either a subtree of $\Tt(S)$
  or the join of some children of a node in $\Tt(S)$.
\end{lemma}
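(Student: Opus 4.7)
The plan is to prove this by structural induction on $\Tt(S)$, following the recursive structure of $\drefine$. The base case corresponds to the early exit at line~\ref{line:rd-earlyexit}: when $d(q,x)$ is constant on $S$, $\drefine$ returns $\Tt(S)$ itself, which trivially is a subtree of $\Tt(S)$, so the conclusion holds with $n = 1$.

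For the inductive step, assume $\Tt(S) = \Node(\beta_1,\ldots,\beta_k)$ and that the test on line~\ref{line:rd-uniform} fails. The algorithm splits the children into the two sets $I$ and $J$ defined on line~\ref{line:rd-defI}. First, I handle the children indexed by $J$: for each $j \in J$, we recursively call $\drefine(q,\beta_j)$, which by induction hypothesis yields a family of trees each of which is either a subtree of $\beta_j$ or the join of some children of a node in $\beta_j$; in either case it is also a subtree of $\Tt(S)$ or a join of children of a node in $\Tt(S)$, so these outputs satisfy the required property.

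Next, I treat the children indexed by $I$: the algorithm groups the $\beta_i$ (for $i \in I$) by the common value $d(q,\beta_i)$, and for each group returns $\join(\{\beta_{i_1},\ldots,\beta_{i_m}\})$. If the group is a singleton, the output is the subtree $\beta_{i_1}$ of $\Tt(S)$; if the group has size at least two, the output is $\Node(\beta_{i_1},\ldots,\beta_{i_m})$, which by definition is the join of several children of the root of $\Tt(S)$. In both cases, the conclusion of the lemma is satisfied.

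Since $\drefine(q,\Tt(S))$ yields exactly the trees obtained from these two phases (recursive calls on $j \in J$ and joined groups for $I$), and each such tree has the claimed form, the induction closes. I do not anticipate a technical obstacle here: the argument is essentially a bookkeeping of what $\drefine$ produces at each step, and the main point is simply to observe that the \texttt{yield from} on line~\ref{line:rd-recurse} preserves the "subtree or join of siblings" property along the recursion, because a subtree of a subtree is a subtree, and a join of children of a node in $\beta_j$ is a join of children of a node in $\Tt(S)$.
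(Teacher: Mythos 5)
Your proof is correct and follows essentially the same route as the paper's: induction on the size of $\Tt(S)$, with the uniform-distance early exit as base case, the induction hypothesis handling the trees yielded by the recursive calls on children in $J$, and a direct observation that the joined groups from $I$ are joins of children of the root. The only difference is that you spell out the (true and worth noting) transitivity step that a subtree of a subtree, or a join of children of a node inside a child, still has the required form relative to $\Tt(S)$, which the paper leaves implicit.
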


\begin{proof}
  We proceed by induction on the size of $\Tt(S)$. If $d(q,x)$ is constant for
  $x \in S$, then $n = 1$ and $\Tt(S_1) = \Tt(S)$
  (line~\ref{line:rd-uniform}). Let $i \in \{1,\ldots,n\}$. If
  $\Tt(S_i)$ is yielded at line~\ref{line:rd-recurse}, then the result
  holds by induction. Otherwise, it is yielded at
  line~\ref{line:rd-join}, hence $\Tt(S_i)$ is the join of some
  children of the root.
\end{proof}

\begin{lemma}\label{lemma:drefine-uniform}
  Let $(X,d)$ be a dissimilarity space and let $\Tt(S)$ be an $S$-tree
  for some subset $S \subseteq X$. Let $q \in X \setminus S$ and
  $\Tt(S_1),\ldots,\Tt(S_n) := \drefine(q,\Tt(S))$. Then for all
  $i \in \{1,\ldots, n\}$, $d(q,x)$ is constant for $x \in \Tt(S_i)$.
\end{lemma}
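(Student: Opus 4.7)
The plan is to prove this by structural induction on the $S$-tree $\Tt(S)$ (equivalently, induction on $|S|$), closely following the control flow of \drefine.

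First I would dispose of the base case: when $d(q,\cdot)$ is constant on $S$, \drefine exits at line~\ref{line:rd-earlyexit} with $n=1$ and $\Tt(S_1)=\Tt(S)$, so the conclusion is immediate. This case also covers $\Tt(S)$ being a single leaf.

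For the inductive step, assume $d(q,\cdot)$ is not constant on $S$ and write $\Tt(S)=\Node(\beta_1,\ldots,\beta_k)$. Split the children indices into $I$ and $J$ as in line~\ref{line:rd-defI}. I would show that each $\Tt(S_i)$ yielded by \drefine falls into exactly one of two sources, and handle them separately:

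\begin{itemize}
\item[(a)] \emph{Yielded at line~\ref{line:rd-recurse} via a recursive call on some $\beta_j$ with $j\in J$.} The recursive call is on an $X(\beta_j)$-tree strictly smaller than $\Tt(S)$, so by the induction hypothesis applied to $\drefine(q,\beta_j)$, every tree it yields has $d(q,\cdot)$ constant on its leaf set. Hence the same holds for $\Tt(S_i)$.
\item[(b)] \emph{Yielded at line~\ref{line:rd-join}.} By definition of $I$, for each $i\in I$ the function $d(q,\cdot)$ is constant on $X(\beta_i)$, so the value $d(q,\beta_i)$ is well-defined. The set of values $\{d_1,\ldots,d_\ell\}$ is obtained at line~\ref{line:rd-sort}. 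The tree yielded for $d_j$ is $\join$ of exactly those $\beta_i$ (with $i\in I$) for which $d(q,\beta_i)=d_j$. Its leaf set is the union of the corresponding $X(\beta_i)$, on each of which $d(q,\cdot)$ equals $d_j$, so $d(q,\cdot)$ is constantly equal to $d_j$ on the whole leaf set of the yielded tree.
\end{itemize}

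The only subtle point, and what I would emphasize in the write-up, is that line~\ref{line:rd-sort} implicitly uses the well-definedness of $d(q,\beta_i)$ for $i\in I$, which is exactly the membership condition defining $I$. No other obstacle arises; the argument is a direct case analysis mirroring the two \Yield/\YieldAll statements in the pseudocode, with the induction hypothesis invoked only in case~(a).
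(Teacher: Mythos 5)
Your proof is correct and is precisely the induction that the paper invokes when it writes ``This easily follows by induction on the size of $\Tt(S)$'': the early-exit base case, the induction hypothesis for the recursive calls on children indexed by $J$, and the observation that each join at the last step groups exactly the children sharing a common value of $d(q,\cdot)$. Nothing is missing.
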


\begin{proof}
This easily follows by induction on the size of $\Tt(S)$.
\end{proof}

The fundamental property of the stable partition algorithm is that it
outputs the maximal mmodules included in some parts of the initial partition.
This property will only be preserved when working with $S$-trees whose
structure respects the mmodules of the dissimilarity space.

\begin{definition}
  Let $(X,d)$ be a dissimilarity space and $S \subseteq X$. An
  $S$-tree $\Tt$ is \emph{coherent} if for each mmodule $M$ of $X$
  with $M \subseteq S$, for each child $\beta$ of the $M$-pertinent
  node in $\Tt$, either $X(\beta) \subset M$ or
  $X(\beta) \cap M = \varnothing$.
\end{definition}

Notice that any subtree of a coherent tree is itself coherent.

\begin{lemma}\label{lemma:no-mmod-cut}
  Let $(X,d)$ be a dissimilarity space and let $\Tt(S)$ be a coherent
  $S$-tree for some $S \subseteq X$. Let $q \in X \setminus S$ and
  $\Tt(S_1),\ldots,\Tt(S_n) := \drefine(q,\Tt(S))$. Then
  $\Tt(S_1),\ldots,\Tt(S_n)$ are coherent. Moreover, let
  $M \subsetneq S$ be an mmodule of $S$, then there is
  $i \in \{1,\ldots,n\}$ such that $M \subseteq S_i$.
\end{lemma}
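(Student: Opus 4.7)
The plan is to prove both assertions simultaneously by structural induction on $\Tt(S)$, reading ``mmodule of $S$'' as an mmodule of $(X,d)$ contained in $S$, so that we can use the mmodule property against the external pivot $q \in X \setminus S$. The base case is when $\Tt(S)$ is a leaf, or more generally when $d(q,\cdot)$ is constant on $S$: here $\drefine$ exits at line~\ref{line:rd-earlyexit} returning $\Tt(S)$ itself, so the unique output $S_1 = S$ is coherent by hypothesis and contains any $M \subsetneq S$.

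In the inductive step, write $\Tt(S) = \Node(\beta_1,\ldots,\beta_k)$ and split $\{1,\ldots,k\} = I \cup J$ as on line~\ref{line:rd-defI}. The outputs fall in two families: (i) the recursive outputs $\drefine(q,\beta_j)$ for $j \in J$, to which the inductive hypothesis applies because each $\beta_j$ is a subtree of $\Tt(S)$ and therefore coherent; and (ii) the joins $\Tt(S_i) = \join(\{\beta_i : i \in I,\ d(q,\beta_i) = d\})$, one per distinct distance value $d$. Family (i) handles both its own coherence and the containment of any mmodule $M$ of $X$ with $M \subseteq X(\beta_j)$ for some $j \in J$, noting that $M = X(\beta_j)$ is impossible (otherwise $d(q,\cdot)$ would be constant on $X(\beta_j)$, contradicting $j \in J$), so $M \subsetneq X(\beta_j)$ and induction places $M$ in one of the outputs.

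For a join output $\Tt(S_i) = \Node(\beta_{i_1},\ldots,\beta_{i_r})$ with $r \geq 2$ (the case $r = 1$ reduces to a coherent subtree of $\Tt(S)$), take an mmodule $M$ of $X$ with $M \subseteq S_i$. If $M$ is contained in a single $X(\beta_{i_s})$, the $M$-pertinent node in $\Tt(S_i)$ coincides with the one in the coherent subtree $\beta_{i_s}$ and coherence transfers. Otherwise $M$ meets at least two of the $X(\beta_{i_s})$, so the $M$-pertinent node in $\Tt(S)$ is forced to be its root; coherence of $\Tt(S)$ then says each $X(\beta_{i_s})$ is either contained in $M$ or disjoint from $M$, which is exactly the coherence condition at the root of $\Tt(S_i)$. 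The symmetric argument handles the remaining containment case: for $M \subsetneq S$ that meets several children of the root of $\Tt(S)$, coherence of $\Tt(S)$ writes $M = \bigcup_{j \in I_M} X(\beta_j)$, and since $d(q,\cdot)$ is constant on $M$ (as $q \notin M$) each such $\beta_j$ lies in $I$ with common distance $d(q,M)$, so all of them are joined into a single output that contains $M$.

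The main obstacle is this ``spanning'' case: it is the combined force of coherence of $\Tt(S)$ together with $q$ being external to any mmodule $M \subseteq S$ of $X$ that makes $d(q,\cdot)$ constant on $M$ and forces $M$ to decompose as a union of complete child-subtrees $X(\beta_j)$, a structural decomposition that underlies both halves of the lemma.
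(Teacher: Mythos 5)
Your proof is correct and takes essentially the same route as the paper's: induction on the tree, with the key step being that an mmodule whose pertinent node is the root must, by coherence, decompose as a union of complete child subtrees on which $d(q,\cdot)$ is constant, so they all land in the same join. The only (cosmetic) difference is that the paper packages coherence and containment into one inductively preserved claim ($M$ equals $X(\alpha)$ or a union of children of some node $\alpha$ in an output tree), whereas you argue the two conclusions in parallel; both rest on the same decomposition and the same implicit reading of ``mmodule of $S$'' as an mmodule of $(X,d)$ contained in $S$.
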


\begin{proof}
  Let $M$ be a proper mmodule of $S$. By induction on the size of
  $\Tt(S)$, first we prove the claim that there exist
  $i \in \{1,\ldots,n\}$ and a node $\alpha$ of $\Tt(S_i)$ such that
  $M = X(\alpha)$ or $M = X(\beta_1) \cup \ldots X(\beta_k)$ for some
  children $\beta_1,\ldots,\beta_k$ of $\alpha$. Then we obtain the
  assertion of the lemma by applying this claim to all proper mmodules
  of $S$.

  If $d(q,x)$ is constant for $x \in S$, then $n = 1$,
  $\Tt(S_1) = \Tt(S)$ and the claim follows from the coherence of
  $\Tt(S)$. If the root of $\Tt(S)$ is not $M$-pertinent, then as
  $\Tt(S)$ is coherent, $M \subseteq X(\beta_i)$ for some child
  $\beta_i$ of the root. If $i \in I$, $\beta_i$ is in the join of
  some $\Tt(S_j)$ yielded at line~\ref{line:rd-join} and the claim
  holds as $\beta_i$ is coherent, else the claim holds by induction,
  from line~\ref{line:rd-recurse} because $\beta_i$ is coherent.

  Otherwise the root is $M$-pertinent. Since $M$ is an mmodule and
  $\Tt(S)$ is coherent, there is a subset $I' \subseteq I$ such that
  $M = \bigcup_{i \in I'} X(\beta_i)$, and there is
  $j \in \{1,\ldots,\ell\}$ such that $d_j = d(q,M)$. Then $M$ is a
  subset of leaves of the join yielded on line~\ref{line:rd-join}
  at iteration $j$, establishing the claim.
\end{proof}

The next proposition establishes that $\stableDendro$ is semantically
equivalent to $\stablePart$.

\begin{proposition}\label{lemma:correction-stableDendro}
  Let $(X,d)$ be a dissimilarity space and
  $\Tt(S_1),\ldots,\Tt(S_k)$ be coherent trees, where $S_1 , \ldots, S_k$
  is a partition of $X$. Let
  $\Tt(R_1),\ldots,\Tt(R_\ell) := \stableDendro(\Tt(S_1),\ldots,\Tt(S_k))$.
  Then $R_1,\ldots,R_\ell$ are the maximal by inclusion mmodules of
  $X$ contained in $S_1,\ldots,S_k$.
\end{proposition}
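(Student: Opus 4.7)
The plan is to reduce the statement to three sub-claims: (a) $R_1,\ldots,R_\ell$ partition $X$ and refine $\{S_1,\ldots,S_k\}$, so in particular each $R_j$ sits inside some $S_i$; (b) each $R_j$ is an mmodule of $X$; and (c) every mmodule $M$ of $X$ contained in some $S_i$ is contained in some $R_j$. Combining (a)--(c), and using that the $R_j$ are disjoint and each is itself an mmodule contained in some $S_i$, yields exactly that the $R_j$ are the maximal by inclusion mmodules of $X$ contained in one of the $S_i$.

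For (a), I would track the potential $\mu(S, Z(S)) := |S| + |Z(S)|$. A direct computation from $Z(S_i) = (S \setminus S_i) \cup (Z(S) \setminus \{q\})$ gives $|S_i| + |Z(S_i)| = |S| + |Z(S)| - 1$, so each recursive call of $\refineDendro$ strictly decreases $\mu$, guaranteeing termination. Since the outputs of the sibling recursive calls at a single node clearly partition the input $S$, and $\stableDendro$ calls $\refineDendro$ on each $S_i$ separately, the $R_j$ partition $X$ and refine $\{S_1,\ldots,S_k\}$.

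For (b), I would introduce the \emph{processed set} $P(S) := X \setminus (S \cup Z(S))$ and prove by induction on recursion depth the invariant that at every call $\refineDendro(\Tt(S), Z(S))$, the function $d(q, \cdot)$ is constant on $S$ for every $q \in P(S)$. At the initial calls from $\stableDendro$ we have $P = \varnothing$, so the invariant is vacuous. In the inductive step, $P(S_i) = P(S) \cup \{q\}$: constancy for $q' \in P(S)$ is inherited from $S_i \subseteq S$, while constancy for $q$ is exactly Lemma \ref{lemma:drefine-uniform}. At a base case $Z(R_j) = \varnothing$ one has $P(R_j) = X \setminus R_j$, whence $R_j$ is an mmodule of $X$.

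For (c), the key tool is Lemma \ref{lemma:no-mmod-cut}, which says that $\drefine$ preserves coherence and never splits a proper mmodule of the current subspace. Because subtrees of coherent trees are coherent, coherence of the current $\Tt(S)$ propagates from the initial coherent inputs $\Tt(S_1),\ldots,\Tt(S_k)$ down through the recursion. If $M \subseteq S_i$ is an mmodule of $X$, then along the recursive chain its current ambient subspace $S$ satisfies $M \subseteq S \subseteq S_i$ and $M$ remains an mmodule of $S$ (immediate from the definition). If $M \subsetneq S$, Lemma \ref{lemma:no-mmod-cut} places $M$ inside a unique split piece; otherwise $M = S$, but then constancy of $d(q, \cdot)$ on $M$ at the chosen pivot $q$ forces $\drefine$ to leave $S$ intact. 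By induction $M \subseteq R_j$ for some $j$. The main obstacle is the careful bookkeeping of $S$, $Z(S)$, and $P(S)$ through the nested recursive calls; once that is set up, (b) and (c) are direct applications of Lemmas \ref{lemma:drefine-uniform} and \ref{lemma:no-mmod-cut}.
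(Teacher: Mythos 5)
Your proposal is correct and follows essentially the same route as the paper's proof: the invariant in your step (b) (that $d(q,\cdot)$ is constant on $S$ for every already-processed $q \notin S \cup Z(S)$) is exactly the paper's invariant, and step (c) is the paper's appeal to \Cref{lemma:no-mmod-cut} propagated through the recursion via coherence of subtrees. You add some welcome bookkeeping the paper leaves implicit (termination via the potential $|S|+|Z(S)|$, and the $M=S$ case not covered by the $M \subsetneq S$ hypothesis of \Cref{lemma:no-mmod-cut}), but the argument is the same.
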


\begin{proof}
  By \Cref{lemma:no-mmod-cut}, for each such maximal mmodule $M$,
  there is $i \in \{1,\ldots,\ell\}$ such that $M \subseteq R_i$. It
  remains to prove that, for $i \in \{1,\ldots,\ell\}$, $R_i$ is an
  mmodule.

  We check the following invariant: for each call
  $\refineDendro(T(S),Z(S))$, for each $u \notin S \cup Z(S)$,
  $d(x,u)$ is constant for $x \in S$. This is trivial
  in line~\ref{line:sdcalls}. Then consider a call
  $\refineDendro(T(S),Z(S))$ for which the invariant holds, and let us
  prove it for the recursive calls happening at
  line~\ref{line:rdrecall}. For iteration $i$, let
  $u \in X \setminus (Z(S_i) \cup S_i) = \{q\} \cup (X \setminus Z(S) \setminus S)$.
  If $u \in X \setminus Z(S) \setminus S$, then $d(x,u)$ is constant
  for $x \in S$ hence for $x \in S_i$. Otherwise for $u = q$, $S_i$ is
  a set yielded by $\drefine(q,\Tt(S))$, and the invariant follows by
  \Cref{lemma:drefine-uniform}.

  Consequently, for any $i \in \{1,\ldots,\ell\}$, $\Tt(R_i)$ is returned at
  line~\ref{line:rdreturn}, for some call
  $\refineDendro(\Tt(R_i),\varnothing)$. Thus by the invariant,
  $\Tt(R_i)$ is an mmodule.
\end{proof}

The complexity of \Cref{algo:stable,algo:dendro-refine} are
asymptotically equivalent:

\begin{lemma}\label{lemma:complexity-stableDendro}
  Let $(X,d)$ be a dissimilarity space and $\Tt(S_1),\ldots,\Tt(S_k)$
  be coherent trees, where $S_1 , \ldots, S_k$ is a partition of $X$.
  Then
  $\Tt(R_1),\ldots,\Tt(R_\ell) := \stableDendro(\Tt(S_1),\ldots,\Tt(S_k))$
  takes $O(\sum_{i=1}^\ell |R_i| |X \setminus R_i|)$ time.
\end{lemma}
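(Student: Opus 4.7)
The plan is to mirror the classical amortized analysis of partition refinement, split into two parts: (a) bounding the cost of a single invocation of $\drefine$ by $O(|S|)$, and (b) charging these costs to ordered pairs of points $(q,y)$ so that each pair is charged $O(1)$, at most once during the whole execution, and only when $q$ and $y$ belong to different final mmodules.

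For (a), I would first precompute $d(q,x)$ for all $x \in S$ in $O(|S|)$ time, and then do a bottom-up traversal of $\Tt(S)$ that labels each node with either the common value of $d(q,\cdot)$ on its leaves (if such a common value exists) or a ``mixed'' marker. The top-down recursive walk of $\drefine$ then does $O(1)$ work per edge of $\Tt(S)$: reading the pre-computed label to split children into $I$ and $J$, grouping the $I$-children by their common distance to $q$ using hashing (to avoid an $O(\log|I|)$ sorting factor at line~\ref{line:rd-sort}), and constructing the joins at line~\ref{line:rd-join}. Summing over the $O(|S|)$ nodes of $\Tt(S)$ gives a total of $O(|S|)$.

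For (b), the amortization rests on the way $Z$-sets evolve. Whenever $\drefine(q,\Tt(S))$ is executed, $q \in Z(S)$, hence $q \notin S$; moreover, after the split of $S$ into $S_1,\ldots,S_k$, line~\ref{line:rdrecall} passes each recursive call the set $Z(S_j) = (S \setminus S_j) \cup (Z(S) \setminus \{q\})$, which does not contain $q$ (since $q \notin S$). By induction on the depth of the recursion tree, $q$ never reappears in $Z(S')$ for any descendant set $S' \subseteq S$. Consequently, for any fixed pair $(q,y) \in X \times X$, the event ``$\drefine(q,\Tt(B))$ is invoked on a class $B$ containing $y$'' happens at most once. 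At such an invocation we have $q \notin B \ni y$, so $q$ and $y$ sit in different current classes; since partition refinement never merges classes, they remain in different classes forever and therefore lie in distinct final mmodules. Charging $O(1)$ to each pair $(q,y)$ with $y \in S$ absorbs the $O(|S|)$ cost of every invocation, and the total charge is bounded by
\[
|\{(q,y) : f(q) \neq f(y)\}| \;=\; \sum_{i=1}^\ell |R_i|\,(|X|-|R_i|) \;=\; \sum_{i=1}^\ell |R_i|\,|X \setminus R_i|,
\]
where $f$ is the final class map, yielding the claimed bound.

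I do not anticipate a serious obstacle; the main care is in (a), since a naive implementation of the uniformity test at line~\ref{line:rd-uniform} or of the grouping at line~\ref{line:rd-sort} could inflate the per-call cost beyond $O(|S|)$. The bottom-up precomputation and hashing, respectively, keep each $\drefine$ call linear, and (b) is then a direct transcription of the standard partition-refinement amortization to the present tree-structured setting.
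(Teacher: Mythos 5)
Your proof is correct and follows essentially the same amortization as the paper's: charge the $O(|S|)$ cost of each call $\drefine(q,\Tt(S))$ to the ordered pairs $(q,y)$ with $y \in S$, each of which is charged at most once over the whole execution and ends up with $q,y$ in distinct final classes, giving the bound $\sum_{i=1}^\ell |R_i|\,|X \setminus R_i|$. The only divergence is at line~\ref{line:rd-sort}: you group the $I$-children by hashing on real-valued distances, which makes the stated bound hold only in expectation, whereas the paper uses a balanced binary search tree and absorbs the resulting $O(|I|\log\ell)$ cost with a second amortization (charging $\log\ell \le \ell-1$ to one element of each child $\beta_i$, $i \in I$, which is newly separated at that step from at least $\ell-1$ elements), keeping the analysis deterministic.
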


\begin{proof}
  Notice that $\refineDendro(\Td(S), Z(S))$ is called at most
  $|X \setminus S|$ times, because at each call, some element $q$ is
  chosen and removed from $Z(S)$. For a given $q \in X \setminus S$,
  we must evaluate whether $d(q,x)$ is constant for $x \in S'$. In
  time $O(|S|)$, we can decide that property for each node in
  $\Td(S)$. We charge a cost of $O(1)$ on each pair $(q,x)$ with
  $x \in S$. This allows to solve lines~\ref{line:rd-uniform}
  and~\ref{line:rd-defI} in all calls to $\drefine(q,\Td(S))$ for
  all $S$ with $q \notin S$ in total time $O(|X \setminus R_i|)$, where
  $q \in R_i$. Summing over all $q \in X$, we get
  $O(\sum_{i=1}^{n} |R_i| |X \setminus R_i|)$.

  It remains to evaluate the cost of line~\ref{line:rd-sort}. Using a
  binary search tree, associating to each distance the children at that
  distance, this has  cost $O(|I| \log \ell)$, which we can amortized
  by charging a cost of $\log \ell$ to an element of each child
  $\beta_i$, for $i \in I$. Notice that this element is, from this
  step, split from each of the other $\ell-1$ parts, hence from at least
  $\log \ell$ elements. This implies that the total charge accumulated by
  an element during the main call to $\stableDendro$ is less than the
  number of trees returned. Thus the total cost of
  line~\ref{line:rd-sort} is at most
  $O(\sum_{i=1}^n |R_i| |X \setminus R_i|)$.
\end{proof}

\subsection{\texorpdfstring{$\rhos$}{rho}-Components and the maximal mmodules}
In this subsection, we study the relationships between the maximal
mmodules and the $\rhos$-components. They will lead us to an efficient
algorithm to find the maximal mmodules of a Robinson dissimilarity and
to build its mmodule tree. First, we define the $\rho$-components of a
dissimilarity space (recall that $\rhos$ is the minimum value for
which the graph $G_{\leq \rhos}$ is connected):

\begin{definition}[$\rhos$-components]
   The \emph{$\rhos$-components} of a dissimilarity space
  $(X,d)$ are the connected components of the graph $G_{< \rhos}$.
\end{definition}

\begin{lemma}\label{lemma:giant-or-dwarf}
  Let $(X,d)$ be a connected Robinson space with
  $\TM(X) = \cup(\beta_1,\ldots,\beta_k)$. Let $C$ be a
  $\rhos$-component of $(X,d)$. Then
  \begin{enumerate}[label=(\roman*)]
  \item\label{item:cc1} either there exists $I \subseteq \{1,\ldots,k\}$
    such that $C = \bigcup_{i \in I} X(\beta_i)$,
  \item\label{item:cc2} or there exists $i \in \{1,\ldots,k\}$ such that
    $\diam(X(\beta_i)) = \rhos$,
    $\beta_i = \cap(\gamma_1,\ldots,\gamma_\ell)$, and there is
    $j \in \{1,\ldots,\ell\}$ such that $C = X(\gamma_j)$.
  \end{enumerate}
\end{lemma}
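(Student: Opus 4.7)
The plan is to first prove a dichotomy: for every $\rhos$-component $C$ and every maximal mmodule $M_i := X(\beta_i)$, exactly one of $C \cap M_i = \varnothing$, $M_i \subseteq C$, or $C \subsetneq M_i$ holds. Suppose for contradiction that there exist $x \in C \cap M_i$, $y \in M_i \setminus C$ and $z \in C \setminus M_i$. A path from $x$ to $z$ in $G_{<\rhos}$, which exists since they lie in a common component, must cross the boundary of $M_i$, yielding an edge $uv$ with $u \in M_i$, $v \notin M_i$ and $d(u,v) < \rhos$. Because $M_i$ is an mmodule we have $d(y,v) = d(u,v) < \rhos$, so $y$ is $G_{<\rhos}$-adjacent to $v \in C$ and thus $y \in C$, a contradiction. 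If every $M_i$ meeting $C$ satisfies $M_i \subseteq C$, we are in case~\ref{item:cc1}; otherwise there is a unique $i$ with $C \subsetneq M_i$, and the rest of the proof handles this case.

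In case~\ref{item:cc2}, the central step is to establish $\diam(M_i) = \rhos$. By reapplying the dichotomy to the remaining $\rhos$-components, each of them either avoids $M_i$ or is strictly contained in $M_i$, so those meeting $M_i$ partition it into at least two pieces and $G_{<\rhos}(M_i)$ is disconnected. If $\hat{d}(M_i,M_j) < \rhos$ for some $j \neq i$, then every $v \in M_j$ would be $G_{<\rhos}$-adjacent both to some $x \in C$ and to $y \in M_i \setminus C$, merging $y$ into $C$; hence $\hat{d}(M_i,M_j) \geq \rhos$ for all $j \neq i$, and equality for at least one $j$ is forced by the connectedness of $G_{\leq\rhos}(X)$ together with the nonemptiness of $X \setminus M_i$. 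Because $M_i$ is a block by \Cref{lemma:no-d-cut}, the Robinson property applied to a diametral pair $x_0,y_0$ of $M_i$ together with any $u$ in a neighboring $M_j$ yields $\hat{d}(M_i,M_j) \geq \diam(M_i)$; combined with the existence of a $j$ with $\hat{d}(M_i,M_j) = \rhos$, this gives $\diam(M_i) \leq \rhos$. The reverse inequality $\diam(M_i) \geq \rhos$ is immediate, since otherwise $M_i$ would be a clique in $G_{<\rhos}$ and could not harbour a proper component $C$.

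Finally I would deduce that $\beta_i$ is a $\cap$-node and that $C$ coincides with some $X(\gamma_j)$. If $\beta_i$ were a $\cup$-node, then $(M_i,d)$ would be connected by \Cref{lemma:no-d-cut}, so $\Gdelta(M_i)$ would be connected at $\delta = \rhos$; but $\diam(M_i) = \rhos$ eliminates all edges of weight $>\rhos$, so $\Gdelta(M_i) = G_{<\rhos}(M_i)$, contradicting the fact that $C$ is a proper component of $G_{<\rhos}(M_i)$. Thus $\beta_i = \cap(\gamma_1,\dots,\gamma_\ell)$, and \Cref{lemma:uniform-pnode,lemma:copartition-delta} applied to $(M_i,d)$ give a common distance $\delta_i$ between children of $\beta_i$ whose $\delta_i$-mmodules are exactly the $X(\gamma_j)$'s. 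The option $\delta_i < \rhos$ is ruled out because it would reconnect $G_{<\rhos}(M_i)$ through the all-to-all cross-child edges, so $\delta_i = \rhos$; using $\diam(M_i) = \rhos$ once more, the graph $\Gdelta(M_i)$ at $\delta = \rhos$ is exactly $G_{<\rhos}(M_i)$, so its components are precisely the $X(\gamma_j)$'s, and $C$ must be one of them. The main obstacle is coordinating the global value $\rhos$ with the internal structure of $M_i$, and the identity $\diam(M_i) = \rhos$ is precisely the bridge that makes these two viewpoints agree.
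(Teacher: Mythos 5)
Your proof is correct. It follows the same overall skeleton as the paper's — negate case~\ref{item:cc1}, show that the unique maximal mmodule $M_i$ properly cut by $C$ satisfies $\diam(M_i)=\rhos$ and is internally disconnected at level $\rhos$, then conclude that $\beta_i$ is a $\cap$-node whose children are exactly the $\rhos$-components — but several supporting steps are argued by genuinely different means. Your explicit trichotomy (each $M_j$ is disjoint from $C$, contained in $C$, or properly contains $C$), proved by walking an $(x,z)$-path of $G_{<\rhos}$ across the boundary of the mmodule, is left implicit in the paper and is a clean addition. For the bound $\diam(M_i)\le\rhos$ the paper invokes \Cref{lemma:monotone-diameter} together with \Cref{prop:node-to-node} (i.e., PQ-tree machinery), whereas you derive it directly from the block property of \Cref{lemma:no-d-cut}\ref{item:ndc2}, a diametral pair in a compatible order, and the observation that some $d(M_i,M_j)$ equals $\rhos$; this is more elementary and self-contained. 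Likewise, you identify the children of $\beta_i$ with the $\rhos$-components via \Cref{lemma:uniform-pnode} and \Cref{lemma:copartition-delta} applied to $(M_i,d)$ together with the identity $\Grho(M_i)=G_{<\rhos}(M_i)$ forced by $\diam(M_i)=\rhos$, where the paper argues directly from the definition of non-connected spaces. Two small points: the implication ``$\beta_i$ a $\cup$-node $\Rightarrow (M_i,d)$ connected'' should be credited to \Cref{lemma:copartition-delta} (non-connectedness implies $\Mmax$ is a copartition, hence a $\cap$-node root), not to \Cref{lemma:no-d-cut}, which assumes connectedness rather than concluding it — the fact itself is true, so this is only a citation slip; and when concluding $\delta_i=\rhos$ you leave implicit the inequality $\delta_i\le\diam(M_i)=\rhos$, which is immediate since $\delta_i$ is a distance realized inside $M_i$.
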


\begin{proof}
  Suppose that~\ref{item:cc1} does not happen, i.e., there is some
  $i \in \{1,\ldots,k\}$ such that
  $C \cap X(\beta_i) \neq \varnothing$ and
  $X(\beta_i) \setminus C \neq \varnothing$. By
  \Cref{lemma:monotone-diameter,prop:node-to-node},
  $\diam(X(\beta_i)) \leq \min \{d(\beta_i,\beta_j) : j \in \{1,\ldots,k\} \setminus \{i\}\} \leq \rhos$,
  and thus for all $x \in C \cap X(\beta_i)$ and
  $y \in X(\beta_i) \setminus C$, we have $d(x,y) = \rho$. This
  implies that the Robinson space $(X(\beta_i),d)$ is non-connected,
  with $\rhos(X(\beta_i)) = \rhos = \diam(X(\beta_i))$, hence
  $\beta_i = \cap(\gamma_1,\ldots,\gamma_\ell)$ is non-special. Then
  for all $x \in X(\beta_i)$ and $z \in X \setminus X(\beta_i)$, we
  have $d(x,z) \geq \rhos$. Thus each $X(\gamma_j)$ is a
  $\rhos$-component, establishing  \ref{item:cc2}.
\end{proof}

In case~\ref{item:cc1}, the $\rhos$-component $C$ is said to be
\emph{giant}, while in case~\ref{item:cc2}, it is said to be
\emph{dwarf}. Making that distinction, the next theorem gives a way to
find most of maximal mmodules of a connected Robinson space using the
stable partition algorithm.

\begin{proposition}\label{thm:find-mmax}
  Let $(X,d)$ be a connected Robinson space with
  $\TM = \cup(\beta_1,\ldots,\beta_k)$. Let $i \in \{1\ldots,k\}$ and
  $Y$ be a $\rhos$-component intersecting $X(\beta_i)$. Let $\PP$ be
  the stable partition obtained by calling
  $\stablePart(\{Y,X \setminus Y\})$.
  \begin{enumerate}[label=(\arabic*)]
  \item\label{item:giant} If $Y$ is a giant $\rho$-component, then
    $\PP = \Mmax$.
  \item\label{item:dwarf} If $Y$ is a dwarf $\rho$-component with
    $Y = X(\gamma)$ for some child $\gamma$ of $\beta_i$, then
    $\PP = \Mmax \setminus \{X(\beta_i)\} \cup \{Y, X(\beta_i) \setminus Y\}$.
  \end{enumerate}
\end{proposition}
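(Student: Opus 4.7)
The plan is to apply the correctness of the stable partition algorithm (\Cref{lemma:stable-correct}), which guarantees that $\stablePart(\mathcal{Q})$ returns the coarsest partition refining $\mathcal{Q}$ into mmodules of $(X,d)$. In each case I will exhibit a candidate partition, verify that it refines $\{Y,X\setminus Y\}$ into mmodules, and then show that no strict coarsening into mmodules is possible; by the lemma, that candidate must equal $\stablePart(\{Y,X\setminus Y\})$.

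For case~\ref{item:giant}, by definition of a giant $\rhos$-component there exists $I \subseteq \{1,\ldots,k\}$ with $i \in I$ such that $Y = \bigcup_{j \in I} X(\beta_j)$, and hence $X \setminus Y = \bigcup_{j \notin I} X(\beta_j)$. Since $(X,d)$ is connected, \Cref{lemma:no-d-cut}\ref{item:ndc1} gives $\Mmax = \{X(\beta_1),\ldots,X(\beta_k)\}$ with $k \geq 3$, which is a partition already refining $\{Y, X\setminus Y\}$ into mmodules. A coarsening would have to merge two sets $X(\beta_j), X(\beta_{j'})$ inside the same initial part; since $k \geq 3$ and since $Y$ is a proper nonempty union of $X(\beta_j)$'s, the union $X(\beta_j) \cup X(\beta_{j'})$ remains a proper subset of $X$, and if it were an mmodule it would strictly contain the maximal mmodule $X(\beta_j)$, contradicting maximality in $\Mmax$.

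For case~\ref{item:dwarf}, \Cref{lemma:giant-or-dwarf}\ref{item:cc2} yields $\beta_i = \cap(\gamma_1,\ldots,\gamma_\ell)$ with $\diam(X(\beta_i))=\rhos$, and $Y = X(\gamma_j)$ for some $j$. I propose the candidate
$$\mathcal{P}' := \{X(\beta_{i'}) : i' \neq i\} \cup \{Y,\, X(\beta_i) \setminus Y\}.$$
Each element is an mmodule of $(X,d)$: the sets $X(\beta_{i'})$ are maximal mmodules; $Y = X(\gamma_j)$ is induced by a node of $\TM$; and $X(\beta_i)\setminus Y = \bigcup_{j' \neq j} X(\gamma_{j'})$ is a proper sub-union of children of the $\cap$-node $\beta_i$, hence an mmodule of $(X,d)$ by \Cref{mmodule-tree}. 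This $\mathcal{P}'$ refines $\{Y, X\setminus Y\}$ by construction.

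The remaining step, showing $\mathcal{P}'$ is the coarsest such refinement, is the main technical obstacle. Any potential merger $M$ of two parts of $\mathcal{P}'$ must lie inside $X \setminus Y$. If $M$ contains some $X(\beta_{i'})$ with $i'\neq i$ entirely, then maximality of $X(\beta_{i'})$ in $\Mmax$ forces $M=X$, contradicting $M \subseteq X \setminus Y \subsetneq X$. Otherwise $M$ extends $X(\beta_i)\setminus Y$ into a proper nonempty intersection with some $X(\beta_{i'})$; since $M$ and $X(\beta_{i'})$ are then intersecting mmodules with each containing points outside the other, \Cref{mmodules}\ref{item:mmod3} implies that $M \cup X(\beta_{i'})$ is an mmodule strictly containing $X(\beta_{i'})$ while remaining inside $X\setminus Y \subsetneq X$, again contradicting maximality. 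This careful invocation of \Cref{mmodules}\ref{item:mmod3} together with the maximality of members of $\Mmax$ is the central nontrivial step in ruling out unexpected mergers across the boundary between $X(\beta_i) \setminus Y$ and the other maximal mmodules.
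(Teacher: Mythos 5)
Your proof is correct and follows essentially the same route as the paper: both reduce the statement to \Cref{lemma:stable-correct} and then identify the maximal mmodules contained in $Y$ and in $X \setminus Y$ using the structure of $\TM$. You are in fact somewhat more explicit than the paper in case~\ref{item:dwarf}, where you verify via \Cref{mmodules}\ref{item:mmod3} that $X(\beta_i)\setminus Y$ cannot be enlarged to an mmodule inside $X\setminus Y$ --- a maximality check the paper's proof leaves implicit.
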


\begin{proof}
  By \Cref{mmodule-tree}, the maximal mmodules $\Mmax$ of
  $(X,d)$ are $\{X(\beta_1),\ldots,X(\beta_k)\}$.

  In case~\ref{item:giant}, by \Cref{lemma:giant-or-dwarf},
  $Y = \bigcup_{i \in I} X(\beta_i)$ for some
  $I \subset \{1,\ldots,k\}$, hence $Y$ and $X \setminus Y$ are unions
  of maximal mmodules.  Therefore the maximal mmodules contained in
  $Y$ and $X \setminus Y$ are exactly the maximal mmodules of $X$,
  thus by \Cref{lemma:stable-correct},  $\PP = \Mmax$ holds.

  In case~\ref{item:dwarf}, for each $\beta_j$ with
  $j \in \{1,\ldots,k\} \setminus \{i\}$,
  $X(\beta_j) \subseteq X \setminus Y$ holds, hence
  $X(\beta_j) \in \PP$. Moreover, since $(X(\beta_i),d)$ is not
  connected, the children of $\beta_i$ are its $\rho$-mmodules, thus
  $Y$ and $X(\beta_i) \setminus Y$ are mmodules in $X(\beta_i)$. By
  \Cref{mmodules}\ref{item:mmod2}, $Y$ and $X(\beta_i) \setminus Y$
  are also mmodules in $(X,d)$. Hence by \Cref{lemma:stable-correct},
  $\PP := \Mmax \setminus \{X(\beta_i)\} \cup \{Y, X(\beta_j) \setminus Y\}$.
\end{proof}

In the case of dwarf $\rhos$-components, we still need to retrieve the
maximal mmodule $X(\beta_i)$:

\begin{lemma}\label{lemma:dwarf-comp}
  Let $(X,d)$ be a connected Robinson space and
  $\{X_1,\ldots,X_k\} := \stablePart(\{Y,X \setminus Y\})$. Then a
  $\rhos$-component $Y$ of $(X,d)$ is dwarf if and only if there are
  distinct and uniquely defined indices $i,j \in \{1,\ldots,k\}$ such
  that $Y = X_i$, $d(Y,X_j) = \rho$ and for all
  $h \in \{1,\ldots,k\} \setminus \{j\}$, we have
  $d(Y,X_h) = d(X_j,X_h)$. In that case, $Y \cup X_j$ is a maximal
  mmodule of $(X,d)$.
\end{lemma}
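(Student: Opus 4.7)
The plan is to exploit the structural dichotomy from Lemma~\ref{lemma:giant-or-dwarf} together with Proposition~\ref{thm:find-mmax}, which pin down the stable partition $\PP = \{X_1,\ldots,X_k\}$ depending on whether $Y$ is giant or dwarf. Observe that, since $d(A,B)$ between two nodes of the mmodule tree is only defined when $A$ and $B$ are disjoint, the condition $d(Y,X_h) = d(X_j,X_h)$ for all $h \in \{1,\ldots,k\}\setminus\{j\}$ is implicitly understood to exclude $h = i$.

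For the forward direction, assume $Y$ is dwarf. By Lemma~\ref{lemma:giant-or-dwarf}\ref{item:cc2}, write $Y = X(\gamma_{j_0})$ for some child of a $\cap$-node $\beta_{i_0}$ of $\TM$ with $\diam(X(\beta_{i_0})) = \rhos$. Proposition~\ref{thm:find-mmax}\ref{item:dwarf} then describes the stable partition as $\Mmax \setminus \{X(\beta_{i_0})\} \cup \{Y, X(\beta_{i_0})\setminus Y\}$, so there exist indices $i,j$ with $X_i = Y$ and $X_j = X(\beta_{i_0})\setminus Y$. Applying Lemma~\ref{lemma:uniform-pnode} to the $\cap$-node $\beta_{i_0}$ gives $\rhos(\beta_{i_0}) = \diam(X(\beta_{i_0})) = \rhos$; since $Y$ and $X_j$ are disjoint unions of children of $\beta_{i_0}$, this yields $d(Y,X_j) = \rhos$. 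For $h \notin \{i,j\}$, $X_h$ is a maximal mmodule disjoint from the mmodule $X(\beta_{i_0})$, so both $d(Y,X_h)$ and $d(X_j,X_h)$ equal the common value $d(X(\beta_{i_0}),X_h)$. This simultaneously establishes the closing sentence of the lemma: $Y \cup X_j = X(\beta_{i_0}) \in \Mmax$.

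For uniqueness in the dwarf case, suppose there were another valid pair $(i,j')$ with $j' \neq j$. Then $X_{j'} = X(\beta_{i_0'})$ for some $i_0' \neq i_0$, and the distance conditions immediately make $Y \cup X_{j'}$ an mmodule. Since $X(\beta_{i_0})$ is an mmodule meeting $Y \cup X_{j'}$ in $Y \neq \varnothing$, Proposition~\ref{mmodules}\ref{item:mmod3} shows that $X(\beta_{i_0}) \cup X(\beta_{i_0'})$ is an mmodule strictly containing the maximal mmodule $X(\beta_{i_0})$; by maximality it equals $X$, contradicting $|\Mmax| \geq 3$ from Proposition~\ref{lemma:no-d-cut}\ref{item:ndc1}.

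The converse proceeds by contradiction. Given the conditions, $Y \cup X_j$ is an mmodule because any $z \in X \setminus (Y \cup X_j)$ lies in some $X_h$ with $h \notin \{i,j\}$ and thus $d(z,Y) = d(z,X_j)$. If $Y$ were giant, Proposition~\ref{thm:find-mmax}\ref{item:giant} would force $\PP = \Mmax$, so $Y = X_i$ would itself be a maximal mmodule; but then the strictly larger mmodule $Y \cup X_j$ would have to equal $X$ by maximality, again contradicting $|\Mmax| \geq 3$. Hence $Y$ must be dwarf. The main subtlety I expect is the implicit treatment of the $h = i$ edge case, together with the bookkeeping that distinguishes the giant case $\PP = \Mmax$ from the dwarf case $|\PP| = |\Mmax|+1$; once those are clarified, the proof is a clean assembly of the propositions already established.
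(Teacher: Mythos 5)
Your proof is correct and follows essentially the same route as the paper's: both directions are driven by \Cref{thm:find-mmax} (identifying the stable partition in the dwarf and giant cases respectively), with uniqueness and the converse settled by maximality of mmodules together with $|\Mmax|\geq 3$ from \Cref{lemma:no-d-cut}. Your version is in fact more detailed than the paper's terse argument (e.g., you spell out why $Y\cup X_j = X$ is impossible, and you route uniqueness through \Cref{mmodules}\ref{item:mmod3} where the paper simply invokes that $X(\beta_{i_0})$ is the unique maximal mmodule containing $Y$), but these are cosmetic differences, not a different approach.
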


\begin{proof}
  Let $Y$ be a dwarf $\rho$-component. From \Cref{thm:find-mmax}, and
  using the same notations, $Y = X(\gamma)$ and
  $X_j = X(\gamma) \setminus Y$ for some $j$. Then $j$ has the
  required properties, since $X(\gamma)$ is an mmodule. Conversely,
  for any index $j$ with those properties, $Y \cup X_j$ is an mmodule.
  As $X(\gamma)$ is the maximum mmodule containing $Y$, for any
  $j' \neq j$, $Y \cup X_{j'}$ is not an mmodule, proving that $j$ is
  unique. If $Y$ is a giant $\rho$-component, then either $Y$ is the
  union of at least 2 maximal mmodules, in which case $i$ does not
  exist, or $Y$ is a maximal mmodule, and if $j$ exists, then
  $Y \cup X_j$ is an mmodule, a contradiction.
\end{proof}

Now, we consider the case of non-connected spaces.

\begin{lemma}\label{lemma:mmod-component}
  Let $(X,d)$ be a non-connected Robinson space. Then any
  $\rhos$-mmodule $S$ of diameter at most $\rhos$  is also a
  $\rhos$-component.
\end{lemma}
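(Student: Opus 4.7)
The plan is to show that the edge set of $G_{<\rhos}$ and the edge set of $\Grho$ agree on any subset of diameter at most $\rhos$, and then deduce that the connected component $S$ of $\Grho$ matches exactly with a connected component of $G_{<\rhos}$.

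First I would observe the trivial inclusion of edges: any edge $xy$ of $G_{<\rhos}$ satisfies $d(x,y) < \rhos$, and in particular $d(x,y) \neq \rhos$, so it is also an edge of $\Grho$. As a consequence, every connected component of $G_{<\rhos}$ is contained in some connected component of $\Grho$.

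Next I would exploit the hypothesis $\diam(S) \leq \rhos$ to get the converse inclusion restricted to $S$. For any pair $x,y \in S$ that forms an edge of $\Grho$, we have $d(x,y) \neq \rhos$ by definition of $\Grho$, and $d(x,y) \leq \rhos$ because $\diam(S) \leq \rhos$; hence $d(x,y) < \rhos$, so $xy$ is also an edge of $G_{<\rhos}$. Since $S$ is connected in $\Grho$, it follows that $S$ is connected in $G_{<\rhos}$ as well.

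Finally I would argue maximality. Let $T$ be the connected component of $G_{<\rhos}$ containing $S$. By the first observation, $T$ is contained in some connected component of $\Grho$; since $T \supseteq S$ and $S$ is already a full connected component of $\Grho$, this forces $T \subseteq S$, hence $T = S$. Thus $S$ is a $\rhos$-component. There is no serious obstacle here — the only subtle point is keeping straight the direction of the edge inclusion and using the diameter hypothesis in the right place.
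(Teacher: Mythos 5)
Your proof is correct and rests on the same two observations as the paper's: inside a set of diameter at most $\rho$ the conditions $d(x,y)\neq\rho$ and $d(x,y)<\rho$ coincide, and no edge of $G_{<\rho}$ can leave $S$ (you get this from the global edge inclusion $E(G_{<\rho})\subseteq E(\Grho)$, the paper from the fact that all distances out of $S$ equal $\rho$). The paper packages the argument as ``$S$ is a union of $\rho$-components, none of which can be proper'' via a contradiction, but the substance is identical to your direct edge-set comparison.
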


\begin{proof}
  Clearly $S$ is the union of $\rhos$-components, because for any $x \in S$
  and $y \in X \setminus S$, $d(x,y) = \rhos$ by definition of
  $\rhos$-mmodules. Suppose there is a $\rhos$-component $C$ with
  $C \subsetneq S$. By definition of $\rhos$-components, for each
  $x \in C$, $y \in S \setminus C$, we have $d(x,y) \geq \rhos$.
  Since $\diam(S) \leq \rhos$, we get $d(x,y) = \rhos$, whence $C$ is
  a $\rho$-component, a contradiction.
\end{proof}

\begin{proposition}\label{thm:find-mmax-nc}
  Let $(X,d)$ be a non-connected Robinson space and let
  $C_1,\ldots,C_k$ be its $\rhos$-components. Let
  $I := \{ i \in \{1,\ldots,k\} : C_i \textrm{ is not a
    $\rhos$-mmodule}\}$. Then exactly one of the following assertions
  holds:
  \begin{enumerate}[label=(\arabic*)]
  \item\label{item:ncmmax2} $|I| = 0$; in this case, $\Mmax =
    \{\co{C_1},\ldots,\co{C_k}\}$ and there are no large
    $\rhos$-mmodules.
  \item\label{item:ncmmax1} $|I| \geq 2$; in this case,
    $S_0 := \bigcup_{i \in I} C_i$ is a large $\rhos$-mmodule, and
    $\Mmax = \{\co{C_j} : j \in \{1,\ldots,k\} \setminus I\} \cup \{\co{S_0}\}$.
    Moreover, the graph $H:=(I, \{ii': \textrm{ there are } x \in C_i,
      y \in C_{i'} \textrm{ such that } d(x,y) > \rhos\})$ is connected and
    bipartite, with bipartition $I = I_\star \cup I^\star$, and the
    maximal mmodules of $S_0$ are given by
    $\stablePart(\{S_\star,S^\star\})$, where
    $S_\star := \bigcup_{i \in I_\star} C_i$ and
    $S^\star := \bigcup_{i \in I^\star} C_i$.
  \end{enumerate}
\end{proposition}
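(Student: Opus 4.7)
The plan is to analyse how the $\rhos$-components aggregate into $\rhos$-mmodules. Since $G_{<\rhos}$ is a subgraph of $\Grho$, each $\rhos$-mmodule is a union of $\rhos$-components, merged precisely via the edges of $\Grho$ whose weight strictly exceeds $\rhos$. Accordingly I will introduce an auxiliary graph $H'$ on vertex set $\{C_1,\ldots,C_k\}$ whose edges record cross-component pairs at distance $> \rhos$; the $\rhos$-mmodules are then the unions $\bigcup_{i \in J} C_i$ where $J$ ranges over the connected components of $H'$, and $I$ coincides with the set of non-isolated vertices of $H'$.

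The first key observation is that a $\rhos$-mmodule containing at least two distinct $\rhos$-components must be \emph{large}: along any $H'$-path between two of its components there is a crossing edge of weight $> \rhos$, which witnesses a pair $x,y$ in the mmodule with $d(x,y) > \rhos$, so its diameter exceeds $\rhos$. Combined with \Cref{lemma:delta-mmodules}, which allows at most one large $\rhos$-mmodule, this forces $H'$ to have at most one non-singleton connected component. Moreover $|I| \neq 1$, since any edge of $H'$ contributes two non-isolated endpoints, yielding the required dichotomy. In case~(1), every $\rhos$-component is a $\rhos$-mmodule and \Cref{lemma:copartition-delta} gives $\Mmax = \{\co{C_1},\ldots,\co{C_k}\}$; none of these is large by the observation above. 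In case~(2), all non-isolated $\rhos$-components lie in the unique non-singleton component of $H'$, so $S_0 = \bigcup_{i \in I} C_i$ is precisely that $\rhos$-mmodule, which is large, and the claimed description of $\Mmax$ follows again from \Cref{lemma:copartition-delta}.

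The connectivity of $H$ then holds by construction, as $H$ is the non-singleton component of $H'$ restricted to $I$. For bipartiteness, I will invoke \Cref{lemma:large_block} to obtain the bipartition $S_0 = S_\star \cup S^\star$ into blocks with $\diam(S_\star), \diam(S^\star) \leq \rhos$ and $d(x,y) \geq \rhos$ for all $x \in S_\star$, $y \in S^\star$. Each $C_i$ with $i \in I$ must lie entirely in $S_\star$ or in $S^\star$, since $C_i$ is connected in $G_{<\rhos}$ whereas any cross-edge between the two sides carries weight at least $\rhos$; this defines the partition $I = I_\star \cup I^\star$. Every edge of $H$ carries weight $> \rhos$, so it cannot lie inside $S_\star$ or $S^\star$ whose internal diameters are at most $\rhos$, forcing each such edge to cross the bipartition.

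Finally, for the maximal mmodules of $S_0$ the correctness of $\stablePart(\{S_\star, S^\star\})$ will rely on the semantic description of $\stablePart$ from the Appendix (which guarantees that the algorithm returns the maximal mmodules of $(S_0, d)$ included in $S_\star$ or $S^\star$), combined with the claim that no maximal mmodule of $(S_0, d)$ straddles the bipartition. The latter is the main obstacle, which I will handle by contradiction: if $M$ is a proper mmodule of $(S_0, d)$ meeting both $S_\star$ and $S^\star$, then for any $z \in S_0 \setminus M$ and any $x \in M \cap S_\star$, $y \in M \cap S^\star$, the diameter bounds within each part combined with the $\geq \rhos$ inequality across force $d(z,x) = d(z,y) = \rhos$. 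This makes $M$ a union of connected components of $\Grho(S_0)$; but $S_0$ is itself a single connected component of $\Grho$, so $\Grho(S_0)$ is connected, forcing $M = S_0$ and contradicting the properness of $M$.
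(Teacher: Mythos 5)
Your argument follows essentially the same route as the paper's proof: the same three ingredients (\Cref{lemma:delta-mmodules} for the uniqueness of the large $\rhos$-mmodule, \Cref{lemma:large_block} for the bipartition of $S_0$, and \Cref{lemma:stable-correct} for $\stablePart$), and your final contradiction showing that no proper mmodule of $(S_0,d)$ meets both $S_\star$ and $S^\star$ is essentially the paper's (you conclude inside $\Grho(S_0)$ where the paper concludes in $\Grho(X)$; both work). The quotient graph $H'$ is a clean organizational device: it makes the identity $S_0=\bigcup_{i\in I}C_i$, the dichotomy $|I|=0$ versus $|I|\geq 2$, and the connectivity of $H$ immediate, where the paper instead routes these facts through \Cref{lemma:mmod-component}.

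There is, however, one genuine gap, in case~(1). You must show that when $|I|=0$ there is no large $\rhos$-mmodule, and you justify this ``by the observation above''. But that observation is the implication ``a $\rhos$-mmodule containing at least two $\rhos$-components is large''; what you need here is that a $\rhos$-mmodule consisting of a \emph{single} $\rhos$-component cannot be large, which is the converse and does not follow. The missing statement is true, and the fix is one line using a lemma you already invoke: if some $C_i$ were a large $\rhos$-mmodule, then \Cref{lemma:large_block}\ref{item:s01}--\ref{item:s03} would split $C_i$ into two nonempty parts $S_\star,S^\star$ with $d(x,y)\geq\rhos$ for every $x\in S_\star$, $y\in S^\star$, so no edge of $G_{<\rhos}$ crosses the split --- contradicting the fact that $C_i$ is connected in $G_{<\rhos}$. (Case~(2) is unaffected: there, $S_0$ is large by your observation, so \Cref{lemma:delta-mmodules} already forces every other $\rhos$-mmodule to be non-large.) With that line added, your proof is complete.
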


\begin{proof}
  Since $(X,d)$ is non-connected, by \Cref{lemma:delta-mmodules}, each
  $\rhos$-mmodule is an mmodule, and at most one $\rhos$-mmodule $S_0$
  has diameter greater than $\rhos$. All the $\rhos$-mmodules except
  $S_0$ are $\rhos$-components by \Cref{lemma:mmod-component}. If
  there is no such $S_0$, then this immediately leads to
  case~\ref{item:ncmmax2}.

  We may now assume that $S_0$ exists. Then by
  \Cref{lemma:large_block}, there is a bipartition
  $S_0 = S_\star \cup S^\star$. By
  \Cref{lemma:large_block}\ref{item:s03}, $S_\star$ and $S^\star$ are
  unions of $\rhos$-components, and $S_0 = \bigcup_{i \in I} C_i$. By
  \Cref{lemma:copartition-delta},
  $\Mmax = \{\co{C_j} : j \in \{1,\ldots,k\} \setminus I\} \cup \{\co{S_0}\}$.
  Furthermore, the graph $H$ is connected, because any connected
  component is an $\rhos$-mmodule by definition and $S_0$ does not
  contain a proper $\rhos$-mmodule. By
  \Cref{lemma:large_block}\ref{item:s02}, $I_\star \cup I^\star$ with
  $I_\star := \{i \in I : C_i \subseteq S_\star\}$ and
  $I^\star := \{i \in I : C_i \subseteq S^\star\}$ is the (unique)
  bipartition of $H$.

  Suppose that there is a non-trivial mmodule $M$ of $S_0$ with
  $x \in M \cap S_\star \neq \varnothing$ and
  $y \in M \cap S^\star \neq \varnothing$. Let
  $z \in S_0 \setminus M$, say $z \in S_\star \setminus M$. Then
  $\rhos \leq d(z,y) = d(z,x) \leq \rhos$ by
  \Cref{lemma:large_block}\ref{item:s02} and~\ref{item:s03} and the
  fact that $M$ is an mmodule. Also for $w \in X \setminus S_0$, we have
  $d(w,x) = d(w,y) = \rhos$. Hence $M$ is a $\rhos$-mmodule of $X$, a
  contradiction. Thus, for any maximal mmodule $M$ of $S_0$, either
  $M \subseteq S_\star$ or $M \subseteq S^\star$, hence by
  \Cref{lemma:stable-correct}, the maximum mmodules of $S_0$ are provided
  by $\stablePart(\{S_\star,S^\star\}))$.
\end{proof}


\Cref{thm:find-mmax,thm:find-mmax-nc} can be used to compute the
mmodule tree of a Robinson dissimilarity in optimal $O(|X|^2)$ time in
a top-down way, that is root first then recursively on each child. The
alternative way to compute the mmodule tree of an arbitrary
dissimilarity space is to compute the copoints of a point and to sort
them, leading to a root-to-leaf path. Then recurse on all subtrees
attached to that path. This gives an optimal $O(|X|^2)$-time
algorithm~\cite{EhGaMcCSu}. Our \Cref{algo:mmax} is more complicated,
 nevertheless, we think that it
shed complimentary light on the links between mmodule trees and the
dendrogram of the ultrametric subdominant.

\begin{algorithm}[htpb]
  \caption{Computes the mmodule tree of $(X,d)$}
  \label{algo:mmax}

  \flushleft{$\underline{\mmodTree(S)}$}
  \begin{algorithmic}[1]
    \Require{a Robinson space $(X,d)$, an mmodule $S \subseteq X$.}
    \Ensure{The mmodule tree of $S$}
    \If {$|S| = 1$}
      \Return $\Leaf\ x$, where $\{x\} = S$ \label{line:mmtleaf}
    \EndIf
    \Let $C_1,\ldots,C_k$ be the $\rhos$-component of $S$ \Comment{assume $|C_1| \leq |C_i|$ for any $i \in \{1,\ldots,k\}$}\label{line:def-Ci}
    \Let $J := \{ i \in \{1,\ldots,k\} : C_i \textrm{ is a $\rhos$-mmodule}\}$ and $I := \{1,\ldots,k\} \setminus I$\label{line:def-IJ}
    \If{$I = \varnothing$} \Comment{\mbox{\Cref{thm:find-mmax-nc}\ref{item:ncmmax2}}}
      \Return $\cap(\mmodTree(C_1),\ldots,\mmodTree(C_k))$ \label{line:mmtcap}
    \EndIf
    \If{$J = \varnothing$}
      \Let $S_1,\ldots,S_\ell := \stablePart(\{C_1, S \setminus C_1\})$ with $S_1 \subseteq C_1$\label{line:apply-stable-1}
      \For{$i \in \{1,\ldots,\ell\}$}\label{line:decide-dwarf-start}
        \If{for all $j \in \{2,\ldots,\ell\} \setminus \{i\}$, $d(S_1,S_j) = d(S_i,S_j)$}
          \Let $\{M_1,\ldots,M_{\ell-2}\} = \{\mmodTree(S_j) : j \in \{2,\ldots,\ell\} \setminus \{i\}\}$ \label{line:mmod-hard}
          \Return $\cup(\mmodTree(S_1 \cup S_i), M_1,\ldots,M_{\ell-2})$\label{line:mmtcupdwarf}\Comment{\mbox{\Cref{thm:find-mmax}\ref{item:dwarf}}}
        \EndIf
      \EndFor
      \Return $\cup(\mmodTree(S_1),\ldots,\mmodTree(S_\ell))$ \label{line:mmtcupgiant}\Comment{\mbox{\Cref{thm:find-mmax}\ref{item:giant}}}
    \EndIf
    \Let $\{\beta_1,\ldots,\beta_{k'}\} := \{ \mmodTree(C_j) : j \in J\}$\label{line:mmt-def-beta}
    \Let $H = (I, \{ij : \textrm{ there is $x \in C_i$, $y \in C_j$, with $d(x,y) > \rhos$}\})$
    \Let $I = I_\star \cup I^\star$ be the unique bipartition of $H$, $S_\star := \bigcup_{i \in I_\star} C_i$ and $S^\star := \bigcup_{i \in I^\star} C_i$
    \Let $S_1,\ldots,S_\ell := \stablePart(\{S_\star,S^\star\})$\label{line:apply-stable-2}
    \If{$\ell=2$}\Comment{\Cref{thm:find-mmax-nc}\ref{item:ncmmax1}, with $|I| = 2$ and $d(x,y)$ constant for $x \in S_\star$, $y \in S^\star$}
      \Return $\cap(\beta_1,\ldots,\beta_{k'},\cap(\mmodTree(S_\star),\mmodTree(S^\star))$\label{line:mmtsuperspecial}
    \EndIf
    \Return $\cap(\beta_1,\ldots,\beta_{k'},\cup(\mmodTree(S_1),\ldots,\mmodTree(S_\ell)))$\label{line:mmtspecial}\Comment{\mbox{\Cref{thm:find-mmax-nc}\ref{item:ncmmax1}}}
  \end{algorithmic}
\end{algorithm}

\begin{proposition}
  Let $(X,d)$ be a Robinson space, $\mmodTree(X)$ correctly computes
  the mmodule tree of $X$.
\end{proposition}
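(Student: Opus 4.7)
The plan is to proceed by strong induction on $|S|$. The base case $|S|=1$ is immediate, since $(S,d)$ has a single leaf as its mmodule tree, matching line~\ref{line:mmtleaf}. For the inductive step, I would assume that $\mmodTree$ is correct on every mmodule strictly smaller than $S$ and then analyse the algorithm according to the values of $I$ and $J$ computed at line~\ref{line:def-IJ}, checking that the three branches correspond to the three exhaustive and mutually exclusive structural situations covered by Propositions~\ref{thm:find-mmax} and~\ref{thm:find-mmax-nc}.

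When $I=\varnothing$, every $\rhos$-component is a $\rhos$-mmodule, which is exactly the hypothesis of Proposition~\ref{thm:find-mmax-nc}\ref{item:ncmmax2}, yielding $\Mmax(S) = \{\co{C_1},\ldots,\co{C_k}\}$ and a $\cap$-node root whose children induce $C_1,\ldots,C_k$; combined with the inductive hypothesis, this matches line~\ref{line:mmtcap}. When $J=\varnothing$, Lemma~\ref{lemma:mmod-component} forces $(S,d)$ to be connected (no $\rhos$-component can be a non-large $\rhos$-mmodule, and $S_0=S$ is impossible as shown in the argument after Proposition~\ref{thm:find-mmax-nc}); then Proposition~\ref{thm:find-mmax} applies with $Y=C_1$, and Lemma~\ref{lemma:dwarf-comp} provides the distance-test that the loop of the algorithm implements. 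In the dwarf subcase the located index $i$ identifies the unique part $S_i = X(\beta_{i_0})\setminus C_1$ for which $S_1\cup S_i$ is a maximal mmodule, so $\{S_1\cup S_i\}\cup\{S_j: j\neq 1,i\}$ equals $\Mmax(S)$, giving the $\cup$-node returned at line~\ref{line:mmtcupdwarf}; in the giant subcase no such $i$ can exist (as $S_1\cup S_i$ being an mmodule would contradict the maximality of $S_1$), so $\{S_1,\ldots,S_\ell\}=\Mmax(S)$, giving the $\cup$-node returned at line~\ref{line:mmtcupgiant}. Finally, when $I\neq\varnothing$ and $J\neq\varnothing$, Proposition~\ref{thm:find-mmax-nc}\ref{item:ncmmax1} applies: $S_0=\bigcup_{i\in I}C_i$ is the large $\rhos$-mmodule, the connected bipartite graph $H$ yields the bipartition $(S_\star,S^\star)$ of Lemma~\ref{lemma:large_block}, and $\stablePart(\{S_\star,S^\star\})$ returns the maximal mmodules of $S_0$. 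The root of $\TM(S)$ is then a $\cap$-node with children inducing each $C_j$ for $j\in J$ and the subtree for $S_0$; that subtree is a $\cap$-node when $\ell=2$ (bipartition case, line~\ref{line:mmtsuperspecial}) and a $\cup$-node when $\ell\geq 3$ (genuine partition, line~\ref{line:mmtspecial}), in accordance with the convention of Lemma~\ref{lemma:partition-copartition}. All recursive arguments are strict subsets of $S$, so the inductive hypothesis applies.

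The most delicate step is the dwarf/giant dichotomy in the connected case: one must verify that the loop condition faithfully encodes the characterization of Lemma~\ref{lemma:dwarf-comp}, and in particular that in the giant case no spurious index $i\in\{2,\ldots,\ell\}$ satisfies the predicate, for otherwise $S_1\cup S_i$ would be an mmodule strictly containing the maximal mmodule $S_1$. A second minor subtlety is treating $\ell=2$ separately in the case with a large $\rhos$-mmodule, because a bipartition is both a partition and a copartition and the convention fixes the root of $\TM(S_0)$ to be a $\cap$-node rather than a $\cup$-node. Beyond these points, the correctness is a direct assembly of the structural results of Propositions~\ref{thm:find-mmax} and~\ref{thm:find-mmax-nc}, Lemmas~\ref{lemma:giant-or-dwarf}, \ref{lemma:dwarf-comp}, and~\ref{lemma:mmod-component}, and the correctness of $\stablePart$ from Lemma~\ref{lemma:stable-correct}.
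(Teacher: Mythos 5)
Your proof is correct and takes essentially the same approach as the paper's: the same case analysis on $I$ and $J$ mapping the branches of the algorithm onto Propositions~\ref{thm:find-mmax} and~\ref{thm:find-mmax-nc}, with \Cref{lemma:dwarf-comp} justifying the dwarf/giant test and the $\ell=2$ bipartition convention handled separately. Your write-up is, if anything, slightly more explicit than the paper's about the induction and about why $J=\varnothing$ forces $(S,d)$ to be connected.
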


\begin{proof}
  Obviously the return at line~\ref{line:mmtleaf} is correct. The
  space $(S,d)$ is not connected precisely when there is no
  $\rhos$-mmodule, that is when $J = \varnothing$. Thus, the return at
  line~\ref{line:mmtcap} is a direct consequence
  of \Cref{thm:find-mmax-nc}\ref{item:ncmmax2} as $J$ is non-empty.

  If $J = \varnothing$, then $(S,d)$ is connected
  and \Cref{thm:find-mmax} applies. Deciding whether the
  $\rhos$-component $C_1$ is dwarf or giant is done in
  lines~\ref{line:decide-dwarf-start} to~\ref{line:mmtcupdwarf},
  following \Cref{lemma:dwarf-comp}. Consequently, the returns at
  line~\ref{line:mmtcupdwarf} and~\ref{line:mmtcupgiant} are correct
  by \Cref{thm:find-mmax}.

  Otherwise, $J \neq \varnothing$ implies that $(S,d)$ is not connected
  and \Cref{thm:find-mmax-nc} applies. As $I \neq \varnothing$ we are
  in case~\ref{item:ncmmax2} of \Cref{thm:find-mmax-nc}. If $\ell =
  2$, notice that we must have $\{S_1,S_2\} = \{S_\star, S^\star\}$:
  $S_0 = S_\star \cup S^\star$ has only two maximal mmodules and thus
  the root of its mmodule tree is a $\cap$-node. This justifies the
  return at line~\ref{line:mmtsuperspecial}. If $\ell > 2$, we
  follow \Cref{thm:find-mmax-nc}\ref{item:ncmmax1}, proving that
  line~\ref{line:mmtspecial} is also correct.
\end{proof}

\subsection{From the dendrogram \texorpdfstring{$\Td$}{Td} to the mmodule tree \texorpdfstring{$\TM$}{TM}}

To make \Cref{algo:mmax} efficient, one need to efficiently
compute the $\rhos$-components at each recursive step. To this end, we
use the dendrogram $\Td$ of the ultrametric
subdominant $(X,\hd)$. This is possible since the maximal clusters of
$\Td$ correspond to $\rhos$-components.

\begin{lemma}\label{lemma:comp-in-Td}
  Let $(X,d)$ be a dissimilarity space, $\alpha$ be an internal node
  in the dendrogram $\Td$, and $\beta$ be a child of $\alpha$. Then
  $X(\beta)$ is a $p(\alpha)$-component of $X(\alpha)$.
\end{lemma}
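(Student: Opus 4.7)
The plan is to show that the set $X(\beta)$ satisfies both conditions required of a $p(\alpha)$-component of $X(\alpha)$: (a) no edges of $G_{<p(\alpha)}$ cross from $X(\beta)$ to $X(\alpha)\setminus X(\beta)$, and (b) $X(\beta)$ is connected in the restriction of $G_{<p(\alpha)}$ to $X(\alpha)$. Part (a) will be immediate from the identity $\hd(u,v)=p(\lca(u,v))$, while part (b) will require using a minimum spanning tree to exhibit a concrete bottleneck path that stays inside $X(\beta)$.

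First I would handle (a). For $x\in X(\beta)$ and $y\in X(\alpha)\setminus X(\beta)$, the element $y$ lies in $X(\alpha)$ but in none of the subtrees rooted at the children of $\alpha$ that contain $x$, so $\lca(x,y)=\alpha$. Thus $\hd(x,y)=p(\alpha)$, and since $\hd\le d$ we get $d(x,y)\ge p(\alpha)$; hence $xy\notin E(G_{<p(\alpha)})$.

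Next I would prove (b). If $\beta$ is a leaf, $X(\beta)$ is a singleton and there is nothing to prove, so assume $\beta$ is an internal node; since potentials strictly increase along any root-directed path of $\Td$, we have $p(\beta)<p(\alpha)$. Fix $x,y\in X(\beta)$. Because $\lca(x,y)$ lies at or below $\beta$, $\hd(x,y)\le p(\beta)<p(\alpha)$. Let $T$ be a minimum spanning tree of $(X,d)$ and let $P$ be the unique $x$--$y$ path in $T$; then $P$ is a bottleneck shortest path, so $\max_{uv\in P}d(u,v)=\hd(x,y)<p(\alpha)$. I claim $P\subseteq X(\beta)$. Indeed, for any intermediate vertex $v$ of $P$, the $x$--$v$ subpath of $P$ is itself the bottleneck path in $T$ from $x$ to $v$, so its maximum edge weight equals $\hd(x,v)$; if $v\notin X(\beta)$, then $\lca(x,v)$ is either $\alpha$ (when $v\in X(\alpha)\setminus X(\beta)$) or a proper ancestor of $\alpha$ (when $v\notin X(\alpha)$), yielding $\hd(x,v)\ge p(\alpha)$ in both cases and contradicting that the maximum weight on $P$ is strictly less than $p(\alpha)$. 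Therefore every vertex of $P$ lies in $X(\beta)$ and every edge of $P$ has weight $<p(\alpha)$, providing an $x$--$y$ path in the restriction of $G_{<p(\alpha)}$ to $X(\alpha)$.

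Combining (a) and (b), $X(\beta)$ is precisely a connected component of $G_{<p(\alpha)}$ restricted to $X(\alpha)$, i.e., a $p(\alpha)$-component of $X(\alpha)$. The main delicate step is the claim that the bottleneck path in $T$ stays inside $X(\beta)$; an arbitrary path witnessing $\hd(x,y)$ in the complete graph on $X$ could leave $X(\alpha)$ altogether, and it is specifically the MST property (that every tree-path is already a bottleneck path) that rules this out.
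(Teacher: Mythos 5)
Your proof is correct and follows essentially the same route as the paper's: both first show that every edge between $X(\beta)$ and $X(\alpha)\setminus X(\beta)$ has weight at least $p(\alpha)$ (via $d\ge\hd$ and $\hd(x,y)=p(\alpha)$ across the cut), and then connect any two points of $X(\beta)$ by a path of maximum weight below $p(\alpha)$ that stays inside $X(\beta)$. Your explicit use of the minimum spanning tree to certify that the bottleneck path remains inside $X(\beta)$ is a careful elaboration of a step the paper's proof only asserts.
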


\begin{proof}
  By the definition and construction of $\Td$, for any $x \in X(\beta)$ and $y \in X(\alpha) \setminus X(\beta)$
  we have $d(x,y) \geq \hd(x,y) = p(\alpha)$. Hence
  $X(\beta)$ is an union of components $C_1,\ldots,C_\ell$ of
  $G_{< p(\alpha)}(X(\alpha))$. If $X(\beta)$ is not a
  $p(\alpha)$-component, then $\ell > 1$. Let
  $x \in C_1$ and $y \in X(\beta) \setminus C_1$. Then there is a path
  from $x$ to $y$ in $X(\beta)$ with maximum weight at most
  $p(\beta) < p(\alpha)$. This path contains an edge $x'y'$
  with $x \in C_1$, $y \in X(\beta) \setminus C_1$. But then
  $d(x',y') < p(\alpha)$, which is impossible. Hence  $X(\beta)$ is a
  $p(\alpha)$-component.
\end{proof}

In order to apply $\stableDendro$ (\Cref{algo:dendro-refine}) to the dendrogram $\Td$, we need:

\begin{lemma}\label{lemma:mmclust}
  For any dissimilarity space $(X,d)$, the dendrogram $\Td$ is coherent.
\end{lemma}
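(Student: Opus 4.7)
The plan is to prove the coherence of $\Td$ by contradiction, leveraging \Cref{lemma:comp-in-Td} which says that for an internal node $\alpha$ of $\Td$ and a child $\beta$ of $\alpha$, the set $X(\beta)$ is a $p(\alpha)$-component of $X(\alpha)$ — that is, there is a path in $X(\beta)$ connecting any two of its points using only edges of weight strictly less than $p(\alpha)$.

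Fix an mmodule $M$ of $(X,d)$ and let $\alpha$ be its $M$-pertinent node in $\Td$. I want to show that every child $\beta$ of $\alpha$ satisfies $X(\beta) \subseteq M$ or $X(\beta) \cap M = \varnothing$. If $\alpha$ is a leaf, or if $M = X(\alpha)$, the claim is vacuous or immediate. Otherwise $M \subsetneq X(\alpha)$, and by minimality of $\alpha$, $M$ meets at least two children of $\alpha$. Suppose for contradiction that some child $\beta$ of $\alpha$ contains both $x \in X(\beta) \cap M$ and $y \in X(\beta) \setminus M$, and pick $z \in X(\beta') \cap M$ for another child $\beta' \ne \beta$.

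By \Cref{lemma:comp-in-Td}, there is a path $x = v_0, v_1, \ldots, v_m = y$ contained in $X(\beta)$ with $d(v_i,v_{i+1}) < p(\alpha)$ for every $i$. Let $i^{\ast}$ be the largest index such that $v_{i^{\ast}} \in M$; then $v_{i^{\ast}} \in M$, $v_{i^{\ast}+1} \notin M$, and $d(v_{i^{\ast}},v_{i^{\ast}+1}) < p(\alpha)$. Because $M$ is an mmodule, $v_{i^{\ast}+1} \notin M$, and both $v_{i^{\ast}}$ and $z$ lie in $M$, we get $d(v_{i^{\ast}+1},z) = d(v_{i^{\ast}+1},v_{i^{\ast}}) < p(\alpha)$. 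Hence $\hd(z, v_{i^{\ast}+1}) \leq d(z,v_{i^{\ast}+1}) < p(\alpha)$. But $z \in X(\beta')$ and $v_{i^{\ast}+1} \in X(\beta)$ with $\beta \ne \beta'$, so $\lca(z,v_{i^{\ast}+1}) = \alpha$ in $\Td$, yielding $\hd(z,v_{i^{\ast}+1}) = p(\alpha)$, a contradiction.

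The only real step is step three: transporting the short edge $v_{i^{\ast}}v_{i^{\ast}+1}$ out of $X(\beta)$ into a short edge toward $X(\beta')$ via the mmodule property. The rest is bookkeeping. I expect no serious obstacle here, since the existence of the desired low-weight path inside $X(\beta)$ is already encapsulated in \Cref{lemma:comp-in-Td}, and the mmodule axiom directly produces the contradiction with the defining property $\hd(u,v) = p(\lca(u,v))$ of dendrograms.
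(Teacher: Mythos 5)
Your proof is correct and follows essentially the same route as the paper's: both extract a crossing edge $d(v_{i^\ast},v_{i^\ast+1})<p(\alpha)$ with one endpoint in $M$ and one outside, inside a single child (via \Cref{lemma:comp-in-Td}), transfer it to another child with the mmodule property, and contradict the fact that points in distinct children of $\alpha$ lie at distance at least $p(\alpha)$. The paper states the final contradiction as $M\subseteq X(\beta)$ violating $M$-pertinence rather than via the identity $\hd(u,v)=p(\lca(u,v))$, but the content is identical.
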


\begin{proof}
  Let $M$ be an mmodule, and $\alpha$ be the $M$-pertinent node in
  $\Td$, and suppose $X(\alpha) \neq M$. Let $\beta_1,\ldots,\beta_k$
  be the children of $\alpha$ intersecting $M$ ($k \geq 2$). By way of
  contradiction, suppose that there is $i \in \{1,\ldots,k\}$ with
  $X(\beta_i) \cap M \neq \varnothing$ and
  $X(\beta_i) \setminus M \neq \varnothing$. Since $X(\beta_i)$ is a
  $p(\alpha)$-component, there exists $x,y \in X(\beta_i)$ with
  $x \in M$, $y \notin M$ and $d(x,y) < p(\alpha)$. Let
  $z \in X(\beta_j)$ for some $j \neq i$. Then
  $d(y,z) \geq p(\alpha) > d(y,x)$, hence $z \notin M$ and
  $M \subseteq X(\beta_i)$. This contradicts the fact that $\alpha$ is
  $M$-pertinent.
\end{proof}

\begin{theorem} \label{mmodTreeSubdominant}
  Using $\stableDendro$ as a stable partition algorithm, one can
  implement $\mmodTree$ to run in time $O(|X|^2)$.
\end{theorem}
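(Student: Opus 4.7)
The plan is to implement $\mmodTree$ (\Cref{algo:mmax}) so that every recursive call $\mmodTree(S)$ carries along a coherent $S$-tree, and to replace every invocation of $\stablePart$ by the tree-based variant $\stableDendro$ from \Cref{algo:dendro-refine}. For the top-level call I would use $\Tt(X) := \Td$, which is coherent by \Cref{lemma:mmclust} and can be built in $O(|X|^2)$ time as explained in \Cref{ssec23}. A key observation is that any subtree of a coherent tree is itself coherent, and that sub-dendrograms of $\Td$ are dendrograms of the restricted subdominant ultrametric; so by \Cref{lemma:comp-in-Td} the children of their roots induce exactly the $\rhos$-components of the current subproblem, giving line~\ref{line:def-Ci} of \Cref{algo:mmax} for free.

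For correctness, the partition-refinement step of \Cref{algo:mmax} is used only through two properties: the returned sets are the maximum mmodules of $S$ contained in the prescribed input parts, and each such mmodule comes with a representation allowing further recursion. \Cref{lemma:correction-stableDendro} supplies the first property, and \Cref{lemma:no-mmod-cut} supplies the second by guaranteeing that the trees returned by $\stableDendro$ remain coherent. Hence every step of \Cref{algo:mmax}, in particular the calls at lines~\ref{line:apply-stable-1} and~\ref{line:apply-stable-2}, remains valid verbatim when $\stablePart$ is replaced by $\stableDendro$.

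For the complexity, \Cref{lemma:complexity-stableDendro} bounds one call to $\stableDendro$ partitioning an ambient set $S$ into mmodules $R_1,\ldots,R_\ell$ by $O(\sum_j |R_j|\cdot|S\setminus R_j|)$. The output $R_1,\ldots,R_\ell$ is exactly the set of children of the corresponding internal node $\alpha$ of $\TM$ (with $X(\alpha)=S$), so summing over all internal nodes of $\TM$ gives
\[
\sum_{\alpha}\sum_{\beta\text{ child of }\alpha} |X(\beta)|\cdot|X(\alpha)\setminus X(\beta)| \;=\; 2\,\bigl|\{\{x,y\}\subseteq X : x\neq y\}\bigr| \;=\; O(|X|^2),
\]
since each unordered pair $\{x,y\}$ is counted exactly twice, once at each of the two children of its lowest common ancestor in $\TM$ containing one of the endpoints. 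The remaining per-call bookkeeping --- reading off $\rhos$ as the weight of the root of the current sub-dendrogram, splitting $\{1,\ldots,k\}$ into $I$ and $J$ by scanning the edges leaving each $C_i$, and the dwarf/giant test of lines~\ref{line:decide-dwarf-start}--\ref{line:mmtcupdwarf} --- fits into the same amortized budget when charged to pairs separated at the current node, and the initial construction of $\Td$ contributes only $O(|X|^2)$.

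The main obstacle will be to keep access to $\rhos$-components efficient inside a recursive call after a $\stableDendro$ step, since \Cref{lemma:no-mmod-cut} only guarantees coherence of the returned trees, not the dendrogram property required for \Cref{lemma:comp-in-Td}. I would bypass this difficulty by never recursing on the tree returned by $\stableDendro$ itself: once a maximum mmodule $R$ of the current subproblem is identified, I would invoke $\mmodTree(R)$ using instead the restriction $\Td|_R$ of the globally precomputed dendrogram, which can be extracted in $O(|R|)$ time by walking the ancestors of the leaves of $R$ in $\Td$ and contracting nodes left with a single child. This restriction is coherent and is a dendrogram of $(R,\hd|_R)$, so \Cref{lemma:comp-in-Td} applies at the next level and the amortized $O(|X|^2)$ bound is preserved.
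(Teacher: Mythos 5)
Your overall architecture matches the paper's (seed the recursion with $\Td$, replace $\stablePart$ by $\stableDendro$, charge refinement costs to pairs separated at the current node), and you correctly isolate the central obstacle: $\stableDendro$ only returns coherent trees, while reading off the $\rhos$-components at the next level via \Cref{lemma:comp-in-Td} requires a genuine dendrogram. But your fix does not close this gap. The contracted restriction $\Td|_R$ of the global dendrogram is the dendrogram of $\hd|_R$, the \emph{restriction} of the subdominant ultrametric of $(X,d)$; what line~\ref{line:def-Ci} of \Cref{algo:mmax} needs is the dendrogram of $\widehat{d|_R}$, the subdominant ultrametric \emph{of the subspace} $(R,d|_R)$. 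These differ in general, even when $R$ is an mmodule: a bottleneck path between $x,y\in R$ may leave $R$ and return at cost $\delta_R:=\min_{v\notin R}d(R,v)$, so $\hd(x,y)=\min\{\widehat{d|_R}(x,y),\delta_R\}$, and whenever $\diam(R)>\delta_R$ (as for a large $\rhos$-mmodule) the two dendrograms have different root weights and possibly different shapes, yielding the wrong $\rhos$ and the wrong components for the recursive call. Making your scheme sound therefore requires proving that every set actually passed to a recursive call of $\mmodTree$ satisfies $\diam(R)\le\delta_R$, so that no truncation occurs; that verification is exactly the technical content of the paper's proof (the claims showing each recursed-upon set has diameter at most the ambient $\rhos$, plus the separate treatment of $S_1\cup S_i$ and of $S_\star,S^\star$), and it is the step you assert rather than prove.

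Two further weaknesses. First, the clean bound ``each pair is counted twice at its lowest common ancestor in $\TM$'' only covers the idealized cost: at line~\ref{line:apply-stable-1} the input $\{C_1,S\setminus C_1\}$ is not a partition into unions of maximal mmodules when $C_1$ is a dwarf component, which incurs an extra $O(|S_1|\,|S_i|)$ cost and causes dwarf components to be refined again one level deeper; the paper needs an explicit charging of these overruns to the recursive calls $\mmodTree(S_1\cup S_i)$ and $\mmodTree(S_j)$, and your one-sentence appeal to ``the same amortized budget'' does not substitute for it. Second, extracting $\Td|_R$ by walking ancestors of the leaves of $R$ is not $O(|R|)$ as stated (the walked paths can have total length $\Theta(|R|\cdot|X|)$); this is repairable with standard auxiliary-tree techniques, but it is not free and you would need to say how.
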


\begin{proof}
  To prove the theorem, we need to establish  that (1) at each recursive call, we can build the
  dendrogram of the ultrametric subdominant for that subset; (2)  all these computations can be
  performed efficiently. Consider a call
  to $\mmodTree(\Td(S))$. Suppose that
  $\Td(S) = \Node(\beta_1,\ldots,\beta_k)$, and notice that
  $C_i = X(\beta_i)$ and $\Td(C_i) = \beta_i$ for all
  $i \in \{1,\ldots,k\}$ by \Cref{lemma:comp-in-Td}.

  We start by proving (1). First, we justify the call $\stablePart(\{C_1,S \setminus C_1\})$ in
  line~\ref{line:apply-stable-1}. $\Td(C_1)$ is available as a child
  in $\Td(S)$. Then removing that child, we get a coherent tree
  $\Tt(S \setminus C_1)$ (notice that it may not be a dendrogram).  Next we prove, through several claims, that all calls to
  $\stablePart$ return dendrograms of the ultrametric subdominants of
  their respective subsets.

  \begin{claim}\label{claim:mmt-analysis-1}
    Let $\Tt(S_i)$ be an $S_i$-tree obtained at
    lines~\ref{line:apply-stable-1} or~\ref{line:apply-stable-2}. If
    $\diam(S_i) \leq \rhos$, then $\Td(S_i) = \Tt(S_i)$.
  \end{claim}

  \begin{proof}
    If $\Tt(S_i)$ is a subtree of $\Td(S)$, then
    $\Td(S_i) = \Tt(S_i)$. Otherwise let
    $\Tt(S_i) = \Node(\beta_1,\ldots,\beta_n)$ where
    $\beta_1,\ldots,\beta_n$ define a proper subset of children of a node
    $\alpha$ in $\Td(S)$. By the properties of dendrograms, there
    exist $x \in S_i$ and $y \in X(\alpha) \setminus S_i$ with
    $d(x,y) = p(\alpha)$. As $S_i$ is a $\rhos$-component, we deduce
    $p(\alpha) \geq \rhos$. Thus $\alpha$ is the root of $\Td(S)$. Now
    let $x,y \in S_i$ be in distinct children of $\alpha$. Then
    $d(x,y) \geq \rhos$, but also $d(x,y) \leq \rhos$ since
    $\diam(S_i) \leq \rhos$. Consequently, $d(x,y) = \rhos$. Hence any two
    children of $\Tt(S_i)$ are at the same distance $\rhos$, proving
    that $\Td(S_i) = \Tt(S_i)$.
  \end{proof}

  \begin{claim}\label{claim:mmt-analysis-2}
    For $j \in \{1,\ldots,\ell\}$, let $\Tt(S_j)$ be an $S_j$-tree
    obtained at line~\ref{line:apply-stable-1}. Then
    $\Td(S_j) = \Tt(S_j)$.
  \end{claim}

  \begin{proof}
    If there is a child $\gamma$ of the root of $\TM(S)$ with
    $X(\gamma) = S_j$, then
    $\diam(S_j) \leq \min_{\gamma'} d(S_j,\gamma')$ where $\gamma'$
    ranges over the other children of the root of $\TM$. Hence
    $\diam(S_j) \leq \rhos$ and the claim follows by
    \Cref{claim:mmt-analysis-1}.

    Otherwise, $S_1$ is a dwarf component and $S_j$ is either $S_1$ or
    $S_i$ in line~\ref{line:mmtcupdwarf}. In either case,
    $\diam(S_j) \leq \rhos$ and again \Cref{claim:mmt-analysis-1}
    applies.
  \end{proof}

  \begin{claim}\label{claim:mmt-analysis-3}
    For $j \in \{1,\ldots,\ell\}$, let $\Tt(S_j)$ be an $S_j$-tree
    obtained at line~\ref{line:apply-stable-2}. Then
    $\Td(S_j) = \Tt(S_j)$.
  \end{claim}

  \begin{proof}
    By \Cref{lemma:large_block}, $\diam(S_\star) \leq \rhos$ and
    $\diam(S^\star) \leq \rhos$. Hence $\diam(S_j) \leq \rhos$ and the
    claim follows by \Cref{claim:mmt-analysis-1}.
  \end{proof}

  \Cref{claim:mmt-analysis-2,claim:mmt-analysis-3} imply that
  almost all recursive calls to $\mmodTree$ are valid since we know the
  dendrogram for each set. It remains to consider the case of $S_1 \cup S_i$ in
  line~\ref{line:mmtcupdwarf}, which happens when $C_1$ is a dwarf
  component. Then $S_1 = C_1$. If $S_i$ is a single $\rhos$-component,
  then $\Tt(S_i)$ is a single child of the root of $\Td(S)$ and
  $\Td(S_1 \cup S_i) = \Node(\Tt(S_1),\Tt(S_i))$. Otherwise, $S_i$ is a
  union of $\rhos$-components and $\Tt(S_i)$ is a join of several
  children of the root of $\Td(S)$, and $\Td(S_1 \cup S_i)$ is
  obtained by adding $\Tt(S_1)$ as a child to $\Tt(S_i)$. Thus, the
  recursive call $\mmodTree(S_1 \cup S_i)$ is also valid. 
  We also have to justify the call $\stablePart(\{S_\star,S^\star\})$ in
  line~\ref{line:apply-stable-2}. Consider $S_\star$. If
  $|I_\star| = 1$, then $S_\star$ is a $\rho$-component and thus
  $\Td(S_\star) = \Tt(C_i)$ for $i \in I_\star$. Otherwise, by
  definition of $H$, for each $i, i' \in I_\star$, $x \in C_i$ and
  $y \in C_{i'}$, we have $d(x,y) \geq \rhos$ (since $C_i$ and $C_{i'}$ are
  $\rhos$-components) and $d(x,y) \leq \rhos$ (as, in $H$, $I_\star$ is an
  independent set of vertices). This implies that $\Td(S_\star)$ is the join of
  the $\Tt(C_i)$ for $i \in I_\star$. Similarly, $\Td(S^\star)$ is the
  join of $\Tt(C_i)$ for $i \in I^\star$. This concludes the proof of
  assertion (1).

  We now prove the assertion (2) that $\mmodTree(X)$ can be implemented in $O(|X|^2)$.
  First by \Cref{prop:td-analysis} the dendrogram $\Td(X)$ can be
  computed in $O(|X|^2)$. Then we analyse the cost of
  $\mmodTree(\Td(S))$ without counting the recursive calls, depending
  on the line of return. Observe that the cost of
  lines~\ref{line:def-Ci} and~\ref{line:def-IJ} is
  $O(\sum_{i=1}^k |C_i| |S \setminus C_i|)$. We will relate in each case  the cost
  to the number $N$ of pairs of elements in distinct children
  of the root of $\TM(S)$.

  If $\mmodTree(\Td(S))$ returns on line~\ref{line:mmtcap} (the case
  of a non-special $\cap$-node), then the total cost is
  $O(\sum_{i=1}^k |C_i| |S \setminus C_i|)$, which is proportional to
  $N$. As we will see in the next two cases, this case can be charged
  with an additional cost, still proportional to $N$.

  If $\mmodTree(\Td(S))$ returns on line~\ref{line:mmtcupgiant} (the
  case of a $\cup$-node discovered through a giant component $C_1$),
  then the total cost is
  $$O\left(\sum_{i=1}^k|C_i| |S \setminus C_i| + \sum_{i=1}^\ell |S_i| |S \setminus S_i|\right)$$
  (counting line~\ref{line:apply-stable-1}). The left-hand term is the
  number of pairs of elements in distinct children of the root of
  $\TM(S)$. By \Cref{lemma:giant-or-dwarf}, each component $C_i$ is
  either dwarf or giant. If $C_i$ is giant, $x \in C_i$ and
  $y \in S \setminus C_i$, then $x$ and $y$ are not in the same
  $\rhos$-component, hence those pairs $x,y$ are counted in
  $\sum_{i =1}^\ell |S_i| |S \setminus S_i|$. Thus, without counting
  the additional contribution from dwarf components, we get a cost
  proportional to $N$. It remains to count the pairs $x,y$ where $x$
  and $y$ are in two distinct dwarf components that are part of the
  same $S_i$. In that case
  $\TM(S_i) = \cup(\gamma_1,\ldots,\gamma_m)$ where
  $X(\gamma_1),\ldots,X(\gamma_m)$ are the dwarf components from that
  $S_i$. We thus charge the uncounted cost induced by those dwarf
  components to the call $\mmodTree(\Td(S_i))$ (the charged cost is
  proportional to its $N$ value).

  If $\mmodTree(\Td(S))$ returns on line~\ref{line:mmtcupdwarf} (the
  case of a $\cup$-node discovered from a dwarf component), the
  analysis is similar to the previous case. Notice that there is an
  additional cost of $O(|S_1| |S_i|)$ induced in
  line~\ref{line:apply-stable-1} by the fact that $S_1$ and $S_i$ are
  not mmodules. We charge it to $\mmodTree(S_1 \cup S_i)$ as $S_1$
  is a dwarf component of the mmodule $S_1 \cup S_i$.

  If $\mmodTree(\Td(S))$ returns on lines~\ref{line:mmtsuperspecial}
  or~\ref{line:mmtspecial} (the case of a special $\cap$-node), then
  the total cost is
  $$O\left(\sum_{i=1}^k |C_i| |S \setminus C_i| + \sum_{i \in I} |C_i| |(S_\star \cup S^\star) \setminus C_i| + \sum_{i=1}^l |S_i| |(S_\star \cup S^\star) \setminus S_i|\right).$$
  Let $i \in \{1,\ldots,k\}$. If $i \in J$, then $C_i = X(\beta_j)$
  for some $j \in \{1,\ldots,k'\}$, that is $C_i$ is the set induced
  by some child of $\TM(S)$. Otherwise, $i \in I$. If
  $C_i = \bigcup_{i \in L} S_i$ for some $L \subset \{1,\ldots,\ell\}$,
  then
  $|C_i| |(S_\star \cup S^\star) \setminus C_i| \leq \sum_{j \in L} |S_j| |(S_\star \cup S^\star) \setminus S_j|$.
  Otherwise there is some $j \in \{1,\ldots,\ell\}$ such that
  $C_i \cap S_j \neq \varnothing$ and
  $S_j \setminus C_i \neq \varnothing$. We may assume that
  $S_j \subseteq S_\star$ (the case $S_j \subseteq S^\star$ is
  similar). Since $\diam(S_\star) \leq \rhos$, we obtain
  $\diam(S_j) \leq \rhos$. Thus for any $x \in S_j \cap C_i$ and
  $y \in S_j \setminus C_i$, $d(x,y) = \rhos$. Thus
  $\TM(S_j) = \cap(\gamma_1,\ldots,\gamma_m)$ with
  $\rhos(S_j) = \rhos$ and $X(\gamma_1),\ldots,X(\gamma_m)$ are
  $\rhos$-components of $S$, one of them being $C_i$. That is, $C_i$
  is a dwarf component of $S_\star \cup S^\star$. Similarly to the two
  previous case, we may charge the cost related to the dwarf
  components to the corresponding recursive call $\mmodTree(S_j)$. To
  conclude that case, we get that the total cost is composed of one
  term that is proportional to the number of pairs of elements
  separated by the current call:
  $$ \sum_{j \in J} |C_j| |S \setminus C_j| + \sum_{i=1}^{l} |S_i| |S \setminus S_i|,$$
  and another term corresponding to dwarfs components in
  $S_\star \cup S^\star$, that is charged one their respective
  recursive calls.

  Finally, each call incurs a cost proportional to the number of pairs
  of distinct elements of $S$ that are not together in a deeper
  recursive call. Hence each pair of $\binom{X}{2}$ is counted exactly
  once, thus the total complexity is $O(|X^2|)$.
\end{proof}

\subsection{Nodes of \texorpdfstring{$\Td$}{Td} and nodes of \texorpdfstring{$\TM$}{TM} and \texorpdfstring{$\TPQ$}{TPQ}}

We conclude this section with a characterization of nodes of $\TM$ and
$\TPQ$ that correspond to nodes of $\Td$.

\begin{lemma}\label{lemma:cluster-pqt}
  Let $(X,d)$ be a Robinson space with PQ-tree $\TPQ$ and let $\beta$
  be an internal node of $\TPQ$. Then one of the following assertions
  holds:
  \begin{enumerate}[label=(\roman*)]
  \item\label{item:cluster1} $X(\beta)$ is a cluster of $\Td$,
  \item\label{item:cluster2} $\beta$ is P-node, $\beta$ is the child of a
    Q-node $\alpha$, and for each child $\gamma$ of $\beta$,
    $X(\gamma)$ is a cluster of $\Td$.
  \end{enumerate}
\end{lemma}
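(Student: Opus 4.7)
The plan is to proceed by induction on the depth of $\beta$ in $\TPQ$. The base case, $\beta$ being the root of $\TPQ$, is immediate since $X(\beta) = X$ is the top cluster of $\Td$, so (i) holds. For the inductive step, let $\alpha$ denote the parent of $\beta$. If the inductive hypothesis places $\alpha$ in case~(ii), then $X(\beta)$ is automatically a cluster (as a child of $\alpha$) and (i) holds for $\beta$, so one reduces to the situation where $X(\alpha)$ is a cluster of $\Td$. The key preliminary observation is that when $X(\alpha)$ is a cluster, the subdominant ultrametric of $(X(\alpha), d)$ coincides with $\hd$ restricted to $X(\alpha)$---since any bottleneck path between points of $X(\alpha)$ can be kept inside $X(\alpha)$---so the $\Td$-weight of the $X(\alpha)$-node equals $\rho := \rho(X(\alpha))$ and its $\Td$-children are exactly the $\rho$-components of $X(\alpha)$ by \Cref{lemma:comp-in-Td}.

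The heart of the argument is a case split on the root $\alpha$ via \Cref{prop:pqroot-analysis} applied to the Robinson subspace $(X(\alpha), d)$, combined with the $\TPQ$/$\TM$ correspondence of \Cref{prop:node-to-node}. If $\alpha$ is a P-node, each child set $X(\beta_i)$ is a $\rho$-mmodule of diameter $\leq \rho$, hence a $\rho$-component by \Cref{lemma:mmod-component} and therefore a cluster of $\Td$---yielding (i). If $\alpha$ is a conical Q-node, \Cref{lemma:cut-separable} identifies the apex child $\beta^\star$ with $X(\beta^\star) = S_1 \cup \cdots \cup S_k$ a union of small $\rho$-mmodules, each a $\rho$-component and hence a cluster of $\Td$; when $k \geq 2$, $\beta^\star$ is a P-node under the Q-node $\alpha$ with children $\TPQ(S_j)$ giving clusters (case~(ii)), and when $k = 1$, $X(\beta^\star) = S_1$ is itself a cluster (case~(i)). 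If $\alpha$ is a non-conical (connected) Q-node, each $X(\beta_i)$ is a maximal mmodule; when $X(\beta_i)$ fails to be a single $\rho$-component---either because multiple dwarf $\rho$-components sit inside it, or because it strictly inhabits a giant $\rho$-component shared with other maximal mmodules---the corresponding $\TM$-node is a non-special $\cap$-node (via \Cref{lemma:giant-or-dwarf}), so by \Cref{prop:node-to-node} $\beta_i$ is a P-node in $\TPQ$ whose children-sets are $\rho$-components of $X(\alpha)$, yielding~(ii).

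The main obstacle is the non-apex case of a conical Q-node $\alpha$: here $\beta$ is a non-apex child with $X(\beta) \subseteq S_0$, the large $\rho$-mmodule, but $S_0$ is not itself a cluster of $\Td$, so the naive ``descend from cluster to cluster'' recursion breaks. To overcome this I would invoke \Cref{lemma:large_block} to obtain the bipartition $S_0 = S_\star \cup S^\star$ with $\diam(S_\star), \diam(S^\star) \leq \rho$, so that by \Cref{lemma:cut-separable} each non-apex $\beta_j$ has $X(\beta_j)$ entirely inside $S_\star$ or inside $S^\star$. A short argument then shows that $\hd$ restricted to $S_\star$ (resp.\ $S^\star$) equals the subdominant of $(S_\star, d)$ (resp.\ $(S^\star, d)$)---any detour leaving $S_\star$ uses an edge of weight $\geq \rho$, while the intrinsic bottleneck distances in $S_\star$ are at most $\diam(S_\star) \leq \rho$---so clusters of $\Td$ contained in $S_\star$ agree with clusters of $\Td(S_\star)$. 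This permits the preceding three-way case analysis to be applied recursively inside the Robinson subspace $(S_0, d)$, whose PQ-tree is the Q- or P-tree obtained from $\TPQ(X(\alpha))$ by deleting the apex, concluding that (i) or (ii) holds for $\beta$.
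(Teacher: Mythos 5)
Your top-down induction and the reduction to the case where the parent's leaf set is a cluster are sound, and the P-node case and the apex-child case are handled correctly. The genuine gap is in the treatment of a child $\beta_i$ of a Q-node $\alpha$: you assert that whenever $X(\beta_i)$ is not a single $\rhos$-component, \Cref{lemma:giant-or-dwarf} and \Cref{prop:node-to-node} make $\beta_i$ a P-node whose children induce $\rhos$-components of $X(\alpha)$. That covers only the dwarf situation of \Cref{lemma:giant-or-dwarf}\ref{item:cc2}. In the giant situation, $X(\beta_i)$ can be a \emph{proper} subset of a giant $\rhos$-component $C$ shared with other maximal mmodules, and then nothing visible at level $\rhos$ decides the question: $X(\beta_i)$ may be a cluster strictly below $C$ in $\Td$, or it may fail to be a cluster, and in the latter case its children are clusters at a level strictly below $\rhos$, not $\rhos$-components. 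Concretely, take $X=\{a,b,c,e\}$ with $d(a,b)=d(a,c)=d(b,c)=1$, $d(c,e)=1.5$, $d(a,e)=d(b,e)=2$: here $\TPQ = Q(P(a,b),c,e)$, the space is connected, $\rhos=1.5$, the $\rhos$-components are the giant sets $\{a,b,c\}$ and $\{e\}$, the clusters of $\Td$ are the singletons, $\{a,b,c\}$ and $X$; so $X(\beta_1)=\{a,b\}$ is not a cluster, and its children $\{a\},\{b\}$ are clusters but emphatically not $\rhos$-components. Deciding between alternatives (i) and (ii) in this situation requires comparing $\diam(X(\beta_i))$ with the weight of the \emph{smallest cluster containing} $X(\beta_i)$ — a comparison your argument never makes, and which is precisely the engine of the paper's proof (a squeeze via \Cref{lemma:monotone-diameter} shows that a non-cluster node $\beta$ satisfies $\rhos(X(\beta))=\diam(X(\beta))=$ that weight, whence $\beta$ is a P-node by \Cref{lemma:some-d-cut}, its parent must be a Q-node, and its children must be clusters because they cannot themselves have a Q-node parent).

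The same gap resurfaces, compounded, in your last paragraph. The non-apex children of a conical Q-node do lie in $S_\star$ or $S^\star$ by \Cref{lemma:large_block,lemma:cut-separable}, and your observation that $\hd$ restricts correctly to $S_\star$ and $S^\star$ is fine; but ``applying the preceding case analysis recursively inside $(S_0,d)$'' is not licensed: $S_0$ is not a cluster, $\widehat{d|_{S_0}}$ differs from $\hd|_{S_0}$ across the $(S_\star,S^\star)$ cut, and $\rhos(S_0)$ may strictly exceed $\rhos$, so neither the maximal mmodules of $(S_0,d)$ nor its $\rhos(S_0)$-components need line up with clusters of $\Td$. Even restricting attention to $S_\star$, a non-apex child whose leaf set sits strictly inside a $\rhos$-component of $S_\star$ runs into exactly the giant-component problem above. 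To close the argument you would need to import the diameter-versus-cluster-weight comparison (or an equivalent descent into sub-clusters); with it, the whole case analysis collapses into the paper's two-paragraph direct proof.
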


\begin{proof}
  Suppose that $X(\beta)$ is not a cluster, in particular $\beta$ is
  not the root of $\TPQ$, and let $\alpha$ be its parent node. Let
  $\rhos := \rhos(X(\beta))$ be the minimum value such that the graph
  $G_{\leq \rhos}(X(\beta))$ is connected. Let also $C$ be the
  minimal cluster containing $X(\beta)$, and let $\rhos'$ be the
  weight of the node corresponding to $C$ in the vertex representation of $\Td$. In particular,
  $\min \{d(x,z): x \in X(\beta), z \in C \setminus X(\beta)\} \leq \rhos'$. By
  \Cref{lemma:monotone-diameter} this implies
  $\rhos \leq \diam(X(\beta)) \leq \rhos'$.

  If $\rhos < \rhos'$, then there would be a cluster with value at
  most $\rhos$ containing $X(\beta)$, contradicting the minimality of
  $C$. Thus $\rhos = \diam(X(\beta)) = \rhos'$, which implies that
  $\Grho(X(\beta))$ is not connected, and each of its
  $\rhos$-components has diameter at most $\rhos$. By
  \Cref{lemma:some-d-cut}, $\beta$ is a P-node, and
  any  children $\gamma$ is a $\rhos$-component of $X(\beta)$.

  By way of contradiction, suppose that $\alpha$ is a P-node. By
  \Cref{lemma:monotone-diameter}\ref{item:mono3},
  $\rhos(\alpha) > \rhos$. As $X(\alpha)$ is a block, for all
  $x \in X(\alpha)$ and $y \in X \setminus X(\alpha)$,
  $d(x,y) \geq \rhos(\alpha) > \rhos$. Also, as $\alpha$ is a P-node,
  for each $x \in X(\beta)$ and $y \in X(\alpha) \setminus X(\beta)$,
  $d(x,y) = \rhos(\alpha) > \rhos$. But then, $X(\beta)$ is a cluster
  of value $\rhos$, contradiction. Thus $\alpha$ is a Q-node.

  Since we proved that any node of the PQ-tree $\TPQ$ that does not induce
  a cluster of $\Td$ has a Q-node parent, we deduce that each child of the P-node $\beta$
  induces a cluster. Thus~\ref{item:cluster2} is proved.
\end{proof}

Leveraging the translation between PQ-trees and mmodule trees, we get:

\begin{lemma}\label{lemma:cluster-mt}
  Let $(X,d)$ be a Robinson space with mmodule tree $\TM$ and $\beta$
  be an internal node of $\TM$. Then one of the following assertions
  holds:
  \begin{enumerate}[label=(\roman*)]
  \item\label{item:clust1} $X(\beta)$ is a cluster of $\Td$,
  \item\label{item:clust2} $\beta$ is the large child of a $\cap$-node,
  \item\label{item:clust3} $\beta$ is a $\cap$-node and is the child
    of a $\cup$-node, and for each child $\gamma$ of $\beta$,
    $X(\gamma)$ is a cluster of $\Td$.
  \end{enumerate}
\end{lemma}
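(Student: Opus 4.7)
The plan is to transfer Lemma~\ref{lemma:cluster-pqt} from PQ-trees to mmodule trees via the node-to-node correspondence of Proposition~\ref{prop:node-to-node}. Let $\beta$ be an internal node of $\TM$. First, if $\beta$ is the large child of a $\cap$-node in $\TM$, case~(ii) holds and we are done. Otherwise, by Proposition~\ref{prop:node-to-node}, there exists a node $\beta'$ of $\TPQ$ with $X(\beta) = X(\beta')$, and I apply Lemma~\ref{lemma:cluster-pqt} to $\beta'$.

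If Lemma~\ref{lemma:cluster-pqt}(i) gives that $X(\beta') = X(\beta)$ is a cluster of $\Td$, case~(i) holds. So the interesting situation is when $\beta'$ is a P-node whose parent $\alpha'$ in $\TPQ$ is a Q-node, and every child $\gamma'$ of $\beta'$ satisfies that $X(\gamma')$ is a cluster of $\Td$. By Proposition~\ref{prop:node-to-node}, a P-node of $\TPQ$ corresponds to a non-special $\cap$-node in $\TM$, so $\beta$ is a non-special $\cap$-node. Moreover, no child of the P-node $\beta'$ can be a split node, since by definition a split node is the apex child of a conical Q-node; hence each child $\gamma'$ of $\beta'$ corresponds to a child $\gamma$ of $\beta$ in $\TM$ with $X(\gamma) = X(\gamma')$, which is a cluster of $\Td$. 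This gives the second conjunct of case~(iii).

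It remains to show that the parent of $\beta$ in $\TM$ is a $\cup$-node. Since $\alpha'$ is a Q-node, by Proposition~\ref{prop:node-to-node} its $\TM$-counterpart $\alpha$ is either a $\cup$-node (when $\alpha'$ is non-conical) or a special $\cap$-node (when $\alpha'$ is conical). The main obstacle is ruling out the conical case under the assumption that $X(\beta')$ is not a cluster of $\Td$: if $\alpha'$ were $\rhos_\alpha$-conical with apex $\gamma^\ast$, then by Lemma~\ref{lemma:monotone-diameter}(ii) applied to $\alpha'$ and $\beta'$ together with the apex, $\diam(X(\beta')) \leq d(\beta', \gamma^\ast) = \rhos_\alpha$. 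Combining this with the bipartition structure of the large $\rhos_\alpha$-mmodule $X(\alpha') \setminus X(\gamma^\ast)$ furnished by Lemma~\ref{lemma:large_block}, all points of $X \setminus X(\beta')$ lie at distance $\geq \rhos_\alpha \geq \diam(X(\beta'))$ from $X(\beta')$ (using that $X(\alpha')$ is an mmodule for points outside of $X(\alpha')$, and the bipartition or conical structure for points inside). Running through the cluster-weight argument used in the proof of Lemma~\ref{lemma:cluster-pqt} then forces the minimal cluster of $\Td$ containing $X(\beta')$ to be $X(\beta')$ itself, contradicting our assumption. Therefore $\alpha'$ is non-conical, its counterpart $\alpha$ is a $\cup$-node, and $\beta$ is a child of $\alpha$, establishing case~(iii).
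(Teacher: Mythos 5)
Your overall strategy (transfer through \Cref{prop:node-to-node} and \Cref{lemma:cluster-pqt}) is the same as the paper's, and the first two thirds of your argument are fine. The gap is in the last paragraph, where you try to rule out the case that the Q-node parent $\alpha'$ of $\beta'$ is conical. That case cannot be ruled out: it is perfectly possible that $\alpha'$ is $\rhos_\alpha$-conical with apex $\gamma^\ast\neq\beta'$ and that $X(\beta')$ is not a cluster. Your contradiction rests on the claim that every point of $X\setminus X(\beta')$ is at distance at least $\rhos_\alpha$ from $X(\beta')$, but \Cref{lemma:large_block}\ref{item:s03} only gives $d(x,y)\geq\rhos_\alpha$ \emph{across} the bipartition $S_\star\cup S^\star$ of the large mmodule; a sibling child of $\alpha'$ lying on the \emph{same} side of the apex as $\beta'$ may be at distance strictly less than $\rhos_\alpha$ from $X(\beta')$ (if all siblings were at distance exactly $\rhos_\alpha$ from $\beta'$, then $\beta'$ would itself be an apex). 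Concretely, take $\alpha'=Q(\beta_1,\beta_2,\gamma^\ast,\beta_3)$ with apex $\gamma^\ast$ at distance $5$ from everything and $d(\beta_1,\beta_2)=3=\diam(X(\beta_2))$ with $\beta_2$ a P-node: then $X(\beta_2)$ is not a cluster, yet $\alpha'$ is conical.

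In that surviving case the conclusion~\ref{item:clust3} is still true, but for a different reason than the one you give: by \Cref{prop:node-to-node} the counterpart $\alpha$ of $\alpha'$ in $\TM$ is a \emph{special} $\cap$-node, and $\beta$ is then not a child of $\alpha$ but a child of the \emph{large} ($\cup$-) child of $\alpha$ (this is how \Cref{lemma:cut-separable} nests the non-apex children). This is exactly the paper's ``Case 2'', which it treats separately from the non-conical case and from the case where $\beta'$ is the apex (where $X(\beta')$ does turn out to be a cluster, so that case self-destructs as you intended). To repair your proof you need to split the conical case according to whether $\beta'$ is the apex or not, keep your cluster argument only for the apex subcase, and in the non-apex subcase identify the parent of $\beta$ in $\TM$ as the large child of the special $\cap$-node rather than the $\cap$-node itself.
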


\begin{proof}
  By \Cref{prop:node-to-node}, either $\beta$ is the large child of a
  special node $\alpha$ (that is~\ref{item:clust2} holds), or there is
  a node $\beta'$ in the PQ-tree $\TPQ$ with
  $X(\beta') = X(\beta)$. In the later case, by
  \Cref{lemma:cluster-pqt}, either $X(\beta')$ is a cluster (and so is
  $X(\beta)$, thus~\ref{item:clust1} holds), or $\beta'$ is a P-node child
  of a Q-node $\alpha'$. Again, we assume the later. Then there is a
  node $\alpha$ in $\TM$ with $X(\alpha) = X(\alpha')$ by
  \Cref{prop:node-to-node}.

  Since $X(\alpha) = X(\alpha')$, $X(\beta) = X(\beta')$, $\beta'$ is a
  P-node and $\alpha'$ a Q-node, by applying \Cref{prop:pqroot-analysis} to
  $\alpha'$, we conclude that $\beta$ is a $\cap$-node and one of three possibilities
  happens:
  \begin{enumerate}[wide,label=Case \arabic*:]
  \item $\alpha'$ is non-conical, then $\alpha$ is a
    $\cup$-node, $\beta$ is a child of $\alpha$. Then, as $X(\beta)$
    is not a cluster, there is another child $\beta'$ of $\alpha$ with
    $d(\beta,\beta') = \rhos(\beta)$, hence by
    \Cref{lemma:monotone-diameter},
    $\diam(X(\beta)) = \rhos(\beta)$. This implies that $\beta$ has no
    large child, and each of its child is a cluster, thus 
    \ref{item:clust3}~holds.
  \item $\alpha'$ is conical, but its apex is not
    $\beta'$, then $\alpha$ has a large child $\gamma$, $\gamma$ is a
    $\cup$-node or have arity two, and $\beta$ is a child of
    $\gamma$. As in the previous case, as $X(\beta)$ is not a cluster, its
    diameter is $\rhos(\beta)$ and each of its child is a cluster, whence 
    \ref{item:clust3}~holds.
  \item $\alpha'$ is conical with apex $\beta'$, then
    $\alpha$ is a special $\cap$-node. Also $\beta'$ cannot be split
    (otherwise there would not be a node with leaf set $X(\beta')$
    in $\TM$), hence $\beta$ is a child of $\alpha$. Then for each
    $x \in X(\beta)$, $y \notin X(\beta)$,
    $d(x,y) \geq \rhos(\alpha) > \rhos(\beta)$ by
    \Cref{lemma:monotone-diameter}. Thus $X(\beta)$ is a cluster, whence
    \ref{item:clust1}~holds.\qedhere
  \end{enumerate}
\end{proof}

\section{Translation between copoint partitions and PQ-trees and mmodule trees}\label{SECTION_copoints_2_PQ_tree}
In our previous paper~\cite{CaCheNaPre}, the copoints of a point $p$
were at the heart of our divide-and-conquer recognition algorithm.
This is due to the fact that the copoints of $p$ together with $\{
p\}$ define a partition $\C_p$ of $X\setminus \{ p\}$. In this
section, we provide a correspondence between the copoint partition
$\C_p$ and the trees $\TM$ and $\TPQ$ of a Robinson space $(X,d)$.
First, for any dissimilarity space $(X,d)$ we characterize the
copoints of $\C_p$ in terms of subtrees of the mmodule tree $\TM$
rooted at the nodes of the unique path $\Psi(p)$ between $\{ p\}$ and
the root of $\TM$. This allows us to establish that the total number
of copoints of $(X,d)$ is at most $2|X| - 1$. Second, we characterize
the nodes on the unique path $\Upsilon (p)$ of the PQ-tree $\TPQ$ of a
Robinson space $(X,d)$ between the leaf $\{ p\}$ and the root of
$\TPQ$. We also locate the copoints of $\C_p$ with respect to the
nodes of this path $\Upsilon (p)$ and the $p$-proximity order,
introduced in~\cite{CaCheNaPre}.

\subsection{Translation between \texorpdfstring{$\C_p$ and $\Psi(p)$}{Cp and Psi(p)}}
The next result characterizes the copoints attached at $p$ in terms of
subtrees of $\TM$.

\begin{proposition}\label{lemma:copoint-in-mtree}
  Let $(X,d)$ be a dissimilarity space with mmodule tree $\TM$. For
  any $p \in X$, the $p$-copoints of $X$ are:
  \begin{enumerate}[label=(\roman*)]
  \item\label{item:cpm1} $X(\alpha) \setminus X(\beta)$ for each
    $\cap$-node $\alpha\in \Psi(p)$ with $\beta$ be the child of
    $\alpha$ in  $\Psi(p)$,
  \item\label{item:cpm2} $X(\beta)$ for each $\cup$-node
    $\alpha\in \Psi(p)$ and each child $\beta$ of $\alpha$ such
    that $\beta \notin \Psi(p)$).
  \end{enumerate}
\end{proposition}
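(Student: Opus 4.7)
The plan is to leverage Proposition~\ref{mmodule-tree}(iii), which enumerates every proper mmodule as either $X(\beta)$ for a non-root node $\beta$ or $\bigcup_{i\in I} X(\beta_i)$ for a proper subset $I$ of the children of a $\cap$-node. Let $\Psi(p)$ denote the path from $\{p\}$ to the root of $\TM$. I prove both inclusions: every set listed in (i)--(ii) is a $p$-copoint, and conversely every $p$-copoint has one of these two forms.

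For the forward direction, each listed set is readily an mmodule missing $p$: in (i), $X(\alpha)\setminus X(\beta)$ is the union of all children of the $\cap$-node $\alpha$ other than $\beta$ (a proper subset, hence an mmodule by Proposition~\ref{mmodule-tree}(ii)), and $p\in X(\beta)$; in (ii), $X(\beta)$ is a child of a $\cup$-node (hence an mmodule by Proposition~\ref{mmodule-tree}(i)) and $\beta\notin\Psi(p)$ ensures $p\notin X(\beta)$. To verify maximality, I consider any strict superset $M'\supsetneq M$ that is an mmodule together with its pertinent node $\gamma'$, and apply the enumeration. If $\gamma'$ equals $\alpha$ or is a descendant of it, the available forms for $M'$ are too constrained to properly contain $M$ without including the child of $\alpha$ meeting $\Psi(p)$, and hence $p$. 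If $\gamma'$ is a proper ancestor of $\alpha$ on $\Psi(p)$, the enumeration again forces $M'$ to fully contain the child of $\gamma'$ through which $\Psi(p)$ passes, and so $p\in M'$.

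For the reverse direction, let $M$ be a $p$-copoint with pertinent node $\gamma$, and split on whether $M=X(\gamma)$ or $M\subsetneq X(\gamma)$. In the first case, $\gamma\notin\Psi(p)$, and maximality forces the parent $\gamma'$ of $\gamma$ to lie on $\Psi(p)$ (otherwise $X(\gamma')$ would be a strictly larger mmodule avoiding $p$). If $\gamma'$ is a $\cup$-node we recover (ii); if $\gamma'$ is a $\cap$-node, maximality further forces $\gamma'$ to have exactly two children (else adding another child's leaf set would enlarge $M$ while avoiding $p$), giving (i). In the second case, $M$ meets at least two children of $\gamma$, which by the enumeration forces $\gamma$ to be a $\cap$-node and $M=\bigcup_{i\in I} X(\beta_i)$ for some proper subset $I$. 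Maximality then forces $\gamma\in\Psi(p)$ and $I$ to be the complement of the unique child on $\Psi(p)$, again yielding (i). The main technical obstacle is the maximality argument in the forward direction, specifically ruling out extensions whose pertinent node is a remote ancestor on $\Psi(p)$; the decisive fact is that the enumeration forbids a proper mmodule from meeting a child of a node only partially, so an mmodule meeting multiple children of a $\cap$-node must fully contain each child it meets, while a proper mmodule cannot have a $\cup$-node as its pertinent node, and together these prevent any ascending extension from avoiding the subtree leading to $p$.
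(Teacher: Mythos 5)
Your proof is correct and follows essentially the same route as the paper: both directions rest on the enumeration of proper mmodules from Proposition~\ref{mmodule-tree}\,(iii) together with a maximality analysis at the pertinent node. The only difference is one of detail — you spell out explicitly why any mmodule properly containing one of the listed sets must swallow the child of its pertinent node lying on $\Psi(p)$ (and hence $p$), a step the paper compresses into a single appeal to Proposition~\ref{mmodule-tree} — and your reverse-direction case split ($M=X(\gamma)$ versus $M\subsetneq X(\gamma)$) is an equivalent reorganization of the paper's split by parent-node type.
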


\begin{proof}
  Pick any $\alpha\in \Psi(p)$. Let
  $S := X(\alpha) \setminus X(\beta)$ if $\alpha$ is a $\cap$-node
  with child $\beta\in \Psi(p)$ or $S := X(\beta)$ for a child
  $\beta\notin \Psi(p)$ of $\alpha$ if $\alpha$ is a $\cup$-node. By
  \Cref{mmodule-tree}, $S$ is an mmodule. Moreover, also by
  \Cref{mmodule-tree}, any mmodule properly containing $S$ contains
  $X(\alpha)$ and $p$, hence $S$ is a $p$-copoint.

  Conversely, let $S$ be a $p$-copoint, in particular $S$ is an
  mmodule. By \Cref{mmodule-tree}, the following holds:
  \begin{itemize}[label=-]
  \item either $S$ is the leaf-set of the union of some children of a
    $\cap$-node $\alpha$. Then by maximality of $S$, as $X(\alpha)$ is
    an mmodule, $p \in X(\alpha)$. Let $S'$ be
    $\bigcup \{ X(\beta) : \beta\textrm{ child of
    }\alpha, p \notin X(\beta)\}$. Again by \Cref{mmodule-tree}, $S'$
    is an mmodule, and $S \subseteq S'$, $p \notin S'$. By maximality
    of $S$, $S = S'$.
  \item or $S$ is the leaf-set of a child $\beta$ of a $\cup$-node
    $\alpha$. Then by maximality of $S$, as $X(\alpha)$ is an mmodule,
    $p \in X(\alpha)$.
  \end{itemize}
  This concludes the proof.
\end{proof}

From \Cref{copoint-partition} it follows that any dissimilarity space
$(X,d)$ contains at most $|X|(|X|-1)$ copoints.  In fact, by
\Cref{lemma:copoint-in-mtree} the number of copoints is
always linear in the size of $X$:

\begin{corollary}\label{lemma:number-copoints}
  The number of copoints of a dissimilarity space $(X,d)$ is at most
  $2|X| - 1$.
\end{corollary}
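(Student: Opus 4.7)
The plan is to count copoints across all attaching points $p \in X$ via the description furnished by \Cref{lemma:copoint-in-mtree}, then bound the total number of underlying (node, child) pairs using the fact that internal nodes of $\TM$ have arity at least two. First I would observe that every copoint of $(X,d)$ is a $p$-copoint for \emph{some} $p$ (namely, any $p \in X \setminus C$), so invoking \Cref{lemma:copoint-in-mtree} with $p$ ranging over $X$ yields that every copoint is either of the form $X(\alpha) \setminus X(\beta)$ for some $\cap$-node $\alpha$ of $\TM$ and some child $\beta$ of $\alpha$, or of the form $X(\beta)$ for some child $\beta$ of some $\cup$-node $\alpha$ of $\TM$. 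For a fixed internal node $\alpha$, letting $p$ range over its subtree shows that \emph{each} of its children contributes exactly one copoint to this catalogue.

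Next I would check that this catalogue is injective, i.e., distinct (internal node, child) pairs produce distinct copoints. Within a fixed $\cap$-node $\alpha$ this is immediate, since the sets $X(\alpha) \setminus X(\beta_i)$ are manifestly distinct as $\beta_i$ ranges over the children of $\alpha$; similarly the leaf-sets $X(\beta_i)$ of the children of a fixed $\cup$-node are pairwise disjoint and nonempty. Across distinct internal nodes, injectivity (together with the disjointness of the two families in the catalogue) is a direct consequence of \Cref{mmodule-tree}(iii): every proper mmodule of $(X,d)$ appears \emph{exactly} once, either as $X(\gamma)$ for a node $\gamma$ (item (i)) or as a union of leaf-sets of a proper subset of children of some $\cap$-node (item (ii)). Copoints of the first form in the catalogue correspond to item (ii), copoints of the second form correspond to item (i), so no mmodule is counted twice.

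Finally I would count the (internal node, child) pairs. These are in bijection with the edges of $\TM$, so their number equals $|X| + I - 1$, where $I$ denotes the number of internal nodes of $\TM$. Since every internal node of $\TM$ has arity at least two, a standard handshake count gives $|X| + I - 1 \geq 2I$, whence $I \leq |X| - 1$. Combining, the total number of copoints is bounded by
\[ |X| + I - 1 \;\leq\; |X| + (|X| - 1) - 1 \;=\; 2|X| - 2 \;\leq\; 2|X| - 1, \]
yielding the stated bound. No substantial obstacle is anticipated: the only non-routine ingredient is the appeal to the uniqueness clause of \Cref{mmodule-tree} to ensure injectivity of the catalogue, and the rest is an elementary tree-counting argument.
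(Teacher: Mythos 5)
Your proposal is correct and follows essentially the same route as the paper: catalogue every copoint via \Cref{lemma:copoint-in-mtree} as arising from a (node, child) pair of $\TM$, identify these pairs with the edges of $\TM$, and use the fact that inner nodes have arity at least two to bound the edge count by $2|X|-2 \leq 2|X|-1$. Two minor remarks: the injectivity discussion is unnecessary for an upper bound (surjectivity of the catalogue onto the set of copoints suffices), and the parenthetical claim that a copoint $C$ is a $p$-copoint for \emph{any} $p \in X \setminus C$ is not quite right (maximality may fail for some such $p$), though the weaker and sufficient claim that it is a $p$-copoint for \emph{some} $p$ holds by definition.
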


\begin{proof}
  By \Cref{lemma:copoint-in-mtree}, each node of the mmodule tree
  induces at most as many copoints as its arity. As the sum of arities
  equals the number of nodes minus one, and each inner node as arity
  at least 2, we get the result.
\end{proof}

Since by \Cref{prop:ultrametric_mmodule_tree} all inner nodes of the
mmodule tree of an ultrametric space are $\cap$-nodes, from
\Cref{lemma:copoint-in-mtree} we obtain the following observation:

\begin{corollary}\label{copoints-ultrametric}
  The $p$-copoints of an ultrametric space $(X,d)$ are the sets of the
  form $X(\alpha) \setminus X(\beta)$, where $\alpha\in \Psi(p)$ and
  $\beta$ is the child of $\alpha$ in $\Psi(p)$.
\end{corollary}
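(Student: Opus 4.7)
The plan is to combine the two results just cited, namely Proposition \ref{prop:ultrametric_mmodule_tree} (which characterizes ultrametric spaces as those whose mmodule tree $\TM$ contains only $\cap$-nodes as inner nodes) and Proposition \ref{lemma:copoint-in-mtree} (which describes the $p$-copoints of an arbitrary dissimilarity space as coming in two flavors, one for each kind of inner node of $\TM$ along the path $\Psi(p)$).

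First, I would invoke Proposition \ref{prop:ultrametric_mmodule_tree} to conclude that since $(X,d)$ is ultrametric, every inner node of the mmodule tree $\TM$ is a $\cap$-node. In particular, no node of $\Psi(p)$ is a $\cup$-node, so case \ref{item:cpm2} of Proposition \ref{lemma:copoint-in-mtree} does not arise. Hence every $p$-copoint must come from case \ref{item:cpm1} of that proposition, which is exactly the description given in the statement: for each node $\alpha \in \Psi(p)$ (necessarily a $\cap$-node), the set $X(\alpha) \setminus X(\beta)$, where $\beta$ is the unique child of $\alpha$ on $\Psi(p)$, is a $p$-copoint.

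Conversely, the same proposition tells us that every set of this form is indeed a $p$-copoint, so the characterization is tight. There is no real obstacle here, since the work was already done in the two propositions we are combining; the corollary is simply the specialization of Proposition \ref{lemma:copoint-in-mtree} to the case in which the mmodule tree has no $\cup$-nodes.
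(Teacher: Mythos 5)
Your proposal is correct and follows exactly the paper's own (one-sentence) derivation: the paper obtains this corollary by noting that, by \Cref{prop:ultrametric_mmodule_tree}, all inner nodes of $\TM$ are $\cap$-nodes, so only case~\ref{item:cpm1} of \Cref{lemma:copoint-in-mtree} can occur. Nothing is missing.
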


\subsection{Translation between  \texorpdfstring{$\C_p$ and $\Upsilon(p)$}{Cp and Upsilon(p)}}

We already know by \Cref{LEMMA_block_node} that for each node $\alpha$
of the PQ-tree, $X(\alpha)$ is an mmodule. Next we determine which
mmodules do not correspond to nodes of $\TPQ$.

\begin{lemma}\label{lemma:mmod-in-pqt}
  Let $(X,d)$ be a Robinson space and $\TPQ$ its PQ-tree. Let
  $M \subseteq X$. Then $M$ is an mmodule of $(X,d)$ if and only if
  \begin{enumerate}[label=(\roman*)]
  \item\label{item:minpq1} either there is a node $\alpha \in \TPQ$
    such that $M = X(\alpha)$,
  \item\label{item:minpq2} or there is a P-node
    $P(\beta_1,\ldots,\beta_k) \in \TPQ$ and
    $I \subset \{1,\ldots,k\}$ non-empty such that
    $M = \bigcup_{i \in I} X(\beta_i)$,
  \item\label{item:minpq3} or there is a $\delta$-conical Q-node
    $\alpha = Q(\gamma_1,\ldots,\gamma_l) \in \TPQ$, with apex child
    $\gamma_j$, and $M = X(\alpha) \setminus X(\gamma_j)$,
  \item\label{item:minpq4} or there is a $\delta$-conical Q-node
    $\alpha = Q(\gamma_1,\ldots,\gamma_l) \in \TPQ$, with split child
    $\gamma_j$, and a subset $B$ of children of $\gamma_j$, such that
    $M = X(\alpha) \setminus \bigcup_{\beta \in B} X(\beta)$.
  \end{enumerate}
\end{lemma}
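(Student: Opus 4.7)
My plan is to prove both directions. For the forward direction, case (i) is \Cref{LEMMA_block_node} and case (ii) is \Cref{lemma:coP}. For (iii), points $x \notin X(\alpha)$ see uniform distance to $M$ because $X(\alpha)$ is itself an mmodule, while $x \in X(\gamma_j)$ sees $d(x,y) = \delta$ for all $y \in M$ by the $\delta$-conical property. For (iv) the additional check is for $x$ in a removed child $\beta \in B$ of $\gamma_j$; here I plan to use that by \Cref{prop:pqroot-analysis}\ref{item:pqr43} combined with \Cref{lemma:cut-separable}, the split apex $\gamma_j$ is a P-node with children $\TPQ(S_1),\ldots,\TPQ(S_{c_\rhos-1})$ lying at pairwise distance $\rhos = \delta$ (\Cref{lemma:uniform-pnode}), which matches the conical distance and yields $d(x,y) = \delta$ for every $y \in M$.

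For the converse, let $\alpha$ be the $M$-pertinent node of $\TPQ$; if $M = X(\alpha)$ we obtain (i). Otherwise $M \subsetneq X(\alpha)$ and we split on the type of $\alpha$. When $\alpha$ is a P-node, I argue by contradiction that every child $\beta$ of $\alpha$ satisfies $X(\beta) \subseteq M$ or $X(\beta) \cap M = \varnothing$, giving (ii). A partial intersection, combined with some $z \in M$ in a distinct child (existing by pertinence), forces $d(y,x) = \rhos(\alpha)$ for every $x \in M \cap X(\beta)$ and $y \in X(\beta) \setminus M$, using \Cref{lemma:uniform-pnode} and the mmodule property. This separates $M \cap X(\beta)$ from $X(\beta) \setminus M$ in $G_{<\rhos(\alpha)}(X(\beta))$, contradicting the fact---derived from \Cref{prop:pqroot-analysis}\ref{item:pqr1} and \Cref{lemma:mmod-component}---that $X(\beta)$ is a $\rhos$-component of $X(\alpha)$.

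When $\alpha$ is a Q-node, \Cref{prop:pqroot-analysis}\ref{item:pqr2} rules out the non-conical subcase (the maximal mmodules would then be the children and force $M$ into a single one, contradicting pertinence), so $\alpha$ is $\delta$-conical with apex $\gamma_j$ and large $\rhos$-mmodule $S_0 = X(\alpha) \setminus X(\gamma_j)$. The central ingredient is that $\Grho(S_0)$ is connected: by \Cref{lemma:large_block}\ref{item:s04} any disconnection of $\Gdelta(S_0)$ requires $\delta > \rhos$. A case split on $M \cap X(\gamma_j)$ completes the proof. If $M \subseteq S_0$, the structure of $\TPQ(S_0)$ from \Cref{lemma:cut-separable} (a non-conical Q-node or a P-node of arity two) together with the already-handled cases applied to $(S_0,d)$ forces $M = S_0$, which is case (iii). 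If $M$ partially intersects $X(\gamma_j)$, the mmodule property yields $d(x,y) = \delta$ both between $M \cap X(\gamma_j)$ and $X(\gamma_j) \setminus M$ and between $S_0 \setminus M$ and $M \cap S_0$; the connectedness of $\Grho(S_0)$ then forces $M \cap S_0 \in \{\varnothing, S_0\}$, and the first option would make $\gamma_j$ the pertinent node, so $S_0 \subseteq M$. Using that the split $\gamma_j$ is a P-node whose children are exactly the $\rhos(\gamma_j)$-components of $X(\gamma_j)$ (again by \Cref{prop:pqroot-analysis}\ref{item:pqr1} and \Cref{lemma:mmod-component}), $M \cap X(\gamma_j)$ must be a union of whole children of $\gamma_j$, yielding $M = X(\alpha) \setminus \bigcup_{\beta \in B} X(\beta)$ for a proper non-empty subset $B$ of children of $\gamma_j$, which is case (iv).

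The main technical obstacle will be the conical Q-node case: systematically using the connectedness of $\Grho(S_0)$ to eliminate mmodules with $M \cap S_0$ neither empty nor all of $S_0$, and identifying the split apex as a P-node whose children simultaneously describe the $\rhos(\gamma_j)$-components of $X(\gamma_j)$ and the non-large $\rhos$-mmodules from \Cref{lemma:cut-separable}.
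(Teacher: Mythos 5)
Your proof is correct, but it takes a genuinely different route from the paper's. The paper derives the lemma almost entirely from machinery already in place: by \Cref{mmodule-tree} every proper mmodule is either the leaf set of a node of $\TM$ or the union of a proper subfamily of children of a $\cap$-node, and these two shapes are then translated into PQ-tree terms via the node-to-node correspondence (\Cref{prop:node-to-node}) together with \Cref{lemma:some-d-cut,lemma:cut-separable}; the large-child and split-child anomalies of that correspondence are exactly what produce cases~\ref{item:minpq3} and~\ref{item:minpq4}. You instead work directly in $\TPQ$: you take the $M$-pertinent node and case-split on its type, using \Cref{lemma:uniform-pnode} plus the fact that the children of a P-node are $\rhos$-components to settle the P-node case, and the connectivity of $\Grho(S_0)$ to pin down mmodules straddling a conical Q-node. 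Both arguments are sound. The paper's version is shorter because it rides on \Cref{prop:node-to-node}; yours is self-contained relative to \Cref{SECTION_Gdelta_PQ} (it never touches $\TM$), and it also spells out the easy direction --- that each of the four shapes really is an mmodule --- which the paper's proof leaves implicit.

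Two small points to address when writing this up. First, in the conical case your split on $M \cap X(\gamma_j)$ covers ``empty'' and ``proper nonempty'' but omits $X(\gamma_j) \subseteq M \subsetneq X(\alpha)$; this third possibility is killed by exactly the same argument (every point of $S_0 \setminus M$ would then be at distance $\delta = \rhos$ from every point of $M \cap S_0$, contradicting the connectivity of $\Grho(S_0)$), so it should be stated. Second, whenever you invoke \Cref{prop:pqroot-analysis} or the conical/split terminology at an internal node $\alpha$, you are implicitly using that the subtree of $\TPQ$ rooted at $\alpha$ is the PQ-tree of the subspace $(X(\alpha),d)$; this follows from \Cref{LEMMA_block_node,lemma:mmod-block-orders} and is used tacitly throughout the paper, but it deserves a sentence in your write-up.
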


\begin{proof}
  By \Cref{mmodule-tree}, either (a) there is a node $\beta$ in $\TM$
  with $M = X(\beta)$, or (b) there is a $\cap$-node
  $\alpha = \cap(\beta_1,\ldots,\beta_k)$ and
  $I \subsetneq \{1,\ldots,k\}$, $|I| \geq 2$ such that
  $M = \bigcup_{i \in I} X(\beta_i)$.

  In case (a), suppose~\ref{item:minpq1} does not hold, then by
  \Cref{prop:node-to-node} $\beta$ is a large child of a special
  Q-node $\alpha$. By \Cref{lemma:monotone-diameter}, $\alpha$ is not
  a large child itself, hence there is a P-node $\alpha'$ in $\TPQ$
  with $X(\alpha) = X(\alpha')$. By \Cref{lemma:cut-separable},
  $\alpha' = Q(\gamma_1,\ldots,\gamma_{i-1},\beta',\gamma_i,\ldots,\gamma_\ell)$,
  with $X(\alpha) \setminus X(\beta') = X(\beta)$,
  hence~\ref{item:minpq3} holds.

  In case (b), $\alpha$ is not a large child because $k = 2$, hence by
  \Cref{prop:node-to-node} there is a P-node $\alpha'$ in $\TPQ$ with
  $X(\alpha) = X(\alpha')$. If $\alpha$ has no split child, then by
  \Cref{lemma:some-d-cut}, for each $\beta_i$, there is a child
  $\beta'_i$ of $\alpha$ with $X(\beta_i) = X(\beta'_i)$,
  and~\ref{item:minpq2} follows. Thus we may assume that there is a
  split child, say $\beta_k$. By \Cref{lemma:cut-separable},
  $\alpha' = Q(\gamma_1,\ldots,\gamma_{j-1},\beta',\gamma_j,\ldots,\gamma_\ell)$
  with $X(\alpha') \setminus X(\beta') = X(\beta_k)$ and
  $\beta' = P(\gamma_1,\ldots,\gamma_{k-1})$ such that for each
  $i \in \{1,\ldots,k-1\}$, $X(\gamma_i) = X(\beta_i)$. If $k \in I$,
  \ref{item:minpq4}~holds, while if $k \notin I$, \ref{item:minpq3}
  holds on $\beta'$.
\end{proof}

Our next result establishes a correspondence between the copoints of
$\C_p$ and the nodes of $\Upsilon(p)$.

\begin{theorem}\label{thm:copoint-in-pqtree}
  Let $(X,d)$ be a Robinson space. For any $p \in X$, the copoints of
  $\C_p$ are:
  \begin{enumerate}[label=(\arabic*)]
  \item\label{item:cpp1} $X(\alpha) \setminus X(\beta)$ for each P-node
    $\alpha \in \Upsilon(p)$ with child $\beta \in \Upsilon(p)$,
  \item\label{item:cpp2} $X(\beta)$ for each Q-node
    $\alpha \in \Upsilon(P)$ and each child
    $\beta \notin \Upsilon(p)$, when, if $\alpha$ is conical, $p$ is not in the apex child of
    $\alpha$,
  \item\label{item:cpp3} $X(\alpha) \setminus X(\beta)$ for each
    $\delta$-conical Q-node $\alpha \in \Upsilon(p)$ with standard
    apex child $\beta \in \Upsilon(p)$, and $\Gdelta(X(\beta))$ is
    connected,
  \item\label{item:cpp4} $X(\alpha) \setminus X(\gamma) $ for each
    $\delta$-conical Q-node $\alpha \in \Upsilon(p)$ with split child
    $\beta \in \Upsilon(p)$, and $\gamma \in \Upsilon(p)$ child of
    $\beta$.
  \end{enumerate}
\end{theorem}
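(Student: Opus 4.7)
The strategy is to combine the $\TM$-characterization of copoints from \Cref{lemma:copoint-in-mtree} with the PQ-tree/mmodule-tree node correspondence established in \Cref{prop:node-to-node}, refined by the explicit translation \Cref{algo:mtopq}. Every $p$-copoint is either of the form $X(\alpha)\setminus X(\beta)$ for a $\cap$-node $\alpha\in\Psi(p)$ with child $\beta\in\Psi(p)$, or of the form $X(\beta)$ for a $\cup$-node $\alpha\in\Psi(p)$ with a child $\beta\notin\Psi(p)$. I would walk along $\Psi(p)$ and translate each such copoint into PQ-tree terms, checking that it falls into exactly one of the four cases.

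For the routine cases, \Cref{prop:node-to-node} provides a bijective correspondence between non-special $\cap$-nodes of $\TM$ and P-nodes of $\TPQ$ with identical leaf sets and corresponding children, which translates the copoint at a non-special $\cap$-node directly into case~(1). Similarly, $\cup$-nodes correspond to non-conical Q-nodes, yielding case~(2) with the conditional on the apex vacuously satisfied.

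The substantive case is when $\alpha\in\Psi(p)$ is a $\delta$-special $\cap$-node with large child $\beta_L$ and non-large children $\beta_1,\ldots,\beta_{k-1}$. By \Cref{algo:mtopq}, $\alpha$ corresponds to a $\delta$-conical Q-node $\alpha'\in\Upsilon(p)$ whose apex $\beta'$ satisfies $X(\beta')=X(\alpha)\setminus X(\beta_L)$, while the non-apex children of $\alpha'$ are exactly the children of $\mtopq(\beta_L)$. If $p\in X(\beta_L)$, then $\beta'\notin\Upsilon(p)$ and $p$ is not in the apex; the copoint $X(\alpha)\setminus X(\beta_L)=X(\beta')$ falls into case~(2), while copoints supplied by \Cref{lemma:copoint-in-mtree} deeper in $\TM$ (at $\beta_L$ and below) translate into copoints at the non-apex children on $\Upsilon(p)$, again captured by case~(2) or by the appropriate case at deeper nodes. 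If instead $p$ lies in some non-large $\beta_j$, so that $\beta'\in\Upsilon(p)$, I would split on $k$: when $k=2$, $\beta'=\mtopq(\beta_1)$ has leaf set equal to that of the $\TM$-node $\beta_1$, so \Cref{prop:node-to-node} forces $\beta'$ to not be a split child; hence $\beta'$ is standard, $\Gdelta(X(\beta'))$ is connected, and $X(\alpha)\setminus X(\beta_1)=X(\alpha')\setminus X(\beta')$ falls into case~(3). When $k\geq 3$, $\beta'=P(\mtopq(\beta_1),\ldots,\mtopq(\beta_{k-1}))$ is a P-node whose leaf set is a proper union of $\TM$-node leaf sets, so \Cref{prop:node-to-node} forces $\beta'$ to be a split child, and $X(\alpha)\setminus X(\beta_j)=X(\alpha')\setminus X(\gamma)$ with $\gamma=\mtopq(\beta_j)\in\Upsilon(p)$ falls into case~(4).

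For the converse inclusion, I would verify that each set listed in (1)--(4) is an mmodule avoiding $p$ (using \Cref{lemma:mmod-in-pqt}) and is maximal as such, by enumerating the four types of mmodules and ruling out strictly larger ones avoiding $p$; in particular, when $p$ lies in the apex of a conical $\alpha'$, the type-(iii) mmodule $X(\alpha')\setminus X(\beta')$ strictly contains every $X(\beta)$ for non-apex $\beta\neq\beta'$, which is exactly why case~(2) excludes that situation. The main obstacle is the case analysis at a special $\cap$-node, where a single $\TM$-copoint must be correctly distributed among cases (2), (3), and (4) depending on the position of $p$ and on whether $k=2$ or $k\geq 3$, and where the standard/split dichotomy of the apex must be shown to align exactly with whether $X(\beta')$ is a single $\TM$-node leaf set or a proper union---which is precisely what \Cref{prop:node-to-node} guarantees.
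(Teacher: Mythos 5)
Your proposal is correct, but it takes a genuinely different route from the paper's proof. The paper works entirely inside the PQ-tree: it takes a $p$-copoint $M$, applies the classification of \emph{all} mmodules of a Robinson space in PQ-tree terms (\Cref{lemma:mmod-in-pqt}), and in each of the four resulting cases uses the maximality of $M$ among mmodules avoiding $p$ to pin down which of (1)--(4) occurs; the converse is the same short maximality argument you sketch. You instead start from the $\TM$-side characterization of $\C_p$ along $\Psi(p)$ (\Cref{lemma:copoint-in-mtree}) and push each copoint through the translation $\TM\to\TPQ$ (\Cref{prop:node-to-node}, \Cref{algo:mtopq}, \Cref{lemma:cut-separable}). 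Your approach buys a cleaner conceptual picture --- the four cases of the theorem fall out of the three node types of $\TM$ plus the position of $p$ relative to the large child --- and your $k=2$ versus $k\geq 3$ dichotomy at a special $\cap$-node correctly reproduces the standard/split distinction (indeed $X(\beta')$ is a single $\rhos$-component exactly when $k=2$, which is a slightly more direct way to see it than invoking the uniqueness clause of \Cref{mmodule-tree} through \Cref{prop:node-to-node}). The price is the extra bookkeeping at the one place where $\Psi(p)$ and $\Upsilon(p)$ are not node-to-node: when $p$ lies in the large child $\beta_L$, the copoints contributed by the node $\beta_L$ itself (which has no counterpart in $\TPQ$) must be redistributed onto the non-apex children of the conical Q-node; your sketch handles this but it is the step that would need the most care when written out, including the subcase where $\mtopq(\beta_L)$ is a $P$-node of arity two. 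One caveat applying equally to both proofs: in the converse direction, case (1) read literally also covers a split P-node $\alpha\in\Upsilon(p)$, for which $X(\alpha)\setminus X(\beta)$ is \emph{not} maximal (it is properly contained in the case-(4) copoint at the parent), so the "any mmodule properly containing $M$ contains $X(\alpha)$" argument needs the P-node of case (1) to be standard; your forward derivation only ever produces case (1) from non-special $\cap$-nodes, so this does not affect your classification, but it should be flagged when verifying that every listed set is genuinely a copoint.
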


\begin{proof}
  Let $M$ be a $p$-copoint. Then $M$ is an mmodule and by
  \Cref{lemma:mmod-in-pqt} one of the four following cases occurs.

  \begin{itemize}
  \item[\ref{item:minpq1}] There is a node $\alpha$ in $\TPQ$ such
    that $M = X(\alpha)$. As $p \notin X(\alpha)$, $\alpha$ is not the
    root of $\TPQ$, and let $\gamma$ be its parent. By maximality of $M$,
    $p \in X(\gamma)$. By \Cref{lemma:mmod-in-pqt}\ref{item:minpq2}
    $\gamma$ is not a P-node, hence is a Q-node. Suppose that it is
    conical and $p$ is in an apex child $\alpha'$ of $\gamma$. Then
    $X(\gamma) \setminus X(\alpha')$ is an mmodule containing
    $X(\alpha)$ and not $p$, contradicting the maximality of $M$. Thus
    case~\ref{item:cpp2} applies.
  \item[\ref{item:minpq2}] There is a node
    $\alpha = P(\beta_1,\ldots,\beta_k)$ in $\TPQ$ and
    $I \subsetneq \{1,\ldots,k\}$ such that
    $M = \bigcup_{i \in I} X(\beta_i)$. By maximality of $M$,
    $p \in X(\alpha)$ and $|I| = k-1$, hence
    $M = X(\alpha) \setminus X(\beta_i)$ for the unique
    $i \in \{1,\ldots,k\} \setminus I$, $p \in X(\beta_i)$ and
    case~\ref{item:cpp1} applies.
  \item[\ref{item:minpq3}] There is a $\delta$-conical Q-node
    $\alpha = Q(\beta_1,\ldots,\beta_k)$, with $\beta_i$ apex, and
    $M = X(\alpha) \setminus X(\beta)$. Then as $X(\alpha)$ is an
    mmodule, by maximality of $M$, $p \in X(\alpha)$ and thus
    $p \in X(\beta_i)$. Suppose that $\Gdelta(X(\beta_i))$ is not
    connected, let $S$ be one of its $\delta$-mmodule not containing
    $p$. Then $M \cup S$ is an mmodule not containing $p$,
    contradicting the maximality of $p$. Thus case~\ref{item:cpp3}
    applies.
  \item[\ref{item:minpq4}] There is a $\delta$-conical Q-node
    $\alpha = Q(\beta_1,\ldots,\beta_k)$ with an apex child $\beta_i$
    such that the graph $\Gdelta(X(\beta_i))$ is not connected, and there is
   a subset  $\Gamma$ of children of $\beta_i$ such that
    $M = X(\alpha) \setminus \bigcup_{\gamma \in \Gamma} X(\gamma)$. As
    $X(\alpha)$ is an mmodule, by maximality of $M$, $p \in X(\alpha)$
    hence $p \in X(\gamma^\star)$ for some
    $\gamma^\star \in \Gamma$. Then by maximality of $M$,
    $\Gamma = \{\gamma^\star\}$, and case~\ref{item:cpp4} applies.
  \end{itemize}

  Conversely, let $M$ be a set as in one of the cases~\ref{item:cpp1}
  to~\ref{item:cpp4}. Then by \Cref{lemma:mmod-in-pqt}, $M$ is an
  mmodule. Moreover, any mmodule properly containing $M$ must contain
  $X(\alpha)$, hence contains $p$, proving that $M$ is a $p$-copoint.
\end{proof}

\subsection{\texorpdfstring{$\Upsilon(p)$}{Upsilon(p)} and the \texorpdfstring{$p$}{p}-proximity order}

We recall the definition of $p$-proximity orders of Robinson spaces,
which was one of the ingredients of our recognition algorithm
in~\cite{CaCheNaPre}:

\begin{definition}[p-Proximity order~\cite{CaCheNaPre}]
  Let $(X,d)$ be a Robinson space with a compatible order $\<$ and let
  $p$ be a point of $X$.  A {\em $p$-proximity order} (relatively to
  $\<$) is a total order $\prec$ on $\C_p$ such that if $C,C'\in \C_p$
  and $C \prec C'$, then:
  \begin{enumerate}[label=(PO\arabic*),nosep]
  \item\label{item:proxorder1} $d(C, p) \leq d(C', p)$;
  \item\label{item:proxorder2} if $X$ is sorted according to $\<$,
    then no point of $C'$ is located between $p$ and a point of $C$.
  \end{enumerate}
\end{definition}

A $p$-proximity order exists for every compatible order and can be
efficiently computed, even without the knowledge of the compatible
order $\<$~\cite{CaCheNaPre}. Actually,
in~\cite[Algorithm~6.1]{CaCheNaPre} we constructed a \emph{universal}
compatible order: an order $\prec$ on $\C_p$ which is a $p$-proximity
order relatively to any compatible order $<\in \Pi(X,d)$; in what
follows we will consider only universal $p$-proximity orders. Given a
universal $p$-proximity order $\prec$, suppose that the copoints of
$\C_p$ are ordered in the following way:
$$\{ p\}:=C_0\prec C_1\prec\ldots\prec C_k.$$
For any $C_i$ we denote by $[C_0,C_i]$ the points in the copoints
between $C_0$ and $C_i$ in the ordered set $(\C_p,\prec)$:
$[C_0,C_i]: =C_0\cup \ldots \cup C_i$. We call $[C_0,C_i]$ an
\emph{initial interval} of $\prec$.  Since $\prec$ is a $p$-proximity
order for all compatible orders, any initial interval $[C_0,C_i]$ is a
block.

Next, we will characterize the copoints $C \in \C_p$ for which the
initial interval $[C_0,C]$ is also an mmodule. Since the nodes of
$\Upsilon(p)$ are exactly the subsets of $X$ which are simultaneously
mmodules and blocks and also contain $p$, these intervals will
correspond to nodes of $\Upsilon(p)$. Let
$\Upsilon(p) = (\alpha_0,\alpha_1,\ldots,\alpha_{m})$, where
$\alpha_0$ correspond to $p$ and $\alpha_m$ corresponds to the root of
$\TPQ$. Then
$\{ p\} =X(\alpha_0) \subsetneq X(\alpha_1) \subsetneq \ldots \subsetneq X(\alpha_m) = X$
is a chain of subsets of $X$ which are mmodules and blocks containing
$p$. However, in view of \Cref{thm:copoint-in-pqtree}, only those
$X(\alpha_i)$ where $\alpha_i$ is standard correspond to initial
intervals of $\prec$. We now formalize these ideas, starting with a
definition motivated by the previous discussion.

\begin{definition}[frontier]
  A copoint $C$ is a {\em frontier} if $[C_0,C]$ is an mmodule.
\end{definition}






We establish the correspondence between frontiers and standard nodes
of $\Upsilon(p)$.

\begin{theorem}\label{lemma:node_upsilon_frontiere}
  Let $(X, d)$ be a Robinson space, $p$ a point of $X$, and
  $S \subseteq X$ with $p \in S$. Let $\prec$ be a universal
  $p$-proximity order. Then there is a standard node
  $\alpha \in \Upsilon(p)$ with $X(\alpha) = S$ if and only if there
  exists a frontier $C \in \C_p$ with $[C_0,C] = S$.
\end{theorem}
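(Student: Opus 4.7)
I will prove the two directions separately using \Cref{LEMMA_block_node} (subsets that are simultaneously mmodules and blocks are exactly the $X(\alpha)$) and \Cref{thm:copoint-in-pqtree} (description of $\C_p$ in terms of $\Upsilon(p)$). For the $(\Leftarrow)$ direction, let $C \in \C_p$ be a frontier with $[C_0,C] = S$. By definition $S$ is an mmodule, and by the universality of $\prec$ (already noted before the statement), $[C_0,C]$ is an interval in every compatible order, hence a block. Therefore \Cref{LEMMA_block_node} gives a node $\alpha$ of $\TPQ$ with $X(\alpha) = S$, and since $p \in S$ we have $\alpha \in \Upsilon(p)$. To see that $\alpha$ is standard, suppose for contradiction it were split: then $\alpha$ is the apex of a $\delta$-conical Q-node parent $\alpha'$, and by \Cref{thm:copoint-in-pqtree}\ref{item:cpp4} the set $C' = X(\alpha') \setminus X(\gamma)$, with $\gamma$ the child of $\alpha$ on $\Upsilon(p)$, is a copoint. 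This $C'$ meets both $X(\alpha) \setminus X(\gamma) \subseteq S$ and $X(\alpha') \setminus X(\alpha) \subseteq X \setminus S$. Because $C_1,\ldots,C$ partition $S \setminus \{p\}$, the disjointness of $\C_p$ (\Cref{copoint-partition}) forces the copoint containing any point of $X(\alpha) \setminus X(\gamma)$ to be one of the $C_i$, so $C' \subseteq S$, contradicting the presence of points of $C'$ outside $S$.

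For the $(\Rightarrow)$ direction, let $\alpha \in \Upsilon(p)$ be standard with $X(\alpha) = S$; by \Cref{LEMMA_block_node}, $S$ is an mmodule and a block containing $p$. The first step is to show that every copoint $C' \in \C_p$ satisfies $C' \subseteq S$ or $C' \cap S = \varnothing$. By \Cref{thm:copoint-in-pqtree}, $C'$ is attached to some $\alpha'' \in \Upsilon(p)$, and since both $\alpha$ and $\alpha''$ lie on the chain $\Upsilon(p)$, they are comparable. If $\alpha''$ equals or is a descendant of $\alpha$, then $C' \subseteq X(\alpha'') \subseteq S$. If $\alpha''$ is a strict ancestor of $\alpha$, then the child of $\alpha''$ on $\Upsilon(p)$ is itself an ancestor of $\alpha$, and inspection of the four copoint shapes yields $C' \cap S = \varnothing$. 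The standardness of $\alpha$ is decisive in case~\ref{item:cpp4}: because $\alpha$ is not split, it cannot coincide with the split apex of $\alpha''$, so $\alpha$ lies strictly below $\gamma$, forcing $S \subseteq X(\gamma)$ and thus $C' = X(\alpha'') \setminus X(\gamma)$ disjoint from $S$.

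Given the claim, the copoints contained in $S$ together with $\{p\}$ partition $S$, and it remains to show they form an initial segment of $\prec$. For $C' \subseteq S$ and $C'' \cap S = \varnothing$, suppose toward a contradiction that $C'' \prec C'$. By PO2, no point of $C'$ can lie strictly between $p$ and any point of $C''$ in any compatible order; I would contradict this by constructing, using arbitrary permutation of children at P-nodes and reversal of Q-nodes, a compatible order together with points $y \in C'$ and $x \in C''$ such that $y$ lies strictly between $p$ and $x$ (concretely, one aligns $y$ on the same side of $p$ as a chosen $x$ outside the interval $S$). Hence $C' \prec C''$, and the $\prec$-last copoint $C$ contained in $S$ yields $[C_0,C] = S$, an mmodule, so $C$ is a frontier. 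The main obstacle is this last step: enforcing $C' \prec C''$ through PO2 demands a concrete compatible order in which a point of $C'$ falls between $p$ and a point of $C''$, and the construction is case-dependent on the type of $C'$ from \Cref{thm:copoint-in-pqtree}---cases \ref{item:cpp1}, \ref{item:cpp3}, \ref{item:cpp4} furnish representatives of $C'$ on both sides of $p$ in a single compatible order via P-node permutations or the structure of the conical Q-node, while case~\ref{item:cpp2}, where $C'$ is itself a PQ-tree node and hence a block, is handled by reversing a suitable Q-node containing $C'$ and the $p$-subtree.
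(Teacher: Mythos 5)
Your backward direction (frontier $\Rightarrow$ standard node) is correct and essentially identical to the paper's: $[C_0,C]$ is an mmodule by definition and a block by universality of $\prec$, so \Cref{LEMMA_block_node} produces the node, and standardness follows because a split node would force the copoint $X(\alpha')\setminus X(\gamma)$ of \Cref{thm:copoint-in-pqtree}\ref{item:cpp4} to meet $[C_0,C]$ while containing points outside it, violating disjointness of copoints. Your forward direction also has the right skeleton (every copoint is inside or disjoint from $S$, then show the inside ones form an initial segment of $\prec$), and the containment/disjointness case analysis is sound.

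The gap is exactly where you flag it: the claim that $C'\subseteq S$ and $C''\cap S=\varnothing$ imply $C'\prec C''$ is never actually proved. Your plan — a case analysis on the four copoint types of \Cref{thm:copoint-in-pqtree} to exhibit a compatible order with a point of $C'$ strictly between $p$ and a point of $C''$ — is both incomplete and fragile; e.g.\ when $C'$ arises from an arity-two P-node, no permutation of that node's children ever places points of $C'$ on both sides of $p$, so the "align $y$ on the chosen side" step does not go through by local manipulations at the node attached to $C'$. The paper closes this with a short uniform argument that needs no case distinction on $C'$: assume $C''\prec C'$, take any compatible order $\<$ with $x'\in C'$, $x''\in C''$ and (up to global reversal) $p\<x''$. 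Since $S$ is a block containing $p$ and $x'$ but not $x''$, the point $x''$ cannot lie between $p$ and $x'$, so $x'\<x''$; and \ref{item:proxorder2} applied to $C''\prec C'$ forbids $p\<x'\<x''$, forcing $x'\<p\<x''$. Now reverse the interval $S$: by \Cref{lemma:mmodule_interval} (using that $S$ is simultaneously a block and an mmodule, which is precisely what \Cref{LEMMA_block_node} gives you) the result $\rev{S}$ is again compatible, and in it $x'$ lies between $p$ and $x''$ — contradicting \ref{item:proxorder2} for the \emph{universal} order $\prec$, which must be a $p$-proximity order relative to $\rev{S}$ as well. This reversal trick is the missing idea; without it (or an equivalent complete construction) your proof of the forward direction is not finished.
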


\begin{proof}
  Let $\alpha \in \Upsilon(p)$ be a standard node. By
  \Cref{thm:copoint-in-pqtree}, each copoint intersecting $X(\alpha)$
  is contained in $X(\alpha)$. Let $\C_p(\alpha)$ be the set of
  copoints contained in $X(\alpha)$. Let $C' \in \C_p(\alpha)$ and
  $C'' \in C_p \setminus C_p(\alpha)$, and suppose by way of
  contradiction that $C'' \prec C'$. Consider a compatible order $<$.
  Let $x' \in C'$ and $x'' \in C''$. We may assume (up to reversal of
  $<$) that $p < x''$.

  As $X(\alpha)$ is a block by \Cref{LEMMA_block_node}, $x''$ is not
  between $p$ and $x'$ in $<$, hence $x' < x''$. Moreover, $\prec$
  being a $p$-proximity order, by~\ref{item:proxorder2} and
  $C'' \prec C'$, $x'$ is not between $p$ and $x''$ in $<$, hence
  $x' < p < x''$. Then because $X(\alpha)$ is a block, by
  \Cref{lemma:mmodule_interval}, $\rev{X(\alpha)}$ is a compatible
  order, with $p \rev{X(\alpha)} x' \rev{X(\alpha)} x''$, in
  contradiction with the fact that $\prec$ is a universal
  $p$-proximity order. Thus, if $C$ is the maximum element in
  $\C_p(\alpha)$ for $\prec$, then $X(\alpha) = [C_0,C]$ holds.

  Conversely, let $C \in \C_p$ be a frontier, meaning by definition
  that $[C,C_p]$ is an mmodule and a block. By
  \Cref{LEMMA_block_node}, there is a node $\alpha$ in $\TPQ$ with
  $X(\alpha) = [C_0,C]$. Clearly $\alpha \in \Upsilon(p)$. Suppose by
  contradiction that $\alpha$ is not standard, \emph{i.e.} $\alpha$ is
  a split node, child of a $\delta$-conical node $\beta$. Let $\gamma$
  be the child of $\alpha$ with $p \in X(\gamma)$. Then by
  \Cref{thm:copoint-in-pqtree}, $X(\beta) \setminus X(\gamma)$ is a
  $p$-copoint intersecting $C$, but the $p$-copoints are disjoint,
  hence this is a contradiction.
\end{proof}

The algorithm from~\cite{EhGaMcCSu} can be rephrased and applied to
build the mmodule tree over a dissimilarity space. Recall that it is
based on the partition into $p$-copoints and then building the path of
the mmodule tree from $p$ to the root. Our understanding of mmodules
and of PQ-trees together with the algorithm from~\cite{CaCheNaPre}
lead to an alternative approach to the construction of the PQ-tree,
which also provides the orders of the children of Q-nodes. Toward this
goal, we first characterize the frontiers corresponding to each kind
of nodes of the PQ-tree: P-node, non-conical Q-node, or conical Q-node
with or without a split child.

\begin{algorithm}[htbp]
  \caption{Computes the PQ-tree of a Robinson space $(X,d)$ using the
    copoints attached at some point $p$.}
  \label{algo:copoints2pqtree}

  \flushleft{$\underline{\cptopqt(S)}$}
  \begin{algorithmic}[1]
    \Require{A Robinson space $(X,d)$ (implicit), a set $S \subseteq X$.}
    \Ensure{$\TPQ(S)$ the PQ-tree of $(S,d)$.}
    \Let $p \in S$
    \If {$|S| = 1$ }
      \Return $\Leaf\ p$
    \EndIf
    \Let $C_0 = \{p\}, C_1,\ldots,C_k = \stablePart(\{p\}, S \setminus \{p\})$ \Comment{with $C_0 \prec C_1 \prec \ldots \prec C_k$}
    \Return $\cptopqtC(p,C_1,\ldots,C_k)$
  \end{algorithmic}\medskip

  \flushleft{$\underline{\cptopqtC(p, C_1,\ldots,C_k)}$}
  \begin{algorithmic}[1]\setcounter{ALG@line}{5}
    \Let $(L, i, R) := \sortSiblings(p, C_1,\ldots,C_k)$
    \Let $T_p := \cptopqtC(p,C_1,\ldots,C_i)$ if $i > 0$, $T_p = \Leaf\ p$ if $i = 0$
    \If {$i < k-1$}
      \Let $[\beta_1,\ldots,\beta_l] = \map(\cptopqt, L) \append [T_p] \append \map(\cptopqt, R)$
      \Return $Q(\beta_1,\ldots,\beta_l)$ \Comment{Case~\ref{item:cptpqt1}}
    \EndIf
    \Let $\alpha := \cptopqt(C_k)$, $\delta := d(p,C_k)$, and $D = \diam(C_k)$ \Comment{$C_{k-1}$ is a frontier}
    \Match {$\alpha$}
    \Case {$P(\beta_1,\ldots,\beta_l)$ with $D = \delta$} \Comment{Case~\ref{item:cptpqt211}}
      \Return $P(\beta_1,\ldots,\beta_l,T_p)$
    \Case {$P(\beta_1,\ldots,\beta_l)$ with $D < \delta$ or $Q(\beta_1,\ldots,\beta_l)$ with $D \leq \delta$} \Comment{Case~\ref{item:cptpqt212}}
      \Return $P(\alpha, T_p)$
    \Case {$P(\beta_1,\beta_2)$ with $D > \delta$} \Comment{Case~\ref{item:cptpqt221}}
      \Return  $Q(\beta_1, T_p, \beta_2)$
    \Case {$Q(\beta_1,\ldots,\beta_l)$ with $D > \delta$}
      \If {$\alpha$ is $\delta$-conical with split child $\beta_j = P(\gamma_1,\ldots,\gamma_n)$} \Comment{Case~\ref{item:cptpqt231}}
        \Return $Q(\beta_1,\ldots,\beta_{j-1},P(\gamma_1,\ldots,\gamma_n,T_p),\beta_{j+1},\ldots,\beta_l)$
      \ElsIf {$\alpha$ is $\delta$-conical with standard apex child $\beta_j$}  \Comment{Case~\ref{item:cptpqt232}}
        \Return $Q(\beta_1,\ldots,\beta_{j-1},P(\beta_j,T_p),\beta_{j+1},\ldots,\beta_l)$ \Comment{New split child}
      \Else \Comment{Case~\ref{item:cptpqt222}}
        \Let $(j,j+1)$ be an admissible hole in $\alpha$ \label{line:find-adm-hole}
        \Return $Q(\beta_1,\ldots,\beta_{j},T_p,\beta_{j+1},\ldots,\beta_l)$ \Comment{New apex child}
      \EndIf
    \EndMatch
  \end{algorithmic}\medskip

  \flushleft{$\underline{\sortSiblings(p,C_1,\ldots,C_k)}$}
  \begin{algorithmic}[1]\setcounter{ALG@line}{26}
    \Require{A Robinson space $(X,d)$ (implicit), $p \in X$ and
      $C_1 \prec C_2 \prec \ldots \prec C_k$ the $p$-copoints with
      $\prec$ a universal $p$-proximity order.}
    \Ensure{$(L,i,R)$ where $i < k$ is maximum such that $C_i$ is a
      frontier (or $i = 0$ if there is no such frontier),
      $L \append R$ contains $C_{i+1}, \ldots, C_k$ and there is a
      compatible order such that $L \append [\{p\}] \append R$ is
      increasing.}
    \Let $L := []$, $R := [C_k]$, $i := k$ and $l := k$
    \While {$l \geq i$}
      \For {$j := i-1$ to $1$}
        \If {($d(C_j,C_l) < d(p,C_l)$ and $C_l \in L$) or ($d(C_j,C_l) > d(p,C_l)$ and $C_l \in R$)}
          \State $L \gets C_j \cons L$,\quad $R \gets [C_{j+1},\ldots,C_{i-1}] \append R$, and  $i \gets j$
        \EndIf
        \If {($d(C_j,C_l) < d(p,C_l)$ and $C_l \in R$) or ($d(C_j,C_l) > d(p,C_l)$ and $C_l \in L$)}
          \State $R \gets C_j \cons R$,\quad $L \gets [C_{j+1},\ldots,C_{i-1}] \append L$, and $i \gets j$
        \EndIf
      \EndFor
      \State $l \gets l - 1$
    \EndWhile
    \Return $(\call{reverse}(L), i-1, R)$
  \end{algorithmic}
\end{algorithm}

Let $(X,d)$ be a Robinson space, $p \in X$, and $C_1,\ldots,C_k$ be
the $p$-copoints, and let $\prec$ be a $p$-proximity order on $\C_p$.
We assume that $\{p\} = C_0 \prec C_1 \prec \ldots \prec C_k$. Let
$i \in \{1,\ldots,k\}$. By definition, $C_i$ is a frontier if and only
if for each $j \in \{1,\ldots,i\}$ and each $j' \in \{i+1,\ldots,n\}$, we have 
$d(p,C_{j'}) = d(C_j,C_{j'})$. This allows to determine all the
frontiers in time $O(k^2)$: for each $j' \in \{1,\ldots,k\}$, let
$j \in \{0,\ldots,j'\}$ be maximum such that
$d(p,C_{j'}) = d(C_j,C_{j'})$ and mark $C_{j+1},\ldots,C_{j'-1}$ as
non-frontier. Then all non-marked copoints are frontiers.

Once we know that $C_i$ is a frontier, let $\alpha_i \in \Upsilon(p)$
be the node with $X(\alpha_i) = [C_0,C_i]$, and we can determine the
root of $\alpha_i$, using \Cref{thm:copoint-in-pqtree}.
\begin{enumerate}[label=(\arabic*),labelindent=0pt,labelwidth=50pt,itemindent=20pt,leftmargin=0pt]
\item\label{item:cptpqt1} If $i > 2$ and $C_{i-1}$ is not a frontier,
  then $\alpha_i$ is a $Q$-node. Let $j \in \{1,\ldots,i-1\}$ be
  maximum with $C_j$ a frontier (or $j = 0$ if $C_i$ is the first
  frontier). Then $\alpha_j, \TPQ(C_{j+1}),\ldots,\TPQ(C_{i})$ are the
  children of $\alpha_i$. Indeed those non-frontier copoints
  correspond to the case~\ref{item:cpp2} in
  \Cref{thm:copoint-in-pqtree}. If $\alpha_i$ is conical, then $p$ is
  not in its apex child.
\item\label{item:cptpqt2} Else, $\alpha_i$ is either a P-node, a
  conical Q-node with $\alpha_{i-1}$ apex and not split, or a conical
  Q-node with split child $\beta$ and $\alpha_{i-1}$ a child of
  $\beta$. Let $\delta = d(p,C_i)$, in any of these cases
  $\Gdelta([C_0,C_i])$ is not connected, let $K_1,\ldots,K_m$ be its
  connected component, with $K_1$ having the largest diameter.
 \begin{enumerate}[label=(\arabic{enumi}.\arabic*),labelindent=0pt,labelwidth=50pt,itemindent=20pt,leftmargin=20pt]
 \item\label{item:cptpqt21} If $\diam(K_1) \leq \delta$, there is no
   large component, hence $\alpha_i$ is a P-node
   $P(\beta_1,\ldots,\beta_l)$, and each component corresponds to a
   child of $\alpha_i$ by \Cref{lemma:some-d-cut}. In that case,
   $\alpha_{i-1}$ is the child $\beta_j$ of $\alpha_i$ containing $p$,
   hence exactly one of the following cases hold:
   \begin{enumerate}[label=(\arabic{enumi}.\arabic{enumii}.\arabic*),labelindent=0pt,labelwidth=50pt,itemindent=20pt,leftmargin=20pt]
   \item\label{item:cptpqt211} either $l > 2$ and $\TPQ(C_k) = P(\beta_1,\ldots,\beta_{j-1},\beta_{j+1},\ldots,\beta_l)$;
   \item\label{item:cptpqt212} or $l = 2$ and $\TPQ(C_k) = \beta_{3-j}$.
   \end{enumerate}
   In both cases, $\diam(C_k) \leq \delta$.
 \item\label{item:cptpqt22} Else if $m=2$, then, $\alpha_i$ is a
   $\delta$-conical Q-node whose apex child $\alpha_{i-1}$ is not
   split by \Cref{lemma:cut-separable}. In this case,
   $C_k = X(\alpha_i) \setminus X(\alpha_{i-1})$, and one of the
   following cases occurs:
   \begin{enumerate}[label=(\arabic{enumi}.\arabic{enumii}.\arabic*),labelindent=0pt,labelwidth=50pt,itemindent=20pt,leftmargin=20pt]
   \item\label{item:cptpqt221} either $\TPQ(C_k) = P(\beta_1,\beta_2)$
     and then $\alpha_i = Q(\beta_1,\alpha_{i-1},\beta_2)$;
   \item\label{item:cptpqt222} or
     $\TPQ(C_k) = Q(\beta_1,\ldots,\beta_l)$, and there is
     $j \in \{2,\ldots,l\}$ with \\
     $\alpha_i = Q(\beta_1,\ldots,\beta_{j-1},\alpha_{i-1},\beta_j,\ldots,\beta_k)$;
   \end{enumerate}
   In both cases, $\diam(C_k) > \delta$.
 \item\label{item:cptpqt23} Else, also by \Cref{lemma:cut-separable}, $m > 2$ and
   $\alpha_i$ is a $\delta$-conical Q-node with a split child $\beta$.
   The children of $\beta$ are in one-to-one correspondence with the
   components $K_2,\ldots,K_m$, one of them being $\alpha_{i-1}$. In
   that case, $\TPQ(C_k) = Q(\beta_1,\ldots,\beta_l)$ with apex child
   $\beta_j$, $\diam(C_k) > \delta$, and one of the following cases
   hold:
    \begin{enumerate}[label=(\arabic{enumi}.\arabic{enumii}.\arabic*),labelindent=0pt,labelwidth=50pt,itemindent=20pt,leftmargin=20pt]
    \item\label{item:cptpqt231} either $\beta_j$ is a split child, that is
      $\beta_j = P(\gamma_1,\ldots,\gamma_n)$ with
      $d(X(\gamma_1),X(\gamma_n)) = \delta$, then
      $\alpha_i = Q(\beta_1,\ldots,\beta_{j-1},Q(\gamma_1,\ldots,\gamma_n,\alpha_{i-1}),\beta_{j+1},\ldots,\beta_k)$;
    \item\label{item:cptpqt232} or $\beta_j$ is standard, then
      $\alpha_i = Q(\beta_1,\ldots,\beta_{j-1},P(\beta_j,\alpha_{i-1}),\beta_{j+1},\ldots,\beta_l)$.
    \end{enumerate}
  \end{enumerate}
\end{enumerate}

This leads to \Cref{algo:copoints2pqtree}, that builds the PQ-tree
from the $p$-copoints. We use the notation $[e_1,\ldots,e_n]$ for a
list or sequence of $n$ elements; $[]$ denotes the empty list. The
\emph{append} operation on list is denoted $\append$, while $\cons$ is
the \emph{insert first} operation; thus
$[e_1,\ldots,e_n] \append [f_1,\ldots, f_p] = e_1 \cons ([e_2,\ldots,e_n] \append [f_1,\ldots,f_p])$.
We also use the $\call{map}$ operation, defined by
$\call{map}(f,[e_1,\ldots,e_n]) = [f(e_1),\ldots,f(e_n)]$. Applying
the operation $\cons$ takes constant-time on single-linked list, and
we assume that $\append$ and $\map$ are implemented as linear-time
operations.

\Cref{algo:copoints2pqtree} builds the $p$-copoints using the stable
partition algorithm from~\cite[Algorithm 6.1]{CaCheNaPre}, which
returns the copoints in a universal $p$-proximity order. Then it
builds recursively the PQ-tree $\TPQ(C_k)$, and deduce $\TPQ(X)$ from
the previous analysis. The case analysis presented above justifies
procedure $\call{copointsToPqTree}$. \Cref{algo:copoints2pqtree} also
relies on the procedure $\sortSiblings$ which finds the last frontier
$C_i$ before $C_k$ and sorts the children of the root when this root
is a Q-node. Those children are given by a permutation of the sequence
provided by the $p$-proximity order $[C_0,C_i], C_{i+1},\ldots,C_k$.
The procedure $\sortSiblings$ builds a compatible pre-order from this
$p$-proximity order, by using the inner instructions of the main loop
of~\cite[Algorithm 6.2]{CaCheNaPre}. Consequently the correctness of
$\sortSiblings$ can be derived from the correctness of the
construction in~\cite{CaCheNaPre} of the compatible order from the
$p$-proximity order.

\Cref{algo:copoints2pqtree} can be implemented in $O(|X|^2)$
using~\cite[Algorithm 6.1]{CaCheNaPre} to compute the stable
partition, and~\cite[Algorithm 4.1]{CaCheNaPre} to find the admissible
hole at line~\ref{line:find-adm-hole}. The analysis follows the same
arguments as in~\cite[Theorem 7.2]{CaCheNaPre}. Therefore we can state:

\begin{theorem}
  Given a Robinson space $(X,d)$, \Cref{algo:copoints2pqtree} computes
  $\TPQ(X)$ in time $O(|X|^2)$.
\end{theorem}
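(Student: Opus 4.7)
The plan is to establish correctness by strong induction on $|S|$, where $S$ is the current argument to $\cptopqt$, and then to bound the complexity by amortizing the cost of each recursive call against pairs of points in $X$.

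For correctness, the base case $|S|=1$ is immediate. For the inductive step, fix $p \in S$ and let $\{p\} = C_0 \prec C_1 \prec \ldots \prec C_k$ be the $p$-copoint partition in universal $p$-proximity order, as returned by \cite[Algorithm 6.1]{CaCheNaPre}. The recursive call on $C_k$ is valid because $(C_k, d)$ is itself a Robinson subspace; the recursive call inside $\sortSiblings$, which works on the block $[C_0, C_i]$ for the last frontier $C_i \prec C_k$, is valid by \Cref{lemma:node_upsilon_frontiere}, which identifies $[C_0, C_i]$ as the leaf-set of a standard node of $\Upsilon(p)$. The root of $\TPQ(S)$ is then determined by applying the exhaustive case analysis preceding the algorithm: the branches \ref{item:cptpqt1}, \ref{item:cptpqt211}, \ref{item:cptpqt212}, \ref{item:cptpqt221}, \ref{item:cptpqt222}, \ref{item:cptpqt231}, \ref{item:cptpqt232} discriminate on whether $C_{i-1}$ is a frontier, on the comparison between $\delta = d(p, C_k)$ and $\diam(C_k)$, and on whether the root of $\TPQ(C_k)$ is a P-node or a Q-node, with further sub-dispatch on the conical/split structure detected using \Cref{lemma:cut-separable} and \Cref{lemma:some-d-cut}. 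Each branch of \Cref{algo:copoints2pqtree} reassembles the returned PQ-tree in exactly the shape prescribed by \Cref{thm:copoint-in-pqtree,lemma:mmod-in-pqt}, which justifies the recursion.

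The correctness of $\sortSiblings$ reduces to the correctness of the compatible-order-from-proximity-order reconstruction of \cite[Algorithm 6.2]{CaCheNaPre}: its inner loop is a verbatim transcription, and its returned triple $(L, i, R)$ simultaneously locates the last frontier index and partitions the non-frontier copoints into those lying on each side of $\{p\}$ in a compatible order. The admissible-hole computation on line~\ref{line:find-adm-hole} is handled by invoking \cite[Algorithm 4.1]{CaCheNaPre}, which returns the unique position at which inserting $T_p$ as a new apex child preserves the Q-node structure of $\alpha$ when $\alpha$ was not already conical.

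For complexity, the cost of one invocation of $\cptopqt(S)$ excluding recursive calls is dominated by the stable-partition step, which runs in $O(|S| \cdot |S \setminus C_k|)$ by the analysis of \cite[Algorithm 6.1]{CaCheNaPre}, plus the $\sortSiblings$ sweep (linear in $k$, subsumed in the above) and the admissible-hole lookup (also dominated). The two recursive calls are on $C_k$ and on $[C_0, C_i] \subseteq S \setminus C_k$, which are disjoint. The main obstacle will be the amortization: we charge the work of the current call to the set of pairs $\{(x, y) : x \in C_k, y \in S \setminus C_k\}$, and observe that subsequent recursive calls only examine pairs internal to $C_k$ or internal to $S \setminus C_k$, so each pair of $\binom{X}{2}$ is charged at most once across the whole recursion. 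This yields a total running time of $O(|X|^2)$, matching the argument of \cite[Theorem 7.2]{CaCheNaPre}.
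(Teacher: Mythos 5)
Your overall route---strong induction organized around the case analysis~\ref{item:cptpqt1}--\ref{item:cptpqt232} for correctness, deferring to the subroutines of~\cite{CaCheNaPre} for the copoint partition, the proximity order and the admissible hole, and a pair-charging argument for the quadratic bound---is the same as the paper's, which itself mostly points back to that case analysis and to \cite[Theorem 7.2]{CaCheNaPre}. The correctness half of your plan is sound.

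The complexity half, however, has a concrete accounting error. You bound the non-recursive work of a call on $S$ by $O(|S|\cdot|S\setminus C_k|)$ and charge it to the pairs $\{(x,y): x\in C_k,\ y\in S\setminus C_k\}$, of which there are only $|C_k|\cdot|S\setminus C_k|$. The excess, of order $|S\setminus C_k|^2$, necessarily lands on pairs internal to $S\setminus C_k$; but the next level of the recursion is precisely on $[C_0,C_i]\subseteq S\setminus C_k$, so those same pairs are charged again at every subsequent level, and ``each pair is charged at most once'' fails. If $C_k$ is a singleton at every level (e.g.\ a caterpillar-shaped PQ-tree with $p$ a deepest leaf, where every copoint is a singleton frontier), summing your per-level bound gives $\Theta(|X|^3)$, not $O(|X|^2)$. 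The accounting that works uses the tighter bound on the refinement step, $O\bigl(\sum_j |C_j|\,|S\setminus C_j|\bigr)$, i.e.\ $O(1)$ per ordered pair of points falling into \emph{distinct} $p$-copoints of the current call, together with the observation---which your write-up never uses---that the recursive call $\cptopqtC(p,C_1,\ldots,C_i)$ reuses the already-computed copoint list instead of invoking $\stablePart$ again; hence a pair is charged by a partition-refinement step only at the unique $\cptopqt$ call where it is first split into distinct copoints. (Two smaller slips: in case~\ref{item:cptpqt1} there is one recursive call per non-frontier copoint $C_{i+1},\ldots,C_k$, not just the two calls on $C_k$ and $[C_0,C_i]$; and the recursive call on $[C_0,C_i]$ sits in $\cptopqtC$ itself, not inside $\sortSiblings$, which only locates $i$ and orders the siblings.)
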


\subsection{Copoints and frontiers in ultrametrics}
We end this section with a characterization of ultrametric spaces in
terms of copoints and frontiers:

\begin{proposition}\label{prop:ultrametric_copoints}
  A dissimilarity space $(X,d)$ is ultrametric if and only if for any
  $p \in X$ and any $p$-copoint $C$, $C$ is a frontier and
  $\diam(C) \leq d(p,C)$.
\end{proposition}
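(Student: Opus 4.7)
I would split the proof into the two implications. For the forward direction, assume $(X,d)$ is ultrametric and exploit the rigid structure already established: by Propositions \ref{prop:ultrametric_pq_tree}, \ref{prop:dendrogram_pq_tree}, and \ref{prop:ultrametric_mmodule_tree}, the trees $\TD$, $\TPQ$, $\TM$ are isomorphic, and every internal node of $\TPQ$ is a P-node. Corollary \ref{copoints-ultrametric} describes every $p$-copoint as $C = X(\alpha) \setminus X(\beta)$, where $\alpha \in \Upsilon(p)$ and $\beta$ is the child of $\alpha$ lying in $\Upsilon(p)$. Then $d(p,C)$ equals the weight $p(\alpha)$ of $\alpha$ in the vertex representation of $\Td$, while $\diam(C) \leq \diam(X(\alpha)) = p(\alpha)$ by \Cref{lemma:monotone-diameter}, so $\diam(C) \leq d(p,C)$. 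Moreover, as no internal node of $\TPQ$ is a Q-node, every internal node on $\Upsilon(p)$ is standard; \Cref{lemma:node_upsilon_frontiere} then sets up a bijection between standard internal nodes of $\Upsilon(p)$ and frontiers, which matches the bijection between internal nodes of $\Upsilon(p)$ and $p$-copoints afforded by the case analysis of \Cref{thm:copoint-in-pqtree}. Hence every $p$-copoint is a frontier.

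For the converse, I assume the hypothesis, which implicitly asserts that $(X,d)$ is Robinson since $p$-proximity orders underlie the notion of frontier, and verify the three-point ultrametric condition directly. Given $x, y, z \in X$, set $p := x$ and let $C_y, C_z$ be the $p$-copoints containing $y$ and $z$ respectively. If $C_y = C_z = C$, then the mmodule property of $C$ gives $d(x,y) = d(x,z) = d(p,C)$, and the diameter hypothesis gives $d(y,z) \leq \diam(C) \leq d(p,C)$, so the two largest distances are equal. Otherwise $C_y \neq C_z$; after swapping $y$ and $z$ if needed we may assume $C_y \preceq C_z$ in the $p$-proximity order $\prec$, whence $d(x,y) \leq d(x,z)$ by axiom \ref{item:proxorder1}. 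Let $C_z^-$ be the predecessor of $C_z$ in $\prec$, which is well defined since $C_y \prec C_z$. Then $y \in C_y \subseteq [C_0, C_z^-] =: M$, and the frontier hypothesis applied to $C_z^-$ asserts that $M$ is an mmodule. Since $z \in C_z$ lies outside $M$, the mmodule property of $M$ gives $d(x,z) = d(y,z)$, so the two largest distances among $d(x,y), d(x,z), d(y,z)$ are again equal, confirming the three-point condition.

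The delicate step is the converse: one must correctly harness the interplay between the mmodule chain afforded by the frontier hypothesis and axiom \ref{item:proxorder1} of the proximity order providing the distance comparison $d(x,y) \leq d(x,z)$. A minor subtlety is that the argument unifies the sub-cases $C_y = C_z^-$ and $C_y \prec C_z^-$, since in both we have $y \in M$ while $z \notin M$, so the mmodule property applies uniformly.
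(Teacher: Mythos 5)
Your proof is correct. The forward direction is essentially the paper's argument (P-nodes only, copoints of the form $X(\alpha_i)\setminus X(\alpha_{i-1})$ along $\Upsilon(p)$, initial intervals equal to $X(\alpha_i)$, diameter bounded by $\rhos(\alpha_i)=d(p,C_i)$), just routed through the dendrogram isomorphism rather than directly through P-node distances. Your converse, however, is genuinely different from the paper's. The paper stays inside the PQ-tree: it uses \Cref{lemma:node_upsilon_frontiere} to identify the frontiers with the nodes of $\Upsilon(p)$, then invokes the case analysis (Cases~\ref{item:cptpqt211} and~\ref{item:cptpqt212}) to conclude that every node on $\Upsilon(p)$ is a P-node for every $p$, and finishes with \Cref{prop:ultrametric_pq_tree}. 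You instead verify the three-point condition directly: taking $p:=x$, the case $C_y=C_z$ is handled by the mmodule property of the copoint plus the diameter bound, and the case $C_y\prec C_z$ combines \ref{item:proxorder1} (giving $d(x,y)\leq d(x,z)$) with the mmodule property of the initial interval $[C_0,C_z^-]$ supplied by the frontier hypothesis (giving $d(x,z)=d(y,z)$); your observation that $C_z^-$ is a genuine copoint because $y\neq p$ forces $C_y\succeq C_1$ is the only point needing care, and it checks out. Your route is more elementary — it never touches $\TPQ$ in the converse and uses only \ref{item:proxorder1} of the proximity order rather than the full structural machinery — while the paper's version buys a structural byproduct (all nodes of $\TPQ$ are P-nodes). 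Both proofs share the same implicit assumption, which you correctly flag: the notion of frontier presupposes a universal $p$-proximity order, hence that $(X,d)$ is Robinson.
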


\begin{proof}
  Let $p \in X$ and $C_o := \{p\} \prec C_1 \prec \ldots \prec C_k$ be
  the $p$-copoints in increasing universal $p$-proximity order. Let
  $\Upsilon(p) = \{\alpha_0, \alpha_1,\ldots,\alpha_{k'}\}$, sorted by
  decreasing depth; $\alpha_0 = \Leaf\ p$ and $\alpha_{k'} = \TPQ(X)$.

  Suppose that $(X,d)$ is ultrametric. By
  \Cref{prop:ultrametric_pq_tree}, $\TPQ(X)$ contains only P-nodes;
  thus by \Cref{thm:copoint-in-pqtree}, $k = k'$ and
  $C_i = X(\alpha_i) \setminus X(\alpha_{i-1})$ for all
  $i \in \{1,\ldots,k\}$. As $[C_0,C_i] = X(\alpha_i)$, by
  \Cref{LEMMA_block_node} it is an mmodule, hence $C_i$ is a frontier.
  Moreover, for each child $\beta$ of $\alpha_i$ distinct from
  $\alpha_{i-1}$, $X(\beta)$ is a block and
  $d(p,\beta) = d(p,C_i) = \rho(\alpha_i)$, hence
  $\diam(\beta) \leq \rho(\alpha_i)$ and thus
  $\diam(C_i) \leq d(p,C_i)$.

  Conversely, suppose that for each $i \in \{1,\ldots,k\}$, $C_i$ is a
  frontier with $\diam(C_i) \leq d(p,C_i)$. By
  \Cref{lemma:node_upsilon_frontiere}, $k = k'$ and
  $X(\alpha_i) = [C_0,C_i]$ for each $i \in \{1,\ldots,k\}$. Moreover,
  as $C_{i-1}$ is a frontier and $\diam(C_i) \leq d(p,C_i)$, one of
  Cases~\ref{item:cptpqt211} and~\ref{item:cptpqt212} applies. In any
  case, this implies that $\alpha_i$ is a P-node. Therefore all nodes
  in the path from $p$ to the root are P-nodes. As this is true for
  any $p \in X$, all internal nodes of $\TPQ(X)$ are P-nodes, thus by
  \Cref{prop:ultrametric_pq_tree} $(X,d)$ is ultrametric.
\end{proof}



\section{Conclusion}\label{SECTION_conclusion}
Our paper establishes a cryptomorphism between the PQ-tree $\TPQ$ and
the mmodule-tree $\TM$ of a Robinson space $(X,d)$. We show how to
derive $\TM$ from $\TPQ$ and, vice-versa, how to construct $\TPQ$ from
$\TM$. We also present optimal $O(|X|^2)$ time algorithms for
constructing the $\TPQ$ (without ordering the children of Q-nodes) and
$\TM$. Our proofs and algorithms use two technical ingredients: the
$\delta$-graph $\Gdelta$ of $(X,d)$ with the properties of its
connected components and the dendrogram $\Td$ of the ultrametric
subdominant $\hd$ of $(X,d)$. We also show how to construct in
$O(|X|^2)$ the PQ-tree $\TPQ$ with the correct ordering of the
children of Q-nodes from the copoint partitions and frontiers of
$(X,d)$; this algorithm is based on the concept of universal
$p$-proximity order and the construction from this order of a
compatibility order presented in our previous paper~\cite{CaCheNaPre}.

Our $O(|X|^2)$ time algorithm for constructing the PQ-tree $\TPQ$ of a
Robinson space $(X,d)$ is simpler than the algorithm of \cite{Prea}
which also constructs $\TPQ$ in $O(|X|^2)$ time in order to enumerate
all compatible orders of $(X,d)$. At the difference of \cite{Prea},
our algorithm does not use the quite complex algorithm of Booth and
Lueker \cite{BoothLueker} as a subroutine. On the other hand, our
top-to-bottom $O(|X|^2)$ time algorithm for constructing the mmodule
tree $\TM$ is more involved than the algorithm of~\cite{EhGaMcCSu}.
Nevertheless, it uses some surprising links between the trees $\TPQ$
and $\TM$ on the one hand and the dendrogram $\Td$ on the other hand.

\section*{Acknowledgements}

{This research was supported in part by ANR project
    DISTANCIA (ANR-17-CE40-0015) and has received funding from
    Excellence Initiative of Aix-Marseille - A*MIDEX (Archimedes
    Institute AMX-19-IET-009), a French ``Investissements d’Avenir”
    Programme.}
%
%



{
\bibliographystyle{siamplain}
\bibliography{citations1}
}


\section{Appendices}

\subsection{Construction of the subdominant ultrametrics}

In this subsection we show how to construct the dendrogram $\Td$ of
the ultrametric space $(X,\widehat{d})$; recall that $\widehat{d}$ is
the subdominant ultrametric, i.e., the largest ultrametric such that
$\widehat{d}(x,y)\le d(x,y)$ for any $x,y\in X$. We can construct
$\Td$ in the following iterative way (which seems us to be new): when
using Prim's algorithm to compute the minimum spanning tree $T$ of
$(X,d)$, one can build $\Td$ by inserting each vertex in $\Td$ at the
moment when it is visited, leading to \Cref{algo:dendo}.

\begin{algorithm}[h]
  \caption{Computes the dendrogram $\Td$ of the subdominant ultrametric of a dissimilarity space $(X,d)$.}
  \label{algo:dendo}

  \flushleft{$\underline{\dendo(X,d)}$}
  \begin{algorithmic}[1]
    \Require{A dissimilarity space $(X,d)$}
    \Ensure{The dendrogram $\Td$ of the subdominant ultrametric $\widehat{d}$ of $(X,d)$}
    \Let $s \in X$
    \Let $\Td:= \Leaf\ s$
    \Let $p(u) := +\infty$ for all $u \in X \setminus \{s\}$
    \State $\propagate(p, s)$
    \While {there is an unvisited vertex}
      \Let $u$ be an unvisited vertex with $p(u)$ minimum
      \State $u$ is now visited
      \State $\Td \gets \insertDendo(u,p(u),\Td)$
      \State $\propagate(p,u)$
    \EndWhile
    \Return $\Td$
  \end{algorithmic}\bigskip

  \flushleft{$\underline{\propagate(p,u)}$}
  \begin{algorithmic}[1]\setcounter{ALG@line}{10}
    \For {$v \in X$}
      \If {$v$ is unvisited}
        \State $p(v) \gets \min \{ p(v), d(u,v) \}$
      \EndIf
    \EndFor
  \end{algorithmic}\bigskip

  \flushleft{$\underline{\insertDendo(u,\rho,T)}$}
  \begin{algorithmic}[1]\setcounter{ALG@line}{13}
    \Require{A dissimilarity space $(X,d)$ (implicit), a dendrogram $T$ 
      on $S \subseteq X$ obtained at some step of $\dendo(X,d)$, $u \in X \setminus S$
      minimizing $\min_{x \in S} d(u,x)$, and
      $\rho := \min \{d(u,x) : x \in S\}$}
    \Ensure{A dendrogram on $S \cup \{u\}$}
    \Match{T}
    \Case{$\Leaf\ x$}
    \Return $\Node(\rho,[\Leaf\ u, \Leaf\ x])$
    \Case{$\Node(\rho', \variable{children})$ \When\ $\rho' < \rho$}
    \Return $\Node(\rho, [\Leaf\ u, T])$
    \Case{$\Node(\rho', \variable{children})$ \When\ $\rho' = \rho$}
    \Return $\Node(\rho, \Leaf\ u : \variable{children})$
    \Case{$\Node(\rho', \variable{child}:\variable{children})$ \When\
      $\rho' > \rho$} \label{line:dendo_rho_petit} \Return
    $\Node(\rho', \insertDendo(u,\rho,\variable{child}) : \variable{children})$
    \EndMatch
  \end{algorithmic}
\end{algorithm}

This algorithm has a pretty visualization that we explain before
giving a formal proof. It builds a diagram where each point is a
column, sorted left-to-right in visiting order (in the diagram, the
order of the leaves is mirrored from that of the tree). Each point $u$
casts a vertical line upward, that bends by a right angle at height
$p(u)$ to join the rest of the diagram (the vertical line from the
source $s$ does not bend). This is illustrated in
\Cref{fig:dendrogram-sketch}, where the diagram for the dissimilarity
from \Cref{TABLE_gros_example} is given. This diagram is topologically
equivalent to the dendrogram, given in \Cref{fig:dendrogram}. The
$\insertDendo$ procedure traverses down the leftmost branch of the
tree, until finding the correct height at which the new leaf is added.

\begin{figure}[htbp]
  \begin{tabular}{lp{1cm}r}
    \begin{minipage}{0.4\textwidth}
      \footnotesize
      $$\begin{array}{ccccccccccccc}
        \vspace{0.1cm} D & 10 & 8 & 6 & 3 & 7 & 1 & 9 & 4 & 12 & 2 & 11 & 5 \\
             10 & 0 & 2 & 6 & 8 & 6 & 8 & 2 & 8 & 2 & 8 & 2 & 6 \\
             8  &   & 0 & 6 & 8 & 6 & 8 & 1 & 8 & 3 & 8 & 2 & 6 \\
             6  &   &   & 0 & 5 & 1 & 5 & 6 & 5 & 6 & 5 & 6 & 1 \\
             3  &   &   &   & 0 & 5 & 2 & 8 & 2 & 8 & 1 & 8 & 5 \\
             7  &   &   &   &   & 0 & 5 & 6 & 5 & 6 & 5 & 6 & 1 \\
             1  &   &   &   &   &   & 0 & 8 & 3 & 8 & 2 & 8 & 5 \\
             9  &   &   &   &   &   &   & 0 & 8 & 2 & 8 & 2 & 6 \\
             4  &   &   &   &   &   &   &   & 0 & 8 & 2 & 8 & 5 \\
             12 &   &   &   &   &   &   &   &   & 0 & 8 & 2 & 6 \\
             2  &   &   &   &   &   &   &   &   &   & 0 & 8 & 5 \\
             11 &   &   &   &   &   &   &   &   &   &   & 0 & 6 \\
             5  &   &   &   &   &   &   &   &   &   &   &   & 0
      \end{array}$$
    \end{minipage} & &
    \begin{minipage}{0.5\textwidth}
      \begin{tikzpicture}[x=0.6cm,y=0.6cm]
        \foreach \i/\y/\w in {1/8/1,2/2/1,3/1/2,4/2/1,5/5/1,6/1/5,7/1/5,8/6/1,9/1/8,10/2/8,11/2/8,12/2/8} {
          \draw (\i,0) node[anchor=north] {\i};
          \draw (\i,0) |- (\w,\y);
        }
        \foreach \y in {1,2,...,7} {
          \draw[dotted] (0.5,\y) -- (12.5,\y);
        }
        \foreach \y in {1,2,5,6} {
          \draw (12.5,\y) node[right] {\tiny \y};
        }
      \end{tikzpicture}
    \end{minipage}
  \end{tabular}
  \caption{The dissimilarity matrix from \Cref{TABLE_gros_example} on
    the left, the diagram built to illustrate \Cref{algo:dendo} on the
    right. The rows and columns of the matrix are shuffled to avoid
    masking some of the work of the algorithm.}
  \label{fig:dendrogram-sketch}
\end{figure}
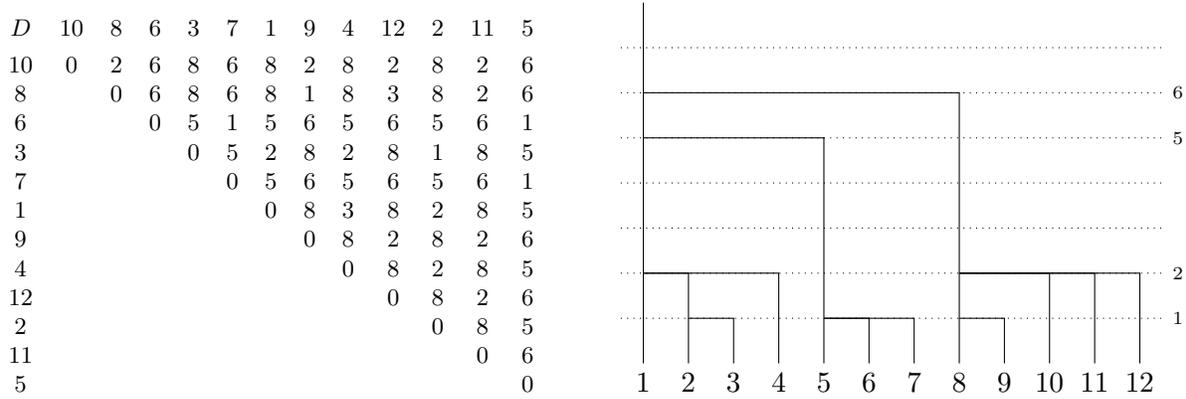

In order to analyse the algorithm, recall that $\hd(x,y)$ is the
minimum over all $(x,y)$-paths of the maximum 
weight of the edges of that path. This is known as the \emph{bottleneck
shortest path}, for which a good characterization is well-known
(see~\cite[Theorem 8.17]{Schrijver} for instance). In our context,
that good characterization is:

\begin{theorem}[\cite{fulkerson1966flow}]\label{thm:bottleneck}
  Let $(X,d)$ be a dissimilarity space and $\hd$ be its ultrametric
  subdominant. Then for all distinct $x, y \in X$,
  \begin{displaymath}
    \hd(x,y) = \min \{ \max_{uv \in P} d(u,v) : \textrm{$P$ is an
    $(x,y)$-path}\} = \max \{ \min_{u \in S, v \in X \setminus S}
  d(u,v) : S \subset X, x \in S, y \notin S\}.
  \end{displaymath}
\end{theorem}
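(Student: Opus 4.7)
The first equality is simply the definition of $\hd$ given at the start of Section~\ref{ssec23} (for $x = y$ one sets $\hd(x,y) = 0$, which also matches both sides trivially), so the real content of the theorem is the classical minimax duality in the second equality. The plan is to prove the two inequalities separately, using only the definition of $\hd$ and an elementary connectivity argument on the threshold graphs $G_{<\delta}$ already introduced in Section~\ref{SECTION_Gdelta_PQ}.

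For the inequality
$$\min \bigl\{ \max_{uv \in P} d(u,v) : P \text{ is an } (x,y)\text{-path} \bigr\}
\;\geq\;
\max \bigl\{ \min_{u \in S, v \in X \setminus S} d(u,v) : x \in S,\ y \notin S \bigr\},$$
I would fix any cut $S$ with $x \in S$, $y \notin S$, and any $(x,y)$-path $P$. Since $P$ starts in $S$ and ends outside $S$, it contains at least one edge $u^*v^*$ with $u^* \in S$ and $v^* \notin S$. Then $\max_{uv \in P} d(u,v) \geq d(u^*,v^*) \geq \min_{u \in S,v \notin S} d(u,v)$. Taking the minimum over $P$ on the left and the maximum over $S$ on the right yields the inequality.

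For the reverse inequality, set $c := \hd(x,y)$ and consider the threshold graph $G_{<c} = (X, \{uv : d(u,v) < c\})$ from Section~\ref{SECTION_Gdelta_PQ}. I would argue that $x$ and $y$ lie in distinct connected components of $G_{<c}$: otherwise an $(x,y)$-path in $G_{<c}$ would use only edges of weight strictly less than $c$, contradicting the definition $c = \hd(x,y)$ as the minimum bottleneck. Let $S$ be the connected component of $G_{<c}$ containing $x$; then $y \notin S$ and every edge $uv$ with $u \in S$, $v \notin S$ satisfies $d(u,v) \geq c$ by construction. Hence $\min_{u \in S, v \notin S} d(u,v) \geq c = \hd(x,y)$, which gives the desired inequality after taking the maximum over cuts.

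No step is genuinely hard: the only subtlety is making sure the cut produced in the second inequality actually separates $x$ from $y$, which is exactly the disconnection statement for $G_{<c}$ that follows immediately from the min-bottleneck definition of $\hd$. Combining the two inequalities gives equality, completing the proof. As an aside, the same argument is what makes Prim's/Kruskal's construction of the minimum spanning tree $T$ yield $\hd(x,y)$ as the heaviest edge on the unique $(x,y)$-path in $T$, which ties back to the construction of $\Td$ used in \Cref{algo:dendo}.
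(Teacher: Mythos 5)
Your proof is correct. Note, however, that the paper does not prove this statement at all: it is stated as a cited classical result (Fulkerson; see also Schrijver, Theorem 8.17 on bottleneck shortest paths), so there is no internal argument to compare against. Your two-inequality argument is the standard one and is complete: the weak direction (every $(x,y)$-path must cross every $x$--$y$ cut, so the bottleneck of the path dominates the minimum crossing value of the cut) is fine, and for the strong direction the component $S$ of $x$ in the threshold graph $G_{<c}$ with $c := \hd(x,y)$ is exactly the right certificate, since by definition of a connected component no pair $u \in S$, $v \notin S$ satisfies $d(u,v) < c$, and $y \notin S$ because otherwise some $(x,y)$-path would have all edges of weight strictly below $c$, contradicting the minimality in the definition of $\hd$. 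The only thing your write-up buys beyond the paper is self-containedness; conversely, the citation route spares the reader an argument that is genuinely classical. One tiny remark: the theorem is stated only for distinct $x,y$, so your parenthetical about the case $x=y$ is unnecessary, and since the paper's definition of a dissimilarity allows $d(x,y)=0$ for distinct points, it is worth observing (as your argument implicitly handles) that the case $c=0$ causes no trouble because $G_{<0}$ is edgeless and $S=\{x\}$ still works.
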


Let the points be sorted in visiting order, $x < y$ means that $x$ is
visited before $y$. We let $p^\star : X \to \mathbb{R}_{\geq 0}$
denote the final values of $p$, taking $p^\star(s) = +\infty$. We
prove by induction that at the end of each iteration, for all visited
$x,y$, $\hd(x,y)$ is the weight of the lowest common ancestor of $x$
and $y$ in $\Td$. Also let $S_v := \{ x \in X : v < x\}$ be the set of
points unvisited after the iteration in which $u = v$. 

\begin{claim}\label{claim:Sv-cut}
  Let $v \in X$. For each $x \in S_v$ and $y \in X \setminus S_v$,
  $d(x,y) \geq p^\star(v)$.
\end{claim}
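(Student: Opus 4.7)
The plan is to reduce the claim to the standard cut property of Prim's algorithm by a careful analysis of the state of $\dendo$ at the iteration in which $u=v$ is selected. Three ingredients go into the proof: the invariant maintained by $\propagate$, the minimality rule used to pick $v$, and the fact that $p(v)$ is frozen once $v$ becomes visited.

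First, I would establish by induction on iterations the invariant that at the start of every iteration, for every unvisited $x$,
\[
p(x)\;=\;\min\{d(y,x):y\text{ is visited at that moment}\},
\]
with the convention $\min\varnothing=+\infty$. After initialization and the call $\propagate(p,s)$ the visited set is $\{s\}$ and $p(x)=d(s,x)$ for all $x\ne s$. At each subsequent iteration the newly selected vertex $u$ is marked visited and the next $\propagate(p,u)$ replaces $p(x)$ by $\min(p(x),d(u,x))$ for every still-unvisited $x$, absorbing the new candidate into the minimum and preserving the invariant.

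Now consider the iteration in which $v$ is picked, and read $S_v$ as the unvisited set at the \emph{start} of this iteration, so that $v\in S_v$ and $X\setminus S_v$ is the set of vertices visited strictly before $v$. By the invariant, $p(x)=\min_{y\in X\setminus S_v}d(y,x)$ for every $x\in S_v$. The selection rule gives $p(v)\le p(x)$ for every $x\in S_v$, and $\propagate$ never touches $p(v)$ once $v$ is visited, so $p(v)=p^\star(v)$. Hence for every $x\in S_v$ and every $y\in X\setminus S_v$,
\[
d(x,y)\;\ge\;p(x)\;\ge\;p(v)\;=\;p^\star(v),
\]
which is exactly the statement of the claim.

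The main obstacle is the precise interpretation of "after the iteration in which $u=v$" in the definition of $S_v$. Reading it literally places $v$ in $X\setminus S_v$, and then the pair $(x,v)$ with $x\in S_v$ is not controlled by any cut in the Prim process — the edge by which $v$ attaches to the tree, of weight $p^\star(v)$, sits on the $X\setminus S_v$-side and so does not bound cross-edges of the form $(x,v)$, as a counterexample on three points quickly shows. The cut actually certifying $p^\star(v)$ is the Prim cut separating vertices visited \emph{before} $v$ from $v$ together with all later-picked vertices; under this (natural) reading of $S_v$ the argument above proves the claim as written without a case split. The boundary case $v=s$ then reads $S_s=X$ and $X\setminus S_s=\varnothing$, so the claim is vacuously true and the convention $p^\star(s)=+\infty$ plays no role in it.
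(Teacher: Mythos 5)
Your proof is correct and follows the same route as the paper's: the invariant maintained by $\propagate$ (for every unvisited $x$, $p(x)$ equals the minimum distance from $x$ to the currently visited set), the minimality rule selecting $v$, and the fact that $p(v)$ is frozen at $p^\star(v)$ once $v$ is visited, giving $d(x,y)\ge p(x)\ge p(v)=p^\star(v)$. The paper compresses this into one line, written as ``$p(v)\le p(y)\le d(x,y)$'', where $p(y)$ is evidently a slip for $p(x)$ (evaluated at the start of $v$'s iteration); your version spells out the invariant that makes the middle inequality legitimate.

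Your final paragraph is a genuine catch, not a quibble. With $S_v:=\{x\in X : v<x\}$ as literally defined (so that $v\in X\setminus S_v$), the claim is false: take $X=\{s,v,x\}$ with $d(s,v)=5$, $d(s,x)=10$, $d(v,x)=1$; starting from $s$ the visiting order is $s,v,x$, so $S_v=\{x\}$ and $p^\star(v)=5$, yet $d(x,v)=1$. The statement that is true, that your invariant proves, and that the paper actually uses afterwards is the one for the cut $\{z : v\le z\}$ versus $\{z : z<v\}$, i.e.\ with $v$ counted on the unvisited side: the proof of \Cref{claim:after-w} needs a cut separating $u$ itself from every $x<u$ whose cross-distances are all at least $p^\star(u)$, and the proof of \Cref{claim:before-w} invokes the bound $\min_{t<z\le y}d(t,y)\ge p^\star(z)$, which includes pairs with $y=z$. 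So the right repair is to put $v$ into $S_v$ (equivalently, to restrict $y$ to points visited strictly before $v$), exactly as you do; under that reading your argument is complete and matches what the paper intended.
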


\begin{proof}
  Consider the iteration for which $u = v$. Then by choice of $v$,
  $p(v)$ is minimum, in particular $p(v) \leq p(y) \leq d(x,y)$. As
  $v$ is now visited, $p(v) = p^\star(v)$, and the claim follows.
\end{proof}

We consider an iteration of the main loop in $\dendo$, when vertex
$u$ is visited; let $u \in X \setminus \{s\}$. Let $w$ be the maximum
vertex (in visiting order) with $w < u$ and $p(w) > p^\star(u)$ (the
column for $w$ is the one to which $u$ is attached).

\begin{claim}\label{claim:exists-v}
   There exists $v \in X$ with $w \leq v < u$ and $d(u,v) = p^\star(u)$.
\end{claim}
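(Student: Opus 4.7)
\emph{Proof plan.} The argument will rest on the loop invariant maintained by \propagate: for every unvisited vertex $v$, one has $p(v) = \min \{d(x,v) : x \text{ is currently visited}\}$ (with the convention $\min\emptyset = +\infty$). In particular, at the moment $u$ is itself selected we have $p^\star(u) = \min_{x < u} d(x,u)$, where $x < u$ means $x$ was visited strictly before $u$. The claim will therefore reduce to ruling out as minimizers all vertices visited earlier than $w$.

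The key step will be to inspect the state at the very moment $w$ is selected by the main loop of \dendo. Since $w < u$, the vertex $u$ is still unvisited at that instant, so Prim's greedy selection rule gives $p(u) \geq p(w) = p^\star(w)$. The defining property of $w$ is $p^\star(w) > p^\star(u)$, which combined with the previous inequality yields $p(u) > p^\star(u)$ just before $w$ is visited. Invoking the invariant, every vertex $x$ visited strictly before $w$ then satisfies $d(x,u) \geq p(u) > p^\star(u)$, so no such $x$ can attain the minimum defining $p^\star(u)$.

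Since $p^\star(u) = \min_{x < u} d(x,u)$ is attained by some $v < u$, and no such $v$ can have been visited earlier than $w$, any minimizer $v$ must satisfy $w \leq v < u$ and $d(u,v) = p^\star(u)$, which is exactly the statement of \Cref{claim:exists-v}.

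I expect the only mildly delicate point to be the strict inequality $p(u) > p^\star(u)$ at the selection of $w$; it is what actually makes the invariant argument bite. Once one combines the Prim selection rule $p(w) \leq p(u)$ with the defining inequality $p^\star(w) > p^\star(u)$, the strictness is immediate, and the rest of the proof is straightforward bookkeeping with the \propagate invariant.
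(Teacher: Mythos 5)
Your proof is correct and follows essentially the same route as the paper's: both hinge on the Prim selection rule at the iteration where $w$ is chosen, combined with the \propagate{} invariant, to conclude that every vertex visited strictly before $w$ is at distance greater than $p^\star(u)$ from $u$; you phrase this directly while the paper phrases it as a contradiction. (Your explicit observation that $p(u) \geq p(w) = p^\star(w) > p^\star(u)$ at the selection of $w$ is exactly the inequality the paper's contradiction rests on.)
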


\begin{proof}
  There exists $v \in X$ such that $d(u,v) = p^\star(u)$ by definition
  of $p$ and $v < u$. Suppose by contradiction that $v \leq w$, then
  during the iteration when $w$ is visited, $p(u) = p^\star(u) <
  p^\star(w)$, contradicting the choice of $w$ during that iteration.
\end{proof}

\begin{claim}\label{claim:after-w}
  For all $x \in X$ with $w \leq x < u$, $\hd(x,u) = p^\star(u)$.
\end{claim}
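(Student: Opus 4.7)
My plan is to prove both inequalities in $\hd(x,u) = p^\star(u)$ separately, leveraging the two characterizations of the bottleneck distance provided by \Cref{thm:bottleneck}.

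For the lower bound $\hd(x,u) \geq p^\star(u)$, I would use the max-min (cut) formulation with the cut $S := \{y : y < u\}$ (i.e.\ the vertices visited strictly before $u$), which contains $x$ and omits $u$. Just before $u$ is selected by Prim's algorithm, every unvisited vertex $b$ satisfies $p(b) \geq p(u) = p^\star(u)$, while $p(b) = \min_{a \in S} d(a,b)$; hence every edge $ab$ crossing this cut has weight at least $p^\star(u)$, and $\hd(x,u) \geq p^\star(u)$ follows.

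For the upper bound $\hd(x,u) \leq p^\star(u)$, I would invoke \Cref{claim:exists-v} to pick $v$ with $w \leq v < u$ and $d(v,u) = p^\star(u)$. The ultrametric triangle inequality for $\hd$ gives $\hd(x,u) \leq \max\{\hd(x,v), d(v,u)\}$, so it suffices to prove $\hd(x,v) \leq p^\star(u)$. Both $x$ and $v$ are visited before $u$, so the outer induction (that $\Td$ correctly represents $\hd$ on visited vertices at the end of each iteration) lets me read $\hd(x,v)$ as the weight of $\lca(x,v)$ in the current $\Td$. It therefore suffices to exhibit a single node $\alpha^\star$ of the current $\Td$ with $p(\alpha^\star) \leq p^\star(u)$ whose leaf-set contains both $x$ and $v$; the natural candidate is the node at which $\insertDendo$ will halt its descent of the leftmost branch when inserting $u$.

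The crux of the argument, and the step I expect to be the main obstacle, is proving that this insertion node $\alpha^\star$ has leaf-set exactly $[w, u-1]$, so that both $x$ and $v$ lie beneath it. I would establish this via a structural invariant on $\Td$, proved by induction on iterations of Prim's: at the end of each iteration $t$, for every threshold $r > 0$, the topmost node $\alpha$ on the leftmost branch of $\Td$ with $p(\alpha) \leq r$ (or the leftmost leaf when no such internal node exists) satisfies $X(\alpha) = \{y \leq t : p^\star(y') \leq r \text{ for every } y' \in (y,t]\}$, i.e.\ the block of recently visited vertices bounded by the last vertex whose $p^\star$ exceeds $r$. The inductive step requires a case analysis of the three non-recursive branches of $\insertDendo$ (leaf, strictly lighter internal node, equal-weight internal node) to check that the descent continues to match this block for every $r$ simultaneously. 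Once the invariant is in place, applying it with $t = u-1$ and $r = p^\star(u)$ yields $X(\alpha^\star) = [w, u-1]$, so $x, v \in X(\alpha^\star)$ and $\hd(x,v) \leq p(\alpha^\star) \leq p^\star(u)$, completing the upper bound.
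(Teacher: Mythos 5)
Your proof is correct and follows essentially the same route as the paper's: the lower bound comes from the max--min (cut) characterization of $\hd$ applied to the set of vertices visited before $u$, and the upper bound goes through the vertex $v$ of \Cref{claim:exists-v} via $\hd(x,u) \le \max\{\hd(x,v), d(v,u)\}$ together with the outer induction hypothesis that $\Td$ records $\hd$ on visited vertices. The only real difference is that where the paper dismisses $\hd(x,v) \le p^\star(u)$ informally (``the part of the diagram between $w$ and $u$ is below $p^\star(u)$''), you make that step precise with an explicit leftmost-branch invariant on $\Td$ --- a correct, if heavier, formalization of the same idea.
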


\begin{proof}
  By \Cref{claim:Sv-cut,thm:bottleneck}, $\hd(u,x) \geq p^\star(u)$.
  Let $v$ be the vertex from \Cref{claim:exists-v}. Then $\hd(u,x)
  \leq \max \{\hd(u,v),\hd(v,x)\} = \max \{p^\star(u),\hd(v,x)\}$. By
  induction, $\hd(v,x)$ is the height of the lowest common ancestor of
  $v$ and $x$, that cannot be more than $p^\star(u)$ as by
  construction the part of the diagram between $w$ and $u$ is below
  $p^\star(u)$. The claim follows.
\end{proof}

\begin{claim}\label{claim:before-w}
  If $w \neq s$, then for all $x \in X$ with $x < w$, we have $\hd(x,u) = \hd(x,w)$.
\end{claim}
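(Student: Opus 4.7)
The plan is to combine the bottleneck characterization of $\hd$ (Theorem \ref{thm:bottleneck}) with the ultrametric three-point condition twice, using Claim \ref{claim:after-w} to control $\hd(u,w)$. The key numerical fact to extract first is that $\hd(x,w) \geq p^\star(w)$, which is strictly greater than $p^\star(u) = \hd(u,w)$ by the choice of $w$. Once this separation is established, the ultrametric inequality applied to the triple $\{x,u,w\}$ forces $\hd(x,u) = \hd(x,w)$ without any further case analysis.

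First I would prove the auxiliary inequality $\hd(x,w) \geq p^\star(w)$ by exhibiting an appropriate cut. Let $S := \{y \in X : y < w\}$, so that $x \in S$ and $w \notin S$. For any $y \in S$ and $y' \in X \setminus S$, at the moment when $w$ is about to be visited, the key $p(y')$ is the minimum distance from $y'$ to an already-visited vertex, and by Prim's choice $p(w) = p^\star(w) \leq p(y')$; in particular $d(y,y') \geq p(y') \geq p^\star(w)$. By the bottleneck formula of Theorem \ref{thm:bottleneck}, $\hd(x,w) \geq \min \{d(y,y') : y \in S,\ y' \notin S\} \geq p^\star(w)$. Since $p^\star(w) > p^\star(u)$ by the defining choice of $w$, we obtain $\hd(x,w) > p^\star(u)$.

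Then I would finish by invoking the ultrametric three-point inequality twice on $\{x,u,w\}$, using Claim \ref{claim:after-w} (applied with $x := w$, valid because $w \leq w < u$) to evaluate $\hd(u,w) = p^\star(u)$. The first application,
\[
\hd(x,u) \leq \max\{\hd(x,w),\hd(w,u)\} = \max\{\hd(x,w), p^\star(u)\} = \hd(x,w),
\]
uses that $\hd(x,w) > p^\star(u)$. The reverse application,
\[
\hd(x,w) \leq \max\{\hd(x,u),\hd(u,w)\} = \max\{\hd(x,u), p^\star(u)\},
\]
together with $\hd(x,w) > p^\star(u)$ forces $\hd(x,w) \leq \hd(x,u)$. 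Both inequalities combined yield $\hd(x,u) = \hd(x,w)$.

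The main obstacle is really only the first step: correctly identifying the cut $S_{w^-}$ and invoking the variational characterization of the bottleneck distance to separate $\hd(x,w)$ from $p^\star(u)$; everything after that is a mechanical application of the ultrametric inequality. No induction on the tree structure is needed, because the statement concerns only $\hd$ values and the Prim-ordering of visits.
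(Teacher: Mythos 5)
Your proof is correct, and it takes a mildly but genuinely different route from the paper's. The paper first identifies a vertex $z \leq w$ with $p^\star(z) = \hd(x,w)$ (this step leans on the inductive identification of $\hd$-values on visited vertices with merge heights in the partially built dendrogram, justified only by the phrase ``by construction of $p$''), then lower-bounds $\hd(x,u)$ directly via the cut at $z$ and \Cref{thm:bottleneck}, and finishes with one application of the ultrametric inequality for the upper bound. You instead lower-bound $\hd(x,w)$ by $p^\star(w)$ using the cut $\{y : y < w\}$ versus $\{y' : y' \geq w\}$ and Prim's greedy choice (a direct argument in the spirit of \Cref{claim:Sv-cut}, though applied to a different cut than the one that claim literally covers), note that $p^\star(w) > p^\star(u) = \hd(w,u)$ by the choice of $w$ and \Cref{claim:after-w}, and then apply the ultrametric three-point inequality in both directions to squeeze $\hd(x,u) = \hd(x,w)$. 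What your version buys is self-containment: you never need to know that $\hd(x,w)$ is realized as some $p^\star(z)$, so the claim no longer depends on the surrounding induction on the tree structure, at the cost of one extra (entirely mechanical) application of the ultrametric inequality. Both arguments are sound; yours is arguably the cleaner one to read in isolation.
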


\begin{proof}
  There exists $z \leq w$ with $p^\star(z) = \hd(x,w)$ by construction
  of $p$. Then by \Cref{claim:Sv-cut}, $\min_{t < z \leq y} d(t,y) =
  p^\star(z)$, thus by \cref{thm:bottleneck}, $\hd(x,u) \geq p^\star(z)
  = \hd(x,w)$. On the other hand, $\hd(x,u) \leq
  \max\{\hd(x,w),\hd(w,u)\} = \hd(x,w)$, proving the claim.
\end{proof}

Together, \Cref{claim:after-w,claim:before-w} proves that $\Td$ is
correctly computed. As for the complexity, observe that $\dendo$, when
removing the lines about $\Td$, is exactly Prim's algorithm to compute
a minimum spanning tree (except that we compute neither the tree nor
the predecessor of each vertex in the tree). Moreover the time spent
to build $\Td$ is $O(|X|^2)$, as each of the $|X|$ insertions requires
at most $O(|X|)$ operations (that part could be optimized to $O(|X|)$
by using a zipper~\cite{huet1997zipper}, thus avoiding going back to
the root at each insertion). Therefore we have proved:

\begin{proposition}\label{prop:td-analysis}
  Let $(X,d)$ be a dissimilarity space. Then $\dendo(X,d)$
  computes $\Td(X)$ in time $O(|X|^2)$. \qedhere
\end{proposition}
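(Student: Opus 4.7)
The plan is to establish correctness and complexity separately, leveraging the machinery built up in Claims \ref{claim:Sv-cut}--\ref{claim:before-w} preceding the statement.

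For correctness, I would prove by induction on the main-loop iterations the following invariant: after the iteration visiting vertex $u$, the tree $\Td$ is a dendrogram on the set $S$ of visited vertices, and for every pair $x,y \in S$, the weight of $\lca(x,y)$ in $\Td$ equals $\hd(x,y)$. The base case is trivial since $\Td$ starts as a single leaf. For the inductive step, let $u$ be the vertex visited, and let $w$ be as defined just before Claim \ref{claim:exists-v}, i.e., the latest previously-visited vertex with $p^\star(w) > p^\star(u)$. Claim \ref{claim:after-w} says $\hd(x,u) = p^\star(u)$ for every previously visited $x$ with $w \le x < u$, and Claim \ref{claim:before-w} says $\hd(x,u) = \hd(x,w)$ for every $x < w$. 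It then suffices to check that \insertDendo$(u, p^\star(u), \Td)$ inserts the leaf $u$ at exactly the right position to realize these two equalities simultaneously. This is a straightforward case analysis following the four \textbf{match} branches of \insertDendo: the recursion descends the leftmost branch (which, by induction, is the path in $\Td$ from the root down through the ancestors of $w$ by decreasing weight) until it reaches the unique node of weight $p^\star(u)$ or a node of weight $<p^\star(u)$; then it either attaches $u$ as a new leftmost child (making $\lca(u,z) = $ that node for every $z \le w$ in its subtree) or creates a new parent of weight $p^\star(u)$ (making $\lca(u,x)$ have weight $p^\star(u)$ for every $x$ with $w \le x < u$). In either case, the invariant is restored.

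For complexity, the non-$\Td$ part of \dendo{} is exactly Prim's minimum-spanning-tree algorithm using an implicit priority based on the array $p$; with an adjacency-matrix style distance query and linear scans for the minimum and for \propagate, this runs in $O(|X|^2)$ time in total. The only additional cost comes from the $|X|-1$ calls to \insertDendo. Each call starts at the current root and descends the leftmost branch (line \ref{line:dendo_rho_petit}), stopping at the first node of weight $\le p^\star(u)$. I would argue that a single call costs $O(|X|)$ trivially (the leftmost chain has length at most the current number of nodes, which is $O(|X|)$), giving a total of $O(|X|^2)$ across all insertions.

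Combining the two bounds yields the overall $O(|X|^2)$ complexity. The main subtlety is getting the \insertDendo{} case analysis right — in particular, verifying that the invariant on leftmost branches (namely that its weights are strictly decreasing from the root and that the leftmost leaf is always the most recently visited vertex $u$, so the next insertion correctly identifies $w$ as the nearest ancestor of weight $>p^\star(u_{\text{next}})$) is preserved across iterations. I would verify this invariant explicitly in the same induction used for correctness, which makes the case analysis clean rather than ad hoc. No further machinery is needed since Claims \ref{claim:Sv-cut}--\ref{claim:before-w} already encapsulate the bottleneck-distance reasoning via \Cref{thm:bottleneck}.
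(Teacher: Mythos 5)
Your proposal is correct and follows essentially the same route as the paper: the same inductive invariant (the weight of $\lca(x,y)$ in the partial tree equals $\hd(x,y)$ for visited $x,y$), the same reliance on \Cref{claim:Sv-cut,claim:exists-v,claim:after-w,claim:before-w} for the inductive step, and the same complexity argument (Prim's algorithm plus $O(|X|)$ per call to $\insertDendo$). The only difference is one of explicitness: the paper compresses the inductive step into ``Together, \Cref{claim:after-w,claim:before-w} prove that $\Td$ is correctly computed,'' whereas you spell out the leftmost-branch invariant and the case analysis of $\insertDendo$ that this compression hides, which is a faithful elaboration rather than a different proof.
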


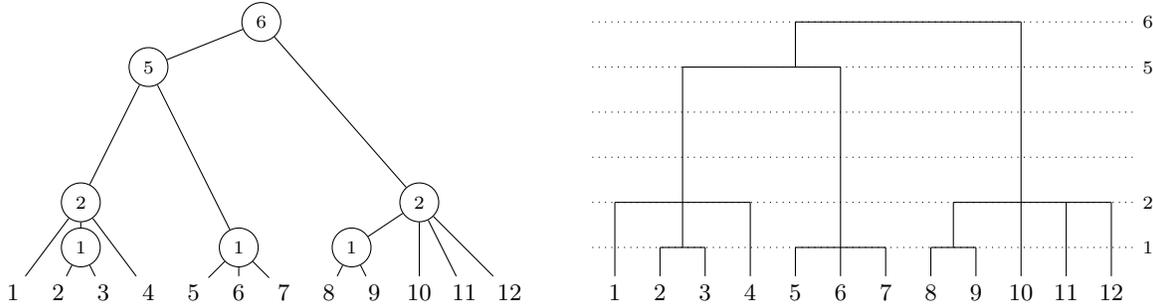
\begin{figure}[htbp]
  \begin{center}\footnotesize
    \begin{tikzpicture}[x=0.6cm,y=0.6cm]
    \begin{scope}
      \foreach \i in {1,2,...,12} {
        \node (p\i) at (\i,0) {\i};
      }
      \foreach \i/\x/\y in {a/2.5/1,b/6/1,c/8.5/1,d/2.5/2,e/10/2,f/4/5,g/6.5/6} {
        \node[circle,draw=black] (\i) at (\x,\y) {\tiny $\y$};
      }
      \foreach \u/\v in {p1/d,p2/a,p3/a,p4/d,p5/b,p6/b,p7/b,p8/c,p9/c,p10/e,p11/e,p12/e,a/d,b/f,c/e,d/f,e/g,f/g} {
        \draw (\u) -- (\v);
      }
      \end{scope}
      \begin{scope}[xshift=8cm]
        \foreach \x/\y in {1/2,2/1,3/1,4/2,5/1,6/1,7/1,8/1,9/1,10/2,11/2,12/2} {
          \draw (\x,0) node (u\x) {\x};
          \draw (u\x.north) -- (\x,\y);
        }
        \foreach \u/\v/\y in {1/4/2,2/3/1,5/7/1,2.5/6/5,8/9/1,8.5/12/2,5/10/6} {
          \draw (\u,\y) -- (\v,\y);
        }
        \draw (2.5,1) -- (2.5,5);
        \draw (6,1) -- (6,5);
        \draw (8.5,1) -- (8.5,2);
        \draw (10,2) -- (10,6);
        \draw (5,5) -- (5,6);
        \foreach \y in {1,2,...,6} {
          \draw[dotted] (0.5,\y) -- (12.5,\y);
        }
        \foreach \l in {1,2,5,6} {
          \draw (12.5,\l) node[right] {\tiny \l};
        }
      \end{scope}

    \end{tikzpicture}
  \end{center}
  \caption{The dendrogram associated to \Cref{fig:dendrogram-sketch},
    with the tree representation on the left (actually the mirror of
    the tree returned by \Cref{algo:dendo}), and the more traditional
    dendrogram representation on the right.}
  \label{fig:dendrogram}
\end{figure}







\subsection{Stable partition algorithm} In this subsection, we present the algorithm which refines any partition to a stable partition.

\begin{algorithm}[h]
  \caption{Computes the refinement of a partition of $(X,d)$ into a stable partition.}
  \label{algo:stable}

  \flushleft{$\underline{\stablePart(\PP)}$}
  \begin{algorithmic}[1]
    \Require{a dissimilarity space $(X,d)$ (implicit), a partition
      $\PP := \{S_1,\ldots,S_k\}$ of $X$.}
    \Ensure{a partition $\PP'$ of $X$ that refines $\PP$ and such that
      each part $M \in \PP'$ is an mmodule.}
    \For{$i \in \{1,\ldots,k\}$}
       \YieldAll $\partRefine(S_i, X \setminus S_i)$
    \EndFor
  \end{algorithmic}\bigskip

  \flushleft{$\underline{\partRefine(B,Z(B))}$}
  \begin{algorithmic}[1]\setcounter{ALG@line}{1}
  \Require{a dissimilarity space $(X,d)$ (implicit),
    a class $B \subseteq X$ and a set $Z(B) \subseteq X \setminus B$.}
  \Ensure{a partition $\{B_1,B_2,\ldots,B_k\}$ of $B$.}
  \If{$Z(B) = \emptyset$}
      \Return $\{ B \}$
  \EndIf
  \Let $q \in Z(B)$,    \Comment{choose $q$ to be the first element of $Z(B)$}
  \Let $\{B_1,\ldots,B_m\} = \refine(q,S)$ \Comment{ignore the order of the $B_i$s}
  \For {$i \in \{1,\ldots,m\}$}
      \YieldAll $\partRefine(B_i, \concatenate(B_1, \ldots, B_{i-1}, B_{i+1}, \ldots, B_m, Z(B) \setminus \{q\}))$
  \EndFor
  \end{algorithmic}\bigskip

  \flushleft{$\underline{\refine(q,S)}$}
  \begin{algorithmic}[1]\setcounter{ALG@line}{8}
    \Require{a dissimilarity space $(X,d)$ (implicit),
      a point $q \in X$, a subset $S \subseteq X$.}
    \Ensure{an ordered partition of $S$, by increasing distances from $q$.}
    \Let $T$ be an empty balanced binary tree, with keys in $\mathbb{N}$
    \For {$x \in S$}
      \If {$\lnot \containsKey(T,d(q,x))$}
        \State $\insertDico(T,d(q,x),[])$
      \EndIf
      \State $\insertDico(T,d(q,x), x \cons \getDico(T,d(q,x)))$
    \EndFor
    \Return $\values(T)$
  \end{algorithmic}
\end{algorithm}

\begin{lemma}\label{lemma:stable-correct}
  Let $(X,d)$ be a dissimilarity space, $S \subseteq X$, and $\PP$ a
  partition of $S$. Then the partition $\stablePart(\PP)$ consists of
  maximal by inclusion mmodules of $S$ contained in a class of $\PP$.
\end{lemma}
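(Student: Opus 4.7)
The plan is to argue the two halves of the statement separately: (i) every class output by $\stablePart(\PP)$ is an mmodule of $(S,d)$, and (ii) every mmodule $M \subseteq S$ contained in some class of $\PP$ is contained in some output class. Together with the observation that the output is a refinement of $\PP$ by construction, these two facts give the claim.

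First, for (i), I would set up the following invariant, which I claim is maintained throughout the execution of $\stablePart(\PP)$: for every current class $B$ of the running partition with remaining comparison set $Z(B)$, and for every point $q \in X \setminus (B \cup Z(B))$, the distance $d(q,x)$ is constant for $x \in B$. The invariant holds at initialization since each initial class $S_i$ is paired with $Z(S_i) = X \setminus S_i$, making $X \setminus (S_i \cup Z(S_i)) = \varnothing$. I would then verify preservation: when $\partRefine(B,Z(B))$ picks $q \in Z(B)$ and calls $\refine(q,B)$, the sub-classes $B_1,\ldots,B_m$ partition $B$ by the value of $d(q,\cdot)$; thus each $B_i$ is uniform with respect to $q$, and by the previous invariant also with respect to every former $q' \in X \setminus (B \cup Z(B))$. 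All other elements of $X \setminus B_i$ are placed in the new $Z(B_i) = (B_1 \cup \ldots \cup B_{i-1} \cup B_{i+1} \cup \ldots \cup B_m) \cup (Z(B) \setminus \{q\})$, so the invariant holds for $B_i$. When a branch of the recursion terminates with $Z(B) = \varnothing$, the invariant states $d(q,x)$ is constant on $B$ for every $q \in X \setminus B$; restricting to $q \in S \setminus B$, this exactly says that $B$ is an mmodule of $(S,d)$.

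Next, for (ii), I would prove by induction on the recursion tree that any mmodule $M \subseteq S$ of $(S,d)$ contained in some class of $\PP$ is contained in some current class at every step. The base case is trivial: $M$ is contained in its initial class in $\PP$. For the inductive step, suppose $M \subseteq B$ at some point, and the algorithm refines $B$ using $q \in Z(B) \subseteq X \setminus B$. Since $q \notin B \supseteq M$ and $M$ is an mmodule of $S$, $d(q,x)$ is constant for $x \in M$, so $M$ is entirely contained in a single sub-class $B_i$ output by $\refine(q,B)$. Iterating, $M$ ends up inside some leaf class of the recursion, i.e., inside a class of $\stablePart(\PP)$.

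Combining (i) and (ii) yields the lemma: each output class is an mmodule contained in a class of $\PP$ (so is one of the candidates), and no mmodule contained in a class of $\PP$ strictly contains any output class (because it would then have to be contained in that same output class by (ii), contradicting strict containment). The main subtlety is just bookkeeping the $Z$-sets correctly across recursive calls so that the invariant in (i) really does compose; once the invariant is phrased as above, both parts are short. The proof of (ii) also implicitly relies on the fact that the points of $M$ never get distinguished from each other, which is immediate because every refinement uses a point $q$ outside the current class hence outside $M$.
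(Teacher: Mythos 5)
Your proof is correct and follows essentially the same route as the paper's: the paper dispatches your part (i) with a one-line appeal to the correctness of the refinement procedure (the invariant you spell out is exactly the one it states explicitly for the tree variant in \Cref{lemma:correction-stableDendro}), and its converse direction is your part (ii), the non-separation argument. The only detail to watch is that for $S \subsetneq X$ the pivots must be drawn from $S$ only (i.e., the initial comparison sets are $S \setminus S_i$ rather than $X \setminus S_i$), since your step (ii) needs $q \in S \setminus M$ for the mmodule-of-$S$ property to apply; with that reading both halves go through as you describe.
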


\begin{proof}
  Let $M \in \stablePart(\PP)$. By correction of \Cref{algo:stable},
  $M$ is an mmodule, and since $\stablePart(\PP)$ is a refinement of
  $\PP$, there is $P \in \PP$ such that $M \subseteq P$.
  Conversely, let $M$ be a maximal mmodule of $S$ contained in some
  class of $\PP$. Since for any $z \in S \setminus M$ and any
  $x,y \in M$, we have $d(x,z) = d(y,z)$, one can check that
  \Cref{algo:stable} cannot separate $x$ from $y$, that is there is a
  set $P \in \stablePart(\PP)$ such that $M \subseteq P$. By
  maximality of $M$, we get $M = P$, concluding the proof of the lemma.
\end{proof}

\end{document}